\documentclass[11pt,letterpaper, table]{article}
\usepackage[utf8]{inputenc}
\usepackage{graphicx}
\graphicspath{{figures/}}

\usepackage[letterpaper, left=1in, right=1in, top=0.9in, bottom=0.9in]{geometry}
\usepackage[american]{babel}
\usepackage[normalem]{ulem}
\usepackage{amsmath, amssymb, cases, amsthm}
\usepackage{thmtools}
\usepackage[shortlabels]{enumitem}
\usepackage{mdframed}
\usepackage{bbm}
\usepackage{bm}
\usepackage{microtype}
\usepackage{makecell}
\usepackage{mathtools}
\usepackage{float}
\usepackage{varwidth}
\usepackage{modletter}

\usepackage{tikz}
\usetikzlibrary{positioning, fit}
\usepackage{caption2}
\usepackage{tabularx}
\usepackage{tcolorbox}
\usepackage{multirow}

\newtcolorbox{construction}[2][]
{
        colback = white,
	left*=0mm,
	before skip = 10pt,
	after skip = 10pt,
	title    = \textbf{\space\space #2},
	#1,
}

\usepackage[lined,ruled,linesnumbered,noend]{algorithm2e}
\SetKwInput{KwData}{Input}
\SetKwInput{KwResult}{Output}

\usepackage[bookmarks,colorlinks,breaklinks]{hyperref}
\hypersetup{urlcolor=blue, colorlinks=true, citecolor=green!50!black, linkcolor=blue}
\usepackage[capitalize,nosort]{cleveref}
\usepackage[]{xcolor}
\declaretheorem[numberwithin=section,refname={Theorem,Theorems},Refname={Theorem,Theorems}]{theorem}

\declaretheorem[numberlike=theorem]{lemma}

\declaretheorem[numberlike=theorem]{corollary}
\declaretheorem[numberlike=theorem]{definition}
\declaretheorem[numberlike=theorem]{claim}
\declaretheorem[numberlike=theorem,style=remark]{remark}

\theoremstyle{definition}

\declaretheorem[numberlike=theorem]{assumption}

\newcommand{\tOh}[1]{\widetilde{O}\left(#1\right)}

\newcommand{\MM}{\operatorname{MM}}
\newcommand{\textdet}{\operatorname{det}}

\newcommand{\LL}{L^{\star}}

\newcommand{\congest}{$\mathsf{CONGEST}$\xspace}
\newcommand{\pram}{$\mathsf{PRAM}$\xspace}

\newcommand{\UDVC}{{\sf VC}}

\newcommand{\SSUDVC}{{\sf SSVC}}

\newcommand{\sk}[1]{sk_{#1}}
\newcommand{\ska}[1]{sk_{\approx}{#1}}
\newcommand{\nc}[1]{#1-close}

\newcommand{\cnc}{\textsc{ComNeiClustering}}
\newcommand{\textdeg}{\operatorname{deg}}
\newcommand{\rank}{\operatorname{rank}}
\newcommand{\poly}{\text{poly}}
\newcommand{\nd}{neighborhood difference}
\newcommand{\mnnccn}{{\sf MinNNCC}}
\newcommand{\mn}{{\sf Min-Neighbor}}
\newcommand{\mnnccnalg}{{\textsc MinNNCC}}
\newcommand{\isocut}{\textsc{IsolatingCuts}}
\newcommand{\BMM}{{\sf BMM}}
\newcommand{\BMC}{{\sf BMC}}
\newcommand{\eps}{\epsilon}
\newcommand{\subtri}{subset-tribes}
\DeclarePairedDelimiter{\ceil}{\lceil}{\rceil}

\begin{document}
\sloppy

\title{
Global vs.~s-t Vertex Connectivity Beyond Sequential: \\ Almost-Perfect Reductions \&  Near-Optimal Separations
}
\author{
Joakim Blikstad\thanks{CWI, the Netherlands \& KTH, Sweden, \texttt{joakim@cwi.nl}}
\and
Yonggang Jiang\thanks{MPI-INF \& Saarland University, \texttt{yjiang@mpi-inf.mpg.de}}
\and
Sagnik Mukhopadhyay\thanks{University of Birmingham, United Kingdom, \texttt{s.mukhopadhyay@bham.ac.uk}}
\and
Sorrachai Yingchareonthawornchai\thanks{ETH Zürich, Switzerland \&  the Hebrew University, Israel, \texttt{sorrachai.yingchareonthawornchai@eth-its.ethz.ch}}
}
\date{}
\maketitle \pagenumbering{roman}

\begin{abstract}
A recent breakthrough by [LNPSY STOC'21] showed that solving s-t vertex connectivity is sufficient (up to polylogarithmic factors) to solve (global) vertex connectivity in the sequential model.
This raises a natural question: What is the relationship between s-t and global vertex connectivity in other computational models?
In this paper, we demonstrate that the connection between global and s-t variants behaves very differently across computational models:
\begin{itemize}
    \item
    In parallel and distributed models,
    we obtain almost tight reductions from global to s-t vertex connectivity.
    In PRAM, this leads to a $n^{\omega+o(1)}$-work and $n^{o(1)}$-depth  algorithm for vertex connectivity, improving over the 35-year-old $\tilde O(n^{\omega+1})$-work $O(\log^2n)$-depth algorithm by [LLW FOCS'86], where $\omega$ is the matrix multiplication exponent and $n$ is the number of vertices.
    In CONGEST, the reduction implies the first sublinear-round
    (when the diameter is moderately small)
    vertex connectivity algorithm. This answers an open question in [JM STOC'23].
    \item In contrast, we show that global vertex connectivity is strictly harder than s-t vertex connectivity in the two-party communication setting, requiring $\tT{n^{1.5}}$ bits of communication. The s-t variant was known to be solvable in $\tilde O(n)$ communication~[BvdBEMN FOCS'22].  Our results resolve open problems raised by~[MN STOC'20, BvdBEMN FOCS'22, AS SOSA'23].
\end{itemize}

At the heart of our results is a new graph decomposition framework we call \emph{common-neighborhood clustering}, which can be applied in multiple models.
Finally, we observe that global vertex connectivity cannot be solved without using s-t vertex connectivity by proving an s-t to global reduction in dense graphs in the PRAM and communication models.
\end{abstract}

\clearpage
\setcounter{tocdepth}{2}
\tableofcontents

\clearpage
\pagenumbering{arabic}

\section{Introduction}

The problem of finding (global) vertex connectivity of a simple undirected graph aims to find the minimum number of vertices whose removal disconnects the graph (or the graph becomes a singleton). This problem, along with a very closely related problem of (global) edge connectivity (which asks for the minimum number of edges to be removed to disconnect the graph), are classic and fundamental graph problems that have drawn the attention of researchers for the last fifty years \cite{Kleitman1969methods, Podderyugin1973algorithm, EvenT75, Even75, Galil80, EsfahanianH84, Matula87, BeckerDDHKKMNRW82, LinialLW88, CheriyanT91, NagamochiI92, CheriyanR94, Henzinger97, HenzingerRG00, Gabow06, Censor-HillelGK14, SaranurakY22}.
Both problems can be easily solved by $n^2$ calls to their s-t variants\footnote{Throughout this paper, we use $n$ to denote the number of vertices and $m$ to denote the number of edges of the input graph.}, which is in addition given two vertices $s,t$ and asking the minimum number of vertices (or edges) whose removal disconnects $s$ and $t$. This simple idea is not efficient for two reasons. Firstly, s-t vertex (or edge) connectivity is known to be equivalent to max flow with unit vertex capacities (or edge capacities), which is a hard problem.\footnote{While there now exist almost linear time maximum flow algorithms in the sequential setting \cite{ChenKLPGS22}, these algorithms are far from simple. Morover, in other models of computation it remains largely open how efficiently one can solve (even unit-vertex-capacitated) maximum flow.} Besides, $n^2$ calls add an even larger polynomial factor to their s-t variancts running time. Thus, researchers focus on two questions: is there an algorithm to circumvent their s-t variants? Can we reduce the number of calls to their s-t variants?

For edge connectivity, the answer to the first question is affirmative. In the conventional unit-cost RAM (i.e., sequential) model, a long line of work spanning over many decades in the last century was concluded by a near-linear time algorithm \cite{Karger00}, which does not require s-t edge connectivity computation, but instead uses a technique called `tree-packing'. This technique has been proven to be versatile enough to lead to near-optimal algorithms in various computational models \cite{MukhopadhyayN20, Lopez-MartinezM21, DoryEMN21,0001Z22}.  One can view these developments through the lens of the \textit{cross-paradigm algorithm} design---a relatively new research direction where the general theme is to come up with schematic algorithms that can be implemented in several computational models without much trouble~\cite{MukhopadhyayN20, BlikstadBEMN22, RozhonGHZL22, 0001Z22, Nanongkai24}. The second question was also answered affirmatively recently by the technique `isolating cuts' \cite{LiP20, MukhopadhyayN21-Note}, which reduces the number of calls to polylogarithmic.

For the vertex connectivity problem, in the sequential model, the second question was solved very recently~\cite{LiNPSY21} which reduces the number of calls to polylogarithmic. However, unlike the tree packing framework for edge connectivity which has been successfully implemented in various models, the reduction in~\cite{LiNPSY21} is highly sequential and suffers from several bottlenecks which make it only suitable for the sequential model (we discuss more about these bottlenecks in \cref{sec:overview}). Moreover, most the known algorithms for vertex connectivity~\cite{Even75,
BeckerDDHKKMNRW82, LinialLW88, CheriyanT91, CheriyanR94, HenzingerRG00, Gabow06,LiNPSY21}
This is in contrast with the progress on edge connectivity, which is known to be easier to handle than its s-t variant. Thus, a central question this paper aims to answer is:

\begin{center}
\textit{Is solving s-t vertex connectivity necessary and/or sufficient for solving\\ (global) vertex connectivity in various computational models?}
\end{center}

\noindent
In this paper, we show a trifecta of such connections:
\begin{description}
    \item[Sufficiency:]  We show that s-t vertex connectivity is \textit{sufficient} for vertex connectivity on \textit{parallel and distributed models} by providing an almost-perfect reduction that uses only $n^{o(1)}$ calls. This reduction is built on a new framework using \emph{common-neighborhood clustering}, which is of independent interest.

    \item[Insuffiency:] In contrast, we show that s-t vertex connectivity is \emph{not sufficient} for vertex connectivity in the \textit{communication model}, by showing a near-optimal lower bound separating the complexity of s-t vertex connectivity and vertex connectivity. We also provide a matching upper bound, thus settling the communication complexity of vertex connectivity.

    \item[Necessity:] Unlike the case for edge-connectivity, we observe that s-t vertex connectivity is indeed \textit{necessary} for vertex connectivity, at least in dense graphs. This follows from a straightforward reduction from s-t to global vertex connectivity in dense graphs. The reduction works in the \pram{} and communication models, but when applied in the \congest{} model it adds additional communication edges.
    To the best of our knowledge, this simple reduction is novel.
    This observation implies that any algorithm developed for global vertex connectivity can be adapted to solve s-t vertex connectivity on dense graphs (and hence also unit-vertex-capacitated maximum flow). This explains the lack of vertex connectivity algorithms that work without calling its s-t variant, unlike what is the case for edge connectivity. The formal proof is given in \cref{lem:sttoglobal}.
\end{description}

\subsection{Our Results: Reductions and Separations}

\paragraph{Parallel and Distributed Computing.}

We use the standard \pram{} and \congest{} models for parallel and distributed computing respectively (see~\cref{sec:model-problems} for the detailed definitions). Our first results are almost perfect reduction algorithms in both models.

\begin{theorem}\label{thm:prammain}
    In \pram{} model, if s-t vertex connectivity can be solved in $W(m,n)$ work\footnote{We assume all the complexity functions in this paper are reasonably smooth in the sense that $W(Cm,Cn)=\poly(C)\cdot W(m,n)$. } and $D(m,n)$ depth where $W(m,n)$ is superadditive\footnote{$W(m,n)$ is superadditive on $m$ means $W(m_1+m_2,n)\ge  W(m_1,n)+W(m_2,n)$. For example, $W(m,n) = m \sqrt{n}$ is superadditive, while $W(m,n) = n^{\omega}$ would not be superadditive.} on $m$, then vertex connectivity can be solved in $W(m,n)\cdot n^{o(1)}$ work and $D(m,n)\cdot n^{o(1)}$ depth.
\end{theorem}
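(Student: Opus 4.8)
The plan is to reduce computing the vertex connectivity $\kappa(G)$ to $n^{o(1)}$ parallel batches of calls to the s-t black box, organized by a case analysis on the size of the smaller side of an optimal vertex cut. As preprocessing, compute the minimum degree $\delta$ (so $\kappa(G)\le\delta$, and $\delta=n-1$ means $G$ is complete and is handled directly) in $O(m)$ work and $O(\log n)$ depth; it then suffices to solve the decision problem ``does $G$ have a vertex cut of size $\le k$?'' for each $k<\delta$ along a geometric sequence, and, for a hypothetical optimal cut $(L,S,R)$ with $|S|=k$ and $|L|\le|R|$, to additionally guess $a$ with $|L|\in[a,2a)$; this guessing costs only an $O(\log^2 n)$ factor. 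Since every $v\in L$ has $N[v]\subseteq N[L]=L\cup S$, we get $\delta\le\deg(v)\le a+k$, which yields the two structural facts I will use: (i) $k\ge\delta-a$; and (ii) any two $u,v\in L$ share at least $2\delta-(a+k)\ge k-a$ common neighbors.

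The engine is the (vertex) isolating-cuts primitive: for any terminal set $T$, a single parallel batch of $O(\log|T|)$ mutually independent s-t calls produces, for every $t\in T$, the minimum vertex cut separating $t$ from $T\setminus\{t\}$, and taking the global minimum over $t$ and over $n^{o(1)}$ random trials decides the problem, provided some trial yields a $T$ with $|T\cap L|=1$ and $T\cap S=\emptyset$ (so that $S$ itself certifies a cut of size $\le k$ for the unique terminal of $T$ in $L$). Two regimes are then immediate. If $a\ge n^{1-o(1)}$, both sides are large, so $n^{o(1)}$ uniformly random terminal pairs hit $L\times R$ and one s-t call per pair finishes. If $k\le a\cdot n^{o(1)}$, then including each vertex in $T$ independently with probability $\Theta(n^{-o(1)}/a)$ makes $|T\cap L|=1$, $T\cap R\ne\emptyset$ and $T\cap S=\emptyset$ hold simultaneously with probability $n^{-o(1)}$, so $n^{o(1)}$ trials suffice. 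What is left is the hard regime: $a$ small to medium while $k$ is large (so $\delta\ge k$ is large and, by (ii), the vertices of $L$ pairwise share $\ge k/2$ common neighbors), where uniform sampling fails because almost every sampled vertex lands inside the huge separator $S$.

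This is what the \emph{common-neighborhood clustering} is for. By (ii), $L$ is a clique in the auxiliary graph $H_\theta$ in which $u\sim v$ iff $|N(u)\cap N(v)|\ge\theta$, for a threshold $\theta\approx k/2$ known up to our guesses. The clustering is a family of clusters, computable in polylogarithmic depth and near-linear work with only $n^{o(1)}$ total blow-up, such that every such clique lies in a single cluster $C_i$, and such that $C_i$ has small boundary $N(C_i)\setminus C_i$ and size controlled relative to $|L|$. In the localized instance $G_i$ obtained from $G$ by contracting the complement of $N[C_i]$ into one infinite-capacity vertex, the optimal cut survives (as $L\cup S\subseteq N[L]\subseteq N[C_i]$) and every cut of $G_i$ lifts back to one of $G$, so $\kappa(G)\le k$ iff $\kappa(G_i)\le k$ for some $i$; moreover the separator now obeys $|S|\le|N[C_i]|$, so the instance is strictly easier (the separator has shrunk relative to the instance size). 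Recursing, the hard instances are driven in $n^{o(1)}$ rounds into the two easy regimes or into trivial base cases. The work accounting closes because the localized instances of one round essentially partition the edge set, so superadditivity of $W(\cdot,n)$ together with its smoothness makes the total s-t cost $W(m,n)\cdot n^{o(1)}$; the depth only absorbs the $n^{o(1)}$ round count since each round is one parallel batch of s-t calls plus polylogarithmic overhead.

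I expect the crux to be the design and analysis of the common-neighborhood clustering itself: the relation ``$u$ and $v$ share at least $\theta$ common neighbors'' is not transitive, so the clusters cannot be taken to be connected components of $H_\theta$, and one must build clusters that simultaneously capture every relevant clique, keep each cluster's size and boundary small enough to make quantifiable progress on $k$, and are produced within the parallel-work budget while admitting recursion without blow-up. A secondary and more routine matter is checking that the vertex isolating-cuts primitive and the set-to-set-to-single-pair reductions behave correctly under unit vertex capacities (the vertex-splitting gadget, terminals adjacent to or inside the sought separator, and recovering the cut and not merely its size); these only perturb $m$ and $n$ by constant factors and stay within the black-box budget.
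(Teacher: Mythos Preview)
Your proposal has the right ingredients at the top level---isolating cuts, random sampling for the balanced and small-separator regimes, and some form of common-neighborhood clustering for the hard regime---but there is a genuine gap in how you handle the hard regime, and it is exactly the place where the paper's argument differs from yours.

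You want the clustering to produce clusters $C_i$ with ``size controlled relative to $|L|$'' and ``small boundary,'' then to localize by contracting the complement of $N[C_i]$ and \emph{recurse}, arguing that ``the separator has shrunk relative to the instance size.'' But you never establish such a size bound, and in fact the paper's clustering (Lemma~4.1) makes no claim of this kind: a cluster $C$ may be arbitrarily large; all that is guaranteed is bounded \emph{neighborhood difference} $|N(u)\triangle N(v)|\le \hO{\ell}$ for $u,v\in C$. Consequently your localized instance need not be any smaller, and your recursion has no progress measure or termination argument. (Your common-neighbor threshold $|N(u)\cap N(v)|\ge\theta$ has the same defect: when $\delta\approx k$ is large the graph is dense and many vertices outside $L$ also share $\Omega(k)$ common neighbors, so the cluster containing $L$ can be huge.)

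The paper replaces recursion by a one-shot dichotomy that you are missing. After clustering, for the cluster $C$ containing $L$ they prove (Claim~5.3) that \emph{either} $|N(L)\cap C|\le\hO{\ell}$, in which case sampling terminals from $C$ at rate $1/\hO{\ell}$ makes isolating cuts succeed directly, \emph{or} $|C\setminus N[L]|\le\hO{\ell}$, which together with the bounded neighborhood difference forces $C$ to be a near-clique. In this second case they do not recurse; they solve the \mnnccn{} problem directly by sampling $\hO{|C|/\ell}$ sources $s$ and, for each, sparsifying the instance by deleting all edges from $C\setminus\{s\}$ into $N(s)$---the bounded neighborhood difference guarantees each remaining vertex has only $\hO{\ell}$ surviving edges, so each sparsified instance has $\hO{|C|\ell}$ edges and the total edge count over all sources is $\hO{|C|^2}\le\hO{m}$ (here is where superadditivity of $W$ in $m$ is used). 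This near-clique sparsification step is the idea your plan is missing; without it, there is no way to finish the hard regime within the stated work budget.
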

\begin{theorem}\label{thm:congestmain}
    In \congest{} model, if S-T vertex connectivity\footnote{S-T vertex connectivity is given two disjoint vertex sets $S,T$ and asking the minimum vertex cut separating $S$ and $T$. In other models we can simply contract $S$ and $T$ into single vertices. However, in \congest{}, this is not allowed as it would change the underlying communication network. This variant does not incur new challenges as commonly both s-t or S-T vertex connectivity are solved by maximum bipartite matching.} can be solved in $R(m,n,D)$ rounds,\footnote{$D$ is the diameter of the input network.} then vertex connectivity can be solved in $R(m,n,D)\cdot n^{o(1)}$ rounds.
\end{theorem}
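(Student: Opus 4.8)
The plan is to reduce one global vertex connectivity computation to $n^{o(1)}$ black-box calls to S-T vertex connectivity, plus $n^{o(1)}$ rounds of auxiliary \congest{} work (each ``global'' aggregation step costing an extra $O(D)$ over a BFS tree), and then invoke the assumed $R(m,n,D)$-round algorithm. First I would set up the standard scaffolding: compute the minimum degree $\delta$, which certifies $\kappa\le\delta$, so that it suffices, for each target $k$ from a geometric grid $\{1,2,4,\dots\}$ up to $\delta$, to decide whether a vertex cut of size $<k$ exists and if so output one; and reduce to $\kappa<\delta$ (when $\kappa=\delta$ the minimum separator is just the neighborhood of a minimum-degree vertex, found with $O(D)$ extra rounds). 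Fix $k$, let $(L,S,R)$ be a hypothetical minimum separation with separator $S$, $|S|=\kappa<k$, and $|L|\le|R|$, and split on a threshold $\tau$ (also ranging over a geometric grid) according to whether $|L|\ge\tau$ or $|L|<\tau$.

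The ``balanced'' regime, where $|L|$ and $\kappa$ are within an $n^{o(1)}$ factor of each other, is routine: for each rate $p$ in a geometric grid, sample two vertex sets $T_1,T_2$ independently by including each vertex with probability $p$; for the right scale $p\approx 1/|L|$ one gets, with probability $n^{-o(1)}$, that $T_1\cap L\ne\emptyset$, $T_2\cap R\ne\emptyset$ and $(T_1\cup T_2)\cap S=\emptyset$, so a single S-T vertex connectivity call with source side $T_1$ and sink side $T_2$ returns a cut of size $<k$. Amplifying by $n^{o(1)}$ repetitions and ranging over $p$, $\tau$, $k$ keeps the total number of calls at $n^{o(1)}$. (This is exactly where the S-T variant rather than the s-t variant is used: in \congest{} one cannot contract $T_1,T_2$ into single nodes without altering the communication network, so the sets are supplied directly; cf.\ the footnote to \cref{thm:congestmain}.)

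The substance is the ``unbalanced'' regime $|L|<\tau$ with $\tau$ much smaller than $\kappa$, and this is where \emph{common-neighborhood clustering} enters. The structural observation it exploits is that the small side of a minimum separation consists of vertices with almost equal neighborhoods: for $u,v\in L$ one has $N(u),N(v)\subseteq L\cup S$, hence $\deg(u),\deg(v)\le|L|+\kappa-1\le\delta+|L|-1$ (using $\kappa\le\delta$), while $\deg(u),\deg(v)\ge\delta$, and a short computation gives $|N(u)\triangle N(v)|\le 2|L|-2(\delta-\kappa)\le 2|L|=O(\tau)$. Thus $L$ sits inside a single ball of radius $O(\tau)$ in the ``neighborhood Hamming metric''. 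A symmetric estimate shows that if $\delta-\kappa\gtrsim\tau$ then such a ball cannot contain vertices of both $L$ and $R$ (an $L$-vertex and an $R$-vertex have disjoint neighborhoods outside $S$, of total size $\ge 2(\delta-\kappa)$), so the ball lies in $L\cup S$ and the separation is pinned down up to recovering $S$ from that ball and its out-neighborhood; the residual regime $\delta-\kappa=O(\tau)$ forces $|L|=O(\tau)$ and confines all candidate degrees to a window of width $O(\tau)$, which I would push to a bounded-size / small-residual-degree instance solvable by a handful of further S-T calls or by recursing on a shrunken graph. Building the clustering amounts to having each candidate vertex learn, for the $\operatorname{polylog}(n)$ scales $\tau$, which other vertices lie within $O(\tau)$ of it in neighborhood-Hamming distance; I would implement this with $\ell_0$-style sketches or random fingerprints of adjacency vectors, disseminated and compared over a BFS tree in $\operatorname{polylog}(n)\cdot O(D)$ rounds.

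I expect the \congest{} realization of common-neighborhood clustering to be the main obstacle, for three intertwined reasons. First, the pairwise set-difference estimation must run with $\widetilde O(1)$-bit messages and without overshooting the round budget, which pins down the sketch size and forces care about how sketches are disseminated and how the ball/cluster structure is made globally consistent. Second — the genuinely \congest{}-specific point, absent in \pram{} — the clustering cannot be realized by literally contracting clusters, since that would change the graph we are allowed to communicate on, so every cluster-level operation, including the exact/local recovery of $S$ inside a ball and any recursion, must be simulated on the original network at an overhead that must be shown to stay $n^{o(1)}$. Third, the iteration over scales and over the successive shrinking of the instance must be kept shallow (depth $n^{o(1)}$, each level multiplying both the round count and the number of S-T calls by only $n^{o(1)}$), which dictates how aggressively each clustering step must reduce the relevant parameter. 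The \pram{} statement (\cref{thm:prammain}) follows from the same framework, with superadditivity of $W$ in $m$ used to charge the work of the recursively generated subinstances against $W(m,n)$.
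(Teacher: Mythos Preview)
Your ``balanced'' regime is broken. You sample $T_1,T_2$ at rate $p\approx 1/|L|$ and claim that if $T_1\cap L\neq\emptyset$, $T_2\cap R\neq\emptyset$, and $(T_1\cup T_2)\cap S=\emptyset$, then one S-T call returns a cut of size $<k$. But nothing prevents $T_1$ from also hitting $R$ (indeed when $|R|\gg|L|$ it almost surely does), and then $S$ is \emph{not} a $(T_1,T_2)$-separator at all: a vertex of $T_1\cap R$ and a vertex of $T_2\cap R$ stay connected in $G-S$. The S-T call can return something arbitrarily large. The paper avoids this by first reducing global to \emph{single-sink} vertex connectivity (\cref{lem:globaltosinglesource}), which fixes a sink $t$ once and for all; every subsequent call is then of the form ``min $t$-sink cut with source constrained to a given set'', and the two-set pathology never arises.

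In the unbalanced regime you correctly spot the structural lemma $|N(u)\triangle N(v)|\le 2|L|$ for $u,v\in L$, and correctly point at sketch-based neighborhood clustering. But the plan after clustering is where the real content is, and yours is a placeholder. The paper does \emph{not} split on $\delta-\kappa$; instead, once $L$ is captured in a common-neighborhood cluster $C$, a dichotomy (\cref{lem:twocases}) says either $|N(L)\cap C|$ is small---handled by the \emph{isolating cuts} lemma, which you never invoke and which in \congest{} requires a new vertex-count-preserving version (\cref{lem:refined isolating cut lemma})---or $C$ is a near-clique, in which case one must solve \mnnccn{}. The latter is the real \congest{} bottleneck you are looking for: one has $\widehat O(|C|/\ell)$ sparse instances, each with $\widehat O(|C|\ell)$ edges but up to $n$ vertices, and running them naively causes unbounded congestion. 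The paper's fix is a random proxy map $f_s:N(C)\to C$ that routes all instance communication into the (near-)clique $G[C]$ with $\widehat O(1)$ expected load per edge (\Cref{sec:mnnccn}). Your proposal has no analogue of either ingredient, and the ``recurse on a shrunken graph'' fallback is not a plan.

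Finally, you misidentify the obstacle: the common-neighborhood clustering itself is comparatively clean in \congest{} ($\widehat O(\sqrt{n}+D)$ rounds via minor-aggregation-style BFS with sketch tests, \cref{lem:sncrunningtime}); the hard part is what happens \emph{inside} each cluster.
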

Our main technical contribution for showing~\cref{thm:prammain,thm:congestmain} is a new algorithmic framework based on graph decomposition for vertex connectivity. We introduce a graph property called \emph{common-neighborhood} and show that if the graph satisfies this property, then the reduction can be handled in \pram{} and \congest{}. For the general graph, we develop an efficient algorithmic framework that decomposes the graph into \emph{common-neighborhood} clusters. The decomposition algorithm, as well as its efficient implementation in both \pram{} and \congest{}\footnote{More precisely, our algorithm works in the recently introduced \emph{minor-aggregation model} \cite{RozhonGHZL22} in $n^{o(1)}$ rounds.}, might be of independent interest.

Our approach is similar to recent advances in graph algorithms, where problems are first solved on graphs with ``nice properties'' and then extended to general graphs through decomposition techniques
(for example, low-diameter decomposition \cite{FakcharoenpholRT04,AbrahamBN08,CohenKMPPRX14,BernsteinNW22}, expander decomposition \cite{SpielmanT04,Sherman13,RackeST14}, length-constraint expander decomposition \cite{HaeuplerR022,HaeuplerH0RS24})
.
Certain graph decompositions are also proven to be highly parallelizable \cite{MillerPX13,ChangS19,ChangS20,AshvinkumarBCGH24}; this is also the case for our common-neighborhood clustering.

\paragraph{Two-Party Communication.} Given the positive results in the sequential, \pram{}, and \congest{} models, one might expect a perfect reduction from vertex connectivity to its s-t variant in other models as well. Perhaps surprisingly, we show that this is not always the case. In particular in the classic \emph{two-party communication} model, we show a lower bound separating the communication complexity of vertex connectivity and s-t vertex connectivity by $\tilde{\Theta}(\sqrt{n})$.\footnote{Throughout this paper, we use $\tilde{O}(\cdot)$ to hide $\poly\log n$ factors, and $\hO{\cdot}$ to hide $n^{o(1)}$ factors.}
We also show a matching upper bound, thus resolving the communication complexity of vertex connectivity up to polylogarithmic factors.

\begin{theorem}\label{thm:ccmain}
    The two-party communication complexity of vertex connectivity is $\tT{n^{1.5}}$.
\end{theorem}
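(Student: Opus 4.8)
The plan is to establish the two bounds separately: a communication protocol of cost $\tilde O(n^{1.5})$, and a matching lower bound of $\tilde\Omega(n^{1.5})$ bits. For the upper bound, I would first dispose of the case where the connectivity value $\kappa$ is large, say $\kappa \ge \sqrt{n}$: in that regime every vertex has degree $\ge \sqrt{n}$, so the graph has $\ge n^{1.5}/2$ edges, but more to the point one can use a Tur\'an-type/random-sampling argument to reduce to computing an s-t (or S-T) cut from a small number of well-chosen vertex pairs --- either the minimum cut is ``unbalanced'' (one side has $<\sqrt n$ vertices) and can be caught by picking $\tilde O(\sqrt n)$ random sources together with their neighborhoods as sinks, or it is ``balanced'' and two random vertices land on opposite sides with probability $\Omega(1/\text{poly})$. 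Each s-t vertex connectivity instance costs $\tilde O(n)$ bits by [BvdBEMN FOCS'22], and we make $\tilde O(\sqrt n)$ of them, for $\tilde O(n^{1.5})$ total. When $\kappa < \sqrt n$, the low-degree vertices number at most... no --- instead, trivially stream all edges incident to the (at most $\kappa \le \sqrt n$) ... hmm, that is not bounded. Better: when $\kappa < \sqrt{n}$, use the common-neighborhood clustering / the sequential Kawarabayashi--Thorup--style sparsification to reduce to $\tilde O(1)$ s-t calls on a graph that can be described with $\tilde O(n^{1.5})$ bits, or simply observe Alice and Bob can afford to exchange a size-$\tilde O(n\kappa) = \tilde O(n^{1.5})$ certificate (a sparse $k$-connectivity certificate \`a la Nagamochi--Ibaraki).

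For the lower bound I would reduce from set-disjointness. The target is a family of graphs on $n$ vertices where deciding whether $\kappa \ge k$ for an appropriate threshold $k = \Theta(\sqrt n)$ encodes $\Theta(n^{1.5})$ independent bits of a disjointness instance. The construction should be a ``two-layer'' gadget: take $\Theta(\sqrt n)$ groups, each of $\Theta(\sqrt n)$ vertices, with Alice holding a bipartite graph between group $i$ and group $j$'s ``left'' copies and Bob holding the ``right'' copies, arranged so that a small vertex cut exists exactly when some coordinate is in the intersection. A clean way: encode an $N\times N$ boolean matrix with $N = \Theta(\sqrt n)$ (so $N^2 = \Theta(n)$ entries --- we want $n^{1.5}$, so rather $N = \Theta(n)$ columns of height $\Theta(\sqrt n)$, giving $\Theta(n^{1.5})$ entries) into the bipartite adjacency between a fixed set $L$ of $\Theta(\sqrt n)$ vertices and a set of $\Theta(n)$ ``column'' vertices; route all column vertices through $L$ so that the min vertex cut equals $|L|$ minus the number of ``all-ones'' columns, a quantity that is sensitive to a disjointness predicate per column. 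Standard direct-sum / information-complexity lowers each column to $\Omega(\sqrt n)$, for $\Omega(n^{1.5})$ total. I would verify the gadget satisfies: (i) the global min vertex cut is always ``small'' so the s-t-easy regime does not trivialize it, and (ii) correctness is not spoiled by unintended cuts elsewhere in the graph, which I would enforce by padding with a high-connectivity backbone (e.g.\ a dense expander on the remaining vertices) whose connectivity exceeds the gadget value.

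The main obstacle I expect is the lower-bound gadget: making the min \emph{vertex} cut robustly equal to the intended gadget quantity is delicate because vertex cuts can ``cheat'' by going through unexpected parts of the graph, and unlike edge cuts they interact badly with padding (adding a dense backbone can create new small separators). I would need the backbone to be simultaneously (a) highly vertex-connected, (b) attachable to the gadget without lowering the gadget cut, and (c) cheap to encode in the communication split (ideally public / shared so it costs nothing). Getting the threshold $k$, the group sizes, and the padding to line up so that both the $\tilde\Omega(n^{1.5})$ lower bound and the ``$\kappa$ small'' structure hold simultaneously is the crux; the remaining pieces (the $\tilde O(\sqrt n)$-call upper bound via [BvdBEMN FOCS'22], the disjointness direct-sum) are comparatively routine.
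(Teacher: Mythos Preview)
Your upper-bound sketch has the right flavor but the case split is on the wrong parameter. Splitting on $\kappa$ leaves a gap: when $\kappa \ge \sqrt{n}$ the smaller side $L$ of the optimal cut can still have $|L|$ as small as $1$ (a single high-degree vertex), so $\tilde O(\sqrt{n})$ random sources need not hit it, and your ``balanced'' subcase only gives probability $\Theta(|L|\cdot|R|/n^2)$ per random pair, which can be as small as $1/n$. The paper splits on $|L|$ instead: if $|L|\ge\sqrt{n}/3$, sample $\tilde O(\sqrt{n})$ candidate sources and run one $s$-$t$ call each against a fixed sink; if $|L|<\sqrt{n}/3$, use that $L$ has neighborhood difference at most $2|L|<\sqrt{n}$ (\cref{lem:unbalancedclose}) and precompute $\ell_0$-sketches $\mathrm{sk}_{\sqrt{n}}(N(v))$ for all $v$ in $\tilde O(n^{1.5})$ total bits, which lets any candidate $u$ recover all $v$ with $|N(u)\triangle N(v)|\le\sqrt{n}$ and build a local sparsified instance. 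Your Nagamochi--Ibaraki certificate does not obviously plug the gap, because small $|L|$ does not force small $\kappa$.

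The lower bound is where the real gap lies. You target threshold $k=\Theta(\sqrt{n})$; the paper's hard instances have $\kappa \approx n/2$. The construction is a clique $U$ of size $n$ plus $\sqrt{n}$ cliques $V^{(1)},\ldots,V^{(\sqrt{n})}$ each of size $\sqrt{n}$, with edges from $V^{(i)}$ into $U$ encoding the $i$-th batch of a $\sqrt{n}$-\textsc{or}-of-$\sqrt{n}$-\textsc{and}-of-subset (``tribes'') instance, arranged so that $|N_G(V^{(i)})|=n/2$ exactly when the $i$-th \textsc{and} holds. The clique $U$ \emph{is} your ``high-connectivity backbone,'' but the delicate part---which you correctly flag and do not resolve---is proving that every cut $(L,S,R)$ with $L$ a \emph{proper} subset of some $V^{(i)}$, or straddling several $V^{(i)}$'s, has $|S|\ge n/2+1$. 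The paper does this by an explicit packing of $n/2+1$ internally-vertex-disjoint paths between every pair $u,v$, using the single witness element in $B^{(i)}_j\setminus A^{(i)}_j$ (guaranteed in the \textsc{no} case) to supply the one extra path. Your $\Theta(\sqrt{n})$-threshold bipartite-column gadget would need a completely different mechanism to rule out unintended cuts, and it is not at all clear one exists; the large threshold and the full-clique backbone are precisely what make the path-packing argument go through.
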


It is known from previous work~\cite{blikstad2022nearly} that the communication complexity of s-t vertex connectivity is $\tT{n}$. Thus, \cref{thm:ccmain} provides a strict separation of these two problems.
Previously, a lower bound of $\Omega(n^2)$ is only known to hold on multigraph~\cite{AssadiS23}.

In contrast, we note that for \emph{edge-connectivity}, seemingly the s-t variant is harder than the global variant: it is known that the global variant admits an $\tilde{O}(n)$ communication protocol \cite{MukhopadhyayN20}, while it remains an an open question if this is also the case for the s-t edge-connectivity.

\subsection{Our Results: Implications}

\paragraph{Implications in \pram{} and \congest{}.}

By folklore reductions, s-t vertex connectivity is known to be equivalent to \emph{maximum bipartite matching} (\BMM{})\footnote{They are equivalent for the value version, i.e., outputting the size of the cut and the size of the maximum matching (see \cref{subsec:reductions}).}, which is an extensively studied problem. For completeness, we summarizes the current state-of-the-art results for \BMM{} in \pram{} and \congest{} in \cref{sec:BMM}.\footnote{In \cref{sec:BMM}, we also show an implementation of a semi-streaming algorithm of \cite{assadi2022semi} in \congest{} and \pram{}, implying new running times for \BMM{} in these models that, to our knowledge, have not previously been observed.} By combining with our reduction Theorems~\ref{thm:prammain} and \ref{thm:congestmain}, we directly get the following results.

\begin{theorem}\label{thm:pramexactrunningtimeVC}
    In \pram{} model, vertex connectivity can be solved in $mn^{0.5+o(1)}$ work and $n^{0.5+o(1)}$ depth.
\end{theorem}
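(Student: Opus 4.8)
The plan is to apply the reduction of \cref{thm:prammain} with the state-of-the-art \pram{} bounds for s-t vertex connectivity (equivalently, maximum bipartite matching). Concretely, I would first invoke the folklore equivalence between s-t vertex connectivity and \BMM{} (cited in the excerpt, made precise in \cref{subsec:reductions}), so that a \pram{} algorithm for \BMM{} yields one for s-t vertex connectivity with the same asymptotic work and depth up to $n^{o(1)}$ factors. The key numeric input is a \pram{} algorithm computing s-t vertex connectivity (or \BMM{}) in $W(m,n) = m\sqrt{n}\cdot n^{o(1)}$ work and $D(m,n) = \sqrt{n}\cdot n^{o(1)}$ depth; this is exactly the kind of bound summarized in \cref{sec:BMM} (the scaling/augmenting-path style $\tilde O(m\sqrt n)$ matching algorithm, parallelized, or the \pram{} implementation of \cite{assadi2022semi} referenced there). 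One should double-check that this $W(m,n) = m n^{1/2+o(1)}$ is superadditive in $m$, which it is: for fixed $n$ it is linear in $m$, so $W(m_1+m_2,n) = W(m_1,n)+W(m_2,n)$ exactly, satisfying the hypothesis of \cref{thm:prammain}.

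Next I would simply plug these into \cref{thm:prammain}: the reduction turns a $W(m,n)$-work, $D(m,n)$-depth s-t algorithm into a vertex-connectivity algorithm running in $W(m,n)\cdot n^{o(1)}$ work and $D(m,n)\cdot n^{o(1)}$ depth. Substituting $W(m,n) = m n^{1/2+o(1)}$ and $D(m,n) = n^{1/2+o(1)}$ gives work $m n^{1/2+o(1)}\cdot n^{o(1)} = m n^{0.5+o(1)}$ and depth $n^{1/2+o(1)}\cdot n^{o(1)} = n^{0.5+o(1)}$, since the product of two $n^{o(1)}$ factors is again $n^{o(1)}$. This is exactly the claimed bound.

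The only real obstacle — and it is a matter of bookkeeping rather than a mathematical difficulty — is making sure the $n^{o(1)}$ slack is used consistently: the reduction's overhead, the matching algorithm's own sub-polynomial factors, and the conversion through the \BMM{} equivalence must all be absorbed into a single $n^{o(1)}$ term, and one must confirm the reduction is invoked on graphs with the same parameters $m,n$ (the common-neighborhood clustering may call the s-t oracle on several subgraphs, but by the superadditivity assumption on $W$ their total work is bounded by $W(m,n)\cdot n^{o(1)}$, and they can be run in parallel so the depth only multiplies by the $n^{o(1)}$ recursion/clustering overhead). Once the accounting is done, no further argument is needed; the theorem follows immediately by composition.
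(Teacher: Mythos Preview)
Your proposal is correct and matches the paper's own proof essentially exactly: plug the \pram{} s-t vertex connectivity bounds (obtained from the $\tilde O(m\sqrt{n})$-work, $\tilde O(\sqrt{n})$-depth \BMM{} algorithm of \cite{LeeS14} via \cref{lem:ST vc problem in matching}) into \cref{thm:prammain}, after checking that $W(m,n)=m n^{1/2+o(1)}$ is superadditive in $m$. The only minor remark is that the paper uses \cite{LeeS14} rather than the \pram{} implementation of \cite{assadi2022semi} (the latter gives a slightly worse work--depth tradeoff, \cref{thm:pram-bmm-old}), and that the \BMM{}-to-s-t conversion requires an extra reachability step on the residual graph to recover a minimum vertex cover --- but these are citation and bookkeeping details, not a change in approach.
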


\begin{theorem}\label{thm:congestexactrunningtimeVC}
    In \congest{} model, vertex connectivity can be solved in $n^{o(1)}\cdot (n^{3/8}D^{1/2} + n^{1/2}D + n^{7/10}D^{7/10})$ rounds, which is sublinear as long as $D=n^{3/7-\eps}$ for some constant $\eps>0$.
\end{theorem}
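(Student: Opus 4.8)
The plan is to obtain Theorem~\ref{thm:congestexactrunningtimeVC} as an immediate corollary of the black-box reduction in Theorem~\ref{thm:congestmain}, instantiated with the best available \congest{} round complexity for S-T vertex connectivity. Concretely I would proceed in three steps: (i) reduce S-T vertex connectivity in \congest{} to maximum bipartite matching; (ii) quote the state-of-the-art \congest{} bound for \BMM{}; and (iii) substitute the resulting $R(m,n,D)$ into Theorem~\ref{thm:congestmain} and check the sublinear regime.

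For step (i), I would use the folklore equivalence between S-T vertex connectivity and \BMM{}: the standard vertex-splitting/flow gadget that turns the cut problem into a bipartite matching instance can be simulated directly on the original communication network without contracting vertices, so the S-T vertex connectivity round complexity equals that of \BMM{} up to lower-order factors. For step (ii), I would invoke the bounds collected in \cref{sec:BMM} — in particular the \congest{} realization of the semi-streaming matching algorithm of~\cite{assadi2022semi} combined with Hopcroft--Karp-style augmentation — which yields $R(m,n,D) = n^{o(1)}\cdot\bigl(n^{3/8}D^{1/2} + n^{1/2}D + n^{7/10}D^{7/10}\bigr)$ rounds. For step (iii), Theorem~\ref{thm:congestmain} then gives vertex connectivity in $R(m,n,D)\cdot n^{o(1)} = n^{o(1)}\cdot\bigl(n^{3/8}D^{1/2} + n^{1/2}D + n^{7/10}D^{7/10}\bigr)$ rounds, which is exactly the claimed bound.

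It remains to verify sublinearity. Setting $D=n^{3/7-\eps}$ for a constant $\eps>0$: the term $n^{1/2}D$ has exponent $1/2+3/7-\eps<13/14$ and the term $n^{3/8}D^{1/2}$ has exponent $3/8+3/14-\eps/2<33/56$, so both are $n^{1-\Omega(1)}$ irrespective of $\eps$; the binding term is $n^{7/10}D^{7/10}$, whose exponent is $7/10+(7/10)(3/7-\eps)=1-7\eps/10<1$. Hence every term is $n^{1-\Omega(\eps)}$, which dominates the $n^{o(1)}$ prefactor, so the total is sublinear for every constant $\eps>0$, and $D=n^{3/7}$ is precisely the threshold at which $n^{7/10}D^{7/10}$ equals $n$. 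The only nontrivial ingredients are Theorem~\ref{thm:congestmain} (the common-neighborhood clustering framework and its minor-aggregation implementation) and the \congest{} \BMM{} bound of \cref{sec:BMM}; within the present argument there is no separate obstacle — it is just the composition of those two results together with the elementary exponent bookkeeping above, the only point of care being that the $n^{o(1)}$ overhead from the reduction is already internalized in the statement of Theorem~\ref{thm:congestmain}.
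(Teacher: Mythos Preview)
Your proposal is correct and takes essentially the same approach as the paper: the paper's proof of \cref{thm:congestexactrunningtimeVC} simply plugs the \congest{} \BMM{} bound (\cref{thm:congest-bmm}) into \cref{thm:congestmain} via the S-T vertex connectivity to \BMC{}/\BMM{} reduction (\cref{lem:ST vc problem in matching}), exactly as you outline. Your explicit sublinearity computation is a welcome addition the paper leaves implicit; one cosmetic point is that the augmentation step in \cref{sec:BMM} is done via single-source reachability (one augmenting path at a time) rather than Hopcroft--Karp phases, but this does not affect your argument.
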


The superadditive requirement in \cref{thm:prammain} might prevent us from reduction to \BMM{} algorithm in matrix multiplication work and depth \cite{Lovasz79,KarpUW86,MulmuleyVV87}. However, although not directly implied by \cref{thm:prammain}, our reduction framework is versatile enough to obtain vertex connectivity in matrix multiplication work and depth as well.

\begin{theorem}\label{thm:pramexactrunningtimeVCmaxtrix}
    In \pram{} model, vertex connectivity can be solved in $n^{\omega+o(1)}$ work and $n^{o(1)}$ depth, where $n^{\omega}$ is the optimal work for matrix multiplication in $n^{o(1)}$ depth.
\end{theorem}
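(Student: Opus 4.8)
The plan is to open up the proof of \cref{thm:prammain} and replace the black-box call to an s-t vertex connectivity oracle by a matrix-multiplication-based \BMM{} subroutine, then redo the work accounting with a convexity argument tailored to the function $n^{\omega}$ rather than the superadditivity-in-$m$ argument used there. Concretely, the common-neighborhood clustering framework reduces one global vertex connectivity instance to a family of \BMM{} (equivalently s-t / S-T vertex connectivity) calls organized into $n^{o(1)}$ phases, and the crucial structural feature that makes the superadditive-in-$m$ hypothesis of \cref{thm:prammain} usable is that, within each phase, the induced instances are — up to an $n^{o(1)}$ blow-up — vertex-disjoint subgraphs of the input. The first thing I would do is extract this feature from the proof of \cref{thm:prammain} as a stand-alone statement: the algorithm runs in $n^{o(1)}$ phases, and phase $i$ issues \BMM{} calls on graphs $H_{i,1},\dots,H_{i,k_i}$ with $\sum_j |V(H_{i,j})| \le n^{1+o(1)}$, each $H_{i,j}$ having at most $n$ vertices.

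Next I would plug in the matrix-multiplication algorithm for \BMM{}: on a graph with $n'$ vertices one can compute the size of a maximum bipartite matching (hence the value of an s-t vertex cut) in $n'^{\,\omega+o(1)}$ work and $n^{o(1)}$ depth, via the rank of a randomized Tutte-type matrix together with parallel linear algebra \cite{Lovasz79,KarpUW86,MulmuleyVV87}, where $n^{\omega}$ is the optimal work for matrix multiplication in $n^{o(1)}$ depth; note that this bound depends only on $n'$, not on the edge count, so no preliminary sparsification is needed for the accounting. Since $x\mapsto x^{\omega}$ is superadditive on the nonnegative reals whenever $\omega\ge 1$, i.e.\ $\sum_j x_j^{\omega}\le \bigl(\sum_j x_j\bigr)^{\omega}$, the work of phase $i$ is at most $n^{o(1)}\sum_j |V(H_{i,j})|^{\omega}\le n^{o(1)}\bigl(\sum_j |V(H_{i,j})|\bigr)^{\omega}\le n^{o(1)}\bigl(n^{1+o(1)}\bigr)^{\omega}=n^{\omega+o(1)}$, and summing over the $n^{o(1)}$ phases keeps the total work at $n^{\omega+o(1)}$. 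For the depth, the clustering runs in $n^{o(1)}$ depth (as already established for \cref{thm:prammain}), each \BMM{} call has $n^{o(1)}$ depth, and the $n^{o(1)}$ phases are executed sequentially, so the depth multiplies out to $n^{o(1)}$.

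The main obstacle is the first step. The clean black-box phrasing of \cref{thm:prammain} hides inside its proof the phase-by-phase, near-vertex-disjoint structure of the generated \BMM{} instances, and one must verify that this structure really holds — in particular that the common neighborhoods shared between clusters do not inflate the per-phase total vertex count past $n^{1+o(1)}$, and that the local cut instance attached to a cluster on $n'$ vertices can itself be encoded as \BMM{} on $O(n')$ vertices so that the $x^{\omega}$-superadditivity estimate applies. Once this bookkeeping is pinned down, the remainder is a direct substitution; more generally, the same reasoning upgrades any $f(n)$-work, $n^{o(1)}$-depth \BMM{} algorithm with $f$ superadditive into an $f(n)\cdot n^{o(1)}$-work, $n^{o(1)}$-depth vertex connectivity algorithm, which is why \cref{thm:pramexactrunningtimeVCmaxtrix} does not follow from \cref{thm:prammain} as a black box but does follow from its proof.
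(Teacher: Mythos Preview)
Your proposal has a genuine gap in the \mnnccn{} case. The claimed structural fact---that within each phase the generated \BMM{} instances have total vertex count $n^{1+o(1)}$---is correct for the isolating cuts branch (this is exactly \cref{lem:refined isolating cut lemma 2}, which guarantees $\tilde O(n)$ total vertices), but it is \emph{false} for \mnnccn{}. For a single cluster $C$ with estimate $\ell$, that subroutine samples $\hO{|C|/\ell}$ sources $x$ and, for each, builds a sparsified instance $H_x$ (\cref{lem:sparsify G in pram}). Each $H_x$ has only $\hO{|C|\ell}$ edges, but its vertex set is still $C\cup C'_x\cup\{t\}$ of size $\tilde O(|C|)$; these instances all live on essentially the same vertex set and are not close to vertex-disjoint. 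The total vertex count across them is $\hO{|C|^2/\ell}$, which for small $\ell$ is far above $|C|$. Plugging $W(n')=n'^{\omega}$ gives work $\hO{(|C|/\ell)\cdot |C|^{\omega}}$, not $\hO{|C|^{\omega}}$. This is precisely the obstruction the paper flags in \cref{subsec:overviewparallel}: the edge-sparsification that makes the superadditive-in-$m$ accounting of \cref{thm:prammain} work does nothing to help when the cost function ignores $m$.

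The paper's fix is not a sharper vertex-disjointness bookkeeping but a different algorithm for \mnnccn{} (\cref{lem:near-clique in MM time}). It builds one random modular $N(t)$-embedding $f$ on the cluster in $\tilde O(|C|^{\omega})$ work (\cref{lem:random construction}), so that $\kappa(x,t)=\rank(f(N(x)))$ for every source $x$. The common-neighborhood guarantee $|N(x)\triangle N(x')|\le \hO{\ell}$ means the matrices $f(N(x))$ and $f(N(x'))$ differ in only $\hO{\ell}$ columns; a single $\tilde O(|C|^{\omega})$ preprocessing via the matrix determinant lemma (\cref{lem:matrix ds}) then lets each of the $\hO{|C|/\ell}$ rank queries be answered in $T_{\MM}(|C|,\ell,\ell)$ work, and $\hO{|C|/\ell}\cdot T_{\MM}(|C|,\ell,\ell)\le \hO{|C|^{\omega}}$. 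So the $|C|/\ell$ instances are solved \emph{jointly}, exploiting their overlap rather than pretending it away. Your outline would go through once you replace the per-instance \BMM{} calls in the \mnnccn{} branch by this shared convex-embedding computation.
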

\begin{remark}
    The $n^{\omega+o(1)}$ work and $n^{o(1)}$ depth can be improved to $\tO{n^{\omega}}$ work and $\tO{1}$ depth if we apply a common-neighborhood cover algorithm with $\tO{n^2}$ work and $\tO{1}$ depth instead of $\tO{m}$ work and $n^{o(1)}$ depth. However, for consistency throughout the paper across different models, we do not optimize this here.
\end{remark}

\paragraph{Previous Work.} Parallel vertex connectivity was mostly studied for small $\kappa$ (the vertex connectivity of the input graph) \cite{CheriyanT91,CheriyanKT93,GianinazziH20}, which have at least $\Omega(n)$ depths for large~$\kappa$. Despite the long-known fact that BMM can be solved in matrix multiplication work and depth \cite{Lovasz79,KarpUW86,MulmuleyVV87}, the only known sublinear depth algorithm for vertex connectivity in the general case is by reduction to matrix multiplication \cite{LinialLW88} in $\tO{n^{\omega+1}}$ work and $O(\log^2n)$ depth. Our result \cref{thm:pramexactrunningtimeVCmaxtrix} improves this 35-year-old result in the subpolynomial depth regime to almost matrix multiplication time. Moreover, in the sublinear depth regime, our result \cref{thm:pramexactrunningtimeVC} is faster than matrix multiplication time on moderately sparse graphs.

In the \congest{} model, vertex connectivity was either studied for small $\kappa$ \cite{Thurimella97,PritchardT11,parter2020distributed,parter2022near,JiangM23}, or by $O(\log n)$ approximation \cite{Censor-HillelGK14}. For exact computation and large $\kappa$, all of the mentioned results cost at least $\Omega(n)$ rounds, even for a graph with constant diameter. Our result \cref{thm:congestexactrunningtimeVC} gives the first sublinear round algorithm for exact large vertex connectivity as long as $D$ is moderately small.

\paragraph{Implications for Streaming.} We use the standard graph streaming model (formally defined in \cref{sec:model-problems}). It is a well-known fact that an efficient streaming protocol yields an efficient communication protocol. Thus, our communication lower bound in \Cref{thm:ccmain} immediately implies the following streaming lower bound.

\begin{theorem}[Streaming lower bound]\label{cor:streaminglowerbounds}
    Any $P(n)$-pass randomized streaming algorithm of vertex connectivity needs $\Omega(n^{1.5}/P(n))$ space.
\end{theorem}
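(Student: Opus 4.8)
The plan is to obtain \cref{cor:streaminglowerbounds} as an immediate corollary of \cref{thm:ccmain} via the standard black-box simulation of a multi-pass streaming algorithm by a two-party communication protocol. Concretely, suppose $\mathcal{A}$ is a $P(n)$-pass randomized streaming algorithm for vertex connectivity that uses $S=S(n)$ bits of working space and succeeds with constant probability. I will use $\mathcal{A}$ to build a communication protocol for vertex connectivity of cost $O\bigl(P(n)\cdot S(n)\bigr)$, and then quote the lower-bound direction of \cref{thm:ccmain}.

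First I would describe the simulation. Take any input graph $G$ for the two-party problem, whose edge set is statically partitioned as $E = E_A \sqcup E_B$, with Alice holding $E_A$ and Bob holding $E_B$. The two players jointly run $\mathcal{A}$ on the stream that first lists all edges of $E_A$ (in an arbitrary order) and then all edges of $E_B$: Alice feeds $E_A$ to $\mathcal{A}$, sends the $S$-bit memory contents to Bob, Bob resumes the run on $E_B$, then sends the memory back to Alice to start the next pass, and so on. After $P(n)$ passes both players know the output of $\mathcal{A}$, and the total communication is at most $2P(n)\cdot S(n) + O(\log n) = O\bigl(P(n)\cdot S(n)\bigr)$ bits; the protocol's public coins supply the random bits of $\mathcal{A}$, so the success probability is preserved.

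Second I would invoke \cref{thm:ccmain}: every bounded-error two-party protocol that computes vertex connectivity must communicate $\Omega(n^{1.5})$ bits (up to the polylogarithmic factors absorbed in $\tT{\cdot}$). Combining with the bound above gives $P(n)\cdot S(n) = \Omega(n^{1.5})$, i.e.\ $S(n) = \Omega\bigl(n^{1.5}/P(n)\bigr)$, which is the claimed space lower bound.

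The only point that needs (mild) care is compatibility of the reduction with the instances used to prove \cref{thm:ccmain}: we need that hard family to present itself as a \emph{fixed} bipartition of the edge set between the two players, so that the stream ``$E_A$ then $E_B$'' is well defined and a single pass costs only two $S$-bit messages. Since \cref{thm:ccmain} is stated and proved in exactly this edge-partition two-party model, this is automatic, and no genuine obstacle arises; the corollary follows directly.
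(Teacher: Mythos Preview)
Your proposal is correct and matches the paper's own argument essentially verbatim: the paper also derives the streaming lower bound by the standard simulation of a $P(n)$-pass, $S(n)$-space streaming algorithm as a two-party protocol with $O(P(n)\cdot S(n))$ communication (Alice feeds her edges, ships the memory to Bob, who feeds his edges and ships it back, repeated for each pass), and then invokes the $\Omega(n^{1.5})$ communication lower bound from \cref{thm:ccmain}.
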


Our algorithmic ideas can be used in the streaming model as well and give the following reduction theorem. In the randomized streaming model, we say a reduction from vertex connectivity to maximum bipartite matching is an $(\alpha,\beta)$-reduction if vertex connectivity can be solved in $\alpha S(n)$ space and $\beta P(n)$ passes as long as maximum bipartite matching can be solved in $S(n)$ space and $P(n)$ passes.

\begin{theorem}[Streaming upper bound]\label{thm:streamingupperbounds}
    In the randomized streaming model, there is a $(\tO{n^{0.5}},O(1))$-reduction from vertex connectivity to maximum bipartite matching.
\end{theorem}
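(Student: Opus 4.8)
The plan is to re-implement, in the streaming model, the very reduction that proves \cref{thm:prammain} and \cref{thm:congestmain}, tracking passes and space rather than work and depth. Recall its shape (see \cref{sec:overview}): after testing whether the graph is complete and guessing $\kappa$ among $O(\log n)$ powers of two, the algorithm splits according to the size of the smaller side $L$ of a minimum vertex cut $(L,S,R)$. If $|L|\ge n^{1/2}$ (equivalently, $\kappa$ is not too close to $n$ and both sides are large), a random pivot lands in the larger side with probability $\ge n^{-1/2}$, and running \isocut{} from that pivot against the vertices outside its closed neighborhood recovers $\kappa$; boosting costs $\tilde{O}(n^{1/2})$ such computations, each of which is $O(1)$ rounds of non-adaptive $s$-$t$ vertex connectivity, i.e.\ of \BMM{}. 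If $|L|<n^{1/2}$, the common-neighborhood clustering is invoked: the graph is decomposed into common-neighborhood clusters and within each the minimum cut is extracted using $n^{o(1)}$ \BMM{} calls on cluster-derived instances. The global answer is the minimum over all of these.

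The streaming engine behind everything is the observation that one may run arbitrarily many instances of a $P(n)$-pass algorithm \emph{simultaneously} over a single group of $P(n)$ passes, paying only a multiplicative space overhead equal to the number of instances. In the large-side regime the $\tilde{O}(n^{1/2})$ pivots are chosen without inspecting the stream, so the corresponding \BMM{} instances are fixed before the first pass; each arriving edge is forwarded, under the fixed constant-size vertex-splitting gadget (and the constant number of \isocut{} flow instances per pivot), into every instance at once. This regime therefore costs $O(P(n))$ passes and $\tilde{O}(n^{1/2})\cdot S(n)$ space. We never need the cuts themselves, only their sizes, which matches the scope of an $(\alpha,\beta)$-reduction.

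For the small-side regime the work is in streaming the common-neighborhood clustering. Here I would implement the clustering and its primitives (\SpaRec{}, \EstDif{}, and the degree/size estimates) by linear sketches: in one pass each vertex $v$ maintains an $\ell_0$-sampling / sparse-recovery sketch of its adjacency vector $A_v$ that supports recovery of symmetric differences $|A_u\oplus A_v|$ up to the relevant threshold $\tilde{O}(n^{1/2})$; this is $\tilde{O}(n^{1/2})$ words per vertex, hence $\tilde{O}(n^{3/2})$ in total, and the whole clustering is then computed from the sketches without revisiting the stream. Given the clustering, the $n^{o(1)}$ per-cluster \BMM{} subproblems (over all clusters and all $\kappa$-guesses) are again fixed graph transformations of a known relabeling of the input, so they run in parallel in a further $O(P(n))$ passes using $n^{o(1)}\cdot S(n)$ space. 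Since $S(n)=\Omega(n)$, both $\tilde{O}(n^{3/2})$ and $n^{o(1)}S(n)$ are at most $\tilde{O}(n^{1/2})\cdot S(n)$, the total pass count is $O(P(n))$, and we obtain the claimed $(\tilde{O}(n^{0.5}),O(1))$-reduction.

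The main obstacle is the phrase ``without revisiting the stream'' above. The \pram{} decomposition is iterative with $n^{o(1)}$ depth, and a naive simulation would re-scan the stream once per level, inflating the pass count to $n^{o(1)}$ and destroying $\beta=O(1)$. So the crux is to argue that the clustering can be produced from $O(1)$ rounds of linear sketching: since linear sketches compose and the later levels of the decomposition act on vertex-disjoint subinstances whose sketch specifications are already determined by the data recovered so far, one should be able to lay down all the sketches needed across all levels in the first $O(1)$ passes (or, failing that, replace the iterated decomposition with a single-shot one). A second, more routine point is to union-bound the failure probabilities of the $\poly(n)$ sparse-recovery sketches, and to check that every subproblem handed to the \BMM{} oracle---in both regimes---is genuinely a fixed transformation of the input stream, which is what legitimizes running them all in parallel.
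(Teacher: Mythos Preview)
Your large-$L$ phase is essentially the paper's Phase~1 (after reducing to single-sink via \cref{lem:globaltosinglesource}): sample $\tilde O(\sqrt n)$ candidate sources in $V\setminus N[t]$ and run an $(s,t)$-vertex-cut instance from each in parallel, for $\tilde O(\sqrt n)\cdot S(n)$ space and $P(n)$ passes.

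For the small-$L$ phase, however, the paper takes a much simpler route that bypasses common-neighborhood clustering entirely (this is noted at the end of \cref{subsec:overviewcc}). Your own sketching step is already the whole story: once every vertex $v$ holds a $\sqrt n$-sparse-recovery sketch of $N(v)$ (one pass, $\tilde O(n^{1.5})$ total space), then for each candidate source $u\in V\setminus N[t]$ one can recover from the sketches both the set $V'=\{v:|N(u)\triangle N(v)|\le\sqrt n\}$ and, for every $v\in V'$, the edges $N(v)\setminus N(u)$. By \cref{lem:unbalancedclose}, $L\subseteq V'$ whenever $|L|<\sqrt n/3$, and the reconstructed local graph $G'$ has $\tilde O(n^{1.5})$ edges; the min-neighbor problem on $G'$ is then solved by unbounded in-memory computation with no further passes and no \BMM{} oracle calls at all. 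See \cref{alg:communication} and its analysis in \cref{sec:streaming}.

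Routing through the clustering framework instead creates two genuine obstacles. The one you flag is real: \cref{alg:snc} repeatedly tests graph \emph{adjacency} (``$C$ adjacent to $S_u$'') across its $n^{o(1)}$ levels, and adjacency is not recoverable from neighborhood-difference sketches when degrees exceed $\sqrt n$; your proposed fix is a hope, not an argument. More seriously, your count of ``$n^{o(1)}$ per-cluster \BMM{} subproblems'' is incorrect: the \mnnccn{} subroutine issues $\tilde O(|C|/\ell)$ source-fixed $(s,t)$-cut calls, each on a graph with $\tilde O(|C|)$ vertices, so for $\ell=O(1)$ and $|C|=\Theta(n)$ the parallel \BMM{} space would be $\tilde O(n)\cdot S(n)$, not $\tilde O(\sqrt n)\cdot S(n)$. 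The paper sidesteps both issues by never invoking the \BMM{} oracle in the small-$L$ case.
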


A natural question is whether the reduction of~\cref{thm:streamingupperbounds} is optimal or not. Although we cannot show a lower bound, there is a simple observation that any better reduction gives a strong streaming lower bound for maximum bipartite matching.

\begin{corollary}\label{cor:streaminglowerbounds}
    If an $(\alpha,\beta)$-reduction from vertex connectivity to maximum bipartite matching exists for $\alpha\cdot \beta<n^{0.5-\epsilon}$, then in the semi-streaming model, maximum bipartite matching cannot be solved in $o(n^{\epsilon})$ passes.
\end{corollary}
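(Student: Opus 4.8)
The plan is to prove the contrapositive by directly composing the $(\alpha,\beta)$-reduction with the vertex-connectivity streaming lower bound established above (any $P(n)$-pass randomized streaming algorithm for vertex connectivity requires $\Omega(n^{1.5}/P(n))$ space). Assume toward a contradiction that an $(\alpha,\beta)$-reduction with $\alpha\beta<n^{0.5-\epsilon}$ exists and, simultaneously, that maximum bipartite matching admits a semi-streaming algorithm using $S(n)=\tilde{O}(n)$ space and $P(n)=o(n^{\epsilon})$ passes.

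First I would feed this matching algorithm into the reduction: by the definition of an $(\alpha,\beta)$-reduction, vertex connectivity is then solvable by a randomized streaming algorithm $\mathcal{A}$ using $\alpha\cdot S(n)=\tilde{O}(\alpha n)$ space and $Q(n):=\beta\cdot P(n)=\beta\cdot o(n^{\epsilon})$ passes. Next I would apply the streaming lower bound to $\mathcal{A}$: since $\mathcal{A}$ makes $Q(n)$ passes, its space must be $\Omega\!\left(n^{1.5}/Q(n)\right)$. Combining the two space bounds on $\mathcal{A}$ gives
\[
\tilde{O}(\alpha n)\;=\;\Omega\!\left(\frac{n^{1.5}}{\beta\cdot o(n^{\epsilon})}\right)\;=\;\omega\!\left(\frac{n^{1.5-\epsilon}}{\beta}\right),
\]
hence $\alpha\beta\cdot\tilde{O}(n)=\omega(n^{1.5-\epsilon})$, i.e.\ $\alpha\beta=\omega(n^{0.5-\epsilon})$ up to polylogarithmic factors, which contradicts the hypothesis $\alpha\beta<n^{0.5-\epsilon}$ for $n$ large enough. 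Therefore no such semi-streaming matching algorithm can exist, which is exactly the claim.

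This is a routine "compose a conditional upper bound with an unconditional lower bound" argument, so there is no real obstacle; the only thing demanding care is the bookkeeping of sub-polynomial/polylogarithmic slack — the $\tilde{O}$ in ``semi-streaming space,'' the $o(\cdot)$ in the hypothesized pass count, and the hidden constant in $\Omega(n^{1.5})$. All of these are absorbed because the hypothesis $\alpha\beta<n^{0.5-\epsilon}$ leaves polynomial room below the threshold $n^{1.5}$; if a slack-free statement is preferred, one may instead conclude that maximum bipartite matching needs $n^{\epsilon-o(1)}$ passes in the semi-streaming model. Finally, I would note that the identical composition — with the two-party communication model in place of streaming and \Cref{thm:ccmain} in place of the streaming lower bound — is precisely the mechanism already used to obtain the streaming lower bound for vertex connectivity, so no new technique is introduced here.
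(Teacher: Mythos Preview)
Your proposal is correct and is exactly the intended argument: the paper states this corollary as a simple observation without an explicit proof, and the natural derivation is precisely to compose the hypothetical $(\alpha,\beta)$-reduction with a semi-streaming matching algorithm and then invoke the $\Omega(n^{1.5}/P(n))$ streaming lower bound for vertex connectivity to reach a contradiction. Your remark about the polylogarithmic slack (and the cleaner $n^{\epsilon-o(1)}$ formulation) is also apt and consistent with the paper's informal use of $\tilde O$ throughout.
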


The implied BMM lower bound of~\cref{cor:streaminglowerbounds} would be a breakthrough in semi-streaming lower bound \cite{AhnG11, AssadiR20, ChenKPS0Y21}. Hence, if we are to believe the conjecture that maximum bipartite matching can be solved in $\tO{n}$ space and $n^{o(1)}$ passes, then according to \cref{cor:streaminglowerbounds}, the reduction of~\cref{thm:streamingupperbounds} can not be improved by a polynomial factor. Alternatively, \cref{cor:streaminglowerbounds} can be viewed as a new potential way to obtain polynomial pass semi-streaming lower bound for bipartite matching through developing better reductions for vertex connectivity.

At last, by combining \cref{thm:streamingupperbounds} with the current progress on BMM \cite{assadi2022semi}, we get the following result.

\begin{corollary}\label{cor:exactstreaming}
    There is a randomized streaming algorithm for vertex connectivity with $\tO{n^{1.5}}$ space and $n^{0.75+o(1)}$ passes.
\end{corollary}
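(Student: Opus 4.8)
The plan is to read off \cref{cor:exactstreaming} from the black-box reduction of \cref{thm:streamingupperbounds} together with the current-best semi-streaming algorithm for maximum bipartite matching. Recall that \cref{thm:streamingupperbounds} gives a $(\tO{n^{0.5}},O(1))$-reduction from vertex connectivity to \BMM{}: if \BMM{} can be solved in the randomized streaming model in $S(n)$ space and $P(n)$ passes, then vertex connectivity can be solved in $\tO{n^{0.5}}\cdot S(n)$ space and $O(1)\cdot P(n)$ passes. By \cite{assadi2022semi}, \BMM{} admits a semi-streaming algorithm, i.e.\ one with $S(n)=\tO{n}$ space, using $P(n)=n^{0.75+o(1)}$ passes. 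Substituting these into the reduction gives a vertex connectivity algorithm with $\tO{n^{0.5}}\cdot\tO{n}=\tO{n^{1.5}}$ space and $O(1)\cdot n^{0.75+o(1)}=n^{0.75+o(1)}$ passes, which is exactly the claimed bound.

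Concretely I would carry this out in three short steps. First, cite the semi-streaming \BMM{} bound of \cite{assadi2022semi}, taking care that it is an exact algorithm (it returns the true maximum matching, or at least its size, rather than a $(1-\eps)$-approximation), since vertex connectivity is an exact problem. Second, instantiate the $(\alpha,\beta)$-reduction of \cref{thm:streamingupperbounds} with $\alpha=\tO{n^{0.5}}$ and $\beta=O(1)$; the point of $\beta=O(1)$ is that the (many) \BMM{} sub-instances the reduction spawns are run simultaneously over the shared input stream rather than sequentially, so that their pass counts do not accumulate (only an $O(1)$ factor is incurred) while their space budgets do — which is precisely the source of the $\tO{n^{0.5}}$ factor multiplying $S(n)$. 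Third, perform the two-line arithmetic above, observing that the $n^{0.75+o(1)}$ pass term swallows the $O(1)$ overhead and that the $\poly\log n$ factors from the reduction and from the matching algorithm are absorbed into $\tO{\cdot}$.

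I do not expect a genuine obstacle here: essentially all of the difficulty lives in \cref{thm:streamingupperbounds} (the common-neighborhood-clustering reduction) and in the external matching result, and the corollary is bookkeeping on top of them. The one thing that needs a moment's care is compatibility between the two ingredients — one should check that the matching algorithm of \cite{assadi2022semi} applies to exactly the instances the reduction produces (bipartite graphs on $O(n)$ vertices, as arise from the standard equivalence between s-t vertex connectivity and \BMM{} recalled earlier), and that its quoted $n^{0.75+o(1)}$-pass complexity is indeed an exact-matching bound rather than an approximate one. Modulo that check, \cref{cor:exactstreaming} follows immediately.
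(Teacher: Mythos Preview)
Your proposal is correct and matches the paper's approach exactly: the paper states the corollary as an immediate consequence of combining \cref{thm:streamingupperbounds} with the exact semi-streaming \BMM{} algorithm of \cite{assadi2022semi}. Your check that the cited matching bound is for exact (not approximate) \BMM{} is well-placed; \cite{assadi2022semi} does provide an exact $n^{3/4+o(1)}$-pass semi-streaming algorithm, so the instantiation goes through.
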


\section{Technical Overview} \label{sec:overview}

Given an undirected graph $G=(V,E)$, a partition $(L,S,R)$ of $V$ (where $L,S,R$ are all non-empty) is a \emph{vertex cut} (or simply \emph{cut} in this work) if there are no edges between $L$ and $R$. The size of the cut $(L,S,R)$ is defined as $|S|$, and we say $S$ is a \emph{separator} or more precisely, $(s,t)$-separator for any $s\in L,t\in R$. We say a cut $(L,S,R)$ is a \emph{$t$-sink cut} if $t\in R$.
Throughout this paper, we will focus on solving the following single-sink version of vertex connectivity.

\begin{definition}[\SSUDVC{} (single-sink vertex connectivity)]\label{def:overview-singlesource}
    Given an undirected graph $G=(V,E)$ and a vertex $t\in V$, outputs a minimum $t$-sink cut.
\end{definition}

There is a simple randomized reduction from the general vertex connectivity problem to \SSUDVC{} (see~\cref{lem:globaltosinglesource}), and fixing a sink makes the algorithm easier to state. So we focus on the single-sink version.

Another useful observation is that to solve \SSUDVC{}, it suffices to solve the following \mn{} problem. Given a vertex $u\in V$, define $N_G(u)$ (or $N(u)$ when the context is clear) to be the set of neighboring vertices of $u$, i.e., vertices that share an edge with $u$. Let $N[u]:=N(u)\cup\{u\}$. For a vertex set $V'\subseteq V$, define $N(V')=(\cup_{u\in V'}N(u))-V'$ and $N[V']=N(V')\cup V'$.

\begin{definition}[\mn{}]\label{def:overview-min-neighbor}
    Given an undirected graph $G=(V,E)$ and a vertex set $V'\subseteq V$, find $L\subseteq V'$ such that $|N_G(L)|$ is minimized.
\end{definition}

If we have an algorithm for \mn{}, to solve \SSUDVC{}, we set $V'=V-N[t]$ and call the \mn{} algorithm to get $L\subseteq V'$ such that $|N(L)|$ is minimized. One can verify that $(L,N(L),V-L-N(L))$ is a valid $t$-sink cut for any $L\subseteq V'$. Moreover, every minimum $t$-sink cut $(L,S,R)$ satisfies that $L\subseteq V'$ and $S=N(L)$ (otherwise we can set $S=N(L)$ which decrease the size of the cut).

We can also show the reversed reductions by simple arguments, so \mn{} is, essentially, equivalent to vertex connectivity. Throughout this overview, we focus on solving \mn{}.

\paragraph{Bottlenecks of \cite{LiNPSY21} beyond the sequential model.} We first discuss the bottlenecks of implementing the sequential algorithm of \cite{LiNPSY21} in \pram{}, \congest{}, and two-party communication models.
\begin{enumerate}
    \item The algorithm is inherently sequential in the sense that it uses a breadth-first-search with unbounded depth as a subroutine. This makes it hard to be implemented in situations where parallelism is a requirement, such as \pram{} and \congest{}.
    \item Very informally, the algorithm reduces a vertex connectivity instance on a graph $G$ with $n$ vertices and $m$ edges to a bipartite maximum matching instance on a graph $G'$ with $\tO{m}$ edges and $\tO{n^2}$ vertices. In other words, when $n$ is the parameter, the reduction is not efficient. In many situations, $n$ is the parameter for measuring the complexity, such as \congest{} and communication, or in the matrix multiplication work of \pram{}.
\end{enumerate}
As mentioned in the introduction, we develop a new framework based on common-neighborhood clustering to sidestep these bottlenecks. The organization of this overview is as follows.

\begin{enumerate}
    \item In \cref{subsec:buildingtheframework}, we define the common-neighborhood property of a graph, and show how to reduce \mn{} to s-t vertex connectivity preserving the number of edges (but not vertices) in a \emph{common-neighborhood cluster} (defined in \cref{subsec:buildingtheframework}). The algorithm is described in \pram{} model for the sake of simplicity.
    \item In \cref{subsec:overviewCNC}, we show our decomposition algorithm for common-neighborhood clustering in \pram{}, which can also be implemented efficiently in various models like \congest{} and communication. Common-neighborhood clustering is a way to reduce solving \mn{} on a general graph to solving \mn{} on common-neighborhood clusters. Combined with \cref{subsec:buildingtheframework}, this proves \cref{thm:prammain} (not preserving the number of vertices is fine here because we assumed $W(m,n)$ is superadditive on $m$ in \cref{thm:prammain}).
    \item For the \congest{} model, to prove \cref{thm:congestmain}, we do not have the guarantee that $R(m,n,D)$ is superadditive. Thus, it suffers from the second bottleneck of \cite{LiNPSY21}. We show that it can be resolved nearly perfectly in \cref{subsec:overviewcongest} and prove \cref{thm:congestmain}.
    \item To prove
    that vertex-connectivity can be solved in matrix multiplication work in \pram{} (\cref{thm:pramexactrunningtimeVCmaxtrix}), only preserving the number of edges does not suffice. We show how to leverage our framework to circumvent this bottleneck in \cref{subsec:overviewparallel}.
    \item
    \Cref{subsec:overviewcc} gives an overview of our results in the two-party communication setting (\cref{thm:ccmain}).
    We show that the number of vertices cannot be perfectly preserved by proving a near-optimal lower bound for vertex connectivity separating it from s-t vertex connectivity. The near-optimality of the lower bound is shown by an upper bound.
\end{enumerate}

\subsection{Building the Framework: Common-neighborhood Cluster}\label{subsec:buildingtheframework}

Recall that the input to \mn{} is an undirected graph $G=(V,E)$ and a vertex set $V'\subseteq V$. Throughout this overview, we always use $\LL$ to denote an arbitrary one of the minimizers of $\min_{L'\subseteq V'}|N(L')|$. Wlog., we will assume $G[\LL]$ is connected, otherwise we can take a connected component of $G[\LL]$ denoted as $L'$, we have $N(L')\subseteq N(\LL)$ so we can use $L'$ as the minimizer instead.

Our goal is to find $\LL$, and let us assume that we know its size.\footnote{This assumption can be easily removed by guessing the size of $\LL$ by powers of $2$ which only adds a logarithmic factor to the complexity.} Now we are ready to define common-neighborhood clusters. For two sets $A,B$, define $A\triangle B=(A-B)\cup (B-A)$ as their \emph{symmetric difference}.

\begin{definition}\label{def:common-neighborhood}
    Given an undirected graph $G=(V,E)$ and a vertex set $C\subseteq V$, we say $C$ has \emph{\nd{}} $d$ if for any $u,v\in C$, $|N(u)\triangle N(v)|\le d$.
\end{definition}

We say $C$ is a \emph{common-neighborhood cluster} if $C$ has \nd{} at most $\hO{|\LL|}$.
 In this section, we will assume $V'$ is a common-neighborhood cluster and show a \pram{} algorithm solving \mn{}.
We will use the following property of a common-neighborhood cluster. The formal statement and proofs are in~\cref{lem:twocases}.

\begin{lemma}[Informal version of~\cref{lem:twocases}]
    If the input vertex set $V'$ is a common-neighborhood cluster, then one of the following two cases happens.
    \begin{enumerate}
        \item $|N_G(\LL)\cap V'|=\hO{|\LL|}$.
        \item $G[V']$ is almost a clique. For this overview, assume $V'$ is indeed a clique for convenience.
    \end{enumerate}
\end{lemma}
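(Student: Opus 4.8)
The plan is to analyze the structure of $\LL$ relative to a common-neighborhood cluster $V'$ by a win-win argument on the sizes of the relevant neighborhoods. Fix any two vertices $u, v \in V'$; by the cluster hypothesis $|N(u) \triangle N(v)| \le \hO{|\LL|}$. The first step is to pick a ``reference'' vertex $u_0 \in \LL$ (recall $G[\LL]$ is connected and nonempty) and observe that for every $w \in V'$ we have $N(w) \subseteq N(u_0) \cup (N(w)\triangle N(u_0))$, so $N(w)$ is ``close to'' $N(u_0)$ up to $\hO{|\LL|}$ symmetric difference. Since $\LL \subseteq V'$, this lets me compare $N(\LL) = \bigl(\bigcup_{w\in\LL} N(w)\bigr) \setminus \LL$ against the single set $N(u_0)$.

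The key dichotomy comes from asking how large $N_G(\LL) \cap V'$ is. Suppose we are \emph{not} in Case~1, i.e.\ $|N_G(\LL)\cap V'| = \omega(\hO{|\LL|})$ — more precisely larger than the cluster's neighborhood-difference bound by a suitable constant/subpolynomial factor. I want to conclude Case~2: $G[V']$ is (almost) a clique. The idea: take a vertex $x \in N_G(\LL)\cap V'$. Then $x \notin \LL$ but $x$ has a neighbor in $\LL$, so $x \in N(u_0) \cup (\text{small symmetric-difference terms})$; more usefully, since $x$ is adjacent to some $w \in \LL$ and $w \in V'$, the bound $|N(w)\triangle N(x)| \le \hO{|\LL|}$ forces $x$ and $w$ to share all but $\hO{|\LL|}$ of their neighbors. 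Iterating this over the many vertices of $N_G(\LL)\cap V'$ and using that they all sit inside a common-neighborhood cluster, one shows that a vertex of $\LL$ must be adjacent to all but $\hO{|\LL|}$ vertices of $V'$; then propagating adjacency through the cluster (again via small symmetric differences) upgrades ``almost all'' to ``almost all pairs within $V'$ are adjacent,'' i.e.\ $G[V']$ minus $\hO{|\LL|}$ vertices is a clique. The quantitative bookkeeping — tracking how the $\hO{|\LL|}$ error terms accumulate when chaining several symmetric-difference inequalities, and making sure the threshold separating the two cases is consistent — is where I expect the real work to be, and it is the main obstacle; one has to be careful that a bounded (subpolynomial) number of chaining steps keeps the error at $\hO{|\LL|}$ rather than blowing up by a polynomial factor.

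Concretely, I would carry out the steps in this order: (1) set up notation, fix $u_0 \in \LL$, and record the basic inclusion $N(w) \subseteq N(u_0) \cup (N(w)\triangle N(u_0))$ for $w \in V'$; (2) prove the easy direction — if $|N_G(\LL)\cap V'|$ is small we are immediately in Case~1, so assume it is large; (3) show that ``large $N_G(\LL)\cap V'$'' forces some $w^\star \in \LL$ to have large degree into $V'$, using that $\LL$ is connected and that all the relevant symmetric differences are small; (4) use $w^\star$'s large common neighborhood with the rest of $V'$, together with pairwise small symmetric differences inside the cluster, to conclude that all but $\hO{|\LL|}$ vertices of $V'$ are pairwise adjacent, giving Case~2; (5) finally justify the ``assume $V'$ is a clique'' simplification by noting that deleting the $\hO{|\LL|}$ exceptional vertices changes $N(\cdot)$ values by only $\hO{|\LL|}$ and hence does not affect the subsequent algorithm's guarantees. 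Throughout, I would keep the $\hO{\cdot}$ slack explicit and only collapse it at the end, since the whole point of the lemma is that both alternatives leave us with an instance where the ``interesting'' part of the cut has size $\hO{|\LL|}$.
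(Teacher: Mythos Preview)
Your plan has a genuine gap in steps (3)--(4), and it omits the key structural fact the paper exploits.

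First, you never invoke the minimizer property of $\LL$: since $\LL$ minimizes $|N(\cdot)|$ over subsets of $V'$, every singleton $\{x\}$ with $x\in\LL$ satisfies $|N(x)|\ge |N(\LL)|$, and combined with $N(x)\subseteq \LL\cup N(\LL)$ this gives $|N(\LL)\setminus N(x)|\le |\LL|$. This is what makes a \emph{single} vertex of $\LL$ capture almost all of $N(\LL)$, not merely ``a lot'' of it. Your step (3) instead appeals to connectivity of $\LL$ and small symmetric differences; those only give $|N(\LL)\setminus N(u_0)|\le |\LL|\cdot(\text{pairwise bound})$, which can be far too large.

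Second, even if step (3) produced some $w^\star\in\LL$ with $|N(w^\star)\cap V'|\ge |N(\LL)\cap V'|-O(d)$, your propagation in step (4) stalls: passing through one symmetric-difference inequality shows every $u\in V'$ has $|N(u)\cap V'|\ge |N(\LL)\cap V'|-O(d)$, but ``$|N(\LL)\cap V'|$ is large'' only means $>2d$, not $|V'|-\hO{|\LL|}$. So you cannot conclude $|V'\setminus N(u)|$ is small, and the near-clique claim does not follow.

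The paper's proof of the formal statement (Claim~\ref{lem:twocases}) sidesteps this by a double-counting of edges between $N(\LL)\cap C$ and $C\setminus N[\LL]$. The upper bound uses that any $x\in\LL$ has no neighbor in $C\setminus N[\LL]$, so every $v\in C$ has at most $d$; the lower bound uses the minimizer property to show every $v\in C$ (in particular every $v\in C\setminus N[\LL]$) sees at least $|N(\LL)\cap C|/3$ vertices of $N(\LL)\cap C$. Dividing gives $|C\setminus N[\LL]|\le 3d$, from which the near-clique property follows in one more symmetric-difference step. Your plan is missing both the minimizer ingredient and this counting structure.
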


Now we show the rough idea of how to find $\LL$ in each case. The formal proofs are in~\cref{sec:framework}.

\paragraph{Case 1: $|N(\LL)\cap V'|=\hO{|\LL|}$.} In this case, we can employ the \emph{isolating cuts} technique~\cite{LiNPSY21}, that, given an independent set $T$, finds the minimum vertex cut separating exactly one node in $T$ from the other nodes in $T$. For our case, we let $T$ be an independent sample by including each node $v\in V'$ in $T$ with probability $1/\hO{|\LL|}$. This means that with probability $\hM{1}$, $T$ has exactly one node in $\LL$ and no nodes in $N(\LL)$.

\paragraph{Case 2: $G[V']$ is a clique.} In this case, we are trying to solve \mn{} given that (i) $V'$ is a clique, and (ii) every two nodes in $V'$ has their neighbors in $G$ differ by at most $\hO{|\LL|}$ nodes. We call this problem
MinNeighbor-NearClique-CommonNeighborhood, or
\mnnccn{} for short.

For a vertex $s\in V'$, define the \emph{$s$-source \mn{}} problem as finding $\arg\min_{s\in L'\subseteq V'}N(L')$.
Note that it can be solved in one s-t vertex connectivity call: add a super node $t$ connecting to $N(V')$ and call s-t vertex connectivity. Given this, the algorithm for dealing with Case 2 contains two steps.

\begin{enumerate}
    \item Sample $|V'|/|\LL|$ nodes uniformly at random so that one of them is in $\LL$.
    \item For every sample node, solve $s$-source \mn{} on a \emph{sparsified graph} (which we will define later) with $|V'|\cdot |\LL|$ edges.
\end{enumerate}

The cumulative number of edges for calling s-t vertex connectivity is $|V'|^2$, which is proportional to the number of edges in $G$ (since $V'$ is a clique). Thus, the reduction preserves the total number of edges. Moreover, if the work function $W(m,n)$ for solving s-t connectivity is superadditive on $m$, then the work of solving \mnnccn{} can be written as $O(W(|V'|\cdot |\LL|,n))\cdot |V'|/|\LL|\le O(W(|V'|\cdot |\LL|\cdot (|V'|/|\LL|),n))=O(W(m,n))$.

The remaining question is how to run $s$-source \mn{} on a \emph{sparsified graph with $|V'|\cdot |\LL|$ edges}. This is because we can delete all edges from $V'-\{s\}$ to neighbors of $s$, and deleting them will not change the set $N(L')$ for any set $L'\ni s$.
Recall that $V'$ has \nd{} at most $\hO{|\LL|}$, and hence
$u\in V'$ can have at most  $\hO{|\LL|}$ remaining edges outside $N(s)$.
See \cref{fig:sparse} for an illustration.

\begin{figure}[!ht]
	\centering
	\includegraphics[scale=0.2]{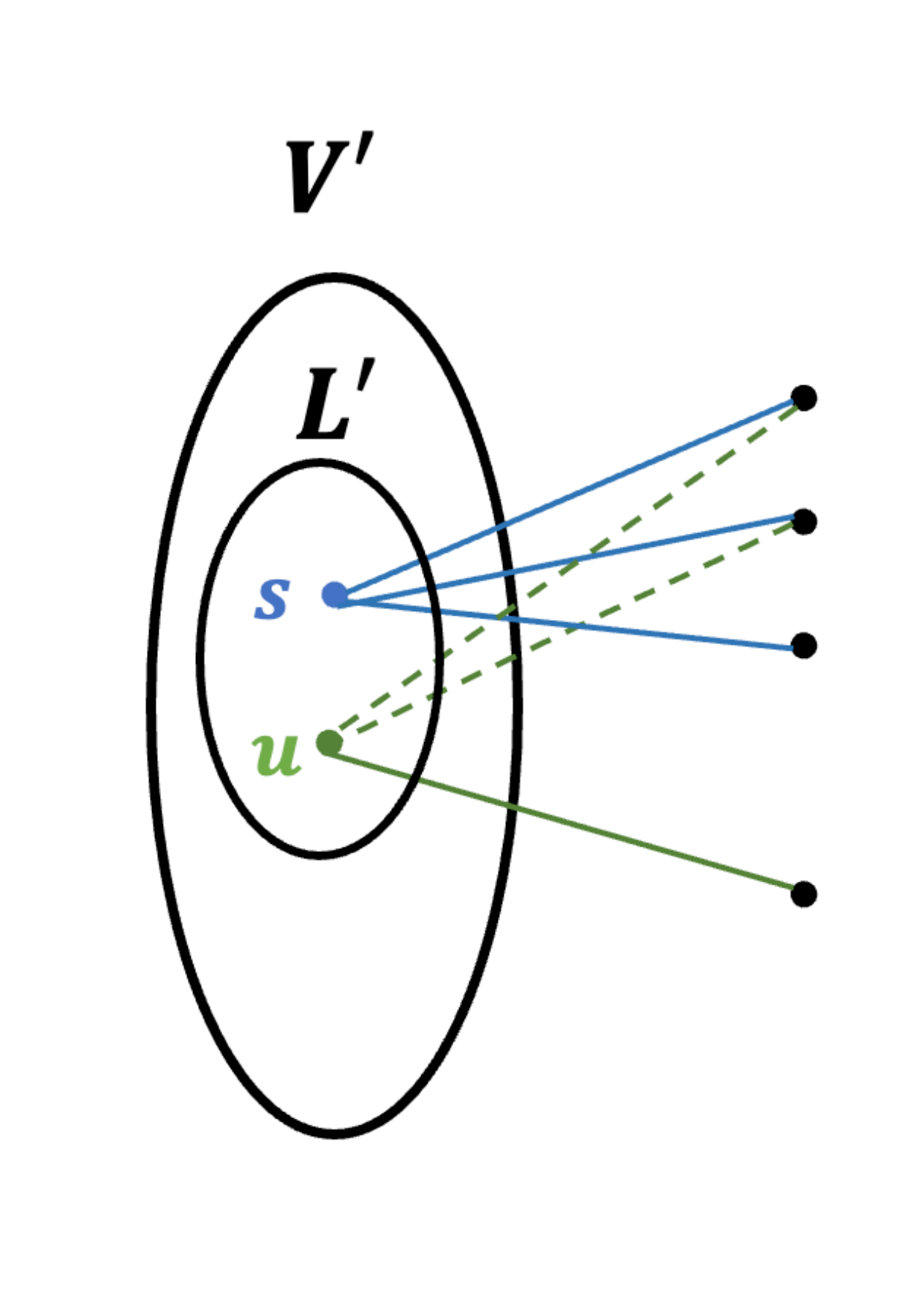}
	\setcaptionwidth{0.95\textwidth}
	\caption{\small All the dashed edges (edges from $V'-\{s\}$ to neighbors of $s$) are deleted. For every $L'$ with $s\in L'$, $N(L')$ does not change.}\label{fig:sparse}
\end{figure}

\begin{remark}
    The proof of \cref{lem:twocases} and the idea of sparsifying the instance can be viewed as an adaption from the proof in \cite{LiNPSY21}, so we do not claim novelty here. Our novelty lies in (i) introducing the graph decomposition for common-neighborhood clustering, and showing how it can be implemented in various models, (ii) circumvent the bottleneck of not preserving the number of vertices, as we will illustrate in the following sections.
\end{remark}

\paragraph{Building the Framework.} To summarize, once we have a common-neighborhood cluster $V'$, the problem reduces to one of (i) finding a minimum isolating cut, or (ii) solving \mnnccn{}. If $V'$ is not a common-neighborhood cluster, the natural idea is to decompose it into common-neighborhood clusters, which we call \emph{common-neighborhood clustering} defined as follows.

\begin{definition}[Common-neighborhood clustering]
    Given $G=(V,E)$ and $V'\subseteq V$, output a set of vertex sets $\cC$ such that
    \begin{enumerate}
        \item each vertex $u\in V'$ is contained it at most $\hO{1}$ clusters in $\cC$,
        \item for any $C\in\cC$, $C$ is a common-neighborhood cluster and $C\subseteq V'$,
        \item $\LL\subseteq C$ for some $C\in\cC$ with probability at least $1/n^{o(1)}$.
    \end{enumerate}
\end{definition}

\noindent
Our framework can be thus summarized as follows.

\begin{description}
    \item[Step 1:] Compute a \emph{common-neighborhood clustering} denoted as $\cC$.
    \item[Step 2:] For every $C\in\cC$, solve \emph{minimum isolating cut} and \mnnccn{}{} to get non-empty $L_C\subseteq C$ minimizing $N(L_C)$.
    \item[Step 3:] Output the minimum $N(L_C)$ among all $C\in\cC$.
\end{description}

\subsection{A Decomposition Algorithm for Common-neighborhood Clustering}\label{subsec:overviewCNC}

In this section, we show a \pram{} algorithm for common-neighborhood clustering which can be implemented in different models. One crucial property of $\LL$ that we state here is that $\LL$ has \nd{} at most $2|\LL|$---we defer the formal proof of this statement to ~\cref{lem:unbalancedclose}.

\paragraph{A proof of existence.} We first give a simple, but inefficient, algorithm that shows the existence of common-neighborhood clustering. Define the neighborhood-difference graph $G'=(V',E')$ as follows: for any pair of vertices $(u,v)\in V'\times V'$, $(u,v)\in E'$ if and only if $|N_G(u)\triangle N_G(v)|\le 2|\LL|$.

We also need the notion of the sparse neighborhood cover that is studied in the literature~\cite{AwerbuchBCP98}. We define it as follows.

\begin{definition}[Sparse neighborhood cover (SNC)]\label{def:snc}
    Given an undirected graph $G=(V,E)$, the \emph{sparse neighborhood cover} of $G$ is a set of vertex sets $\cC$ (called clusters) such that
    \begin{enumerate}
        \item (sparse) each vertex $v\in V$ is contained in at most $\tO{1}$ clusters in $\cC$,
        \item (low diameter) the diameter of $G[C]$ for any $C\subseteq\cC$ is at most $O(\log n)$,\label{prop:low-diam}
        \item (cover) for every $v\in V$, there exists a cluster $C\subseteq\cC$ such that $N_G[v]\subseteq C$.
    \end{enumerate}

\end{definition}

It is also proved in~\cite{AwerbuchBCP98} that SNC can be constructed efficiently. We claim that applying SNC on $G'$ gives us a common-neighborhood clustering. Let us verify each property of common-neighborhood clustering one by one.
\begin{enumerate}
    \item Each vertex $u\in V'$ is contained in at most $\tO{1}$ clusters in $\cC$ according to the \emph{(sparse)} condition.
    \item For any $C\in\cC$ and any $u,v\in C$, there exists a path from $u$ to $v$ in $G'$ with length $O(\log n)$ according to \emph{low diamter}, denoted by $(u,v_1,v_2,...,v)$. Notice that an edge $(u,v_1)\in E'$ means we can delete and add $2|\LL|$ nodes from $N_G(u)$ to get $N_G(v_1)$. We can repeat this argument along the path, which shows that $|N_G(u)\triangle N_G(v)|=O(|\LL|\log n)$. This proves that $C$ has \nd{} at most $O(|\LL|\log n)$, i.e., it is a common-neighborhood cluster.
    \item $\LL$ has \nd{} at most $2|\LL|$ according to \cref{lem:unbalancedclose}, $\LL$ must be a clique in $G'$ according to the definition of $G'$ (actually here we do not require $G[\LL]$ to be connected, this additional property is to assist a faster algorithm which we will explain later). Thus, take an arbitrary $v\in \LL$, and note $\LL\subseteq N_{G'}[v]$, which means $\LL$ is contained in some cluster due to property \emph{(cover)}.
\end{enumerate}

This completes the existential proof. However, construction $G'$ is not efficient and runs in $O(n^3)$ time in a trivial way. \qed

\paragraph{Speeding up the construction of $G'$: approximating the neighborhood differences.} One reason for the slow computation of $G'$ is that in order to decide if an edge $(u,v)$ exists in $G'$ or not, we need to go over the neighbors of both $u$ and $v$, which in the worst case use time $n$, resulting in a total time $O(n^3)$. This is unavoidable if we want to know the exact size of $|N_G(u)\triangle N_G(v)|$. However, we will show that an approximate size of $|N_G(u)\triangle N_G(v)|$ suffices.

\newcommand{\apd}{\text{apd}}
We can build $G'$ in the following way:  (i) if $(u,v)\in E'$, then $|N_G(u)\triangle N_G(v)|\le 3|\LL|$, (ii) if $(u,v)\not\in E'$, then $|N_G(u)\triangle N_G(v)|> 2|\LL|$. Such $G'$ can be constructed by only knowing an approximate value of $|N_G(u)\triangle N_G(v)|$ for any $u,v\in V$. One can verify that applying SNC on this $G'$ does not make too much difference and can still give us a common-neighborhood clustering. Moreover, computing an approximate value of $|N_G(u)\triangle N_G(v)|$ only requires $\tO{1}$ time by standard sampling and sketching techniques. This is proved in~\cref{cor:approxsketching}.

\begin{assumption}
    In the rest of this overview, we use $\apd(u,v)$ to denote a $1.1$-approximation value of $|N_G(u)\triangle N_G(v)|$, which can be found using~\cref{cor:approxsketching}.
\end{assumption}

This means that $G'$ can be constructed in time $\tO{n^2}$. However, this is still slow if our goal is $\tO{m}$ work. More severely, computing $G'$ is not a `localized' procedure since it requires two very far away nodes to decide whether they have an edge in $G'$ or not, posing a challenge for distributed computing. Thus, our algorithm will not compute $G'$, or a SNC of $G'$, but instead involves a completely new idea.

\paragraph{A fast algorithm with large depth.} Now we describe an algorithm that has a large depth but in $\tO{m}$ work in \pram{}.
Given that $\LL$ has \nd{} at most $2|\LL|$ and $G[\LL]$ is connected, if we know a vertex $u\in \LL$, we can construct a common-neighborhood cluster $C$ with $\LL\subseteq C$ in the following way: build a BFS tree on $G$ with root $u$, such that a vertex $v$ is included in the BFS tree only if $\apd(u,v)\le 3|\LL|$. Let us denote the vertices in this BFS tree as $B'\left[3|\LL|\right](u)$. It can be equivalently defined as follows.
\begin{definition}
    Define $B'[\ell](u)$ as the largest possible set such that (i) $G[B'[\ell](u)]$ is connected, (ii) for any $v\in B'[\ell](u)$ we have $\apd(u,v)\le \ell$.
\end{definition}
In this way, we can guarantee that the \nd{} of $C$ is at most $O(|\LL|)$, $\LL\subseteq C$ and the construction time of $C$ is proportional to the number of adjacent edges of $C$.

\paragraph{Forgetting what $\LL$ is.} In general, we do not know a vertex in $\LL$. A natural idea is to sample vertices with probability $1/|\LL|$ and run the above algorithm for each sampled node and add all the resulting clusters into $\cC$. In this way, with good probability one of the clusters in $\cC$ will contain $\LL$. However, it is easy to see that the \emph{(sparse)} property does not hold, i.e., each vertex could appear in as many as $n/|\LL|$ clusters (this also results in a large work). The reason for the large overlap is that clusters can intersect with each other. A naive way to reduce the overlap would be to delete every cluster that intersects with others. Unfortunately, this does not work because the cluster that contains $\LL$ could be deleted by this procedure. Hence, we set our goal as follows: sample some vertices $u$ and build the cluster $B'[\hO{|\LL|}](u)$, such that (i) at least one sampled vertex is in $\LL$, and (ii) the cluster containing $\LL$ does not intersect other clusters. If these conditions are satisfied, we can let $\cC$ be all the clusters that do not intersect with others, and we will be done. The work is also guaranteed to be $\tO{m}$ since if two BFS trees meet together at one node, they can stop growing and they are both excluded from being in $\cC$.

To achieve the aforementioned non-overlap requirement (ii), an ideal situation would be the following. There are two vertex sets $A,B$ with $\LL\subseteq A\subseteq B$ such that: \begin{enumerate}
    \item the size of $B$ is roughly equal to the size of $A$,
    \item $B$ has \nd{} at most $\hO{|\LL|}$, and
    \item Any vertex outside $B$ has large \nd{} from the vertices in $A$ (e.g., at least $10d_A$ where the \nd{} of $A$ is $d_A$).
\end{enumerate}

If we have such $A,B$, then we can sample with probability $1/|B|$ such that, with good probability, one of the sampled nodes will be in $A$ while all other sampled nodes are outside $B$. Now we can build $B'[2d_A](u)$ for each sampled node $u$ and $A$ will be contained in one of the clusters, and the cluster containing $A$ will not overlap with other clusters.

Now we discuss how to prove the existence of such $A,B$. For a vertex $v \in \LL$, $\LL \subseteq B'[3|\LL|](v)$. In what follows we assume $v$ is an arbitrary vertex in $\LL$.
If we can find a parameter $r>2$ such that $|B'[99r|\LL|](v)|\le n^{o(1)}|B'[r|\LL|](v)|$ holds, then we can set $A=B'[r|\LL|](v)$ and $B=B'[99r|\LL|](v)$ and all the three conditions for $A$ and $B$ hold. Our goal is to find the smallest such $r$ starting from $r=2$ and increasing $r$ to $99r$ iteratively.
Notice that each repetition increases the size of $B'[r|\LL|](v)$ by a factor of $n^{o(1)}$ but only increases $r$ by a factor of $99$, by setting the parameter correct we can make sure $r=\hO{1}$ in the end. This proves the existence of $A$ and $B$. As long as $A,B$ exists, we can guess the size of $B$ by powers of $2$, so that we can sample with probability $1/|B|$ as the algorithm suggests.

A side remark is that the algorithm only guarantees $\LL\subseteq C\in\cC$ with probability $1/n^{o(1)}$. This probability can be boosted by repeating the algorithm $n^{o(1)}$ and add all the found clusters to $\cC$.

\paragraph{Reducing the depth by graph shrinking.} Unfortunately, the above idea still requires doing BFS with unbounded depth, which is not suitable for both parallel and distributed models. The main bottleneck is about finding $B'[\ell](u)$ for many different $u$ and some parameter $\ell$. If $B'[\ell](u)$ contains less than $n^{o(1)}$ nodes, then it can be done in $n^{o(1)}$ depth since the BFS can include at most $n^{o(1)}$ nodes. Thus, let us assume $B'[\ell](u)$ contains much more than $n^{o(1)}$ nodes.

The idea is to shrink the graph size by a factor of $n^{o(1)}$ while not losing too much information about $B'[\ell](u)$. We sample vertices with probability $1/n^{o(1)}$ and denote the sampled vertex set as $X$. With good probability we have (i) $X\cap B'[\ell](u)\not=\emptyset$, (ii) size of $X$ decrease the size of $V$ by a factor of $1/n^{o(1)}$. We decompose the graph by growing vertex disjoint BFS trees simultaneously from every vertex $v\in X$, such that a vertex $w$ is included in the BFS tree of $v$ only if $\apd(v,w)\le 3\ell$. After that, we contract each BFS tree and call them \emph{super nodes}. The graph size decreases by a factor of $1/n^{o(1)}$. Moreover, all vertices in $B'[\ell](u)$ are preserved in some super nodes since $X$ contains at least one node $v\in B'[\ell](u)$ and all nodes in $B'[\ell](u)$ are eligible to join the BFS tree of $v$.

We can repeat the above operation as long as the number of super nodes intersecting $B'[\ell](u)$ is at least $n^{o(1)}$: sample super nodes with probability $1/n^{o(1)}$ to make sure one sampled super node intersects $B'[\ell](u)$; from each super node $v$ grows a BFS tree only including another supernode $w$ such that there exist original nodes $v'\in v,w'\in w$ with $\apd(v,w)\le 3\ell$; contract each BFS tree. Notice that the \nd{} of each super node grows by a constant factor in each iteration.

We repeat until the number of super nodes intersecting $B'[\ell](u)$ becomes $n^{o(1)}$, then we can find all of them by a simple BFS of depth $n^{o(1)}$. The only problem is that a super node might contain extra nodes besides nodes in $B'[\ell](u)$. This is fine due to the following reason: in each iteration the \nd{} of each super node grows by a constant factor, while the number of nodes of the graph shrink by a factor of $1/n^{o(1)}$; by setting the parameter correctly, the final \nd{} of each super node is at most $n^{o(1)}\ell$. Thus, all the nodes in the super modes intersecting $B'[\ell](u)$ are contained in $B'[n^{o(1)}\ell](u)$. This $n^{o(1)}$ factor is not too large and a slight change in the algorithm can handle it.

The algorithm and analysis are described in details in \cref{sec:snc}.

\subsection{Distributed Algorithms}\label{subsec:overviewcongest}

Recall the framework for our \congest{} algorithms: we first apply common-neighborhood cluster described in \cref{subsec:overviewCNC} so that we can assume $V'$ is a common-neighborhood cluster as defined in \cref{subsec:buildingtheframework}, then we use the observations in \cref{subsec:buildingtheframework} to reduce the problem to (i) minimum isolating cut, and (ii) \mnnccn{}. Both problems can be reduced to s-t vertex connectivity while preserving the total number of edges $m$ when doing the reduction, as explained in \cref{subsec:buildingtheframework}.

Now, we describe the key obstacles to implementing distributed \congest and outline our solution to overcome the technical obstacles.

\subsubsection{Bottlenecks of Proving \cref{thm:congestmain}: Minimum Isolating Cut}

The following isolating cuts lemma can almost handle our case.
\begin{lemma} [Lemma 4.2 of~\cite{LiNPSY21}] \label{lem:isolating cut old}
Given a graph $G = (V,E)$ and an independent set $T \subseteq V$ of size at least $2$, there is an algorithm that outputs for each $v \in T$ a $(v, T -\{v\})$-min-separator $C_v$. The algorithm makes calls to s-t vertex mincut on graphs with $O(m)$ total number of vertices and $O(m)$ total number of edges and takes $\tilde O(m)$ additional time.
\end{lemma}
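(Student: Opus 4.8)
The plan is to re‑prove this as the vertex analogue of the Li--Panigrahi isolating‑cuts routine, in two phases. In the \emph{indexing phase}, give each terminal of $T$ a distinct binary identifier of length $k=\lceil\log_2|T|\rceil$. For each bit position $i\in[k]$, let $T_i^0,T_i^1$ be the bipartition of $T$ according to the $i$‑th bit; since $T$ is independent no edge joins $T_i^0$ to $T_i^1$, so we add a fresh source $s_i$ adjacent to all of $T_i^0$ and a fresh sink $t_i$ adjacent to all of $T_i^1$, and call the s-t vertex mincut routine on the resulting graph $G_i$ (which has $O(n)$ vertices and $O(m)$ edges); let $S_i\subseteq V$ be the returned minimum separator. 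Having done this for every $i$, define for each $v\in T$ the \emph{isolating region} $V_v$ as the connected component of $v$ in $G-\bigcup_{i\in[k]}S_i$. In the \emph{recovery phase}, for each $v$ build the local graph $G_v$ obtained from $G[V_v]$ by adding one new sink vertex adjacent to every $x\in V_v$ that has a neighbor outside $V_v$, run s-t vertex mincut between $v$ and that sink, and output the resulting separator as $C_v$.

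Correctness hinges on the claim: for each $v\in T$, if $X^\star$ is the $v$-side component of a minimum $(v,T-\{v\})$-separator chosen so that $|X^\star|$ is as small as possible, then $X^\star\subseteq V_v$; consequently $N_G(X^\star)\subseteq N[V_v]$ lies entirely inside $G_v$ and is found in the recovery phase, while conversely any $X\subseteq V_v$ with $v\in X$ gives a genuine $(v,T-\{v\})$-separator $N_G(X)$ because every other terminal sits in a distinct component after deleting $\bigcup_iS_i$ and hence outside $V_v$. To prove the claim, fix $i$ and w.l.o.g.\ $v\in T_i^0$; let $P_i$ be the set of vertices reachable from $T_i^0\setminus S_i$ in $G-S_i$, so $v\in P_i$, $P_i\cap T_i^1=\emptyset$, and $N_G(P_i)$ is a minimum $(T_i^0,T_i^1)$-separator. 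Submodularity of the vertex‑neighborhood function gives
\[
  |N_G(X^\star)|+|N_G(P_i)|\ \ge\ |N_G(X^\star\cap P_i)|+|N_G(X^\star\cup P_i)| ;
\]
one then checks that $X^\star\cap P_i$ still separates $v$ from $T-\{v\}$ and that $X^\star\cup P_i$ still separates $T_i^0$ from $T_i^1$, so minimality of both cuts forces equality throughout, making $X^\star\cap P_i$ another minimum $(v,T-\{v\})$-side; the minimal‑cardinality choice of $X^\star$ then yields $X^\star\subseteq P_i$, i.e.\ $X^\star\cap S_i=\emptyset$. Since $i$ was arbitrary, $X^\star$ avoids $\bigcup_iS_i$ and, being connected and containing $v$, lies inside $V_v$.

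For the complexity bound, the isolating regions $\{V_v\}_{v\in T}$ are pairwise disjoint (any two terminals differ in some bit $i$ and are therefore split apart by $S_i$), so $\sum_v|V_v|\le n$ and $\sum_v|E(G_v)|\le\sum_v\big(|E(G[V_v])|+|V_v|\big)=O(m+n)$, which is the total size of the recovery‑phase mincut calls; the indexing phase is $O(\log|T|)$ mincut calls on $O(m+n)$-size graphs, and all the remaining work --- assigning identifiers, building the gadgets, computing connected components, contracting boundaries --- is $\widetilde O(m)$.

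I expect the main obstacle to be the uncrossing step: one must verify that $X^\star\cap P_i$ and $X^\star\cup P_i$ are still \emph{valid} cuts (that they separate the intended terminals, and do not accidentally swallow a terminal they are meant to separate) before the submodularity inequality can be combined with minimality. A secondary nuisance is that a returned $S_i$ could in principle contain a terminal, and that some terminals may be degenerate (isolated, already separated, etc.); handling these cleanly --- e.g.\ arguing one may assume w.l.o.g.\ that minimum $(T_i^0,T_i^1)$-separators avoid $T$ --- is where most of the remaining care goes.
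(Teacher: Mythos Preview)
Your two-phase outline (binary encoding of terminals, compute $(T_i^0,T_i^1)$-min-separators $S_i$, take connected components $V_v$ of $G-\bigcup_iS_i$, then run a local mincut per terminal) and your submodularity/uncrossing argument for $X^\star\subseteq V_v$ are exactly the approach the paper reviews from \cite{LiNPSY21}. However, your recovery-phase local graph $G_v$ is built incorrectly, and this makes the algorithm output non-minimum separators.

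You take $G_v=G[V_v]$ and attach the sink to every $x\in V_v$ that has a neighbour \emph{outside} $V_v$. This is the edge-cut-style ``contract the outside into one node'' move, and it fails for vertex cuts because the minimum separator $N_G(X^\star)$ may use vertices of the \emph{outer} boundary $N_G(V_v)\setminus V_v$, which simply do not exist in your $G_v$. Concretely: take $V=\{v,a,b,c,d\}$, edges $va,vb,ac,bc,cd$, and $T=\{v,d\}$. The unique bit gives $S_1=\{c\}$, so $V_v=\{v,a,b\}$. Your $G_v$ has vertices $\{v,a,b,t_v\}$ with $t_v$ adjacent to $a,b$; its minimum $(v,t_v)$-separator is $\{a,b\}$ of size $2$. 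But the true minimum $(v,d)$-separator in $G$ is $\{c\}$ of size $1$. Your own sentence ``$N_G(X^\star)\subseteq N[V_v]$ lies entirely inside $G_v$'' is where the slip occurs: $N[V_v]$ is not contained in $V(G_v)=V_v\cup\{t_v\}$.

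The fix is exactly what the paper does: define $G'_v$ on the vertex set $N_G[V_v]\cup\{t\}$, i.e.\ keep the outer boundary $N_G(V_v)$ as actual vertices, attach $t$ to all of $N_G(V_v)$, and delete edges internal to $N_G(V_v)$. Then a minimum $(v,t)$-separator in $G'_v$ is a minimum $(v,T-\{v\})$-separator in $G$. This also explains the $O(m)$ total-vertex bound in the lemma statement: the $V_v$'s are disjoint, but the outer boundaries $N_G(V_v)$ can overlap, and one only controls $\sum_v|N_G(V_v)|\le\sum_v\mathrm{vol}_G(V_v)=O(m)$, not $O(n)$. Your $O(n)$ vertex count was a symptom of the wrong construction, not a sharper analysis.
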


This version of the isolating cuts lemma works perfectly fine for sequential settings as they focus on bounding the total number of edges. Their algorithm can also be easily parallelized. However, the number of vertices can be as large as $O(m)$ which is not suitable for distributed computing.

To handle this situation, we prove a more efficient version of the isolating cuts lemma that also guarantees $\tilde O(n)$ total number of vertices and $O(m)$ total number of edges.

\begin{lemma} \label{lem:refined isolating cut lemma new}
Given a graph $G = (V,E)$ and an independent set $T \subseteq V$ of size at least $2$, there is an algorithm that outputs for each $v \in T$ a $(v, T -\{v\})$-min-separator $C_v$. The algorithm makes calls to s-t vertex mincut on graphs with $\tilde O(n)$ total number of vertices and $O(m)$ total number of edges and takes $\tilde O(m)$ additional time.
\end{lemma}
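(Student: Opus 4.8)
The plan is to strengthen the standard isolating-cuts construction of \cref{lem:isolating cut old} so that, in addition to the usual $O(m)$-total-edge bound, the auxiliary flow instances have only $\tilde O(n)$ total vertices. Recall the classical construction: one partitions $T$ using $\lceil\log|T|\rceil$ balanced bipartitions; for each bipartition $(T_0,T_1)$ one computes a min vertex cut separating the two sides, and uses these $O(\log n)$ cuts to carve the graph into regions $R_v$, one per terminal $v\in T$, such that the desired $(v,T-\{v\})$-min-separator $C_v$ is contained in (the closed neighborhood of) $R_v$. The issue flagged in the excerpt is that while $\sum_v |E(R_v)| = O(m)$, the naive bound on $\sum_v |V(R_v)|$ is only $O(m)$, because low-degree vertices get counted many times across the regions and the boundary-vertex duplication is controlled only by edges, not by vertices.

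First I would recall why each vertex $u$ can belong to the vertex set of several regions $R_v$: $u$ is ``shared'' only if one of the $O(\log n)$ min-cuts passes through $u$, i.e.\ $u$ lies on the boundary of a region; an interior vertex belongs to a unique region. So $\sum_v|V(R_v)| \le n + (\text{number of boundary incidences})$, and a boundary incidence at $u$ costs one unit for each region $u$ touches. The key observation is that if $u$ is a boundary vertex of $k$ different regions, then $u$ has at least $k-1$ edges leaving its own region in at least one of the carvings — but that is exactly the edge-counting argument, which only gives $O(m)$. To get $\tilde O(n)$ I would instead contract: in the flow instance built for region $R_v$, replace the entire ``outside'' of $R_v$ (i.e.\ $V\setminus N[R_v]$ together with the part of the boundary not needed) by a single sink super-vertex $t_v$, and symmetrically identify $v$'s side appropriately, exactly as in the single-source reduction already described in \cref{subsec:buildingtheframework}. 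Then the vertex count of the $v$-th instance is $|V(R_v)| + O(1)$ where now $V(R_v)$ counts only the \emph{interior} vertices of $R_v$ plus its boundary — and interior vertices are charged once, so $\sum_v(\text{interior of }R_v)\le n$. The boundary vertices, however, still need to be handled: the fix is that each of the $O(\log n)$ balanced bipartitions contributes one min-cut of size $\le \kappa$ where $\kappa \le n$, so the \emph{total} number of distinct boundary vertices across all carvings is $O(\kappa\log n) = \tilde O(n)$, and crucially each boundary vertex is replicated into at most two regions per bipartition (the two sides it separates), hence at most $O(\log n)$ regions total. Multiplying, $\sum_v |V(R_v)| \le n + O(\kappa \log^2 n) = \tilde O(n)$.

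With that accounting in hand the proof assembles as follows. Step one: run the balanced-bipartition scheme of \cref{lem:isolating cut old} to obtain the carving into regions $\{R_v\}_{v\in T}$; argue as in \cite{LiNPSY21} that the min $(v,T-\{v\})$-separator is captured inside $R_v$ (this part I would cite verbatim, it is unchanged). Step two: for each $v$, build the contracted flow instance on $G[N[R_v]]$ with the outside collapsed to a single sink, and verify that the vertex mincut value and the separator itself are preserved by the contraction — this is the standard ``interior cut is unaffected by contracting a far-away region'' argument, identical in spirit to the sparsification picture in \cref{fig:sparse}. Step three: bound $\sum_v |E(\text{instance}_v)| = O(m)$ (inherited from the old lemma, since contraction only removes edges or merges them, and parallel edges to the super-sink can be kept as one) and $\sum_v |V(\text{instance}_v)| = \tilde O(n)$ by the boundary-replication count above. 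Step four: the $\tilde O(m)$ additional time is unchanged — it is the cost of the $O(\log n)$ carvings plus linear-time bookkeeping.

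The main obstacle I anticipate is Step three's vertex bound, specifically pinning down that a boundary vertex is duplicated into only $O(\log n)$ regions rather than possibly many more. The subtlety is that a single min-cut in one bipartition can split $T$'s terminals into two groups, but the recursive/overlaid structure of the $\lceil\log|T|\rceil$ bipartitions could in principle make one physical vertex a boundary of every region if the cuts are adversarially nested. I expect this is handled by the fact that the regions $R_v$ are defined as the intersection of ``my side'' across all $\lceil\log|T|\rceil$ bipartitions, so a vertex $u$ is in $V(R_v)$ only if $u$ is on $v$'s side (or the cut) in every bipartition; $u$ is on the cut in bipartition $i$ for at most the vertices in that one cut $C_i$ (size $\le\kappa$), and across the $O(\log n)$ bipartitions $u$ can be a genuine boundary point only if it lies in $\bigcup_i C_i$, giving the $\tilde O(\kappa) = \tilde O(n)$ total. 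I would want to double-check the edge case where $|T|$ is small (so few bipartitions) and where $\kappa$ is close to $n$ — in the latter regime $\tilde O(n)$ is vacuous anyway, so the bound is never worse than trivial, which is reassuring. The rest of the argument is routine adaptation of \cite{LiNPSY21}.
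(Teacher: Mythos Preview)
Your Step three argument has a genuine gap: the claim that ``each boundary vertex is replicated into at most two regions per bipartition, hence at most $O(\log n)$ regions total'' is false. The regions $U_s$ are the connected components of $G - \bigcup_i F_i$, and a single vertex $v\in\bigcup_i F_i$ can be adjacent to arbitrarily many such components --- it is not limited to ``two sides'' of anything. Concretely, take $G = K_{n/2,n/2}$ with bipartition $(L,R)$ and let $T=L$. Every $(A_i,B_i)$-mincut is all of $R$, so $\bigcup_i F_i = R$, each $U_s=\{s\}$ is a singleton, and $N_G(U_s)=R$ for every $s\in T$. Hence $\sum_{s\in T}|N_G(U_s)| = (n/2)^2 = \Theta(m)$, not $\tilde O(n)$. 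Your proposed bound $n + O(\kappa\log^2 n)$ would give $\tilde O(n)$ here, but the true sum is $\Theta(n^2)$; the replication count per boundary vertex is $n/2$, not $O(\log n)$. Note also that your ``collapse the outside to a single sink'' idea is already exactly what the standard construction does (the graph $G'_s$ in \cite{LiNPSY21} has vertex set $N_G[U_s]\cup\{t\}$), so it buys nothing new.

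The paper's actual fix is completely different and does not attempt to bound $\sum_s |N_G(U_s)|$ at all. Instead, for each $s$ it \emph{sparsifies} the boundary $N_G(U_s)$ down to $\tilde O(|U_s|)$ vertices while preserving $\kappa_{G'_s}(s,t)$. The key structural lemma (\Cref{lem:sparsify by mm generic}) says: if $M$ is a maximum matching of the bipartite graph between $U_s$ and $N_G(U_s)\setminus N_G(s)$, then deleting the unmatched boundary vertices does not change the $(s,t)$-mincut. The proof goes through K\H{o}nig's theorem to reroute any family of vertex-disjoint $(s,t)$-paths so that they enter the boundary only through matched vertices. Operationally one finds a set $D\subseteq N_G(U_s)$ of size $O(|U_s|\log n)$ containing a maximum matching (via $O(\log n)$ rounds of maximal matching, see \Cref{lem:repeat mm}), removes the common neighbors $Z_s = N(s)\cap N(t)$, and keeps only $D$ on the boundary. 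After this, $|V(G''_s)| = O(|U_s|\log n)$, and since the $U_s$ are pairwise disjoint, $\sum_s |V(G''_s)| = \tilde O(n)$ as desired. This matching-based boundary sparsification is the missing idea in your proposal.
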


To elaborate
 for minimum isolating cut, by following the proof of \Cref{lem:isolating cut old} (formally described in Lemma 4.2 of~\cite{LiNPSY21}), we remove a set of vertices $X$ from the graph so that each connected components $C$ of the remaining graph contains at most one terminal $t\in T$. For each connected component $C$ with one terminal $t\in T$ inside, we try to find $\min_{s\in L'\subseteq C}N_G(L')$. Notice that this minimization problem only depends on all the edges adjacent to $C$, which are non-overlapping for different $C$, so the total number of edges is bounded by $m$. However, it depends on all vertices in $N_G[C]$, which could intersect at lot for different $C$.

To obtain the improvement as in \Cref{lem:refined isolating cut lemma new}, we briefly explain how to deal with this issue.  For every $C$ where we wish to find $\min_{s\in L'\subseteq C}N_G(L')$, we prove that it suffices to keep $\tO{|C|}$ vertices in $N_G(C)$ and deleting the others, while preserving $\min_{s\in L'\subseteq C}N_G(L')$. This is done by finding a maximum bipartite matching between $C$ and $N_G(C)$. The formal proof can be found in \Cref{sec:cluster sparsification lemma}. After that, the total number of vertices is upper bounded by $\sum_{C}\tilde O(|C|)=\tilde O(n)$. Given \Cref{lem:refined isolating cut lemma new}, we can easily obtain PRAM and \congest{} algorithms where the number of vertices involved in the isolating cuts lemma is $\tilde O(n)$. We refer to \Cref{sec:isolating cut lemma parallel dist} for details.

\subsubsection{Bottlenecks of Proving \cref{thm:congestmain}: Solving \mnnccn{}} It is easy to see that the algorithm described in \cref{subsec:buildingtheframework} for \mnnccn{} does not preserve the number of edges. In fact, it contains $|V'|/|\LL|$ many instances, where each instance contains $|V'|\cdot |\LL|$ edges and $n$ vertices.

 Now we briefly explain how this is solved in \congest{}. We will use the fact that $V'$ is a clique. For each instance with $|V'|/|\LL|$, we map those edges uniformly at random into the clique $G[V']$, and we solve the instance by an oracle call to s-t vertex connectivity where communication happens only on the mapped edges. Since there are in total at most $\tO{|V'|^2}$ edges, and $G[V']$ is a clique, in expectation each edge is only included in $\tO{1}$ many instances.

\paragraph{Random mapping of the communication inside the clique.} Here we elaborate on the random mapping procedure. Our solution is to offload the communication of each instance to the cliques by a simple random load-balancing strategy.

Given each instance $G'_s$, which is the sparsified graph with $\tO{|V'|\cdot |\LL|}$ edges as described in \cref{subsec:buildingtheframework}, we define a random (public) function $f_s: N(V') \rightarrow V'$ that maps a node  $v \in N(V')$ to a random node in $V'$. For each $u \in N(V')$, our goal is to have $f(u) \in V'$ simulate $u$ (in other words, $f(u)$ acts as a proxy for $u$), i.e., $f(u)$ acts as if it is node $u$. In order for  $f(u)$ to simulate $u$, $f(u)$ must learn the following information:
\begin{itemize} [noitemsep]
\item The algorithm description of node $u$ for executing s-t vertex mincut algorithm in $G'_s$. This can be done by having $u$ send its code to $f(u)$.
\item $f(u)$ learns every node $N_G(u)\cap V'$. This can be done by letting every neighbor $v \in N_G(u) \cap C$ send its id to $f(u)$.
\end{itemize}
In this case, $f(u)$ can act as if it is node $u$. So, any communication between $v \in V'$ to $u \in N(V')$ can be moved to the same communication between $v \in V'$ and $f(u) \in V'$. Therefore, the communication between every edge $(v,u)$ where $v\in C, u\in N(V')$ will map to a $(v,f(u))$ where $u$ is a random node.  Using the fact of graph $G'_s$ that every node $v \in C - \{s\}$ has at most $\hO{|\LL|}$ edges to $N(V')$, we conclude that every node $v \in C -\{s\}$ in the instance $G'_s$ communicates using random $\hO{|\LL|}$ edges adjacent to $V'$.

In summary, fix a node $v \in V'$, and the algorithm will communicate from $v$ to $\tilde O(|\LL|)$ random vertices in $V'$ for each instance $G'_s$. Since every $\tilde O(\frac{|V'|}{|\LL|})$ instance will use $\tilde O(|\LL|)$ random neighbors from $v$ to in $V'$, and $v$ has degree $|V'|-1$ inside $V'$, in expectation, the load of each edge in $V'$ incident to $v$ is $\tilde O(1)$. This means every edge $e$ in $V'$ will have $\tilde O(1)$ number of instances $G'_s$ use $e$ to communicate. We can now run all instances with $\tilde O(1)$ congestion. We present the full algorithm in \Cref{sec:mnnccn}.

\begin{remark}
If the cluster $V'$ is not a clique, but close enough to a clique, then we simulate a clique communication with $O(\log n)$ factor overhead in communication, before running the clique algorithm. We formalize the algorithm in \Cref{sec:virtual clique}.
\end{remark}

\subsection{Parallel Algorithm in Matrix Multiplication Work}\label{subsec:overviewparallel}

Recall that in the last section, we explain how to solve minimum isolating cut and \mnnccn{} by reductions to s-t vertex connectivity while preserving both the number of edges and vertices, in \congest{} model. For the case of minimum isolating cut, one can easily check that the idea for preserving the number of vertices described in the last section also works in the \pram{} model, and thus we can solve s-t vertex connectivity using a reduction to bipartite matching which can be solved in $O(n^\omega)$ work.

We now focus on solving the case of \mnnccn{}. We will use the same local graph $G_{V'}$.  Recall that the algorithm that sparsifies $G_{V'}$ into $G'_s$ can result in $\hO{m} = O(n^2)$ vertices in total, which is not suitable for our purpose.

In this section, we explain how to solve \mnnccn{} in matrix multiplication time by utilizing \emph{convex embedding} introduced in \cite{LinialLW88}. The details can be found at \Cref{sec:parallelimplementation mm work}. Throughout this section, we let $n_0 = |N_G[V']|+1$ be the number of vertices in $G_{V'}$. The goal is to solve \mnnccn{} in $\tilde O(n_0^\omega)$ work (and polylog depth).  For simplicity, we assume that $|N(V')| \leq |V'|$. Later we show how to remove this assumption.

Let \(\mathbb{F}\) be a finite field. For \(k \geq 0\), the space \(\mathbb{F}^k\) is a \(k\)-dimensional linear space over \(\mathbb{F}\). Let \(X = \{ x_1, \ldots, x_n \}\) be a finite set of points within \(\mathbb{F}^k\). The \emph{affine hull} of \(X\) is defined as
\[
\text{aff}(X) = \left\{ \sum_{i=1}^k c_i x_i : x_i \in X \text{ and } \sum_{i=1}^k c_i = 1 \right\}.
\]
The rank of \(X\), represented as \(\text{rank}(X)\), is one plus the dimension of \(\text{aff}(X)\). Specifically, if \(\mathbb{F} = \mathbb{R}\), we refer to the \emph{convex hull} of \(X\), denoted \(\text{conv}(X)\).
For any sets $V, W $, any function $f : V \rightarrow W$, and any subset $U \subseteq V$, we denote $f(U) := \{ f(u) \colon u \in U \}$.

\begin{definition}[Convex $X$-embedding~\cite{LinialLW88}]
For any $X \subset V$, a convex  $X$-embedding of
a graph $G = (V,E) $ is a function $f : V  \rightarrow \mathbb{R}^{|X|-1}$
such that for each $v \in V \setminus X$, $f(v) \in \text{conv}(f(N_G(v)))$.
\end{definition}

We defer the details of construction to \Cref{sec:parallelimplementation mm work}. For now, let us assume that we are given $N_{G_{V'}}(t)$-convex embedding $f_t$ in graph $G_{V'}$ such that for all $x \in V'$, $\kappa_{G_{V'}}(s,t) = \rank (f_t(N_{G_{V'}}(x))).$ The task now is to compute the rank of $f_t(N_{G_{V'}}(x))$, which can be done in $\tilde O(n_0^\omega)$ time. In total, it would take $\tilde O(\frac{|V'|}{|\LL|} \cdot  n_0^\omega)$ work to compute $\kappa_{G_{V'}}(s,t)$ for all instances of $s$-source \mn{} via this approach.

Here is the speedup we can exploit: Since $V'$ has neighborhood difference at most $\hO{|L^*|}$, and thus for any $s, s' \in V'$, $|N(s) \triangle N(s')| \leq \hO{|L^*|}$. Therefore, $\rank (f_t(N_{G_{V'}}(s')))$ can be obtained from \emph{low-rank} updates of $\rank (f_t(N_{G_{V'}}(s)))$ given we preprocess the matrix representing $f_t(N_{G_{V'}}(s)$.  Here, low-rank updates correspond to changing $\hO{|L^*|}$ columns of the matrix representing $f_t(N_{G_{V'}}(s))$ since $|N(s) \triangle N(s')| \leq \hO{|L^*|}$.

In terms of low-rank updates, we can use a \emph{dynamic matrix data structure} to support this operation (see \Cref{lem:matrix ds} for the formal definition). Namely, we can preprocess the matrix $M_s$ representing $f_t(N_{G_{V'}}(s))$ in $\tilde O(n_0^\omega)$ time so that given $s' \in V'$, we can decide if $\rank(f_t(N_{G_{V'}} (s'))) \geq k$ in $T_{\MM}(|\LL|,n_0,|\LL|)$ time where $T_{\MM}(n,k,r)$ be the number of field operations needed to multiply an $n \times k$ matrix with a $k \times r$ matrix in $O(\log^2n)$ depth.

To summarize, using the matrix data structure, the total work to compute $\kappa_{G_{V'}}(s,t)$ for $\tilde O(\frac{|V'|}{|\LL|})$ instances of $s$-source \mn{} is at most
\[ \tilde O(n_0^\omega) + \tilde O\left(\frac{|V'|}{|\LL|}\right) \cdot T_{\MM}(|\LL|,n_0,|\LL|)  \leq \tilde O(n_0^\omega) = \tilde O(|V'|^\omega). \]

\subsection{Communication Complexity}\label{subsec:overviewcc}

When implementing our framework in the communication model, common-neighborhood clustering (as described in \cref{subsec:overviewcongest}) works well. The problem we face is solving \mnnccn{}: the best s-t vertex connectivity algorithm~\cite{blikstad2022nearly} takes $n$ as the parameter, but in the worst case we need to run $n$ sparse instances, each with $n$ vertices. One might hope that there is a way to solve this issue with an almost-perfect reduction just like we did in \congest{}. However, we proved that this is not true by showing an $\Omega(n^{1.5})$ lower bound.

\paragraph{An $\Omega(n^{1.5})$ lower bound.} We will reduce the following problem to vertex connectivity. Alice and Bob will get $\sqrt{n}$ instances of \emph{\subtri}, each containing $\sqrt{n}$ instances of \emph{subset} problems with length $\sqrt{n}$. The inputs to Alice and Bob are denoted as $A^{(i)}_1,A^{(i)}_2,...,A^{(i)}_{\sqrt{n}}\subseteq[\sqrt{n}]$ and $B^{(i)}_1,B^{(i)}_2,...,B^{(i)}_{\sqrt{n}}\subseteq[\sqrt{n}]$ from $i=1$ to $i=\sqrt{n}$, where it is guaranteed that $|A^{(i)}_j|=\sqrt{n}/2$ for any $i,j$. They want to determine whether there exists $i$ such that $B^{(i)}_j\subseteq A^{(i)}_j$ for any $j$. This problem has randomized communication lower bound $\Omega(n^{1.5})$ (see \cref{lem:ORANDsd}). Now we show how to reduce this problem to vertex connectivity.

We let $G=(V_1\cup V_2...\cup V_{\sqrt{n}}\cup U,E)$ where $V_i$ is a clique with $\sqrt{n}$ vertices, and $U$ is a clique with $n$ vertices. We connect $V_i$ to $U$ based on $A^{(i)}_1,...,A^{(i)}_{\sqrt{n}}$ and $B^{(i)}_1,...,B^{(i)}_{\sqrt{n}}$, such that $N(V_i)$ has size $n$ iff. $B^{(i)}_j\subseteq A^{(i)}_j$ for any $j$, while making sure that any other cut has size at least $n+1$. The way of connecting $V_i$ to $U$ is as follows: split $U$ into $\sqrt{n}$ pieces $U_1,...U_{\sqrt{n}}$, and connect the $j$-th vertex in $V_i$ (denoted by $v^{(i)}_j$) to $U_j$ according to $B^{(i)}_j$; connect the $j$-th vertex in $V_i$ to $U_k$ for any $k\not=j$ according to $A^{(i)}_k$. See~\cref{fig:n15bound} for an illustration. The details of this construction and proofs are in~\cref{subsec:n15lowerbound}.

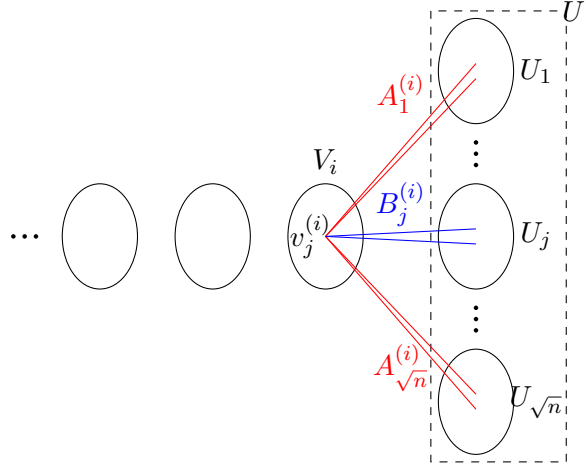
\begin{figure}[!ht]
    \centering
\begin{tikzpicture}

        \foreach \z in {0.15,0,-0.15}
        \fill (-4+\z,0) circle (0.03);

    \draw (-3,0) ellipse (0.5cm and 0.7cm);
    \draw (-1.5,0) ellipse (0.5cm and 0.7cm);

    \draw (0,0) ellipse (0.5cm and 0.7cm);
    \node[above] at (0,0.7) {$V_i$};
    \node at (-0.2,0) {$v^{(i)}_j$};

    \draw[dashed] (1.4,3) rectangle (3.2,-3);

    \foreach \y in {2.2, 0, -2.2}{
        \draw (2,\y) ellipse (0.5cm and 0.7cm);
    }

    \foreach \y in {1.1, -1.1}{
        \foreach \z in {0.15,0,-0.15}
        \fill (2,\y + \z) circle (0.03);
    }

    \foreach \y in {2.2, -2.2}{
        \draw[red] (0,0) -- ([yshift=-0.1cm]2,\y);
        \draw[red] (0,0) -- ([yshift=0.1cm]2,\y);
    }
    \draw[blue] (0,0) -- ([yshift=-0.1cm]2,0);
    \draw[blue] (0,0) -- ([yshift=0.1cm]2,0);

    \node[red, above] at (1,1.5) {$A^{(i)}_1$};
    \node[blue, above] at (1,0) {$B^{(i)}_j$};
    \node[red, above] at (1,-2.2) {$A^{(i)}_{\sqrt{n}}$};

    \node at (3.3,3) {$U$};
    \node at (2.8,2.2) {$U_1$};
    \node at (2.8,0) {$U_j$};
    \node at (2.8,-2.2) {$U_{\sqrt{n}}$};
\end{tikzpicture}
\captionsetup{width=0.8\textwidth}

    \caption{Each ellipsoid (and the set $U$) represents a clique. Every node in $V_i$ for $i=1$ to $\sqrt{n}$ has edges to $U$ as illustrated in the graph.}\label{fig:n15bound}
\end{figure}

\paragraph{A simple nearly tight upper bound.} Now we describe a simple way to solve \mn{} when $V'$ is a common-neighborhood cluster in $\tO{n^{1.5}}$ communication. We break it into two cases based on the size of $\LL$.

When $|\LL|>n^{0.5}$, we can sample $\tO{\sqrt{n}}$ nodes and one of them will hit $\LL$. Recall that if we know a node in $\LL$, we can find $\LL$ in one s-t vertex connectivity call. Thus, in this case, the communication complexity os $\tO{n^{1.5}}$.

If $|\LL|<n^{0.5}$, for a common-neighborhood cluster $V'$, we can recover all edges adjacent to $V'$ by using $O(n^{1.5})$ communications in the following way: choose an arbitrary vertex $u\in V'$, Alice and Bob learn the set $N_G(u)$ by $O(n)$ communication, then for every $v\in V'-\{u\}$, Alice and Bob can learn $N_G(v)$ by $O(n^{0.5})$ communication since $|N(u)\triangle N(v)|\le n^{0.5}$. Knowing all edges adjacent to $V'$ suffices to solve \mnnccn{}.

\begin{remark}
    The algorithm for communication can be easily implemented in the streaming model. Moreover, there is an even simpler communication and streaming protocol in $\tO{n^{1.5}}$ total communication that does not require common-neighborhood clustering. For completeness, we present it in~\cref{sec:streaming}.
\end{remark}

\section{Preliminaries} \label{sec:prelims}
\subsection{Terminologies} \label{sec:terminology}

\paragraph{Basic graph terminologies.}

An (undirected) graph is denoted by $G=(V,E)$ where $V$ is the vertex (or node) set and $E\subseteq \{\{x,y\}\mid x,y\in V,x\not=y\}$ is the edge set\footnote{All graphs in this paper are simple, i.e., without multi-edges and self-loops, as written in the definition.}. Following the convention, we abuse the notations a bit and will also use $(x,y)$ to denote an edge, which is a set $\{x,y\}$. $x,y$ are the \emph{endpoints} of the edge $(x,y)$. $N_G(u)$ is the set of neighbors of $u$ defined by $\{v\in V\mid \{u,v\}\in E\}$. We omit $G$ in the subscript if $G$ is clear from the context. We also define $N[u]=N(u)\cup\{u\}$. The \emph{degree} of a vertex $u$ is defined as $|N(u)|$. The \emph{minimum degree} of $G$ is defined as $\min_{v\in V}|N(v)|$. Define $E(A,B)=\{\{x,y\}\in E\mid x\in A,y\in B\}$.

For a vertex set $V'$, $G[V']$ is called the induced subgraph of $G$ on $V'$ defined by $G[V']=(V',\{(u,v)\in E\mid u,v\in V'\})$.

Throughout the paper, we use $n$ to denote $|V|$ and $m$ to denote $|E|$. We write $\tO{f}=O(f\cdot\log^cn)$ for some constant $c$, and $\hO{f}=f\cdot n^{o(1)}$. It is important that $n$ represents $|V|$ here, which is roughly the input size.

\paragraph{Set terminologies.} We define $[z]=\{1,2,...,z\}$. For two sets $A,B$, we use $A-B$ or $A\backslash B$ to denote $\{a\in A\mid a\not\in B\}$. We define $A\triangle B$ as the symmetric difference $(A-B)\cup(B-A)$. We say $(A_1,...,A_z)$ is a partition of $A$ if $\cup_{i\in[z]}A_i=A$ and $A_i\cap A_j=\emptyset$ for any $i\not=j$.

\paragraph{Vertex cut.} A \emph{vertex cut} (or simply \emph{cut} in this paper) is a partition of $V$ denoted by $(L,S,R)$ where (1) $L \neq \emptyset, R \neq \emptyset, E(L,R) = \emptyset$  or (2) $|S| \geq n-1$. The size of a vertex cut $(L,S,R)$ is defined as $|S|$. The \emph{minimum} vertex cut refers to the vertex cut with the smallest size. For a vertex $s\in V$, The \emph{minimum $t$-sink vertex cut} is defined as a vertex cut $(L,S,R)$ with $t\in R$ which minimizes $|S|$.

We say a vertex cut $(L,S,R)$ as a $(u,v)$-vertex cut if $u\in L,v\in R$. For two vertex sets $A,B\subseteq V$, we say $(L,S,R)$ is an $(A,B)$-vertex cut if $A\subseteq L,B\subseteq R$.
In this case, we say $S$ is a $(u,v)$-separator or $(A,B)$-separator. $S$ is called a \emph{separator} (or a \emph{vertex cut}) if $S$ is a $(u,v)$-separator for some $u,v\in V$. We use $\kappa_G(u,v)$ to denote the size of the smallest $u,v$-separator in $G$, and $\kappa(G)$ to denote the smallest separator in $G$. When there are no $(u,v)$-separator or no separator in $G$, we define $\kappa_G(u,v)$ or $\kappa(G)$ as $n-1$. In this case, we say any $n-1$ nodes is a separator.

For a vertex set $X\subseteq V$, we say $X$ has \emph{\nd{}} $d$ if for any $u,v\in X$ we have $|N_G(u)\triangle N_G(v)|\le d$. The following lemma is crucial to our algorithm.

\begin{lemma}\label{lem:unbalancedclose}
    Suppose $(L,S,R)$ is a minimum vertex cut, then the \nd{} of $L$ is at most $2|L|$.
\end{lemma}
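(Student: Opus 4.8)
The plan is to bound the neighborhood difference of $L$ by exhibiting, for any two vertices $u, v \in L$, a bijection-style argument showing $|N_G(u) \triangle N_G(v)| \le 2|L|$. The key observation is that since $(L,S,R)$ is a \emph{minimum} vertex cut, every vertex of $S$ must have a neighbor in $L$ (otherwise that vertex could be moved from $S$ into $R$, contradicting minimality of $|S|$ — more precisely, if $w \in S$ has no neighbor in $L$, then $(L, S \setminus \{w\}, R \cup \{w\})$ is still a valid cut of smaller size). Likewise, since there are no edges between $L$ and $R$, we have $N_G(L) \subseteq S$, so in fact $N_G(L) = S$.

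Now fix $u, v \in L$. I want to control $N_G(u) \triangle N_G(v)$. Since $L$ is connected (or we may assume so, as the lemma is applied to connected $\LL$; but actually the statement should hold regardless — let me handle the general case), split the symmetric difference according to where the vertices lie: $N_G(u) \cup N_G(v) \subseteq L \cup S$ because $u, v \in L$ and $L$ has no edges to $R$. So $N_G(u) \triangle N_G(v) \subseteq L \cup S$, which only gives $|L| + |S|$ and is too weak when $|S|$ is large. To do better, I would argue that the portion of the symmetric difference lying in $S$ is itself bounded by $|L|$: each vertex $w \in S \cap (N_G(u) \triangle N_G(v))$ is adjacent to exactly one of $u, v$. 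The better route is: consider $w \in S$ with $w \in N_G(u) \setminus N_G(v)$. I'd like to charge $w$ to a vertex of $L$. Here is where minimality is used more carefully — I would instead directly compare the cut $(L, S, R)$ with the cut obtained by relocating $u$ (or $v$).

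Concretely, here is the cleaner approach I would carry out. Consider moving $v$ out of $L$: look at $L' = L \setminus \{v\}$ (assume $|L| \ge 2$; the case $|L| = 1$ is trivial since then any single $w$ contributes and $|N(u) \triangle N(v)|$ with $L = \{u\}$... actually one should just check small cases directly). Then $N_G(L') \subseteq (N_G(L) \setminus \{v\}) \cup \{v\} \cup \bigl(N_G(v) \cap \text{(something)}\bigr)$ — this is getting complicated, so let me commit to the symmetric-difference-charging argument. For $w \in N_G(u) \setminus N_G(v)$ with $w \in S = N_G(L)$: since $w \notin N_G(v)$ but $w \in N_G(u) \subseteq N_G(L)$, this is automatically consistent. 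The trick: partition $N_G(u) \triangle N_G(v)$ into the part inside $L \cup \{u,v\}$ (size $\le |L|$) and the part inside $S$. For the part inside $S$: I claim that if $(L,S,R)$ is minimum then $|S| \le$ ... no. The right statement: consider the cut $(\{u\}, N_G(u), V \setminus N_G[u])$ if it is valid, versus $(\{v\}, N_G(v), V \setminus N_G[v])$; minimality of the \emph{global}... — wait, $(L,S,R)$ need not be singleton-based.

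\textbf{The real plan.} I would use the submodularity / uncrossing of vertex cuts. Define $L_u = L \setminus \{u\}$ — actually the standard fact is: for a minimum vertex cut $(L,S,R)$ and any $u \in L$, the set $N_G(u) \setminus S$ (neighbors of $u$ outside the separator, which all lie in $L \cup \{u\}$... wait $u$'s neighbors in $S$ are in $S$) — hmm. Let me state what I'd prove: $N_G(u) \triangle N_G(v)$ decomposes as $\bigl((N_G(u) \triangle N_G(v)) \cap (L \cup \{u,v\})\bigr) \sqcup \bigl((N_G(u) \triangle N_G(v)) \cap S\bigr)$ since all neighbors of $u,v$ lie in $L \cup S \cup \{u,v\}$. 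The first part has size at most $|L \cup \{u,v\}| = |L|$ (as $u,v \in L$). For the second part in $S$: a vertex $w \in S$ lies in this part iff $w$ neighbors exactly one of $u,v$. I would bound the number of such $w$ by $|L|$ using minimality: if there were more than $|L|$ vertices in $S$ adjacent to $u$ but not $v$, then... consider $\hat L = L \setminus \{u\}$; we have $N_G(\hat L) \subseteq S \cup \{u\}$, and $|N_G(\hat L)| \ge |S|$ by minimality (if $\hat L \neq \emptyset$ and it is a valid cut side), giving $|N_G(\hat L)| \in \{|S|, |S|+1\}$, meaning at most one vertex of $S$ loses all its $L$-neighbors when $u$ is removed — not quite what I need either. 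I expect the actual argument to go via: $|N_G(u) \triangle N_G(v)| \le |N_G(u) \setminus N_G[v]| + |N_G(v) \setminus N_G[u]| + 2$ and then bound each $|N_G(u) \setminus N_G[v]|$ by relating it to the cut $(\{$ nodes on $u$'s side $\}, \ldots)$; the cleanest is to observe that $N_G(v) \setminus N_G[u]$, together with $(L \cap N_G[u]) \setminus \{v\}$ forms a separator between $u$ and $R$ of size... Let me just say:

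\textbf{Main obstacle.} The crux — and the step I expect to be hardest to get exactly right — is bounding the part of $N_G(u) \triangle N_G(v)$ that lies in $S$ by $|L|$ rather than by the (potentially huge) $|S|$. I anticipate this follows from an uncrossing argument: for $u \in L$, the set $S_u := (N_G(u) \cap S) \cup (L \setminus \{u\})$ is a valid $u$-to-$R$ separator containing $N_G[u] \setminus \{u\}$ restricted appropriately, and minimality of $|S|$ forces $|S_u| \ge |S|$, i.e. $|N_G(u) \cap S| + |L| - 1 \ge |S|$, hence $|S \setminus N_G(u)| \le |L| - 1$. Symmetrically $|S \setminus N_G(v)| \le |L|-1$. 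Then $(N_G(u) \triangle N_G(v)) \cap S \subseteq (S \setminus N_G(v)) \cup (S \setminus N_G(u))$ has size $\le 2(|L|-1)$. Combined with the $L \cup \{u,v\}$ part of size $\le |L| - 2$ (since it excludes $u,v$ themselves... careful with $\pm$ small constants), this would give roughly $3|L|$, not $2|L|$ — so I would need to be sharper. To reach exactly $2|L|$, I would instead directly apply minimality to the relocated cuts $(L \setminus \{u\}, \, S \cup \{u\} \setminus (\text{?}), \, R)$ and carefully track which vertices enter and leave $S$, or appeal to the submodular inequality $|N_G(A)| + |N_G(B)| \ge |N_G(A \cap B)| + |N_G(A \cup B)|$ applied to suitable $A, B$ derived from $L$ and a shifted copy. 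I would finalize the constant by the following count: $N_G(u) \subseteq N_G[L]$ and $N_G(v) \subseteq N_G[L]$ where $|N_G[L]| = |L| + |S|$; restricting to $S$, use the two inequalities $|S \setminus N_G(u)|, |S \setminus N_G(v)| \le |L| - 1$ plus the bound $|(N_G(u)\triangle N_G(v)) \cap (L\cup\{u,v\})| \le |L|$, but observe that any $w \in S$ contributing to the symmetric difference is missing from \emph{exactly one} of $N_G(u), N_G(v)$, so it is counted once, giving $|(N_G(u)\triangle N_G(v)) \cap S| \le (|L|-1) + (|L|-1)$ is an overcount by a factor near $2$; a more careful disjointness argument ($S\setminus N_G(u)$ and $S \setminus N_G(v)$ can overlap, and their overlap does \emph{not} contribute to the symmetric difference) should recover the clean $2|L|$. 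I will write this out with the uncrossing/minimality lemma as the engine.
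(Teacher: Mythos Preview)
Your approach has the right ingredients but does not reach the stated constant $2|L|$, and your closing ``more careful disjointness'' hope is actually misplaced. Concretely, your decomposition into the $L$-part and the $S$-part yields at best
\[
\bigl|(N_G(u)\triangle N_G(v))\cap L\bigr| + \bigl|(N_G(u)\triangle N_G(v))\cap S\bigr| \le |L| + 2(|L|-1) = 3|L|-2,
\]
and this cannot be improved by exploiting the overlap of $S\setminus N_G(u)$ and $S\setminus N_G(v)$: if you set $X = S\setminus N_G(u)$ and $Y = S\setminus N_G(v)$, then $(N_G(u)\triangle N_G(v))\cap S = X\triangle Y$, and $|X\triangle Y|\le |X|+|Y|$ is tight exactly when $X,Y$ are disjoint, which can happen. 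So the ``overlap doesn't contribute'' observation, while true, doesn't lower the bound below $|X|+|Y|$.

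The paper's proof avoids splitting into $L$- and $S$-parts altogether and is a two-line computation. The key missing identity is $|A\triangle B| = 2|A\cup B| - |A| - |B|$. The minimum-cut hypothesis is used only as a \emph{minimum-degree} bound: $|N_G(u)|\ge |S|$ for every vertex $u$ (else $N_G(u)$ would be a smaller separator). Combining this with the containment $N_G(u)\cup N_G(v)\subseteq L\cup S$ (which you also noted) gives immediately
\[
|N_G(u)\triangle N_G(v)| = 2|N_G(u)\cup N_G(v)| - |N_G(u)| - |N_G(v)| \le 2(|L|+|S|) - 2|S| = 2|L|.
\]
Your uncrossing inequality $|S\setminus N_G(u)|\le |L|-1$ is in fact a consequence of this same minimum-degree bound (since $N_G(u)\subseteq (L\setminus\{u\})\cup S$ and $|N_G(u)|\ge|S|$), so you had the right hypothesis in hand; the difference is purely in how to combine it. Using the symmetric-difference identity globally rather than bounding the $L$- and $S$-parts separately is what recovers the sharp $2|L|$.
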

\begin{proof}
    Since $S$ is the minimum vertex cut, we have $|N_G(u)|\ge|S|$ for any $u\in V$, otherwise $N_G(u)$ is a smaller vertex cut. Moreover, for $u,v\in L$, $N_G(u)\cup N_G(v)\subseteq L\cup S$, otherwise $S$ is not a vertex cut. Thus, $|N_G(u)\triangle N_G(v)|=  2|N_G(u)\cup N_G(v)| -|N_G(u)|-|N_G(v)|\le 2(|S|+|L|)-2|S|=2|L|$.
\end{proof}

\subsection{Computational Models} \label{sec:model-problems}
\paragraph{\pram{} model.} In \pram{} model, we have a set of processors and a shared memory space. Each processor can independently read and write on the shared memory space or do other unit operations. The input is given initially on the shared memory space, and the processors are required to jointly compute a specific problem given the input. The complexity is measured by \emph{work} and \emph{depth}, where work is measured by the \textbf{total} amount of unit operations performed by all the processes, and depth is measured by the time consumed before the output is generated.

\paragraph{\congest{} model.} In the distributed \congest{} model, we have an initial graph $G = (V,E)$ of $n$ nodes with diameter $D$ where each node has a unique ID and unlimited computational power. Initially, each node only knows its neighbors. The algorithm runs in synchronous rounds. For each round, each node can exchange $O(\log n)$ bits of information to its neighbors. The goal is to design an algorithm that minimizes the number of rounds needed to determine the output, for example, the vertex connectivity of $G$.

\paragraph{Two-party communication model.} In the two-party communication model (or simply communication model), the edge set $E$ is arbitrarily partitioned into two sets $E=E_1\cup E_2$ where $E_1\cap E_2=\emptyset$. Two players Alice and Bob both know the vertex set $V$, and Alice knows the set $E_1$, and Bob knows the set $E_2$. Their goal is to exchange as few bits as possible to find some intrinsic property of the graph $G=(V,E)$.
\paragraph{Graph streaming model.} In the graph streaming model (or simply streaming model), a graph $G =(V, E)$ is given to the algorithm as an arbitrarily ordered stream of edges. The algorithm has limited space and can read this stream in sequential passes. The goal is to minimize both the space usage and the number of passes. \emph{Semi-streaming} refers to the case when the space is restricted to $\tOh{n}$.

\begin{remark}[public and private randomness]\label{rem:publicrandomness}
    In the two-party communication model and \congest{} model, \emph{public randomness} refers to the case when every party (or node) can access the same random bits, \emph{private randomness}, in contrast, does not allow parties (or nodes) to share the randomness. According to Newman's Theorem~\cite{Newman91}, to simulate public randomness using private randomness, parties only need to share an additional $O(\log n)$ bits of message (which can be broadcasted in \congest{} model). Thus, in this paper, we assume access to public randomness when devising algorithms in communication or \congest{} model.
\end{remark}

\subsection{Problem Definitions}\label{sec:problemdefinitions}
We define the problems considered in this paper as follows.
We say a (randomized) algorithm solves a problem if it outputs the correct answer with high probability (or simply w.h.p.), i.e., with probability at least $1-\frac{1}{n^c}$ for an arbitrarily large constant $c$. All algorithms in this paper are randomized except with explicit clarification.

\paragraph{Vertex connectivity (\UDVC{}).} Given an undirected graph $G=(V,E)$, output a minimum vertex cut.
\paragraph{Single sink unidrected vertex connectivity (\SSUDVC{}).} Given a undirected graph $G=(V,E)$ and a sink vertex $t\in V$, output a minimum $t$-sink vertex cut.
\paragraph{s-t vertex connectivity (s-t \UDVC{}).} Given an undirected graph $G=(V,E)$ and $s,t\in V$, output a minimum $(s,t)$-vertex cut.
\paragraph{S,T-vertex connectivity (S-T \UDVC{}).} Given an undirected graph $G = (V,E)$, and two  vertex sets $A,B \subseteq V$, compute the minimum $(A,B)$-vertex cut.
\paragraph{Subgraph problems in \congest{}.} For subgraph S-T vertex connectivity (or other subgraph problems) in \congest{}, we are given a network along with a subgraph $H$ where every node knows its adjacent edges in $H$, and we want to solve the problem on $H$ while other edges of $G$ are only used for communication purposes.
\paragraph{Bipartite maximum matching (\BMM{}).} Given a bipartite graph $(A,B,E)$ (which is defined as $E\subseteq A\times B$ where $A\cup B$ is the vertex set), output the maximum matching (a matching is defined as a set of edges with mutually disjoint endpoints).

\paragraph{Bipartite minimum vertex cover (\BMC{}).} Given a bipartite graph $(A,B,E)$, output the minimum vertex cover (a vertex cover is defined as a set of vertices such that every edge has at least one endpoint in this set).

\subsection{Sketching}\label{subsec}

We will use the standard linear sketching algorithms from~\cite{CormodeF14}.

\begin{theorem}[Section 3 in~\cite{CormodeF14}]\label{thm:sketching}
For any numbers $n$ and
$s$, there is an algorithm that preprocesses in $\tO{s}$ work and $\tO{1}$ depth and
then, given any vector $v\in \{-1,0,1\}^{n}$, return a sketch $\sk{s}(v)\in\mathbb{Z}^{\tO{s}}$
in $\tO{\|v\|^2_2}$\footnote{Notice that $\|v\|^2_2$ is basically the number of non-zero entries of $v$.} work and $\tO{1}$ depth and guarantees the following w.h.p. (as long as
the number of recovery operations is $\poly(n)$).
\begin{itemize}
\item If $\|v\|^2_2\le s$, then we can recover $v$ from $\sk{s}(v)$
in $\tO{s}$ time and $\tO{1}$ depth. (More specifically, we obtain all non-zero entries
of $v$ together with their indices).
\item Otherwise, if $\|v\|^2_2>s$, then the algorithm returns $\bot$.
\end{itemize}
Moreover, the sketch is \emph{linear}, i.e.~$\sk{s}(u+v)=\sk{s}(v)+\sk{s}(u)$
for any $u,v\in \mathbb{Z}^{n}$.
\end{theorem}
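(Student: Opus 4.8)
The plan is to reconstruct the standard \emph{linear $s$-sparse-recovery sketch}: $\Theta(\log n)$ independent levels of pairwise-independent hashing into $\Theta(s)$ buckets, a per-bucket $1$-sparse-recovery gadget based on value/weighted-value sums plus a polynomial fingerprint, and one extra fingerprint drawn independently of all the other randomness, used to certify the reconstruction. Concretely, the preprocessing fixes a prime $p=n^{\Theta(1)}$, draws $z,z'\in\mathbb{F}_p$ uniformly at random, and draws $L=\Theta(\log n)$ pairwise-independent hash functions $h_1,\dots,h_L\colon[n]\to[2s]$; this is $\tO{s}$ work and $\tO{1}$ depth (dominated by laying out the $\tO{s}$-dimensional output array). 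For $v\in\mathbb{Z}^n$ we set $\sk{s}(v)$ to consist of, for every level $\ell\in[L]$ and bucket $b\in[2s]$, the triple $\bigl(\phi_{\ell,b},\iota_{\ell,b},\rho_{\ell,b}\bigr)=\bigl(\sum_{h_\ell(i)=b}v_i,\ \sum_{h_\ell(i)=b}i\,v_i,\ \sum_{h_\ell(i)=b}v_i z^{i}\bmod p\bigr)$, together with one global fingerprint $\sigma=\sum_i v_i (z')^{i}\bmod p$. Each of these $6sL+1=\tO{s}$ coordinates is a linear functional of $v$ (the last two kinds mod $p$), so $\sk{s}$ is linear on $\mathbb{Z}^n$ as required, and for $v\in\{-1,0,1\}^n$ every coordinate is an integer of magnitude $\poly(n)$. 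A single coordinate update $v\mapsto v\pm e_i$ touches $L$ triples and $\sigma$ and costs $\tO{1}$ work and $\tO{1}$ depth (one fast modular exponentiation), so building $\sk{s}(v)$ from scratch costs $\tO{\|v\|_0}=\tO{\|v\|_2^2}$ work and $\tO{1}$ depth via a summation tree.

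The recovery procedure processes the $\tO{s}$ buckets in parallel. A bucket with triple $(0,0,0)$ is declared empty; otherwise it forms the candidate index $j=\iota_{\ell,b}/\phi_{\ell,b}$ and \emph{accepts} iff $|\phi_{\ell,b}|=1$, $j\in[n]$, and $\rho_{\ell,b}\equiv \phi_{\ell,b}\,z^{j}\pmod p$. If a bucket genuinely contains exactly one nonzero coordinate $j$ then $\phi_{\ell,b}=v_j=\pm1$, $\iota_{\ell,b}=j v_j$, and the test accepts, recovering $(j,v_j)$ exactly. The accepted pairs are merged into a partial vector $\hat v$ with support $T$ ($\tO{s}$ work, $\tO{1}$ depth); if $|T|>s$ the procedure outputs $\bot$; otherwise it outputs $\hat v$ when $\sigma\equiv\sum_{j\in T}\hat v_j (z')^{j}\pmod p$ and $\bot$ otherwise (a single sparse polynomial evaluation, $\tO{s}$ work and $\tO{1}$ depth).

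Correctness rests on three low-probability bad events. (i) Some bucket with $\ge2$ nonzeros is accepted: for a fixed such bucket $\rho_{\ell,b}$ is the evaluation at $z$ of a polynomial with at least two distinct monomials, which can equal the monomial $\phi_{\ell,b}z^{j}$ only for $<n$ values of $z$, so this occurs with probability $<n/p$; a union bound over the $\tO{s}$ buckets removes it, and then every accepted bucket is truly $1$-sparse, so $T\subseteq\operatorname{supp}(v)$ with the correct signs. (ii) When $\|v\|_2^2=\|v\|_0\le s$: for each nonzero $i$, pairwise independence gives $\Pr[i\text{ not isolated in level }\ell]\le(\|v\|_0-1)/(2s)<1/2$, so $i$ is isolated in some level except with probability $2^{-\Omega(L)}$; a union bound over the $\le s$ nonzeros yields $T=\operatorname{supp}(v)$, i.e.\ $\hat v=v$, and then $\sigma=\sum_{j\in T}\hat v_j (z')^{j}$ holds identically and $v$ is returned. (iii) When $\|v\|_0>s$ and the procedure does not already output $\bot$: then $|T|\le s<\|v\|_0$, so $\hat v\ne v$ and $w:=v-\hat v$ is a \emph{nonzero} vector in $\{-1,0,1\}^n$ with $\operatorname{supp}(w)\subseteq\operatorname{supp}(v)$; crucially $w$ depends on $v$, the $h_\ell$'s and $z$ but \emph{not} on $z'$, so conditioning on all of those the equality $\sigma\equiv\sum_{j\in T}\hat v_j (z')^{j}$ is exactly the event that the independent uniform $z'$ is a root of the fixed nonzero polynomial $\sum_i w_i x^{i}$ of degree $<n$, which happens with probability $<n/p$. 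Choosing $p=n^{\Theta(1)}$ and $L=\Theta(\log n)$ with large enough constants makes all three probabilities $1/\poly(n)$, small enough to survive a union bound over the $\poly(n)$ recovery calls.

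The main obstacle is certifying the ``$\|v\|_2^2>s$'' branch. A naive attempt that just runs per-bucket $1$-sparse recovery and returns $\bot$ when ``too many'' buckets look corrupted fails, because when $\|v\|_0\gg s$ a (fully random) hash may isolate essentially no coordinate, so $\hat v$ can be tiny yet wrong while no obvious inconsistency is visible. Certifying $\hat v$ against the \emph{in-sketch} fingerprints $\rho_{\ell,b}$ is also circular, since $\hat v$ itself is a function of $z$, which would force a Schwartz--Zippel union bound over exponentially many candidate reconstructions and hence an impractically large $p$. Drawing the certification fingerprint $z'$ \emph{independently} of the hashing and of $z$ is precisely what breaks this dependence; the remaining ingredients (linearity, the work/depth accounting, and the isolation argument in the sparse case) are routine.
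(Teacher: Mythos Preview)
The paper does not prove this statement; it is quoted as a known result from \cite{CormodeF14} and used as a black box. Your proposal is a correct and essentially standard reconstruction of the $s$-sparse linear recovery sketch: $\Theta(\log n)$ levels of pairwise-independent hashing into $\Theta(s)$ buckets, a $1$-sparse gadget (sum, index-weighted sum, polynomial fingerprint) per bucket, and a separate global fingerprint with independent randomness $z'$ to certify the reconstruction. The analysis you give---isolation via pairwise independence in the sparse case, Schwartz--Zippel for false-positive buckets, and the key observation that the certification randomness $z'$ is independent of everything that determines $\hat v$ so the dense case rejects w.h.p.---is sound, and the work/depth accounting is accurate. Your emphasis on why the independent $z'$ is what makes the $\|v\|_2^2>s$ branch certifiable (rather than trying to reuse the in-sketch fingerprints) is exactly the point that is easy to get wrong.
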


\begin{remark}\label{rem:sketchingimplementation}
    The algorithm in \cite{CormodeF14} can be naturally implemented in the streaming model, in the sense that $v$ is given in a stream, where each time a non-zero entry of $v$ is given, and the algorithm needs to compute the sketching using $\tO{s}$ space in total after all entries of $v$ is revealed. An easy way to see this fact is from the linearity of sketching: we can view $v$ as a summation of vectors where each vector only contains one non-zero entry, and once a non-zero entry is revealed, we calculate the corresponding sketching for this vector and add it to the final sketching. This observation also implies the parallel complexity, where we can compute the sketching for each entry separately, and in parallel add them together.
\end{remark}

The following corollary is from~\cref{thm:sketching} combined with sampling.
\begin{corollary} \label{cor:approxsketching}
For any number $n$, there
is an algorithm that preprocesses in $\tO{n}$ time and then, given
any vector $v\in\{-1,0,1\}^n$, return a sketch $\ska(v)\in\bbR^{\tO{1}}$
in $\tO{\|v\|^2_2}$ time such that
whp. (as long as
the number of recovery operations is $\poly(n)$)
we can recover an approximate size of $\|v\|^2_2$ (a value between $0.9\|v\|^2_2$ and $1.1\|v\|^2_2$) in $\tO{1}$ work.
Moreover, the sketch is \emph{linear}, i.e.~$\ska(u+v)=\ska(v)+\ska(u)$
for any $u,v\in\bbR^{n}$.
\end{corollary}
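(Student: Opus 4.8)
The statement to prove is \cref{cor:approxsketching}: from \cref{thm:sketching} plus sampling, produce a small linear sketch that recovers a $1.1$-approximation to $\|v\|_2^2$ for $v\in\{-1,0,1\}^n$.

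The plan is to reduce estimating $\|v\|_2^2$ — which equals the number of nonzero coordinates of $v$ since entries lie in $\{-1,0,1\}$ — to $O(\log n)$ exact-recovery instances of \cref{thm:sketching} run on subsampled coordinates at geometric sampling rates, a standard "sampling + sketch" trick. Concretely, for each level $j=0,1,\dots,\lceil\log_2 n\rceil$, fix (using public randomness) a hash function $h_j:[n]\to\{0,1\}$ that includes each coordinate independently with probability $2^{-j}$, and let $v^{(j)}$ be the restriction of $v$ to the sampled coordinates (i.e.\ $v^{(j)}_i = v_i \cdot h_j(i)$). Maintain the sketch $\sk{s}(v^{(j)})$ from \cref{thm:sketching} with $s=\Theta(\log n)$ (large enough for the Chernoff/recovery failure probabilities below). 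Linearity of $\sk{s}$ — and the fact that the restriction map is linear, with the random choice of $h_j$ fixed in advance — guarantees that $\ska(v):=(\sk{s}(v^{(j)}))_j$ is itself linear in $v$, which gives the "moreover" clause. The preprocessing is $\tO{1}$ per level times $O(\log n)$ levels $=\tO{n}$ total (actually $\tO{\log n}$, but $\tO n$ suffices as stated; the $\tO n$ bound also covers storing the $h_j$), and updating on a nonzero entry costs $\tO{1}$ per level, so $\tO{\|v\|_2^2}$ total work; depth stays $\tO1$ since all levels are handled in parallel.

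For recovery: let $k=\|v\|_2^2$ be the number of nonzeros. At level $j$, the number of surviving nonzeros is a sum of independent Bernoullis with mean $k/2^j$. Pick the smallest level $j^\star$ such that $\sk{s}(v^{(j)})\ne\bot$, i.e.\ the recovery succeeded and returned at most $s$ nonzeros; equivalently, walk up from $j=0$ and stop at the first level whose sketch does not output $\bot$. At that level we learn the exact count $k_{j^\star}$ of surviving nonzeros, and we output $\widetilde{k} := k_{j^\star}\cdot 2^{j^\star}$ as the estimate of $k=\|v\|_2^2$. To see this is a $1.1$-approximation w.h.p.: by a Chernoff bound, for the unique level $j_0$ with $k/2^{j_0}\in[s/4, s/2)$ (if $k< s$ then $j_0=0$ and we recover $v$ exactly at level $0$, so $\widetilde k=k$ exactly — handle this as a base case), the survivor count concentrates within a $(1\pm 0.05)$ factor of its mean, and in particular is $\le s$, so level $j_0$ does not output $\bot$; hence $j^\star\le j_0$. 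Conversely, for any level $j<j_0$ the mean $k/2^j > s/2$, and again by Chernoff the survivor count exceeds $s$ w.h.p., so $\sk{s}(v^{(j)})=\bot$ there; hence $j^\star \ge j_0$, giving $j^\star=j_0$. Then $k_{j^\star}$ is within $(1\pm 0.05)k/2^{j_0}$, so $\widetilde k = k_{j^\star}2^{j_0}\in[0.95k,1.05k]\subseteq[0.9k,1.1k]$. A union bound over the $O(\log n)$ levels and $\poly(n)$ recovery operations keeps the total failure probability polynomially small, which is where we fix the constant in $s=\Theta(\log n)$.

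The main obstacle — really the only subtlety — is the two-sided concentration argument pinning down $j^\star$: we need both that the "right" level succeeds (upper-tail control so the survivor count is $\le s$) and that all lower levels fail (lower-tail control so the survivor count is $>s$, triggering $\bot$), simultaneously w.h.p.\ across all levels and all $\poly(n)$ queries. This forces $s$ to be a large enough multiple of $\log n$; one should also be a little careful at the boundary $k\approx s$ (handled by the base case $k<s$, where level $0$ recovers $v$ exactly) and note that the constants $0.9,1.1$ leave ample slack for the Chernoff factor $(1\pm 0.05)$. Everything else — linearity, the $\tO n$ preprocessing, the $\tO{\|v\|_2^2}$ update work, the $\tO1$ depth — is immediate from \cref{thm:sketching} applied per level and run in parallel across levels.
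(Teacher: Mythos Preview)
Your approach is essentially the paper's: geometric subsampling of coordinates combined with the exact-recovery sketch of \cref{thm:sketching}, reading off the estimate at the appropriate level. The paper uses $s=\log^2 n$ and a two-sided window (the first level where the recovered count lands in $(100\log n,\log^2 n]$), while you use $s=\Theta(\log n)$ and the one-sided rule ``first non-$\bot$''; both variants work.

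One small slip in your concentration step: you claim that for every $j<j_0$ the mean $k/2^j>s/2$ forces the survivor count to exceed $s$ w.h.p., but mean $>s/2$ only guarantees count $\gtrsim (1-0.05)\cdot s/2$, not count $>s$. In particular, level $j_0-1$ has mean in $[s/2,s)$ and will typically \emph{not} return $\bot$, so $j^\star$ need not equal your $j_0$. This does not break the argument, because all you actually need is that $\mu_{j^\star}\ge s/4$: if $j^\star\ge 1$ then level $j^\star-1$ returned $\bot$, so by the upper-tail bound $\mu_{j^\star-1}\ge s/2$ (else the count there would be $\le s$ w.h.p.), hence $\mu_{j^\star}\ge s/4$, and then the $(1\pm 0.05)$ concentration at level $j^\star$ gives $\tilde k\in[0.9k,1.1k]$ as you stated. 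The paper's two-sided rule sidesteps this wrinkle by directly enforcing that the observed count is at least $100\log n$.
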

\begin{proof}
    For any $i\in[\lfloor\log n\rfloor]$, we sample $2^i$ coordinates denoted by $I_i$. Given any vector $v\in\bbR^n$, we write $v_{I_i}$ as the vector restricted to $I_i$. For every $i$ we use~\cref{thm:sketching} with $n=|I_i|,s=\log^2n$ to compute $\sk{s}(v_{I_i})$. The $\sk{s}(v_{I_i})$ for all $i$ comprises our $\ska(v)$.

    Given $\ska(v)$, we look at the first $i$ such that when recovering from $\sk{s}(v_{I_i})$ the algorithm returns a value larger than $100\log n$ and not $\bot$. Such an $i$ must exist since we set $s=\log^2n$: when $i$ is large enough so that the expected values of $I_i$ hitting a non-zero entry of $v$ is roughly $500\log n$, according to the Chernoff bound, it will return a value larger than $100\log n$. Moreover, this value provides a good approximation of $\|v\|^2_2$ according to Chernoff bound.
\end{proof}
When using~\cref{cor:approxsketching,thm:sketching}, we sometimes will write $\sk{s}(S)$ where $S\subseteq V$ is a vertex set instead of a vector in $\{0,1\}^n$. In this case, we consider $S$ as a vector in $\{0,1\}^n$ by treating each coordinate of the vector as a vertex in $V$, and each coordinate is $1$ if and only if the corresponding vertex is in $S$.

\section{Parallel and Distributed Common-Neighborhood Clustering}\label{sec:snc}

We first give some definitions of terminologies used in the algorithm. The algorithm will compute $\ska(N_G(u))$ for any $u\in V$ in the beginning. During the algorithm, we say two nodes $u,v\in V$ are \nc{d} if the approximate size recovered from $\ska(N_G(u))-\ska(N_G(v))$ (see~\cref{cor:approxsketching}) is at most $d$.
We say $Y$ is adjacent to $X$ if $Y\cap X=\emptyset$ and there is an edge connecting a vertex in $Y$ and a vertex in $X$.

\begin{algorithm}[!ht]
\caption{$\cC\leftarrow$\cnc{}$(G,\ell)$}\label{alg:snc}
 \KwData{An undirected graph $G=(V,E)$, an integer $\ell$.}
 \KwResult{A set of clusters $\cC\subseteq 2^V$ satisfying~\cref{lem:snc}}
 $\cC\leftarrow \emptyset$\;
 Compute sketchings $\ska(N_G(u))$ for any $u\in V$\;
 \ForEach{$i=0,1,...,\log^{0.1} n,j=0,1,...,\log n$}
 {
    Let $d=2^{i\sqrt{\log n}}\ell,s\leftarrow 2^j$\;\label{letd}
    $\cP\leftarrow\{\{u\}\mid u\in V\}$\;\label{cpleftarrow}
    Mark each vertex $u\in V$ as \emph{center} independently at random with probability $1/s$\;\label{makreachu}
    \ForEach{$t_1=0,1,...,\log^{0.3}n$\label{foreacht1}}
    {
        $\cP'\leftarrow \cP$\;\label{cp'left}
        Assign each center $u\in V$ a vertex set $S_u=C\in\cP'$ where $u\in C$, delete $C$ from $\cP'$\;\label{assigneachu}
        \ForEach{$t_2=0,1,...,2^{\log^{0.9}n}$}
        {
            \ForEach{Center $u\in V$ and every $C\in\cP'$ that is adjacent to $S_u$}
            {
                    Pick an arbitrary $v\in C$, if $u,v$ are \nc{$9^{t_1}d$}, let $S_u\leftarrow S_u\cup C,\cP'\leftarrow \cP'-C$\;\label{pickan}
            }
        }
        Add $S_u$ to $\cC$ for any center $u\in V$\;\label{addsu}

        Mark each vertex set in $\cP$ as \emph{center} with probability $1/2^{\log^{0.8}n}$\;\label{markeachvertex}
        Assign each center $C\in\cP$ a vertex set $S_C=C$ and delete $C$ from $\cP$\;\label{assigneachC}
        For each center $C\in\cP$, pick an arbitrary node $c\in C$ denoted by $u_C$\;\label{foreachcenterC}
        \ForEach{$t_2=0,1,...,2^{\log^{0.9}n}$}
        {
            \ForEach{Center $C\in V$ and every $X\in\cP$ that is adjacent to $S_C$}
            {
                    Pick an arbitrary $v\in X$, if $u_C,v$ are \nc{$9^{t_1}d$}, $S_C\leftarrow S_C\cup X,\cP\leftarrow\cP-X$\;\label{pickar}
            }
        }
        Set $\cP\leftarrow\{S_C\mid \text{C is a center}\}$\;\label{setcp}
        }
 }
    \Return{$\cC$}
\end{algorithm}
\begin{lemma}[common neighborhood clustering]\label{lem:snc}
    There exists a randomized algorithm (\cref{alg:snc}) that given an undirected graph $G$ and an integer $\ell$, return $\cC$ satisfying the following properties
    \begin{enumerate}
        \item (sparse) for every $v\in V$, $v$ is included in at most $\tO{1}$ clusters in $\cC$,
        \item (common neighborhood) any $C\in\cC$ has \nd{} $\ell\cdot 2^{\log^{0.7}n}$ with high probability,
        \item (cover) for every $L\subseteq V$ such that $G[L]$ is connected and $L$ has \nd{} $\ell$, we have $\Pr[\exists C\in\cC,L\subseteq C]\ge \frac{1}{n^{o(1)}}$.
    \end{enumerate}
    The algorithm can be implemented in $\tO{m}$ work and $\hO{1}$ depth in \pram{} model, or $\hO{\sqrt{n}+D}$ rounds in \congest{} model.
\end{lemma}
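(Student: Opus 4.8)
The plan is to analyze Algorithm~\ref{alg:snc} by first understanding a single inner phase (one fixed value of $(i,j)$, i.e.\ one fixed radius-parameter $d$ and one fixed sampling rate $s$), and then arguing that some phase succeeds. For the \textbf{(sparse)} property I would argue that within one execution of the inner $t_1$-loop, the sets $S_u$ built around distinct centers are vertex-disjoint: a cluster $C\in\cP'$ is absorbed into exactly one $S_u$ and then removed from $\cP'$, and the BFS-style growth in the $t_2$-loop only ever pulls in sets still present in $\cP'$. So each vertex is in at most one cluster added to $\cC$ per $t_1$-iteration, hence at most $\log^{0.3}n$ per $(i,j)$ phase, hence $\tO{1}$ clusters total after the $O(\log^{1.1}n)$ outer iterations. (I need to be a little careful that the ``partition-of-super-nodes'' bookkeeping in lines \ref{markeachvertex}--\ref{setcp} maintains the invariant that $\cP$ is always a partition of $V$ into connected pieces; this is where I'd spend some care.)

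For the \textbf{(common neighborhood)} property I would track how the neighborhood-difference diameter of a super-node grows. Each time we merge two super-nodes via an edge witnessing that two representatives are $\nc{9^{t_1}d}$ (lines \ref{pickan}, \ref{pickar}), the diameter in ``neighborhood-difference metric'' grows additively by $O(9^{t_1}d)$ per merge step, and a triangle-inequality-type argument (using that $\apd$/$\ska$ is a $1.1$-approximation, so genuine differences are within a constant of the recovered values) bounds the total accumulated difference across a super-node. The $t_2$-loop runs $2^{\log^{0.9}n}$ times, so a single $S_u$ is the union of at most that many pieces, each of diameter blown up by the current threshold; over the $\log^{0.3}n$ values of $t_1$ and the outer scaling $d=2^{i\sqrt{\log n}}\ell$, I'd show the worst-case accumulated neighborhood difference is at most $\ell\cdot 2^{\log^{0.7}n}$. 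This is essentially a geometric-series / parameter-balancing computation — the exponents $0.1,0.3,0.7,0.8,0.9$ are chosen precisely so that the product of all blow-up factors stays below $2^{\log^{0.7}n}$.

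For the \textbf{(cover)} property — the heart of the argument — I would fix $L$ with $G[L]$ connected and neighborhood difference $\le \ell$, and show that for the ``right'' choice of $(i,j)$ the algorithm produces a cluster containing all of $L$ with probability $n^{-o(1)}$. The key is the existence of the nested pair $A\subseteq B$ from the overview: iterating $r\mapsto 99r$ starting from the ball of radius $\sim\ell$ around a fixed $v\in L$, after $O(\sqrt{\log n})$ steps the ball grows by at most $n^{o(1)}$, giving $A=B'[r\ell](v)$, $B=B'[99r\ell](v)$ with $|B|\le n^{o(1)}|A|$, $L\subseteq A$, $B$ of neighborhood difference $\hO{\ell}$, and every vertex outside $B$ far (in neighborhood difference) from $A$. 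The index $i$ is chosen so that $d=2^{i\sqrt{\log n}}\ell$ matches $r\ell$ up to the relevant factor, and $j$ (hence $s=2^j$) is chosen so $1/s\approx 1/|B|$; then with probability $\ge 1/\poly$ exactly one center lands in $A$ and no center lands in $B\setminus A$. The graph-shrinking argument (sampling rate $1/n^{o(1)}$, contracting BFS super-nodes, repeating $\le \log^{0.9}n / (\text{something})$ times until only $n^{o(1)}$ super-nodes touch the relevant ball) is what makes the $t_2$- and inner-$t_1$-loops finish in bounded depth while still capturing all of $A$; I must check the shrinking preserves ``$A$ lies inside one grown cluster'' at every level and that the extra vertices dragged in keep the cluster inside $B'[n^{o(1)}\ell](v)$, which is subsumed by the (common neighborhood) bound already proven. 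Composing the success probabilities over the $O(\log^{0.3}n)$ levels gives the stated $n^{-o(1)}$.

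Finally, for the \textbf{complexity} claims: the sketches $\ska(N_G(u))$ cost $\tO{m}$ work / $\tO{1}$ depth by \Cref{cor:approxsketching}; every other step is a bounded number of parallel BFS explorations on vertex-disjoint pieces, so the per-phase work is $\tO{m}$ and depth is $n^{o(1)}$ (the $t_2$-loop has $2^{\log^{0.9}n}=n^{o(1)}$ iterations, each of $\tO{1}$ depth), and there are $n^{o(1)}$ phases. For \congest{} I would note each primitive (disjoint BFS to constant-then-$n^{o(1)}$ radius, contraction, broadcasting a center's sketch) is a minor-aggregation operation, which costs $\hO{\sqrt n + D}$ rounds per round of the minor-aggregation model, times the $n^{o(1)}$ rounds the algorithm uses there. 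I expect the \textbf{main obstacle} to be the cover proof: precisely coupling the idealized $A,B$ existence argument with the actual randomized graph-shrinking dynamics in the nested loops, and verifying that the chosen exponents make all the blow-up and probability factors come out to $n^{o(1)}$ simultaneously.
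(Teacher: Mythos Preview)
Your proposal follows the paper's approach, but two of your stated arguments would not go through as written.

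For \textbf{(sparse)}: the invariant you plan to verify---that $\cP$ is always a partition of $V$---is false. After the shrinking step (lines~\ref{markeachvertex}--\ref{setcp}), only pieces absorbed into some center's $S_C$ survive; everything else is dropped from $\cP$. The paper only proves (Claim~\ref{cla:cpdisjoint}) that $\cP$ contains \emph{disjoint} sets, and this weaker invariant is all that is needed for sparsity.

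More importantly, for \textbf{(common neighborhood)}: your sketch suggests the neighborhood-difference diameter of $S_u$ accumulates over the $2^{\log^{0.9}n}$ iterations of the $t_2$-loop (``grows additively by $O(9^{t_1}d)$ per merge step\ldots\ the $t_2$-loop runs $2^{\log^{0.9}n}$ times, so a single $S_u$ is the union of at most that many pieces''). That would put a $2^{\log^{0.9}n}$ factor into the diameter, which is fatal. The structural fact you are missing is that merges are \emph{star-shaped}: in line~\ref{pickan} (and likewise line~\ref{pickar}) every piece $C$ added to $S_u$ has a representative $v$ that is $\nc{9^{t_1}d}$ to the \emph{fixed} center $u$, not merely to some already-absorbed piece. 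Hence any two points of $S_u$ are within $2\cdot(9^{t_1}d/3 + 1.1\cdot 9^{t_1}d)$ of each other via $u$, independent of how many $t_2$-rounds ran. The paper makes this an induction on $t_1$ (Lemma~\ref{lem:cpcommonneighborhood}): each $t_1$-step multiplies the per-piece diameter by $9$, so after $t_1\le\log^{0.3}n$ steps and with $d\le 2^{\log^{0.6}n}\ell$ one lands at $2^{\log^{0.7}n}\ell$.

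For \textbf{(cover)}, your plan is correct in outline but glosses over the non-interference step: even with exactly one initial center $u$ in $A$ and none in $B\setminus A$, you must argue that at every $t_1$-level no \emph{other} center $v$ can absorb a piece of $\cP'$ intersecting $A$. The paper's Lemma~\ref{lem:cPcondition} handles this by showing (again via the star structure) that any such $v$ would itself lie in $B$, contradicting the sampling event. Your ``composing success probabilities over $O(\log^{0.3}n)$ levels'' is also slightly off: the $1/n^{o(1)}$ probability comes entirely from the single initial center-sampling event; the per-level statements (Lemmas~\ref{lem:cPcondition}, \ref{lem:changeofcP}, Claim~\ref{cPbecomesempty}) all hold w.h.p.\ conditioned on it.
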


\begin{proof}[Proof of~\cref{lem:snc} correctness]
    \textbf{(Sparse)} we need to following claim.
    \begin{claim}\label{cla:cpdisjoint}
        Throughout the algorithm, $\cP$ only contains disjoint vertex sets.
    \end{claim}
    \begin{proof}
        $\cP$ is generated in~\cref{setcp}, which contains $S_C$ for each center $C$, where $S_C$ is updated in~\cref{pickar}. Notice that if $X$ joins $S_C$, then it is deleted from the original $\cP$ and cannot join other $S_C$, thus, as long as the original $\cP$ contains vertex disjoint sets, the new $\cP$ also contains vertex disjoint sets. Initially $\cP$ contains all singleton vertices (see~\cref{cpleftarrow}), so the claim holds.
    \end{proof}

    $\cC$ is only updated in~\cref{addsu}, where $S_u$ is generated in~\cref{pickan}. If $C$ is included in $S_u$ then it is deleted from $\cP'$. So $S_u$ are disjoint vertex sets for centers $u\in V$ as long as $\cP'$ contains vertex disjoint sets, which is true from~\cref{cla:cpdisjoint}. Therefore, every execution of~\cref{addsu} contributes at most $1$ cluster for each node $v\in V$. There are $\tO{1}$ loops of~\cref{addsu}, which proves the (sparse) property.

    \textbf{(Common neighborhood)} We need the following lemma which bounds the \nd{} of every set in $\cP$.
    \begin{lemma}\label{lem:cpcommonneighborhood}
        At line~\cref{cp'left}, it always holds that for any $C\in\cP$, $C$ has \nd{} $9^{t_1}d/3$ w.h.p.
    \end{lemma}
    \begin{proof}
        We will prove it by induction. When $t_2=0$, $\cP=\{\{u\}\mid u\in V\}$ and the lemma trivially holds. $\cP$ will be changed at~\cref{setcp} to the next iteration, thus, we only need to prove that for any center $C$, $S_C$ has \nd{} $9^{t_1+1}d/3$, given that the \nd{} of $\cP$ is bounded by $9^{t_1}d/3$.

        Initially in~\cref{assigneachC} we have the \nd{} of $S_C$ being at most $9^{t_1}d/3$. Notice that in~\cref{pickar}, we only add vertices in $X$ to $S_C$ if there is $v\in X$ and $u_C\in C$ that are \nc{$9^{t_1}d$}. In other words, for every $u,v\in S_C$, their exists $u_1$ (in the same $X$ as $u$), and $v_1$ (in the same $X$ as $v$) such that both $u,u_1$ and $v_1,v$ has neighborhood symmetric difference at most $9^{t_1}d/3$ by induction hypothesis, and $u_1,u_C$ and $v_1,u_C$ has symmetric difference at most $1.1\cdot 9^{t_1}d$ w.h.p. according to the definition of \nc{$9^{t_1}d$}. Thus, we have
        \[|N(u)\triangle N(v)|\le 2\cdot (9^{t_1}d/3)+2\cdot 1.1\cdot 9^{t_1}d\le 9^{t_1+1}d/3\]

        This completes the induction step.
    \end{proof}

    Notice that $\cC$ is only updated in~\cref{addsu} where $S_u$ is updated in~\cref{pickan}, which initially contains a vertex set in $\cP$ and later will only includes vertex sets $C$ in $\cP'$ (also in $\cP$) such that there exists $v\in C$ which is \nc{$9^{t_1}d$} to $u$. Therefore, for any $u,v\in S_u$, there exists $u_1$ (in the same $C$ as $u$), $v_1$ (in the same $C$ as $v$) such that both $u,u_1$, $v_1,v$ has neighborhood symmetric difference at most $9^{t_1}d/3$ by~\cref{lem:cpcommonneighborhood}, and $u_1,u$ and $v_1,u$ has neighborhood symmetric difference at most $1.1\cdot 9^{t_1}d$ w.h.p. according to the definition of \nc{$9^{t_1}d$}. Thus, we have
    \[|N(u)\triangle N(v)|\le 2\cdot (9^{t_1}d/3)+2\cdot 1.1\cdot 9^{t_1}d\le 2^{O(\log^{0.1}n)}\cdot d=2^{\log^{0.7}n}\ell\]
    This completes the proof of property (common neighborhood).

    \textbf{(cover)} let $L\subseteq V$ be a set such that $G[L]$ is connected and has \nd{} $\ell$. We will prove that there exists $C\in\cC$ such that $L\subseteq C$ with probability at least $1/n^{o(1)}$. For an integer $d$, define $B'_{d}(L)$ as the largest $L'$ such that (i) for any $u\in L'$, there exists a $v\in L$ such that $|N_G(u)\triangle N_G(v)|\le d$, (ii) $G[L']$ is connected. In other words, $B'[d](L)$ can be found by doing BFS from $L$ while only including vertices that has neighborhood symmetric difference at most $d$. Let $i$ be the smallest natural number such that $2^{\log^{0.9}n}\cdot |B'[2^{i\sqrt{\log n}}\ell/4](L)|\ge |B'[2^{(i+1)\sqrt{\log n}}\ell/4](L)|$. Notice that if the inequality does not hold, then $|B'[2^{(i+1)\sqrt{\log n}}\ell/4](L)|$ is $2^{\log^{0.9}n}$ multiplicative larger than $|B'[2^{i\sqrt{\log n}}\ell/4](L)|$. Since the total number of nodes is $n$, $i$ cannot be larger than $\log^{0.1}n$. Thus, in~\cref{letd}, there exists a loop such that
    \begin{enumerate}
        \item $i$ satisfies $2^{\log^{0.9}n}\cdot |B'[2^{i\sqrt{\log n}}\ell/4](L)|\ge |B'[2^{(i+1)\sqrt{\log n}}\ell/4](L)|$, notice that $B'[2^{i\sqrt{\log n}}\ell/4](L)$ has \nd{} at most $d$,
        \item $s$ is roughly equal to the size of $B'[2^{(i+1)\sqrt{\log n}}\ell/4](L)$ (with multiplicative error 2), this is because we set $s=2^j$ where $j$ is ranging from $0$ to $\log n$.
    \end{enumerate}

    We will prove that in this loop, we will add a $C$ to $\cC$ such that $L\subseteq C$ with probability at least $1/n^{o(1)}$. Let us write $A=B'[2^{i\sqrt{\log n}}\ell/4](L)$ and $B=B'[2^{(i+1)\sqrt{\log n}}\ell/4](L)$. It is clear that $|B|\le s\le 2|B|$ and $2^{\log^{0.9}n}\cdot |A|\ge |B|$.

    According to the two properties, at~\cref{makreachu}, the following two events happen simultaneously with probability at least $1/n^{o(1)}$ (i) there is exactly one center inside $B'[2^{i\sqrt{\log n}}\ell/4](L)$, (ii) every node in $B'[2^{(i+1)\sqrt{\log n}}\ell/4](L)-B'[2^{i\sqrt{\log n}}\ell/4](L)$ is not a center. In what follows we will assume the two events.

    We first show that if $\cP$ do not split $A$ too much and covers $A$, then we can add a $C$ with $L\subseteq C$ at~\cref{addsu}.
    \begin{lemma}\label{lem:cPcondition}
        If $|\{C\in\cP\mid C\cap A\not=\emptyset\}|\le 2^{\log^{0.9}n}$ and $A\subseteq \cup_{C\in\cP}C$, let $u$ be the only center inside $A$, then $A\subseteq S_u$ at~\cref{addsu}.
    \end{lemma}
    \begin{proof}
        Let $C\in\cP$ be a set with $C\cap A\not=\emptyset$. We first show that $C$ can never be included in $S_v$ for $v\not=u$. If on the contrary there is a center $v\not=u$ such that $C$ joins $S_v$ at~\cref{pickan}, we will prove a contradiction. We first prove that for any $w\in S_v$, $w$ and $v$ are $9^{t_1}d/3+9^{t_1}d$ close. This is because $w\in C'\in\cP$ for some $C'$ which has \nd{} $9^{t_1}d/3$ according to~\cref{lem:cpcommonneighborhood}, and there exists $w'\in C'$ which is \nc{$9^{t_1}d$} to $v$ according to~\cref{pickan}. Since $C\cap A\not=\emptyset$, let $a\in C\cap A$ be an arbitrary vertex. Now we have every vertex in $S_v$ is \nc{$2(9^{t_1}d/3+9^{t_1}d)$} to $a$. Moreover, $G[S_v]$ is connected because at~\cref{pickan} $S_v$ only includes connected subgraphs that is adjacent to it. Remember the definition $A=B'[2^{i\sqrt{\log n}}\ell/4](L)$, which implies that any node in $S_v$ is \nc{$\left(2(9^{t_1}d/3+9^{t_1}d)+2^{i\sqrt{\log n}}\ell/4\right)$} to a vertex in $L$. Notice that $t_1\le \log^{0.3}n$ and $d=2^{i\sqrt{\log n}}\ell/4$, so $\left(2(9^{t_1}d/3+9^{t_1}d)+2^{i\sqrt{\log n}}\ell/4\right)\le 2^{(i+1)\sqrt{\log n}}\ell/4$, which contradicts the fact that $v\not\in B=B'[2^{(i+1)\sqrt{\log n}}\ell/4](L)$.

        Now we show that $C$ can be added to $S_u$ at~\cref{pickan} as long as $C$ is adjacent to the current $S_u$. Let $v\in C$ be an arbitrary vertex, let $a\in C\cap A$, according to~\cref{lem:cpcommonneighborhood}, $a$ and $v$ are \nc{$9^{t_1}d/3$}. According to the definition of $A$, $a$ and $u$ are \nc{$\left(2\cdot 2^{i\sqrt{\log n}}\ell/4+\ell\right)$}. Thus $v$ and $u$ are \nc{$9^{t_1}d$}.

        Since we have $A\subseteq \cup_{C\in\cP}C$ and $G[A]$ is connected, every time~\cref{pickan} is executed, if $A\not\subseteq S_u$ yet, since $u\in S_u$, $S_u$ must be adjacent to at least one $C\in\cP$ with $C\cap A\not=\emptyset$, and it is added to $S_u$. After $2^{\log^{0.9}n}$ rounds, we have $A\subseteq S_u$.
    \end{proof}

    It remains to prove that at some point we will have $|\{C\in\cP\mid C\cap A\not=\emptyset\}|\le 2^{\log^{0.9}n}$ and $A\subseteq \cup_{C\in\cP}C$. Let us inspect how $\cP$ changes.

    Initially $\cP=\{\{u\}\mid u\in V\}$ and it holds that $A\subseteq \cup_{C\in\cP}C$. If $|A|\le 2^{\log^{0.9}n}$ then we are done. Otherwise, we can use the following lemma.

    \begin{lemma}\label{lem:changeofcP}
        If $|\{C\in\cP\mid C\cap A\not=\emptyset\}|> 2^{\log^{0.9}n}$ and $A\subseteq \cup_{C\in\cP}C$, then at~\cref{setcp}, we have $A\subseteq\cup_{C\in\cP}C$ w.h.p.
    \end{lemma}
    \begin{proof}
        For any $C\in\cP$ with $C\cap A\not=\emptyset$, we will prove that there exists a center $C'\in\cP$ such that $C$ is added to $S_{C'}$ at~\cref{pickar}. Combined with the fact that $A\subseteq \cup_{C\in\cP}C$ initially, the lemma is proved.

        Let $(C_0,C_1,...,C_{2^{\log^{0.9}n}})$ be a sequence of different vertex sets in $\cP$ such that $C_0=C$ and for any $i>0$, $C_i$ is adjacent to $C_{i-1}$ and $C_i\cap A\not=\emptyset$. Since $|\{C\in\cP\mid C\cap A\not=\emptyset\}|> 2^{\log^{0.9}n}$, such a sequence must exists.
        At~\cref{markeachvertex}, we mark each element in $\{C\in\cP\mid C\cap A\not=\emptyset\}$ as center independently with probability $1/2^{\log^{0.8}n}$, thus, w.h.p. there exists $i$ such that $C_i$ is marked as center. Let $I$ be the smallest index such that $C_I$ is inside some $S_{C'}$ for some center $C'$. We will prove that $I$ decreases by at least $1$ at each iteration of~\cref{pickar} in the next paragraph. If this is true, $I$ will be decreased to $0$ in the end and the lemma is proven.

        Since $C_{I-1}$ is adjacent to $C_I$, suppose $C_I\subseteq S_{C'}$ for some center $C'$, we only need to prove that when we pick an arbitrary $v\in C_{I-1}$, $u_{C'}$ (which is an arbitrary vertex in $C'$) and $v$ are \nc{$9^{t_1}d$}. Notice that both $C_{I-1}$ and $C'$ intersect $A$ by at least one vertex, let the vertices be $a,b$ separately. $v,a$ are \nc{$9^{t_1}d/3$}, $b,u_{C'}$ are \nc{$9^{t_1}d/3$} according to~\cref{lem:cpcommonneighborhood}; $a,b$ are \nc{d} according to the definition of $A$. Thus, as long as $t_1>0$, $u_{C'}$ and $v$ are \nc{$9^{t_1}d$}. For the case of $t_1=0$, every vertex set of $\cP$ only contain $1$ vertex which means $v,a$ and $b,u_{C'}$ are identical, so $v,u_{C'}$ are \nc{$d$}. Thus, $C_{I-1}$ will be added to some $S_{C'}$, which means $I$ is decreased by at least $1$.
    \end{proof}

    We also need the following lemma which makes sure $|\{C\in\cP\mid C\cap A\not=\emptyset\}|> 2^{\log^{0.9}n}$ will not always stay large.
    \begin{claim}\label{cPbecomesempty}
        At the last loop of~\cref{foreacht1}, $\cP$ becomes empty w.h.p.
    \end{claim}
    \begin{proof}
        In~\cref{markeachvertex}, each element becomes a center with probability $1/2^{\log^{0.8}n}$, which is the only way that element can contribute to one element to $\cP$ in the next loop according to~\cref{setcp}. Thus, we can think of it as each element ``surviving'' at each loop with probability $1/2^{\log^{0.8}n}$, and after $\log^{0.3}n$ loops it vanishes w.h.p.
    \end{proof}

    Now start with a $\cP$ with $A\subseteq \cup_{C\in\cP}C$, either we already have $|\{C\in\cP\mid C\cap A\not=\emptyset\}|\le 2^{\log^{0.9}n}$ which according to~\cref{lem:cPcondition} we are done; or we have $|\{C\in\cP\mid C\cap A\not=\emptyset\}|> 2^{\log^{0.9}n}$, which according to~\cref{lem:changeofcP}, we still have $A\subseteq \cup_{C\in\cP}C$ in the next iteration and we can repeat this argument. According to~\cref{cPbecomesempty}, there must exists a point when this argument ends, which gives us $|\{C\in\cP\mid C\cap A\not=\emptyset\}|\le 2^{\log^{0.9}n}$ and $A\subseteq \cup_{C\in\cP}C$.
\end{proof}

Now we talk about the parallel and distributed implementation of~\cref{alg:snc}. Most of the steps can be trivially implemented.

\begin{lemma}\label{lem:sncrunningtime}
    \cref{alg:snc} can be implemented in \pram{} in $\tO{m}$ work and $\hO{1}$ depth, and in \congest{} in $\hO{\sqrt{n}+D}$ rounds.
\end{lemma}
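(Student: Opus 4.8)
The claim is that \cref{alg:snc} runs in $\tO{m}$ work and $\hO{1}$ depth in \pram{}, and in $\hO{\sqrt{n}+D}$ rounds in \congest{}. I would go line by line through the algorithm, showing each primitive operation is cheap, and then multiply by the (subpolynomial) number of nested loop iterations. The loop structure is the key bookkeeping: the outer \texttt{foreach} over $(i,j)$ contributes $O(\log^{1.1}n)$ iterations; the $t_1$-loop contributes $O(\log^{0.3}n)$; and each of the two $t_2$-loops contributes $2^{\log^{0.9}n}=n^{o(1)}$ iterations. Multiplying, the total iteration count is $n^{o(1)}$, so it suffices to show each iteration costs $\tO{m}$ work and $\hO{1}$ depth in \pram{} (resp. $\hO{\sqrt n + D}$ rounds in \congest{}), after which the overall bound follows since $n^{o(1)}\cdot\hO{1}=\hO{1}$ and $n^{o(1)}\cdot\hO{\sqrt n+D}=\hO{\sqrt n+D}$.

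**Per-step cost in \pram{}.** First, computing all sketches $\ska(N_G(u))$ costs $\tO{m}$ work and $\tO{1}$ depth by \cref{cor:approxsketching} (preprocessing $\tO{n}$, and each vector $N_G(u)$ has $\|\cdot\|_2^2=\deg(u)$, summing to $\tO{m}$); checking whether two nodes are \nc{d} is $\tO{1}$ work via linearity of the sketch. The partition $\cP$ is maintained as a labeling of vertices by cluster-id, which is $O(n)$-size; marking centers (\cref{makreachu,markeachvertex}) is $\tO{1}$ depth and $\tO{n}$ work. The real work is in the two BFS-growth loops (\cref{pickan} and \cref{pickar}): in each $t_2$-round every cluster $S_u$ (resp. $S_C$) tries to absorb adjacent clusters. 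I would observe that the clusters $\{S_u\}$ are vertex-disjoint (\cref{cla:cpdisjoint} and the argument just after it), so the total number of (cluster, adjacent-cluster) pairs examined across all clusters in a single $t_2$-round is bounded by the number of edges of $G$ with endpoints in different current clusters, hence $O(m)$; each such pair costs $\tO{1}$ (pick a representative, one \nc{} test). So each $t_2$-round is $\tO{m}$ work; identifying adjacencies and performing the merges (pointer-jumping / standard parallel connectivity-style contraction) can be done in $\tO{1}$ depth. Since the $t_2$-loop has $2^{\log^{0.9}n}$ iterations, one $(i,j,t_1)$-phase is $\hO{m}$ work and $\hO{1}$ depth, and multiplying by the $\hO{1}$ outer iterations gives $\hO{m}$ work and $\hO{1}$ depth.

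**Per-step cost in \congest{}.** Here the graph $G$ is the communication network, so sketches have size $\tO{1}$ words and can be computed and broadcast/aggregated using standard primitives. The subtlety is that BFS-style cluster growth along the graph over $2^{\log^{0.9}n}$ rounds could naively cost $2^{\log^{0.9}n}$ rounds of communication, which is already $n^{o(1)}$ and therefore fine; the only concern is congestion, i.e., multiple clusters contending for the same edge. Because the current clusters are vertex-disjoint, each edge of $G$ is "boundary" for at most two clusters, so in each $t_2$-round each edge carries $\tO{1}$ messages — hence each $t_2$-round is $\tO{1}$ rounds of \congest{}. The terms $\sqrt n$ and $D$ enter because some global coordination (e.g., testing whether $\cP$ has become empty, agreeing on loop indices, broadcasting the public randomness per \cref{rem:publicrandomness}, or routing the $\tO{1}$-per-node aggregation) is done via a BFS tree / standard scheduling, which costs $\tO{\sqrt n + D}$ (this is the usual cost of global aggregation and of scheduling $\tO{n}$ small messages in \congest{}). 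I would cite the minor-aggregation framework of \cite{RozhonGHZL22} here: each round of \cnc{} is expressible as $\tO{1}$ minor-aggregation operations on $G$ (the disjoint clusters form a minor), and each such operation costs $\hO{\sqrt n + D}$ rounds. Multiplying by the $n^{o(1)}$ iteration count gives $\hO{\sqrt n + D}$.

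**Main obstacle.** The delicate point is the congestion / work accounting during the cluster-growth loops — in particular justifying that although clusters are growing over many $t_2$-rounds and may have large boundaries, the *disjointness* of the current $\cP$ (and of the $\{S_u\}$) bounds the per-edge load by $\tO{1}$, so the total cost per round is $\tO{m}$ work (\pram{}) or $\tO{1}$ congestion (\congest{}) rather than something that blows up with cluster size. The second-most delicate point is phrasing the BFS growth and the "absorb an adjacent cluster" step so that it genuinely fits into $\hO{1}$ \pram{}-depth and into the minor-aggregation model with $\hO{\sqrt n+D}$-round primitives; I would handle this by contracting each cluster to a supernode and running one round of "try to merge with a neighbor and propagate labels" per $t_2$-iteration, which is a single minor-aggregation step, and noting the $2^{\log^{0.9}n}$ iteration bound is exactly what guarantees termination of the growth (as used in the correctness proof, \cref{lem:cPcondition,lem:changeofcP}).
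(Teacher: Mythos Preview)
Your overall plan matches the paper's, but two points in the accounting need fixing.

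\textbf{\pram{} work.} You bound each $t_2$-iteration by $\tO{m}$ work and multiply by the $2^{\log^{0.9}n}$ iterations, concluding $\hO{m}$ total work---but the lemma claims $\tO{m}$. The tighter bound comes from viewing the entire $t_2$-loop within one $(i,j,t_1)$ phase as a single multi-source BFS on the contracted graph (each element of $\cP'$ a supernode). In a work-efficient BFS each (super)edge is touched a constant number of times across \emph{all} levels: when an edge first enters the frontier it is either traversed (its cluster absorbed) or its closeness test fails and the edge can be marked dead and never revisited. Hence the work summed over all $2^{\log^{0.9}n}$ levels is $\tO{m}$, and the $\tO{1}$ outer iterations preserve the $\tO{m}$ total. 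Your disjointness observation is the right ingredient; it just needs to be applied across the whole BFS rather than per level. (Downstream uses of this lemma only need $\hO{m}$, so your bound would suffice there, but it falls short of the lemma as stated.)

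\textbf{\congest{} rounds.} The inference ``each edge carries $\tO{1}$ messages, hence each $t_2$-round is $\tO{1}$ \congest{} rounds'' conflates congestion with dilation. In each growth step the center's sketch must reach all boundary nodes of $S_u$, and a freshly absorbed cluster must learn its new cluster ID and the center's sketch---both are broadcasts inside a connected set whose diameter in $G$ may be $\Theta(\sqrt n)$ or more. The paper handles precisely this via the standard small/large trick: a set of size at most $\sqrt n$ broadcasts internally in at most $\sqrt n$ rounds, while each of the at most $\sqrt n$ larger sets routes through a global BFS tree in $O(D)$ rounds with $O(\sqrt n)$ congestion. So each $t_2$-round costs $O(\sqrt n + D)$, not $\tO{1}$, and the $\sqrt n + D$ is intrinsic to the growth step rather than to auxiliary global coordination. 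Your minor-aggregation fallback is correct and is exactly the alternative viewpoint the paper records in its remark following the proof, but the preceding direct argument is wrong as stated and should be dropped.
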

\begin{proof}[Proof of \pram{}]
    Computing sketchings uses $\tO{m}$ work and $\tO{1}$ depths according to~\cref{thm:sketching}. The loop for $i,j,t_1$ only contributes $\tO{1}$ to work and depth, so let us focus on one loop fixing $i,j,t_1$. Generating $\cP$ and marking centers can be trivially implemented. The non-trivial part is the loop for $t_2$, we implement it by contracting each vertex set in $\cP$, and running a BFS like procedure with root on every center, where the BFS only includes a vertex set $C$ if it satisfies the condition in~\cref{pickan}. BFS trees have depth bound $2^{\log^{0.9}n}$, each check for whether or not to include a vertex set into the BFS tree can be done in $\tO{1}$ work according to~\cref{thm:sketching}; when different BFS trees meet together, the shared node joins a BFS tree arbitrarily, this does not affect the analysis, and makes sure that the total work is $\tO{m}$ and the total depth is $\hO{1}$. After finding $S_u$ for each center $u\in V$, we again mark centers in $\cP$, and try to implement~\cref{pickar}. This step is also done by contracting each vertex set in $\cP$ and start BFS from every center which uses~\cref{pickar} to check whether or not to include a vertex set into a BFS tree. The total work is $\tO{m}$ and the total depth is $\tO{1}$.
\end{proof}

\begin{proof}[Proof of \congest{}]
    For any $u\in V$, we use shared randomness (see~\cref{rem:publicrandomness}) to locally compute $\ska(N_G(u))$ based on the information $N_G(u)$ which is known by $u$ initially. The loop for $i,j,t_1$ only contributes $\tO{1}$ to rounds, so let us focus on one loop fixing $i,j,t_1$.

    $\cP$ is maintained as follows: each vertex set in $\cP$ gets a unique ID which is known by every node in that vertex set. Initially, $\cP=\{\{u\}\mid u\in V\}$, and the ID is the ID for the specific node. Each vertex becomes a center independently, which can be done without communication. The non-trivial part is implementing~\cref{pickan}, which requires a center node $u$ to first find out $S_u$ (which is connected in $G$), and look at all $C\in\cP$ that are adjacent to $S_u$, then verifies whether or not to add them to $S_u$. We achieve this by well-known techniques for \congest{}: if a vertex set contains at most $\sqrt{n}$ vertices, we use broadcasting inside this vertex set to spread and gather information in $\sqrt{n}$ rounds; otherwise we use the whole network to do this in $D$ rounds. As long as all the vertex sets are disjoint, the total dilation is $O(D+\sqrt{n})$ and congest is $O(\sqrt{n})$ since there can be at most $\sqrt{n}$ sets with more than $\sqrt{n}$ nodes. We maintain the size of $S_u$ to determine which strategy to use. Now for each center $u$, it broadcast to all nodes in $S_u$ about $\ska(N_G(u))$, and for every $C\in\cP'$, it chooses a representative node $v_C$ and let all nodes in $C$ know $\ska(N_G(v_C))$. If there is an edge $(v_1,v_2)$ where $v_1\in S_u$ and $v_2\in C\in\cP$ (certifying that $C$ is adjacent to $S_u$), then $v_1$ and $v_2$ can communicate and decide if $C$ should be added to $S_u$ or not. After that, for each $C\in\cP'$, it chooses an arbitrary $S_u$ to join if there are many (this can be decided by broadcasting inside $C$), i.e., they mark their ID to be identical to the ID for $S_u$. This is repeated $2^{\log^{0.9}n}$ times, which in total contributes to $\hO{\sqrt{n}+D}$ rounds. Another non-trivial part is~\cref{pickar}, which is handled identically as~\cref{pickan}.

\end{proof}

\begin{remark}
    The implementation in \congest{} can be viewed as $\hO{1}$ rounds in \emph{minor-aggregation model} (see~\cite{RozhonGHZL22}). For any network, one round of minor-aggregation can be simulated by $O(\sqrt{n}+D)$ rounds in \congest{}, which implies our~\cref{lem:sncrunningtime}. For a more efficient network like a planner graph, it can be simulated faster ($\tO{D}$ rounds). It is a convenient framework for recent works of universal optimality (defined for weighted graphs). However, we are working on unweighted graphs and it is not clear if maximum bipartite matching exists a universally optimal algorithm or not, we do not explore the definition here.
\end{remark}

\section{A Schematic Reduction for Vertex Connectivity}\label{sec:framework}

In this section, we show a general framework for solving \SSUDVC{}, which can be implemented in different models, and prove its correctness. It essentially uses common-neighborhood clustering to reduce the problem to solving \emph{isolating cuts} and solving min-neighbor in a \emph{near-clique common-neighborhood} graph, which we define as follows. For convenience, we write the second problem \mnnccn{}.

\begin{definition}[Isolating Cuts Problem]\label{def:isolatingcutsproblem}
    In the isolating cuts problem, we are given an undirected graph $G=(V,E)$, vertex sets $C, T \subseteq V$ where $T \subseteq C$ is an independent set inside $C$, and the goal is to solve the following minimization problem.
    \begin{align*}
    \min_{L \subseteq C} \quad & |N_G(L)|
    \quad \text{s.t.} \quad  |L \cap T| = 1 \quad \mbox{and }   N_G(L) \cap T = \emptyset.
    \end{align*}
   The algorithm returns a minimizer $L^*$ along with its neighbors' size $|N_G(L^*)|$.
\end{definition}

\begin{definition}[Minimum Neighbor in a Near-Clique  Cluster (\mnnccn{})]\label{def:minneibornearclique}
    In this problem, we are given an undirected graph $G=(V,E)$, a vertex set $C \subseteq V$ where $N_G[C] \neq V$, and an integer $\ell$. The goal is to find a non-empty vertex set $L\subseteq C$ minimizing $|N_G(L)|$ and also return $|N_G(L)|$.
    The inputs have the following guarantees:
    \begin{enumerate} [nosep]
        \item (correct estimate). There exists $L\subseteq C$ which is a minimizer of $\min_{L' \subseteq C : L' \neq \emptyset} |N_G(L')|$ such that \[|L| \leq \ell \leq 2|L|.\]
        \item (near clique). For any $u\in C$,

            \begin{align} \label{eq:near clique prop}
                |C-N_G(u)|\le 2^{\log^{0.8}n}\cdot \ell,
            \end{align}
        \item (common neighborhood). the cluster $C$ has \nd{} $2^{\log^{0.7}n}\cdot \ell$. That is, for all $u,v \in C$,
        \begin{align}  \label{eq:cn diff prop}
        |N_G(u) \triangle N_G(v)| \leq 2^{\log^{0.7}n}\cdot \ell
        \end{align}
    \end{enumerate}
    If the input guarantees are unsatisfied, the algorithm returns an arbitrary $L\subseteq C$ and $|N_G(L)|$.
\end{definition}

\subsection{The Reduction} Let us denote the algorithm solving isolating cuts by \isocut$(G,C,T)$ and the algorithm solving \mnnccn{} by \mnnccnalg{}$(G,C,\ell)$.

\paragraph{The Schematic Algorithm.}  The framework for solving \SSUDVC{} is presented in \cref{alg:framework}.

\begin{algorithm}[!ht]
\caption{$S\leftarrow$\SSUDVC{}$(G,t)$}\label{alg:framework}
 \KwData{An undirected graph $G=(V,E)$, a vertex $t$.}
 \KwResult{A minimum $t$-sink vertex cut $(L,S,R)$.}
 \ForEach{$i=0,1,...,\log n$}
 {
    $\ell\leftarrow 2^i$\;
    $\cC\leftarrow$\cnc{}$(G,2\ell)$\;
    Let $C\leftarrow C-N_G[t]$ for every $C\in\cC$\;\label{letCleft}
    \ForEach{$C\in \cC$}
    {
        Let $T$ include each node in $C$ independently at random with probability $\frac{1}{\ell\cdot 2^{\log^{0.8}n}}$\;\label{letTinclude}
        Let $G'$ be a subgraph of $G$ only containing edges adjacent to $C$\;
        $L_{C,1},s_{C,1}\leftarrow$\isocut{}$(G',C,T)$\;
        $L_{C,2},s_{C,2}\leftarrow$\mnnccnalg{}$(G',C,\ell)$\;
    }
    Return $N_G(L_{C,i})$ where $s_{C,i}$ is minimized among all $C\in\cC$ and $i\in\{1,2\}$\;
 }
\end{algorithm}

The following lemma shows the correctness of \cref{alg:framework} based on the correctness of \isocut{} and \mnnccnalg{}.

\begin{lemma}[Correctness of the framework]\label{lem:framework pram congest}
    Given that \isocut{} and \mnnccnalg{} correctly output according to~\cref{def:isolatingcutsproblem} and~\cref{def:minneibornearclique},~\cref{alg:framework} returns a valid vertex cut, which is a minimum $t$-sink vertex cut with probability at least $1/n^{o(1)}$.
\end{lemma}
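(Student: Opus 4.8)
The plan is to split the claim into a deterministic part — \emph{every cut the algorithm can possibly output is a valid $t$-sink vertex cut, and has size at least that of a minimum $t$-sink cut} — and a probabilistic part — \emph{in one specific iteration of \cref{alg:framework}, with probability $1/n^{o(1)}$, a candidate cut of exactly the minimum size is produced}. Together these force the minimum-size candidate, which is what \cref{alg:framework} returns, to be a minimum $t$-sink cut with probability $1/n^{o(1)}$. Fix a minimum $t$-sink cut $(L^*,S^*,R^*)$; without loss of generality $S^*=N_G(L^*)$ and $G[L^*]$ is connected (replacing $L^*$ by a connected component of $G[L^*]$ only shrinks $N_G(L^*)$), and since $t\in R^*$ has no edge to $L^*$ we have $L^*\subseteq V\setminus N_G[t]$ and $|S^*|=|N_G(L^*)|$. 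By \cref{lem:unbalancedclose}, $L^*$ has \nd{} at most $2|L^*|$. For the deterministic part: after \cref{letCleft} every cluster satisfies $C\subseteq V\setminus N_G[t]$, and by \cref{def:isolatingcutsproblem,def:minneibornearclique} each call to \isocut{} or \mnnccnalg{} returns a non-empty $L_{C,j}\subseteq C$; hence $(L_{C,j},\,N_G(L_{C,j}),\,V\setminus N_G[L_{C,j}])$ is a partition of $V$ with no edge between its first and third parts and with $t$ in the third part (as $L_{C,j}\subseteq V\setminus N_G[t]$), i.e.\ a valid $t$-sink cut of size $s_{C,j}=|N_G(L_{C,j})|\ge|S^*|$. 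So the output is always a valid $t$-sink cut of size $\ge|S^*|$. (If $V\setminus N_G[t]=\emptyset$, the answer is the trivial cut of size $n-1$ and there is nothing to prove, so assume otherwise.)

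For the probabilistic part, consider the iteration with $\ell=2^i$ chosen so that $|L^*|\le\ell<2|L^*|$ (such a power of two exists). Since $L^*$ is connected with \nd{} $\le 2|L^*|\le 2\ell$, the \emph{(cover)} guarantee of \cref{lem:snc} for \cnc{}$(G,2\ell)$ produces, with probability $\ge 1/n^{o(1)}$, a cluster containing $L^*$, while the \emph{(common neighborhood)} guarantee gives, w.h.p., that every cluster has \nd{} at most $2\ell\cdot 2^{\log^{0.7}n}=\hO{|L^*|}$. Condition on both events — the joint probability is still $\ge 1/n^{o(1)}$ since $1/\poly(n)\ll 1/n^{o(1)}$ — and let $C$ be such a cluster \emph{after} the restriction in \cref{letCleft} (still $L^*\subseteq C$ because $L^*\subseteq V\setminus N_G[t]$). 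Then $C$ is a common-neighborhood cluster of \nd{} $\hO{|L^*|}$ containing the connected set $L^*$, and $L^*$ is a minimizer of $\min_{\emptyset\neq L'\subseteq C}|N_G(L')|$ with value $|S^*|$ (every $L'\subseteq C$ induces a $t$-sink cut, so $|N_G(L')|\ge|S^*|$, attained at $L^*$); in particular the ``correct estimate'' hypothesis of \cref{def:minneibornearclique} holds, as $|L^*|\le\ell\le 2|L^*|$. Applying the structural dichotomy of \cref{lem:twocases} to $C$, either \emph{Case 1:} $|N_G(L^*)\cap C|=\hO{|L^*|}$, or \emph{Case 2:} $C$ is almost a clique.

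In \emph{Case 2}, the \nd{} bound on $C$ together with the near-clique conclusion of \cref{lem:twocases} are precisely the remaining two input guarantees of \cref{def:minneibornearclique} (after choosing the polylog-in-the-exponent thresholds with a little slack), so \mnnccnalg{}$(G',C,\ell)$ returns $L_{C,2}$ with $|N_G(L_{C,2})|=|S^*|$ deterministically. In \emph{Case 1}, let $p=1/(\ell\cdot 2^{\log^{0.8}n})$ be the sampling rate in \cref{letTinclude}; with enough slack $|N_G(L^*)\cap C|\le 1/p$, so the events ``$|T\cap L^*|=1$'' and ``$T\cap N_G(L^*)=\emptyset$'' (the latter meaningful because $T\subseteq C$) concern disjoint vertex sets and hence are independent, and since $\ell/2<|L^*|\le\ell$ we get $\Pr[|T\cap L^*|=1]\ge |L^*|p(1-p)^{|L^*|-1}\ge 1/n^{o(1)}$ and $\Pr[T\cap N_G(L^*)=\emptyset]=(1-p)^{|N_G(L^*)\cap C|}\ge (1-p)^{1/p}=\Omega(1)$; thus both hold with probability $\ge 1/n^{o(1)}$. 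On that event write $\{v^*\}=T\cap L^*$: since $N(v^*)\subseteq N_G[L^*]$ and $T\cap N_G[L^*]=\{v^*\}$, the vertex $v^*$ is adjacent to no other vertex of $T$, hence lies in \emph{every} maximal independent subset $T'$ of $T$; feeding such a $T'$ to \isocut{} meets the precondition of \cref{def:isolatingcutsproblem}, and $L^*$ is a feasible solution for that instance ($L^*\cap T'=\{v^*\}$ and $N_G(L^*)\cap T'=\emptyset$), so \isocut{} returns $L_{C,1}$ with $|N_G(L_{C,1})|\le |N_G(L^*)|=|S^*|$. In either case a candidate of size $\le|S^*|$ is produced in this iteration; combined with the deterministic lower bound, \cref{alg:framework} outputs a valid $t$-sink cut of size exactly $|S^*|$ — a minimum one — with probability $\ge 1/n^{o(1)}$.

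I expect the main obstacle to be \cref{lem:twocases} itself (the structural heart of the section), and, given it, the bookkeeping of the various $2^{\log^c n}$ factors so that the single sampling rate $p$ of \cref{letTinclude} is simultaneously small enough for $T$ to miss $N_G(L^*)$ and large enough for $T$ to hit $L^*$ exactly once; the observation that a not-a-priori-independent sampled $T$ can still be handed to \isocut{} — because the unique vertex $v^*$ of $T$ inside $L^*$ survives maximal-independent-subset pruning — is minor but must not be skipped.
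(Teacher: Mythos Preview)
Your proof is correct and follows essentially the same approach as the paper: split into a deterministic validity argument and a probabilistic optimality argument, invoke the cover and common-neighborhood guarantees of \cref{lem:snc} to land $L^*$ inside a good cluster, then apply the dichotomy of \cref{lem:twocases} to route to either \isocut{} or \mnnccnalg{}. The paper proves \cref{lem:twocases} inline (as Claim~5.4) rather than citing it, and handles the independent-set issue by deleting from $T$ every vertex with a $T$-neighbor; your argument that $v^*$ survives \emph{any} maximal independent subset of $T$ (because $N(v^*)\subseteq N_G[L^*]$ and $T\cap N_G[L^*]=\{v^*\}$) is a slightly cleaner justification of the same step.
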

\begin{proof}
    Let $(L,S,R)$ be one of the minimum $t$-sink vertex cut such that $G[L]$ is connected (if it is not, take one connected component of $G[L]$ as $L$ which will not increase the cut size). There must exists a loop such that $|L|\le \ell\le 2|L|$. Notice that $N_G(L_{C,i})$ is always a valid $t$-sink vertex cut for any $C$ and $i$ since $L_{C,i}\subseteq V-N_G[t]$, so we only need to prove that in that loop, either $N_G(L_{C,1})$ or $N_G(L_{C,2})$ is a minimum $t$-sink vertex cut for some $C\in\cC$.

    According to~\cref{lem:unbalancedclose} and~\cref{lem:snc}, with probability at least $1/n^{o(1)}$ there exists $C\in\cC$ such that $L\subseteq \cC$ in~\cref{letCleft}. Let us focus on such $C$.

    We will prove that there are only two possible cases for $C$. Let $d=\ell\cdot 2^{\log^{0.7}n}$.
    \begin{claim}\label{lem:twocases}
        If $C$ has \nd{} $d$ and $L\subseteq C$, then either $|N(L)\cap C|\le 2d$, or $|C-N[L]|\le 3d$.
    \end{claim}
    \begin{proof}
        Let us assume $|N(L)\cap C|>2d$ and we will prove $|C-N[L]|\le 3d$.

        For an arbitrary $x\in L$, we have $N(x)\cap (C-N[L])=\emptyset$. Since every node $v\in C$ satisfies $|N(x)\triangle N(v)|\le d$, we have $|N(v)\cap (C-N[L])|\le d$. Thus, the total number of edges from $C\cap N(L)$ to $C-N[L]$ is at most $|C\cap N(L)|\cdot d$.

        Moreover, for an arbitrary $x\in L$, we have $|N(x)|\ge |N(L)|$ ($|N(L)|$ is minimum) and $N(x)\subseteq L\cup N(L)$, which implies $|(L\cup N(L))-N(x)|\le |L|\le \ell$. This gives $|N(L)-N(x)|\le \ell$. Thus, every node $v\in C$ satisfies $|N(L)-N(v)|\le \ell+d$, which implies $|N(v)\cap N(L)\cap C|\ge |N(L)\cap C|-(\ell+d)\ge |N(L)\cap C|/3$. The last inequality is due to $|N(L)\cap C|>2d$ and $d=\ell\cdot 2^{\log^{0.7}n}$. Therefore, the number of edges from $C-N[L]$ to $C\cap N(L)$ is at least $|C-N[L]|\cdot |N(L)\cap C|/3$.

        Combining the two inequalities from the last two paragraphs gives $|C-N[L]|\le 3d$.
    \end{proof}

    In the first case when $|N(L)\cap C|\le 2d$,~\cref{letTinclude} will make $|T\cap L|=1$ and $|T\cap S|=0$ with probability $1/n^{o(1)}$, to make sure $T$ is an independent set we simply delete vertices in $T$ that is adjacent to another vertex in $T$, which according to~\cref{def:isolatingcutsproblem} makes $N(L_{C,1})$ a minimum $t$-sink cut; in the second case when $|C-N[L]|\le 3d$, according to~\cref{def:minneibornearclique} $N(L_{C,2})$ is a minimum $t$-sink cut.
\end{proof}

\subsection{Implementation in \congest and \pram models} As applications of the reduction, we present algorithms that solve \isocut{} and \mnnccn{} in \pram{} and \congest{} models in \Cref{sec:isolating cut lemma parallel dist,sec:mnnccn}, which are summarized as in the following two lemmas.

\begin{lemma}\label{lem:refined isolating cut lemma}
Given a \isocut{} instance $(G,C,T)$ where the input graph has $m$ edges and $n$ vertices, there is an algorithm solving it correctly with high probability in
\begin{enumerate}
    \item \pram model, the algorithm can be implemented to run in work $\tO{W(m,|C|)}$ and depth $\tO{D(m,|C|)}$ where $W(m,n)$ and $D(m,n)$ are the work and depth of s-t vertex connectivity.
    \item  \congest model, the algorithm can be implemented to run in $\tilde O(R(m,n,D))$ rounds where $R(m,n,D)$ is the round complexity of subgraph S-T vertex connectivity and $D$ is the diameter of $G$; furthermore, the algorithm only communicates via the set of edges that is incident to $C$.
\end{enumerate}
\end{lemma}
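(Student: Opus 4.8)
The plan is to implement the isolating-cuts algorithm of \Cref{lem:refined isolating cut lemma new} (in turn a refinement of \Cref{lem:isolating cut old} from \cite{LiNPSY21}) in the two target models, taking care that the only ``extra'' work beyond the s-t connectivity oracle calls is low-depth (for \pram{}) or uses only $C$-incident edges (for \congest{}). Recall the structure of that algorithm: using $O(\log |T|)$ rounds of bit-fixing, we select for each bit position $b$ a subset $T_b \subseteq T$, compute an S-T min vertex cut $X_b$ separating $T_b$ from $T \setminus T_b$ in the graph restricted to edges incident to $C$, and remove $\bigcup_b X_b$ from $C$; each connected component of what remains contains at most one terminal. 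For every component $C'$ containing a terminal $t'$, we then solve $\min_{t' \in L \subseteq C'} |N_{G'}(L)|$ by adding a super-sink adjacent to $N_{G'}(C')$ and invoking the s-t oracle once. The sparsification step of \Cref{lem:refined isolating cut lemma new} — keeping only $\tO{|C'|}$ vertices of $N_{G'}(C')$ via a maximum bipartite matching between $C'$ and $N_{G'}(C')$ — is what makes the total vertex count $\tO{n}$ and the total edge count $O(m)$, since the edge sets incident to distinct components are disjoint.

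First, for the \pram{} bound: each of the $O(\log |T|)$ bit-fixing phases requires one S-T min-vertex-cut computation, which by the folklore reduction to bipartite matching (see \Cref{subsec:reductions}) costs $W(m,|C|)$ work and $D(m,|C|)$ depth; the vertex parameter is $|C|$ rather than $n$ because we restrict to the subgraph on edges incident to $C$, so only $N_G[C]$-many vertices participate (and we may assume $W,D$ are measured in the number of vertices actually present, padding with the smoothness assumption $W(Cm,Cn)=\poly(C)W(m,n)$). Identifying connected components of $C \setminus \bigcup_b X_b$, the matching-based sparsification, and the per-component augmentation are all standard graph primitives computable in $\tO{m}$ work and $\tO{1}$ depth (connectivity via pointer-jumping, bipartite matching via the same oracle). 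Because the edge sets across components are disjoint and $W$ is assumed superadditive in $m$ (as in \Cref{thm:prammain}), summing the per-component oracle costs gives $\tO{W(m,|C|)}$ work; depth stays $\tO{D(m,|C|)}$ as the component calls run in parallel. This yields item 1.

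Second, for \congest{}: the subtlety is that all communication must occur on edges incident to $C$, and the round complexity is charged to $R(m,n,D)$ for \emph{subgraph} S-T vertex connectivity. Here the $O(\log|T|)$ bit-fixing phases each invoke a subgraph S-T connectivity call on the subgraph spanned by $C$-incident edges, at cost $\tO{R(m,n,D)}$. The nontrivial pieces are (i) finding the connected components of $C$ after vertex deletions and (ii) running, for each component $C'$, a sparsification plus an s-t call — all simultaneously, and only over $C$-incident edges. Components of an induced subgraph of $C$ can be identified by standard \congest{} connectivity routines within the subgraph (dilation/congestion bounded since the relevant edges are $C$-incident); the bipartite-matching sparsification between $C'$ and $N_{G'}(C')$ is again a subgraph S-T connectivity / matching call over $C$-incident edges; and since distinct components use disjoint edge sets, all the per-component oracle calls can be pipelined with $\tO{1}$ congestion overhead, giving total $\tO{R(m,n,D)}$ rounds. (We also note $D$ here is the diameter of the whole network $G$, not of the subgraph, which is what the subgraph-problem convention in \Cref{sec:model-problems} supplies.)

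The main obstacle I anticipate is not any single calculation but the bookkeeping that keeps \emph{all} the overhead subsumed into the oracle cost: specifically, showing that the cluster-sparsification lemma (\Cref{sec:cluster sparsification lemma}) can itself be executed within the claimed bounds in both models — it is a bipartite matching computation, hence ``morally'' an oracle call, but one must verify it runs on the right vertex set ($C' \cup N(C')$, size $O(|C'| + |N(C')|)$, with the output reducing $N(C')$ to $\tO{|C'|}$) and on $C$-incident edges only in \congest{}. A secondary subtlety is the \pram{} superadditivity argument: the per-component instances have vertex counts $\tO{|C'|}$ summing to $\tO{n}$ but we want the bound phrased in terms of $|C|$; this is fine since each $C' \subseteq C$ so $|C'| \le |C|$, and superadditivity in $m$ together with disjoint edge sets closes the sum — but it should be stated carefully rather than waved through. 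The rest (bit-fixing correctness, the ``one terminal per component'' guarantee, and the augmentation gadget correctness) is inherited verbatim from \cite{LiNPSY21} and \Cref{lem:refined isolating cut lemma new}.
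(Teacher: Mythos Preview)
There is a genuine gap in your \pram{} argument. You claim that the bit-fixing S-T cut calls have vertex parameter $|C|$ ``because we restrict to the subgraph on edges incident to $C$, so only $N_G[C]$-many vertices participate''---but $|N_G[C]|$ is not $|C|$; it can be as large as $n$. Nothing in the \isocut{} instance bounds the size of $N_G(C)$, and the common-neighborhood property (which you never invoke) does not bound it either. Consequently your $O(\log|T|)$ bit-fixing calls cost $W(m,|N_G[C]|)$ work, not $W(m,|C|)$, and the per-component sparsification you do \emph{after} bit-fixing cannot retroactively help those earlier calls. Your own sentence acknowledges that the per-component sparsification only brings the total vertex count down to $\tO{n}$, which is exactly the bound of \Cref{lem:refined isolating cut lemma new}, not the stronger $\tO{|C|}$ needed here.

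The paper's fix is structural: it first applies the \emph{batched} cluster boundary sparsification lemma (\Cref{lem:vertexsparsification}) to the entire cluster $C$ with $X=T$, using the common-neighborhood property of $C$ (implicit from the framework) to replace $N_G(C)$ by a set $C'$ of size $\tO{|C|}$ that preserves every isolating cut value up to an additive constant $K$. Only then does it run the isolating-cuts algorithm of \Cref{lem:refined isolating cut lemma 2} on $G[C\cup C']$ (after attaching leaf terminals to $C'$ to force $L\subseteq C$). Now the whole graph---including the graph seen by the bit-fixing phase---has $\tO{|C|}$ vertices, so the $W(m,|C|)$ bound follows directly. In short, sparsify the boundary \emph{before} isolating cuts, not component-by-component afterward. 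Your \congest{} sketch is closer to workable since the target bound there is $R(m,n,D)$, but note the paper still routes through \Cref{lem:vertexsparsification} in the framework (see the proof of \Cref{lem:SSVCcongest}) to handle the diameter blow-up of the per-cluster subgraphs.
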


\begin{lemma}\label{lem:mnnccn}
    Given a \mnnccn{} instance $(G,C,\ell)$ (\Cref{def:minneibornearclique}), it can be solved with high probability in
    \begin{enumerate}
        \item \pram{} model in $\tO{W(m,n)}$ work and $\tO{D(m,n)}$ depth where $W(m,n)$ and $D(m,n)$ are the work and depth of s-t vertex connectivity, satisfying that $W(m,n)$ is superadditive on $m$ (meaning $W(m_1+m_2,n)\ge W(m_1,n)+W(m_2,n)$ for every $m_1,m_2,n$),
        \item \congest{} model in $\tilde O(R(m,n,O(1))$ rounds where $R(m,n,D)$ is the round complexity of subgraph S-T vertex connectivity; furthermore, the algorithm only communicates via the set of edges that is incident to $C$.
    \end{enumerate}
\end{lemma}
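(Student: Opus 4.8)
The plan is to reduce an \mnnccn{} instance to about $\tO{|C|/\ell}$ calls of \emph{$s$-source \mn{}} (find $\arg\min_{s\in L\subseteq C}|N_G(L)|$), each of which is a single S-T vertex mincut call, and then argue that all of these calls can be run together without blowing up the cost in either model.

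First, the reduction and its correctness. As observed in \cref{subsec:buildingtheframework}, $s$-source \mn{} amounts to one S-T vertex mincut call: take $S=\{s\}$ and let $T$ be the set of vertices outside $C$ adjacent to $C$ (no auxiliary sink vertex is needed, which is convenient in \congest{}). Since every vertex of $T$ is, in effect, ``on the outside'', the $s$-side of a minimum $(S,T)$-separator is contained in $C$, so the cut value equals $\min_{s\in L\subseteq C}|N_G(L)|$ and its $s$-side is a minimizer. I would sample a set $U\subseteq C$ of $\Theta((|C|/\ell)\log n)$ vertices uniformly at random; since any fixed minimizer $\LL$ has $|\LL|\ge\ell/2$ by property~1 of \cref{def:minneibornearclique}, we have $U\cap\LL\neq\emptyset$ w.h.p., and for such $s\in\LL$ the $s$-source value is exactly $\min_{\emptyset\neq L\subseteq C}|N_G(L)|$. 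Thus running $s$-source \mn{} for every $s\in U$ and returning the best pair $(L,|N_G(L)|)$ solves \mnnccn{} w.h.p.\ whenever the guarantees hold (and returns a valid answer otherwise). For each source $s$, I sparsify exactly as in \cref{subsec:buildingtheframework} --- delete all edges from $C\setminus\{s\}$ to $N_G(s)$; this leaves $N_G(L)$ unchanged for every $L\subseteq C$ containing $s$, and by the common-neighborhood bound \cref{eq:cn diff prop} it leaves every $v\in C\setminus\{s\}$ with at most $2^{\log^{0.7}n}\ell$ incident edges. I will also remove the vertices of $N_G(s)\setminus C$ (which always belong to the cut and hence only contribute an additive constant $|N_G(s)\setminus C|$), so that in the resulting graph $G'_s$ the source too has only its (clique) edges inside $C$.

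For the \pram{} bound, I would run the $\tO{|C|/\ell}$ S-T vertex connectivity calls in parallel, giving depth $\tO{D(m,n)}$. For the work, the total number of edges over all the $G'_s$ is, up to subpolynomial factors, $(|C|/\ell)\cdot(|C|\cdot\ell)=\tO{|C|^2}$: either $|C|/\ell$ is itself subpolynomial, so there are only $\tO{1}$ calls, each on the $\le m$-edge subgraph of edges incident to $C$; or the near-clique property \cref{eq:near clique prop} is non-vacuous, so $G[C]$ has $\Theta(|C|^2)$ edges and $|C|^2=O(m)$. Since $W(m,n)$ is superadditive and smooth in $m$, $\sum_i W(m_i,n)\le W(\sum_i m_i,n)\le\tO{W(m,n)}$, and the ``few vertices'' concern is moot because S-T vertex connectivity is solved through bipartite matching in \pram{}.

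The \congest{} part is where I expect the real difficulty, since $R(m,n,D)$ is not superadditive and the calls cannot simply be summed. Following the overview, I would first assume $C$ is an actual clique --- the near-clique case reduces to it by simulating a virtual clique on $C$ with $O(\log n)$ overhead via \cref{eq:near clique prop} (the construction of \cref{sec:virtual clique}) --- and then \emph{offload} the communication of all $\tO{|C|/\ell}$ instances onto the edges of $G[C]$. The structural point that makes this work is that in $G'_s$ the only $C$--$C$ edges are the ones incident to $s$ (being a clique forces the sparsification to delete every other internal edge), so internal traffic costs only $O(1)$ congestion across all instances. For the external edges, pick a public random map $f_s:N_G(C)\to C$, let $f_s(u)$ act as a proxy for $u$ (after learning $u$'s protocol state and $N_G(u)\cap C$), and reroute communication on an edge $(v,u)$ with $v\in C$, $u\in N_G(C)$ to the clique edge $(v,f_s(u))$. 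Since each $v\in C$ has only $n^{o(1)}\ell$ external edges in $G'_s$, over all $\tO{|C|/\ell}$ instances $v$ sends $\tO{|C|}\cdot n^{o(1)}$ messages to uniformly random vertices of $C$, i.e.\ onto its $|C|-1$ clique edges; a Chernoff bound (valid because $2^{\log^{0.7}n}\gg\log n$) gives per-edge congestion $n^{o(1)}$ w.h.p., and the same estimate bounds the load of the proxies' own communication. All instances then run concurrently with $n^{o(1)}$ congestion, $O(1)$ dilation (the clique has diameter $1$), and communication confined to edges incident to $C$, for a total of $\tO{R(m,n,O(1))}$ rounds. The technical heart --- and the step I expect to be the main obstacle --- is precisely this load-balancing analysis: verifying that the random proxy assignment, in combination with the sparsification, really does spread every instance's traffic evenly over $G[C]$, and that the near-clique-to-clique simulation preserves this.
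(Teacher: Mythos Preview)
Your proposal follows the paper's approach closely: sample $\tO{|C|/\ell}$ sources, sparsify each instance by deleting edges from $C\setminus\{s\}$ into $N_G(s)$ and stripping the common neighbours $Z_s=N_G(s)\setminus C$, bound the \pram{} work via superadditivity, and in \congest{} offload all instances onto the (virtual) clique via a random proxy map $f_s:N_G(C)\to C$. The load-balancing analysis you sketch is exactly the one in \cref{sec:mapping comm congest}.

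There is one genuine gap in the \congest{} argument. Your claim ``$O(1)$ dilation (the clique has diameter $1$)'' addresses how a single round on $G'_s$ is routed through the clique, but it does not bound the number of rounds the S-T algorithm itself needs, which is $R(m,n,\mathrm{diam}(G'_s))$. When $C$ is an \emph{exact} clique your $G'_s$ does have diameter $O(1)$ (every $v\in C$ keeps its edge to $s$), so the argument goes through. But in the near-clique case a vertex $v\in C\setminus N_G(s)$ has all its remaining edges in $N_G(v)\setminus N_G(s)$, none of which can be adjacent to $s$ in $G'_s$ (any neighbour in $N_G(s)$ was deleted, and any external neighbour outside $Z_s$ is by definition not in $N_G(s)$); so such $v$ can be arbitrarily far from $s$ or even isolated. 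The virtual-clique simulation you invoke only changes the \emph{communication} network, not the instance $G'_s$ on which the S-T algorithm runs, so it does not fix this. The paper handles the issue by adjoining a star vertex $s^*$ to the instance (forming $H^*_x$, diameter $2$) before mapping; adding $s^*$ does not change the $(x,Y_x)$-mincut, and its simulation is absorbed into the clique. With this one extra step your argument is complete and matches the paper.
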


To solve \isocut{} and \mnnccn{} problems, in addition to sparsification techniques from \cite{LiNPSY21}, we introduce novel vertex sparsification lemmas that leverages the structure of the common-neighborhood property. We discuss them in details in \Cref{sec:cluster sparsification}.

With these two lemmas, we are ready to prove our main theorem about solving \SSUDVC{}.

\begin{lemma}\label{lem:SSVCpram}
    If, in \pram{} model, the s-t vertex connectivity problem can be solved in $W(m,n)$ work and $D(m,n)$ depth where $W(m,n)$ is superadditive on $m$, then \SSUDVC{} can be solved in $W(m,n)\cdot n^{o(1)}$ work and $D(m,n)\cdot n^{o(1)}$ depth.
\end{lemma}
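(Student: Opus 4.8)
The plan is to simply chain together the framework (\cref{alg:framework}), its correctness (\cref{lem:framework pram congest}), the common-neighborhood clustering algorithm (\cref{lem:snc,lem:sncrunningtime}), and the two model-specific subroutine lemmas (\cref{lem:refined isolating cut lemma,lem:mnnccn}), then account for the $n^{o(1)}$ overheads. First I would recall that \SSUDVC{} is solved by \cref{alg:framework}, whose correctness (with success probability $1/n^{o(1)}$, boostable to w.h.p.\ by $n^{o(1)}$ independent repetitions taking the best cut found) is already established in \cref{lem:framework pram congest}, conditioned on \isocut{} and \mnnccnalg{} being correct --- which is exactly what \cref{lem:refined isolating cut lemma,lem:mnnccn} provide in the \pram{} model. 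So only the work/depth bound needs a proof.

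The cost accounting proceeds loop by loop over $i=0,\dots,\log n$; fix one value of $\ell=2^i$. Computing $\cC\leftarrow\cnc(G,2\ell)$ costs $\tO{m}$ work and $\hO{1}$ depth by \cref{lem:sncrunningtime}, which is absorbed into $W(m,n)\cdot n^{o(1)}$ since $W(m,n)\ge m$ (solving s-t vertex connectivity must at least read the graph). For each cluster $C\in\cC$ we form the subgraph $G'$ consisting only of edges incident to $C$; by the sparsity guarantee of \cref{lem:snc}, each vertex — hence each edge — lies in $\tO{1}$ clusters, so $\sum_{C\in\cC}|E(G')|=\tO{m}$ and likewise the clusters' total size is $\tO{n}$. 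The \isocut{} call on $G'$ costs $\tO{W(|E(G')|,|C|)}$ work and $\tO{D(|E(G')|,|C|)}$ depth (\cref{lem:refined isolating cut lemma}), and the \mnnccnalg{} call costs $\tO{W(|E(G')|,n)}$ work and $\tO{D(|E(G')|,n)}$ depth (\cref{lem:mnnccn}). Summing the \mnnccnalg{} work over all $C$ and using superadditivity of $W(\cdot,n)$ on $m$ gives $\sum_{C}\tO{W(|E(G')|,n)}\le \tO{W(\sum_C|E(G')|,n)}=\tO{W(m,n)}$ (using the smoothness convention $W(Cm,Cn)=\poly(C)W(m,n)$ to also absorb the $\tO{1}$ cluster-overlap factor); the \isocut{} work sums similarly, with $|C|\le n$. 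All depths are $\tO{D(m,n)}$ since the clusters are processed in parallel and each $|E(G')|\le m$, $|C|\le n$. Multiplying by the $O(\log n)$ outer loop iterations and the $n^{o(1)}$ repetitions for probability amplification yields $W(m,n)\cdot n^{o(1)}$ work and $D(m,n)\cdot n^{o(1)}$ depth, as claimed.

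The main obstacle I anticipate is the bookkeeping around superadditivity and the smoothness convention: superadditivity of $W$ on $m$ lets one merge the per-cluster edge budgets into a single $W(m,n)$ term, but the $\tO{1}$ (really $n^{o(1)}$, across all $\log n$ scales) multiplicative blow-up in total edge count from cluster overlaps must be folded in using $W(Cm,Cn)=\poly(C)\cdot W(m,n)$ rather than superadditivity, and one must be careful that the second argument of $W$ in the \isocut{} calls is $|C|$ (not $n$) — which is fine since $W$ is monotone and $|C|\le n$ — while in \mnnccnalg{} it genuinely is $n$, so the superadditivity argument there only combines the first argument. A secondary subtlety is ensuring the guessed size $\ell$ satisfies $|L|\le\ell\le 2|L|$ for the relevant loop and that the near-clique / common-neighborhood guarantees demanded by \cref{def:minneibornearclique} are met by the clusters output by \cref{lem:snc} with the parameter choice $\cnc(G,2\ell)$; both of these are handled inside \cref{lem:framework pram congest}, so I would just cite it. Everything else is routine.
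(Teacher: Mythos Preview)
Your proposal is correct and follows essentially the same approach as the paper's proof: cite \cref{lem:framework pram congest} for correctness, boost by $n^{o(1)}$ repetitions, then bound the work per cluster via \cref{lem:refined isolating cut lemma,lem:mnnccn} and combine across clusters using the $\tO{1}$ overlap from \cref{lem:snc} together with superadditivity of $W$ in $m$. You are in fact slightly more careful than the paper about the second argument of $W$ (distinguishing $|C|$ from $n$) and about invoking the smoothness convention to absorb the $\tO{1}$ edge-overlap factor, but these are refinements of the same argument, not a different route.
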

\begin{proof}
    The correctness of \cref{alg:framework} is proved by \cref{lem:framework pram congest}. Notice that to boost the correct probability to w.h.p., we repeat the algorithm $\hO{1}$ times and choose the smallest vertex cut as the output, which only increases the work and depth by at most a $\hO{1}$ factor. Now we calculate the total work and depth for one run of \cref{alg:framework}.

    There are $O(\log n)$ iterations of the outer loop of $i$. Common-neighborhood clustering uses $\hO{m}$ work and $\hO{1}$ depth according to \cref{lem:sncrunningtime}. For every $C\in\cC$, we solve the problems \isocut{} and \mnnccn{} simultaneously. Write $E_C$ as the edge set of edges in $G$ adjacent to $C$. Each one of them on $C$ cost work $\tO{W(|E_C|,n)}$ and depth $\tO{D(|E_C|,n)}$ according to \cref{lem:refined isolating cut lemma} and \cref{lem:mnnccn}. Notice that each edge can be contained in at most $\tO{1}$ different $C$ according to \cref{lem:snc}. Thus, the total work is $\sum_{C\in\cC}\tO{W(|E_C|),n)}\le \tO{W(\sum_{C\in\cC}|E_C|,n)}=\tO{W(m,n)}$ where the inequality is due to $W$ is superadditive on $m$, and the depth is $\tO{D(m,n)}$ since they are run simultaneously.
\end{proof}

\begin{lemma}\label{lem:SSVCcongest}
    If, in \congest{} model, the subgraph S-T vertex connectivity problem can be solved in $R(m,n,D)$ rounds, then \SSUDVC{} can be solved in $R(m,n,D)\cdot n^{o(1)}$ depth.
\end{lemma}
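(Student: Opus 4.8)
The plan is to run \cref{alg:framework} with \isocut{} and \mnnccnalg{} instantiated by their \congest{} implementations, exactly as in the proof of \cref{lem:SSVCpram}; the one genuinely new point is that in \congest{} round complexities do not add up over clusters, so instead of \emph{summing} the cost over $C\in\cC$ we must run the per-cluster subroutines \emph{simultaneously} and bound the resulting congestion. Correctness is inherited verbatim from \cref{lem:framework pram congest}: the algorithm always outputs a valid $t$-sink vertex cut, and for the loop index $i$ with $|L|\le 2^i\le 2|L|$ (where $(L,S,R)$ is a fixed minimum $t$-sink cut with $G[L]$ connected) it outputs a minimum one with probability $\ge 1/n^{o(1)}$. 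Running the whole procedure $n^{o(1)}$ times sequentially and keeping the smallest cut found boosts this to high probability at the price of an $n^{o(1)}$ factor; public randomness is available by \cref{rem:publicrandomness}.

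It remains to bound the rounds of a single run of \cref{alg:framework}. There are $O(\log n)$ iterations of the outer loop. In one iteration: (i) \cnc{}$(G,2\ell)$ costs $\hO{\sqrt n+D}$ rounds by \cref{lem:sncrunningtime}; since subgraph S-T vertex connectivity already decides plain $s$-$t$ connectivity (its value is $0$ exactly when $s,t$ are disconnected), it is known to require $\tilde\Omega(\sqrt n+D)$ rounds in \congest{}, so this term is absorbed into $R(m,n,D)\cdot n^{o(1)}$. (ii) For every cluster $C\in\cC$ we call \isocut{} on the subgraph $G'$ of edges incident to $C$ and \mnnccnalg{}$(G',C,\ell)$; by \cref{lem:refined isolating cut lemma}(2) and \cref{lem:mnnccn}(2) each call terminates in $\tilde O(R(m,n,D))$ rounds — for \mnnccn{} the stated bound is $\tilde O(R(m,n,O(1)))$, which is at most $\tilde O(R(m,n,D))$ by monotonicity of $R$ in the diameter parameter — and, crucially, communicates only along edges incident to $C$. (iii) A final aggregation to find the minimum $s_{C,i}$ over all $C$ and $i$ costs $O(D)$ rounds on a BFS tree of $G$.

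The heart of the argument is step (ii): all $|\cC|$ invocations have to be executed at once. By \cref{lem:snc}(1) every vertex lies in $\tO{1}$ clusters, hence an edge $e=\{u,w\}$ is incident to at most (number of clusters containing $u$) $+$ (number of clusters containing $w$) $=\tO{1}$ clusters, and therefore carries the communication of only $\tO{1}$ of the subroutines. Consequently one round of \emph{all} subroutines can be simulated by $\tO{1}$ actual \congest{} rounds (each edge transmits its $\tO{1}$ messages one per round), so running every subroutine in lockstep for their common (padded) length $\tilde O(R(m,n,D))$ takes $\tilde O(R(m,n,D))\cdot\tO{1}=\tilde O(R(m,n,D))$ rounds. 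Multiplying by the $O(\log n)$ outer iterations and the $n^{o(1)}$ boosting factor gives the claimed $R(m,n,D)\cdot n^{o(1)}$ round bound.

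I expect the main obstacle to be precisely this clean composition over clusters, which is why most of the technical weight lives in the lemmas invoked here rather than in this proof: one needs the \emph{locality} guarantee (``communicates only via edges incident to $C$'') of \cref{lem:refined isolating cut lemma} and \cref{lem:mnnccn}, which rests on the vertex-sparsification lemmas of \cref{sec:cluster sparsification} and, for \mnnccn{}, on the random load-balancing of the sparsified instance into the (near-)clique $G[C]$ so that it is the number of \emph{vertices}, not merely edges, that is controlled. A small point to keep straight is the direction of the diameter dependence: the \mnnccn{} subroutine communicates over a simulated clique and so its round bound involves $R(m,n,O(1))$, which one upgrades to $R(m,n,D)$ using monotonicity of $R$ — never the reverse.
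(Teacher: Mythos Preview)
Your argument handles \mnnccn{} correctly but has a genuine gap for \isocut{}. When you invoke \cref{lem:refined isolating cut lemma}(2) on input $(G',C,T)$ with $G'$ the subgraph of edges incident to $C$, the $D$ appearing in the round bound $\tilde O(R(m,n,D))$ is the diameter of the \emph{input graph to the lemma}, i.e., of $G'$ --- not of the original network $G$. The subgraph $G'$ can have diameter much larger than $D$, so the per-cluster \isocut{} calls need not terminate in $\tilde O(R(m,n,D))$ rounds, and your lockstep scheduling with $\tO{1}$ congestion does not yield the claimed bound. (The locality clause ``communicates only via edges incident to $C$'' controls congestion but does nothing for the diameter of the instance.) This is exactly why \mnnccn{} and \isocut{} behave differently here: the \mnnccn{} bound in \cref{lem:mnnccn}(2) is $\tilde O(R(m,n,O(1)))$ thanks to the virtual-clique simulation inside the near-clique $C$, so it is insensitive to the diameter of the surrounding subgraph; \isocut{} enjoys no such property.

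The paper's proof fixes this by \emph{not} running \isocut{} per cluster in parallel. Instead, for each $C$ it first applies \cref{lem:vertexsparsification} to replace $N_G(C)$ by a small set $C'$ with $|C'|=\hO{|C|}$ (preserving all isolating cuts up to a fixed offset $K$), then assembles all the pieces $G'[C\cup C']$ into a single virtual graph $G_{vir}$ with $\hO{n}$ vertices and calls \isocut{} \emph{once} on $G_{vir}$ with terminal set $\cup_C T_C$. Because each original edge is adjacent to $\tO{1}$ clusters, each edge of $G$ carries at most $\tO{1}$ virtual edges, so $G_{vir}$ can be simulated on the original network $G$ with $\tO{1}$ congestion; the subgraph S-T connectivity subroutine then runs over the communication network $G$ of diameter $D$, giving $R(\tO{m},\tO{n},\tO{D})=\tO{R(m,n,D)}$ rounds. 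Finally one checks that the minimum isolating cut in $G_{vir}$ cannot straddle two components, so the single call recovers the correct per-cluster answers.
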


\begin{proof}
    The correctness of \cref{alg:framework} is proved by \cref{lem:framework pram congest}. Notice that to boost the correct probability to w.h.p., we repeat the algorithm $\hO{1}$ times and choose the smallest vertex cut as the output (the size of each vertex cut is also output according to \cref{def:isolatingcutsproblem,def:minneibornearclique}), which only increases the work and depth by at most a $\hO{1}$ factor. Now we calculate the total rounds for one run of \cref{alg:framework}.

    There are $O(\log n)$ iterations of the outer loop of $i$. For each iteration, common-neighborhood clustering uses $\hO{\sqrt{n}+D}$ rounds according to \cref{lem:sncrunningtime}. For every $C\in\cC$, we solve the problems \mnnccn{} simultaneously. Notice that each edge can be adjacent to at most $\tO{1}$ different $C$ according to \cref{lem:snc}. Thus, each edge is involved in at most $\tO{1}$ different algorithms of \isocut{} and \mnnccn{}, which results in total congestion of $\tO{R(m,n,O(1))}$ according to \cref{lem:refined isolating cut lemma} and \cref{lem:mnnccn}. So the total number of rounds caused by \mnnccn{} is at most $\tO{R(m,n,O(1))}$.

    Now we explain how to implement \isocut{} for every cluster $C\in\cC$ by one call to \cref{lem:refined isolating cut lemma}. We first explain why we can not do similar things to \mnnccn{}: this is because the round complexity stated in \cref{lem:refined isolating cut lemma} depends on the diameter of the subgraph $G'$ which only includes edges incident to $C$, and the diameter could be much larger than the diameter of $G$. We solve this issue in the following way. For each $C\in\cC$, we first apply \cref{lem:vertexsparsification} with $G,C,T$ to get $C'\subseteq N_G(C)$ and $K$ so that it suffices to run \isocut{} on $G'[C\cup C']$ as they preserve all isolating cuts up to a fixed number $K$. We also have $|C'|=\hO{|C|}$ since $|T|=\hO{|C|/\ell}$ w.h.p. Then we construct a virtual graph with $\hO{n}$ vertices in the following way: the virtual graph contains $|\cC|$ many disconnected connected components, where each component corresponds to a cluster $C\in\cC$, which we call $G_C$, where $G_C$ contains the vertex set $C\cup C'$ and all edges adjacent to $C$ in $G$. Now we call \isocut{} on this virtual graph with the input vertex $C$ defined in \cref{def:isolatingcutsproblem} set to be $\cup_{C\in\cC}C$ and the input vertex set $T$ set to be the union of sample $T$ in the algorithm. To summarize, we call $\isocut{}(G_{vir},\cup_{C\in \cC}C,\cup_{C\in\cC}T_C)$ where $T_C$ is the sampled set for $C$.

    We first show how to use the original communication network to simulate the virtual graph: for each vertex $u\in V$, $u$ will simulate all the nodes in the virtual graph which is a duplication of $u$. Consider an edge adjacent to $u$ which is $(u,v)$, we should prove that $(u,v)$ is not duplicated too many times into the virtual graph so that the congestion of simulating the virtual graph is bounded. To see this, notice that an edge can be included in $G_C$ for a cluster $C\in\cC$ only if $(u,v)$ is adjacent to $C$. This can happened at most $\tO{1}$ times according to \cref{lem:snc}. Thus, we can simulate \isocut{} algorithm on the virtual graph using the original graph with a congestion increased by a factor of $\tO{1}$. The total round complexity is $R(\tO{m},\tO{n},\tO{D}=\tO{R(m,n,D)}$.

    Then, we show that $\isocut{}(G_{vir},\cup_{C\in \cC}C,\cup_{C\in\cC}T_C)$ on the virtual graph gives the answer to $\isocut{}(G',C,T_C)$ for all independent instances on $C$. Firstly, every minimum isolating cut $N_{G_{vir}}(L)$ with $|L\cap \cup_{C\in\cC}T_C|=1$ cannot cross different connected components in $G_{vir}$ because otherwise we can restrict $L$ into a single connected component, which cannot increase the value of $N_{G_{vir}}(L)$. Thus, the value returned by $\isocut{}(G_{vir},\cup_{C\in \cC}C,\cup_{C\in\cC}T_C)$ is at least the size of $\isocut{}(G',C,T_C)$ for some $C\in\cC$. Moreover, every isolating cut $N_{G'}(L)$ with $L\subseteq C$ and $|L\cap T_C|=1$ is definitely a valid isolating cut for $G_{vir}$. Thus, the value returned by $\isocut{}(G_{vir},\cup_{C\in \cC}C,\cup_{C\in\cC}T_C)$ is exactly what we want.
\end{proof}

\subsection{Cluster Sparsification} \label{sec:cluster sparsification}
A \emph{bipartite matching} of a bipartite graph $(A,B,E)$ (where $A, B$ forms a partition of the vertex set and $E\subseteq A\times B$) is an edge set $M$ such that any two edges do not share an endpoint. For convenience, we define $M_A=\{u\mid u\in A,\exists v\in B,(u,v)\in M\}$, and $M_B$ similarly. By the definition, $|M_A|=|M_B|=|M|$. Define $M(u)=v$ where $u\in M_A,v\in B,(u,v)\in M$, and $M(U)=\{M(u)\mid u\in U\}$ where $U\subseteq M_A$. Similarly, we can define $M^{-1}(v)=u$ and $M^{-1}(U)=\{M^{-1}(v)\mid v\in U\}$ for $v\in M_B$ and $U\subseteq M_B$. A \emph{vertex cover} of this bipartite graph is a set of vertices $C\subseteq A\cup B$ such that every edge in $E$ is adjacent to at least one vertex in $C$.

\begin{lemma} \label{lem:remove z}
    Let $G = (V,E)$ be an undirected graph and $s, t \in V$. Let $Z = N_G(s) \cap N_G(t)$, and $H := G - Z$. Then, $\kappa_G(s,t) = \kappa_H(s,t) + |Z|$. Furthermore, If $S$ is an $(s,t)$-min-separator in $H$, then $Z \cup S$ is an $(s,t)$-min-separator in $G$.
\end{lemma}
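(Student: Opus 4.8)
I would prove the equality by establishing the two inequalities $\kappa_G(s,t)\le \kappa_H(s,t)+|Z|$ and $\kappa_G(s,t)\ge \kappa_H(s,t)+|Z|$ separately; the single idea driving both is that every $(s,t)$-separator of $G$ is forced to contain all of $Z$. Before that I would clear away the degenerate cases. Since $G$ is simple we have $s,t\notin Z$, so $Z$ is disjoint from $V(H)$ and from $\{s,t\}$; deleting vertices creates no new edges, so $s$ is adjacent to $t$ in $G$ iff $s$ is adjacent to $t$ in $H$. Hence ``no $(s,t)$-separator exists'' holds for $G$ exactly when it holds for $H$, and in that case the paper's $n-1$ convention makes both sides equal $|V(G)|-1$, while also $|Z\cup S|=|V(G)|-1$, which settles the ``furthermore'' part in this case. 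So from now on I may assume $s$ and $t$ are non-adjacent, where genuine separators exist on both sides.

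For the direction $\kappa_G(s,t)\le \kappa_H(s,t)+|Z|$ (which will also yield the ``furthermore'' statement): take an $(s,t)$-min-separator $S$ of $H$ and observe that $G-(Z\cup S)=H-S$, in which $s$ and $t$ lie in different components by the choice of $S$; since $s,t\notin Z\cup S$, this exhibits $Z\cup S$ as an $(s,t)$-separator of $G$, of size $|Z|+|S|$ because $S\subseteq V(H)$ is disjoint from $Z$. For the reverse direction, take a minimum $(s,t)$-vertex cut $(L,S^\star,R)$ of $G$ with $s\in L$, $t\in R$; each $z\in Z$ is adjacent to $s\in L$ and to $t\in R$, so it can lie neither in $L$ (the edge $zt$ would cross) nor in $R$ (the edge $zs$ would cross), forcing $z\in S^\star$. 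Hence $Z\subseteq S^\star$, so $S^\star\setminus Z\subseteq V(H)$, and $H-(S^\star\setminus Z)=G-S^\star$ separates $s$ and $t$, giving $\kappa_H(s,t)\le |S^\star|-|Z|=\kappa_G(s,t)-|Z|$. Combining the two inequalities gives $\kappa_G(s,t)=\kappa_H(s,t)+|Z|$, after which the set $Z\cup S$ produced in the first part has size exactly $\kappa_G(s,t)$ and is therefore an $(s,t)$-min-separator of $G$.

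I do not expect a genuine obstacle here: the argument is a short double inequality resting on the one observation that common neighbours of $s$ and $t$ must sit in every $(s,t)$-separator. The only point requiring a little care is consistency with the $n-1$ convention when $s$ and $t$ are adjacent, and with possibly empty separators when $G$ is disconnected, both of which are absorbed into the preliminary case split above; everything else is elementary set manipulation.
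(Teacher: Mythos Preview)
Your proof is correct and takes essentially the same approach as the paper: the paper's proof is a single sentence stating that every vertex in $N_G(s)\cap N_G(t)$ belongs to every $(s,t)$-separator in $G$, and your write-up is precisely the standard double-inequality argument one would flesh out from that observation. Your additional care with the adjacent-$s,t$ degenerate case under the paper's $n-1$ convention is a nice touch that the paper leaves implicit.
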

\begin{proof}
    This follows because every vertex $v \in N_G(s) \cap N_G(t)$  belongs to every $(s,t)$-separator in $G$.
\end{proof}

\begin{lemma} \label{lem:repeat mm}
  Given a bipartite graph $G = (A,B,E)$ where $|A| \leq |B|$, there is an $\tilde O(m)$-work $O(\log n)$-depth algorithm that outputs a vertex set  $D \subseteq B$ of size $O(|A|\log n)$ such that there is a maximum matching $M$ where $M_B \subseteq D$.
\end{lemma}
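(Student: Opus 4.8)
The plan is to run $O(\log n)$ rounds of parallel maximal matching, peeling off the freshly matched $B$-vertices into $D$ after each round. Concretely: set $G_0 := G$, $D_0 := \emptyset$, and for $k = 1,\dots,T$ with $T := \lceil \log_2 n\rceil + 2$, compute a maximal matching $M_k$ of $G_{k-1}$, put $D_k := D_{k-1}\cup (M_k)_B$ and $G_k := G_{k-1} - (M_k)_B$ (delete the matched $B$-vertices), and output $D := D_T$. The size bound is immediate: since $|A|\le|B|$, each matched $B$-vertex is matched to a distinct $A$-vertex, so $|M_k|\le|A|$ and $|D|\le T\cdot|A| = O(|A|\log n)$. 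For the resources, each round is one parallel maximal-matching computation on a vertex-deletion subgraph of $G$, which costs $\tilde{O}(m)$ work and $O(\log n)$ depth by a standard parallel maximal-matching routine; summing over the $O(\log n)$ rounds (and noting the $G_k$ are nested subgraphs) keeps the total work $\tilde{O}(m)$ and the depth within the claimed bound.

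The substance is showing $T = O(\log n)$ rounds already suffice, i.e. that $G[A\cup D_T]$ contains a maximum matching of $G$. Let $\nu(\cdot)$ denote maximum matching size, write $H_k := G[A\cup D_k]$, and define the residual deficit $d_k := \nu(G) - \nu(H_k)$; then $d_0 = \nu(G)\le|A|\le n$, so it is enough to prove the halving bound $d_k \le d_{k-1}/2$, since then $d_T < 1$ and hence $d_T = 0$. The key auxiliary fact I would establish first is a structural lemma: for every $D'\subseteq B$, some maximum matching $P$ of $G[A\cup D']$ extends to a maximum matching $M^\ast$ of $G$, and then necessarily $P = M^\ast\cap E(G[A\cup D'])$. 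I would prove this by starting from $P_0 := M^\ast_0\cap E(G[A\cup D'])$ for an arbitrary maximum matching $M^\ast_0$ of $G$ and repeatedly augmenting $P$ along an augmenting path $\pi$ lying entirely inside $G[A\cup D']$: replacing the current extension $M^\ast$ by $M^\ast \mathbin{\triangle} E(\pi)$ and then deleting the (at most one) ``external'' conflicting edge incident to the $A$-endpoint of $\pi$ keeps $M^\ast$ a maximum matching of $G$ and keeps $P = M^\ast\cap E(G[A\cup D'])$; after $\nu(G[A\cup D']) - |P_0|$ augmentations $P$ is maximum in $G[A\cup D']$.

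Given this, to prove $d_k\le d_{k-1}/2$ I would pick a maximum matching $P$ of $H_{k-1}$ extending to a maximum matching $M^\ast$ of $G$, so that $R := M^\ast\setminus P$ is a matching of size $d_{k-1}$ contained in $G_{k-1} = G - D_{k-1}$, all of whose $A$-endpoints avoid $P_A$ and all of whose $B$-endpoints avoid $D_{k-1}$. Since $M_k$ is a maximal matching of $G_{k-1}$ and $R\subseteq E(G_{k-1})$, every edge of $R$ has an endpoint in $V(M_k)$; writing $\alpha := |(M_k)_A\setminus P_A|$ and $\beta := |(M_k)_B\cap R_B|$ this gives $d_{k-1}\le\alpha+\beta$. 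If $\alpha\ge d_{k-1}/2$, then $P$ together with those edges of $M_k$ whose $A$-endpoint lies outside $P_A$ is a matching in $H_k$ of size $\nu(H_{k-1})+\alpha$; otherwise $\beta\ge d_{k-1}/2$ and $P$ together with those edges of $R$ whose $B$-endpoint lies in $(M_k)_B$ is a matching in $H_k$ of size $\nu(H_{k-1})+\beta$. In both cases one checks the exhibited edge set really is a matching inside $H_k$: the $A$-parts are disjoint by the choice of which edges are added, and the $B$-parts are disjoint precisely because $(M_k)_B\subseteq B\setminus D_{k-1}$. Hence $\nu(H_k)\ge\nu(H_{k-1})+d_{k-1}/2$, i.e. $d_k\le d_{k-1}/2$, and the proof is complete.

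I expect the main obstacle to be the structural extension lemma together with the two-case bookkeeping: verifying that symmetric-differencing along an \emph{internal} augmenting path still yields a matching after removing the single external edge is the delicate point, and likewise one must be careful that the matchings exhibited inside $H_k$ in the $\alpha$- and $\beta$-cases are genuine, which hinges on the freshness of the $B$-vertices matched in round $k$. The parallel-implementation details — maximal matching in near-linear work and small depth, and orchestrating the $O(\log n)$ adaptive rounds — are routine.
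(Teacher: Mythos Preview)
Your proposal is correct and follows essentially the same approach as the paper: iterate $O(\log n)$ rounds of maximal matching, peel the matched $B$-vertices into $D$, and argue the deficit $\nu(G)-\nu(G[A\cup D_k])$ shrinks geometrically. The paper simply cites Appendix~B of~\cite{assadi2022semi} (restated here as \Cref{lem:reducingvertexBMM}) for the geometric decrease, which gives a $1/3$-factor drop per round; your extension lemma plus the $\alpha/\beta$ case split yields a $1/2$-factor drop, a modest sharpening of the same argument.
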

\begin{proof}
This set can be computed by $O(\log n)$ computations of maximal bipartite matching, see Appendix B of~\cite{assadi2022semi}.
\end{proof}

\begin{lemma}   [Cluster Boundary Sparsification Lemma (Singleton Version)] \label{lem:sparsify by mm generic}Let $G=(V,E)$ be an undirected graph and $s,t\in V$ satisfying $N_G[s]\cap N_G[t]=\emptyset$. Define $B=N_G(t)$ and $A=V-N_G[t]$ and let $M$ be a maximum matching of $(A,B,E_G(A,B))$. Define $G'=G[V-(B-M_B)]$ as the graph $G$ after removing neighbors of $t$ that are unmatched by the maximum matching $M$. Then, we have $\kappa_G(s,t)=\kappa_{G'}(s,t)$. In particular, every minimum $(s,t)$-separator in $G'$ is a minimum $(s,t)$-separator in $G$.
\end{lemma}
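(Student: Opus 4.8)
The plan is to establish the two inequalities separately. The easy direction $\kappa_{G'}(s,t)\le\kappa_G(s,t)$ holds because $G'=G-(B-M_B)$ and the deleted set $B-M_B\subseteq B=N_G(t)$ contains neither $s$ nor $t$ (indeed $t\notin N_G(t)$, and $s\notin N_G(t)$ since $N_G[s]\cap N_G[t]=\emptyset$), so deleting these vertices cannot increase $(s,t)$-connectivity. For the reverse direction I would apply Menger's theorem: let $k=\kappa_G(s,t)$ and fix $k$ internally vertex-disjoint $s$--$t$ paths $P_1,\dots,P_k$ in $G$. Because every neighbor of $t$ lies in $B$, whenever some $P_i$ passes through a vertex $u\in B$ other than its penultimate vertex we may shortcut it using the edge $\{u,t\}$; iterating this only deletes vertices of each path, so we may assume each $P_i$ meets $B$ exactly once, at its penultimate vertex $b_i$, and hence has the shape $s-(\text{interior vertices, all in }A)-b_i-t$. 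The paths stay internally disjoint and the $b_i$ stay distinct; letting $a_i\in A$ be the vertex of $P_i$ immediately before $b_i$, the set $\mathcal N=\{(a_i,b_i):i\in[k]\}$ is a matching of size $k$ in the bipartite graph $H=(A,B,E_G(A,B))$.

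The core step is the claim that the set $\{a_1,\dots,a_k\}$ can be matched into $M_B$ in $H$. Granting it, choose a matching pairing each $a_i$ with a distinct $\tilde b_i\in M_B$ and replace $P_i$ by $\tilde P_i=s-(\text{the same interior }A\text{-vertices})-a_i-\tilde b_i-t$. Since $a_i\sim\tilde b_i$ and $\tilde b_i\in M_B\subseteq B=N_G(t)$, each $\tilde P_i$ is an $s$--$t$ path; they are internally vertex-disjoint (the $A$-segments were already disjoint and the $\tilde b_i$ are distinct vertices of $B$), and none uses a vertex of $B-M_B$, so they all live in $G'$. Hence $\kappa_{G'}(s,t)\ge k$, giving the equality. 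To prove the claim I would use that $M$ is a \emph{maximum} matching of $H$ and analyze $M\triangle\mathcal N$, a vertex-disjoint union of paths and even cycles. Every $a_i$ is $\mathcal N$-saturated; if $b_i\in M_B$ we keep the edge $\{a_i,b_i\}$, while if $b_i\in B-M_B$ then $b_i$ is $M$-exposed and is therefore an endpoint of a path component $Q$ of $M\triangle\mathcal N$. Tracing $Q$ from $b_i$ (alternating $\mathcal N$- and $M$-edges), maximality of $M$ rules out $Q$ ending at any $M$-exposed vertex (that would be an $M$-augmenting path starting at the exposed vertex $b_i$), so $Q$ ends at a vertex that is $\mathcal N$-exposed but $M$-saturated, and a parity count shows that this endpoint and all interior $B$-vertices of $Q$ lie in $M_B$. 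Swapping $\mathcal N$- and $M$-edges along each such $Q$ re-matches the $a_i$'s lying on $Q$ into $M_B$; distinct exposed $b_i$'s lie on distinct components, so these swaps are performed on pairwise disjoint paths and jointly yield a matching of $\{a_1,\dots,a_k\}$ into $M_B$. I expect the alternating-path bookkeeping in this claim --- checking that no augmenting path is created and that the swapped edge set is again a matching saturating all $a_i$ into $M_B$ --- to be the main obstacle; everything else is routine.

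For the statement about minimum separators, note that since no vertex of $A$ is adjacent to $t$, both quantities can be written over the same domain as $\kappa_G(s,t)=\min_{s\in R\subseteq A}|N_G(R)|$ and $\kappa_{G'}(s,t)=\min_{s\in R\subseteq A}|N_{G'}(R)|$, where $N_{G'}(R)=N_G(R)\setminus(B-M_B)$ for each such $R$. Any minimum $(s,t)$-separator of $G$ equals $N_G(R)$ for its source-side component $R\subseteq A$, and the now-proved equality $\kappa_G(s,t)=\kappa_{G'}(s,t)$ forces $|N_G(R)|=|N_G(R)\setminus(B-M_B)|$, i.e.\ $N_G(R)\cap(B-M_B)=\emptyset$; thus $N_G(R)=N_{G'}(R)$ is simultaneously a minimum $(s,t)$-separator of $G$ and of $G'$. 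So every minimum $(s,t)$-separator of $G$ is also one of $G'$, and conversely the inclusion-minimal source side $R^\ast$ among minimum $(s,t)$-cuts of $G'$ lies inside some such $R$ (using that these source sides are closed under intersection), whence $N_G(R^\ast)\cap(B-M_B)=\emptyset$ too, so the corresponding minimum separator of $G'$ is a minimum $(s,t)$-separator of $G$ --- this is the form of the correspondence I would record, and it is exactly what is needed to use $G'$ as a boundary-sparsified proxy for $G$.
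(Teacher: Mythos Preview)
Your proof of $\kappa_G(s,t)=\kappa_{G'}(s,t)$ is correct and shares the paper's overall strategy: both reduce to showing that the set $A'=\{a_1,\dots,a_k\}$ of penultimate $A$-vertices on $k$ internally disjoint $s$--$t$ paths can be rematched into $M_B$. You prove this rematching via alternating paths in $M\triangle\mathcal N$; the paper instead invokes K\H{o}nig's theorem, taking a minimum vertex cover $C$ with $|C|=|M|$ and setting $M''(u)=M(u)$ for $u\in A'\cap C$ and $M''(u)=\mathcal N(u)$ for $u\in A'\setminus C$, then checking validity using that each $M$-edge has exactly one endpoint in $C$. The K\H{o}nig route is shorter and sidesteps the alternating-path bookkeeping you flag as the main obstacle; your route is more self-contained. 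One clarification worth making explicit: your phrase ``if $b_i\in M_B$ we keep the edge $\{a_i,b_i\}$'' should only apply to those $a_i$ not lying on any swapped path (an $a_i$ interior to a swapped $Q$ with $b_i\in M_B$ must be rematched along with the rest of $Q$); the clean global description is simply $\mathcal N'=\mathcal N\triangle\bigl(\bigcup Q\bigr)$, which is what your swap amounts to.

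On the ``in particular'' clause: as literally stated it is \emph{false}. Take $G$ on $\{s,a,b_1,b_2,t\}$ with edges $sa,\,ab_1,\,ab_2,\,b_1t,\,b_2t$ and $M=\{(a,b_1)\}$; then $G'=G-b_2$ is the path $s$--$a$--$b_1$--$t$, and $\{b_1\}$ is a minimum $(s,t)$-separator in $G'$ but not in $G$. The paper's own proof only establishes the equality of the two connectivities and never returns to this clause. You instead prove the correct and sufficient statement that the $G'$-min-cut with inclusion-minimal source side $R^\ast$ is also a $G$-min-cut; your reasoning is right once one notes that $R^\ast\subseteq R$ for any $G$-min-cut source side $R$ (such $R$ are also $G'$-min-cut source sides), hence $N_G(R^\ast)\subseteq R\cup N_G(R)$, and both pieces are disjoint from $B\setminus M_B$. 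This weaker correspondence is exactly what the downstream applications need.
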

\begin{proof}
    We first prove that $\kappa_G(s,t)\ge \kappa_{G'}(s,t)$. Let $S$ be a minimum $(s,t)$-separator in $G$. So, every $(s,t)$-path in $G$ passes a vertex in $S$. Since $G'$ is a subgraph of $G$, every $(s,t)$-path in $G'$ is an $(s,t)$-path in $G$, and thus it passes a vertex in $S$. Thus, $\kappa_G(s,t) = |S| \ge \kappa_{G'}(s,t)$.

    It remains to prove $\kappa_G(s,t)\le \kappa_{G'}(s,t)$.  We argue that there is a maximum $(s,t)$ vertex-disjoint paths of size $\kappa_G(s,t)$ in $G'$.
    There exists $\kappa=\kappa_G(s,t)$ internal vertex disjoint paths $p_1,...,p_{\kappa}$ in $G$. Let $b_i$ be the vertex that $p_i$ intersect $B$ the first time, and let $B'=\{b_i\mid i\in[\kappa]\}$. There exists a matching $M'$ of size $\kappa$ such that $M'_B=B'$ (by matching $b_i$ to the previous node of $b_i$ on path $p_i$). This means the set  $A' := M'_A$ contains all the previous nodes of $b_i$ on $p_i$.  To prove $\kappa_G(s,t)\le \kappa_{G'}(s,t)$, it suffices to prove that there exists a matching $M''$ of size $\kappa$ such that $M''_A=A'$ and $M''_B\subseteq M_B$ (in which case there exists $\kappa$ internal vertex disjoint path connecting $s,t$ in $G'$).

     Now we show the existence of such a matching $M''$. Let $C$ be a minimum vertex cover of $(A,B,E_G(A,B))$. By Kőnig's theorem, $|C| = |M|$. Let $C_A=C\cap A$ and $C_B=C\cap B$. By definitions,  $C_A\subseteq M_A,C_B\subseteq M_B$. Each edge in $M$ has exactly one endpoint in $C_A$, or one endpoint in $C_B$ (because every edge in $M$ must have one endpoint in $C$ while $|C|=|M|$) . Let $C'_A=C_A\cap A'$ and $C'_B=C_B\cap B'$. We define $M''$ in the following way: for every $u\in C'_A$, define $M''(u)=M(u)$; for every $u\in A'-C'_A$, define $M''(u)=M'(u)$. We first prove that $M''_B\subseteq M_B$. For any $u\in C'_A$, trivially we have $M''(u)=M(u)\in M_B$. For any $u\in A'-C'_A$, since $C$ is a vertex cover and $(u,M'(u))$ is an edge where $u\not\in C_A$, we have $M'(u)\in C_B\subseteq M_B$. Thus, $M''_B\subseteq M_B$. Now we prove that $M''$ is a valid matching, i.e., for any $u\in C'_A,v\in A'-C'_A$, we need to prove $M(u)\not=M'(v)$. Since $u\in C'_A$, we have $M(u)\not\in C_B$. Moreover, we have $M'(v)\in C_B$, thus, $M(u)\not=M'(v)$.
\end{proof}

\begin{lemma} [Cluster Boundary Sparsification Lemma (Batched Version)] \label{lem:vertexsparsification}
    Given $G=(V,E)$, $C\subseteq V$ and $\ell$ satisfying the common-neighborhood property $|N_G(u)\triangle N_G(v)|=\hO{\ell}$ for every $u,v\in C$, and in addition given $X\subseteq C$ with $|X|=\hO{|C|/\ell}$.
    There exists \pram{} and \congest{} algorithm that outputs $C'\subseteq N_G(C)$ and an integer $K$ such that
    \begin{enumerate}
        \item $|C'|=\tO{|C|}$,
        \item for every $x\in X$, define $G'=G[C\cup C']$, we have $\min_{L' \subseteq C : x\in L'} |N_G(L')|=\min_{L' \subseteq C : x\in L'} |N_{G'}(L')|+K$ for a fixed number $K$.
        \item for every $x\in X$, define $G'=G[C\cup C']$, we have $\min_{L' \subseteq C : x\in L',|L'\cap X|=1} |N_G(L')|=\min_{L' \subseteq C : x\in L',|L'\cap X|=1} |N_{G'}(L')|+K$ for a fixed number $K$.
    \end{enumerate}
    The algorithm runs in $\tO{m}$ work and $\tO{1}$ depth in \pram{} where $m$ is the number of edges in $G$, and $\tO{1}$ rounds in \congest{}.
\end{lemma}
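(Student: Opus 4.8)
The plan is to build $C'$ by batching the single-vertex construction of Lemma~\ref{lem:sparsify by mm generic} over all of $C$ at once, using a single maximum matching rather than one per $x\in X$, and then controlling the size via Lemma~\ref{lem:repeat mm}. First I would reduce to the setting where the target vertex set $N_G(C)$ plays the role of ``$B$'' and $C$ plays the role of ``$A$'': let $B=N_G(C)$ and consider the bipartite graph $(C,B,E_G(C,B))$. Apply Lemma~\ref{lem:repeat mm} to this bipartite graph to obtain a set $D\subseteq B$ of size $O(|C|\log n)=\tO{|C|}$ that contains $M_B$ for some maximum matching $M$ of $(C,B,E_G(C,B))$. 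Set $C'=D$; property (1) is then immediate. The number $K$ will be the count of ``forced'' boundary vertices, analogous to $|Z|$ in Lemma~\ref{lem:remove z} applied with the sink; concretely, after we fix a super-sink $t'$ adjacent to all of $N_G(C)$ (which is how \isocut{} and \mnnccn{} are phrased), the vertices in $N_G(C)$ that every relevant separator must contain form a fixed set, and $K$ is its size. The main content is showing that deleting the unmatched boundary vertices $B-M_B$ (equivalently, restricting $G$ to $C\cup C'$) changes $\min_{L'\subseteq C: x\in L'}|N_G(L')|$ by exactly this fixed $K$, simultaneously for every $x\in X$.

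For that core claim I would mirror the proof of Lemma~\ref{lem:sparsify by mm generic}, but for the parametrized family of minimizers. Fix $x\in X$ and let $L'\ni x$ be a minimizer of $|N_G(L')|$ over subsets of $C$ containing $x$. As in the singleton lemma, view a minimum-size boundary $N_G(L')$ via Menger/König duality restricted to the bipartite graph between $L'$ and $N_G(L')\cap B$: there are $|N_G(L')\cap B|$ internally vertex-disjoint $L'$-to-$B$ ``first-crossing'' edges, giving a matching $M'$ with $M'_B = N_G(L')\cap B$ and $M'_C\subseteq L'$. Then run the König-theorem swapping argument of Lemma~\ref{lem:sparsify by mm generic} verbatim: take a minimum vertex cover $C^*=C^*_A\cup C^*_B$ of $(C,B,E_G(C,B))$ with $|C^*|=|M|$, and define $M''$ on $M'_C$ by $M''(u)=M(u)$ if $u\in C^*_A$ and $M''(u)=M'(u)$ otherwise; this yields a matching with $M''_C=M'_C$ and $M''_B\subseteq M_B\subseteq C'$. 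Hence the same $L'$ still has $|N_{G'}(L')|$ as small as $|N_G(L')|$ minus the neighbors lost outside $C'$ — and crucially those lost neighbors are exactly $N_G(L')\cap (B-C')$, which one shows equals $N_G(C)\cap(B-C')$ (i.e. a set \emph{independent of $x$ and of $L'$}), because every vertex of $B-M_B$ is either forced into every separator or is rendered irrelevant once its matched partner is gone. That independence is what makes $K$ a single fixed number rather than an $x$-dependent quantity; this is the step I expect to be the main obstacle, and pinning down precisely which boundary vertices are ``lost'' for every candidate $L'$ (using the common-neighborhood bound $|N_G(u)\triangle N_G(v)|=\hO{\ell}$ together with $|X|=\hO{|C|/\ell}$ so that the matching and cover have size $\tO{|C|}$) is where the calculation lives. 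Property (2) follows; property (3) is the identical argument carried out inside the isolating-cuts formulation, where one additionally keeps $T=X$ as terminals and notes the matching/cover swap never moves a terminal (terminals live in $C$, not $B$), so the constraint $|L'\cap X|=1$ is preserved.

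Finally, for the model implementations: the bipartite graph $(C,B,E_G(C,B))$ has $\tO{m}$ edges, so Lemma~\ref{lem:repeat mm} runs in $\tO{m}$ work and $O(\log n)$ depth in \pram{}, and the only other step is computing one maximal matching $O(\log n)$ times, which in \congest{} over the subgraph of edges incident to $C$ takes $\tO{1}$ rounds by standard maximal-matching routines (Luby/greedy), since we only need \emph{maximal}, not maximum, matchings here — the maximum matching $M$ is used only in the analysis, not the algorithm. The output $C'=D$ and $K$ are then broadcast/aggregated in $\tO{1}$ additional rounds. This gives the stated $\tO{m}$ work, $\tO{1}$ depth in \pram{} and $\tO{1}$ rounds in \congest{}, completing the proof.
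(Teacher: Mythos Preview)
The core gap is that your fixed $K$ does not exist for the construction you propose. You set $C'=D$ where $D\supseteq M_B$ for a single maximum matching $M$ of $(C,N_G(C))$, and then assert that the discarded vertices $N_G(C)\setminus C'$ contribute a count $K$ to $|N_G(L')|$ that is the same for every relevant $L'$ and every $x\in X$. This is false: a discarded $b\in N_G(C)\setminus C'$ lies in $N_G(L')$ if and only if it has a neighbor in $L'$, which depends on $L'$. Concretely, take $C=\{x_1,x_2,c\}$, $X=\{x_1,x_2\}$, $N_G(C)=\{b_1,b_2,b_3\}$ with $x_1$ adjacent to all three $b_i$ while $x_2$ and $c$ are adjacent only to $b_1$; a maximum matching of size $2$ can give $D=\{b_1,b_2\}$, discarding $b_3$. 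Then for $x_1$ the minimum drops by $1$ (the minimizer $\{x_1,x_2,c\}$ loses $b_3$), but for $x_2$ the minimizer $\{x_2,c\}$ never saw $b_3$ and its minimum is unchanged --- no single $K$ works. Your K\H{o}nig swap only shows that disjoint paths can be rerouted to hit $M_B$; it does not show that every discarded boundary vertex is a common neighbor of all candidate $L'$. (Separately, the claim ``$M'_C\subseteq L'$'' is wrong: the predecessors of the first $B$-crossings along the $\kappa$ disjoint $(x,t)$-paths lie in $C$, not in the minimizer $L'$.)

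The paper fixes this by choosing $K$ to be a genuine common loss: it sets $K=\bigl|\bigl(\bigcap_{x\in X}N_G(x)\bigr)\cap N_G(C)\bigr|$ and deletes exactly these vertices first --- they are the only boundary vertices forced into $N_G(L')$ for \emph{every} $L'$ containing some $x\in X$. After this deletion it initializes $\tilde C=\bigcup_{x\in X}\bigl(N_G(x)\cap N_G(C)\bigr)$ (the remaining per-$x$ forced boundary), and only then runs $O(\log n)$ rounds of maximal matching on $(C,\, N_G(C)\setminus\tilde C)$; because $N_G(x)\cap N_G(C)\subseteq\tilde C$ for every $x$, these maximal matchings simultaneously grow, for each $x$, toward a maximum matching of $(C,\,N_G(C)\setminus N_G(x))$, which is exactly what the singleton Lemma~\ref{lem:sparsify by mm generic} requires once its precondition $N[s]\cap N[t]=\emptyset$ is restored. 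The hypotheses $|N_G(u)\triangle N_G(v)|=\hO{\ell}$ and $|X|=\hO{|C|/\ell}$ are what bound $|\tilde C|=\hO{|C|}$; in your plan they are never actually invoked.
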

\begin{proof}

    If $X$ is empty, return $C' = \emptyset$, and the lemma is vacuously true; so let us assume $X$ is non-empty.

    Notice that if the maximum degree of nodes in $C$ becomes much larger than $\hO{|C|}$, then we can delete all vertices in $(\cap_{x\in X}N_G(x))\cap N_G(C)$, which decrease the value of $|N_G(L')|$ by $K=|(\cap_{x\in X}N_G(x))\cap N_G(C)|$ for every $L'$ satisfying $x\in L'$ for some $x\in X$. This set $(\cap_{x\in X}N_G(x))\cap N_G(C)$ can be found in $O(m)$ work and $O(1)$ depth in \pram{}, and in $1$ round in \congest{}, after deleting which can reduce the maximum degree of nodes in $C$ to $\hO{|C|}$ because of the common-neighborhood property: in order for a vertex $v\in N_G(x)$ to not be deleted, either $v\in C$, or there exists $x'\in X$ such that $v\in N_G(x')\triangle N_G(x)$, which can happen at most $\hO{\ell}$ times for each $x'$. Thus, the degree is at most $\hO{|C|+|X|\cdot \ell}=\hO{|C|}$. In the following proof we will assume the maximum degree of nodes in $C$ is at most $\hO{|C|}$, so the total number of edges adjacent to $C$ is $\hO{|C|^2}$.

    We first show that for a specific $x$, in order for $\min_{L' \subseteq C : x\in L'} |N_G(L')|=\min_{L' \subseteq C : x\in L'} |N_{G'}(L')|$ to be true, it suffices to let $C'$ contain (i) $N_G(x)\cap N_G(C)$, (ii) the (right) endpoints of the edges in an arbitrary maximum bipartite matching between $C$ and $N_G(C)-N_G(x)$. To see this, we need \cref{lem:sparsify by mm generic}, in which we will set $s=x$, $t$ be a new super node connecting to every vertex in $N_G(C)$ and $G$ be the graph after adding the super node $t$ and deleting all common neighbors of $s$ and $t$. It is easy to notice that by \cref{lem:sparsify by mm generic}, preserving the vertices in an arbitrary maximum bipartite matching between $C$ and $N_G(C)-N_G(x)$, along with $N_G(x)\cap N_G(C)$ suffices to preserve the value of $\min_{L' \subseteq C : x\in L'} |N(L')|$.

    Thus, the problem becomes finding $C'$ such that for every $x\in X$, it preserves at least one maximum bipartite matching between $C$ and $N_G(C)-N_G(x)$, and $C'\supseteq N_G(x)\cap N_G(C)$. Let $H$ be the bipartite subgraph of $G$ only containing edges between $C$ and $N_G(C)$. We will use the folklore vertex size reduction for BMM, which can be found in Appendix B of \cite{assadi2022semi}. A critical lemma can be summarized as follows.

    \begin{lemma}[Lemma 14 of \cite{assadi2022semi}]\label{lem:reducingvertexBMM}
        For a bipartite graph $G=(V,E)$, and a vertex cover $\tilde{V}\subseteq V$ (i.e., for every $(u,v)\in E$, either $u\in \tilde{V}$ or $v\in\tilde{V}$), let $M$ be an arbitrary maximal bipartite matching between $\tilde{V}$ and $V-\tilde{V}$ and $V(M)$ be the vertex set of its endpoints. If the maximum bipartite matching of $G[\tilde{V}]$ has size $F$ and the maximum bipartite matching of $G$ has size $F^*$, then the maximum bipartite matching of $G[\tilde{V}\cup V(M)]$ has size at least $\frac{1}{3}(F^*-F)+F$.
    \end{lemma}

    Thus, for a specific $x\in X$, it suffices to set $C'=\emptyset$ initially, then repeatedly finding an arbitrary maximal bipartite matching between $C\cup C'$ and $N_G(C)-N_G(x)-C'$, after which add the endpoints of the maximal matching to $C'$. Notice that $C$ is a vertex cover for $H$. According to \cref{lem:reducingvertexBMM}, after $O(\log n)$ iterations, we are guaranteed that $H[C\cup C']$ contains the same size of maximum bipartite matching as $H[C\cup (N_G(C)-N_G(x))]$. In the end we add $N_G(x)\cap N_G(C)$ to $C'$.

    However, we need to construct $C'$ for all $x\in X$, which will increase the size of $C'$ to $|X|\cdot |C|$, far from our goal of $\tO{|C|}$. To solve this, we let $\tilde{C}=\cup_{x\in X}(N_G(x)\cap N_G(C))$. Notice that $|\tilde{C}|=\hO{|X|\cdot \ell+|C|}=\hO{|C|}$ according to the common neighborhood property and the maximum degree bound. Then, we repeatedly find an arbitrary maximal bipartite matching between $C$ and $N_G(C)-\tilde{C}$, add the endpoints to $\tilde{C}$ and repeat for $O(\log n)$ times. A critical observation is that, in each iteration, the edges between $C$ and $N_G(C)-\tilde{C}$ are the same as edges between $C\cup\tilde{C}$ and $(N_G(C)-N_G(x))-(\tilde{C}-N_G(x))$ for every $x$ because $N_G(x)\cap N_G(C)\subseteq \tilde{C}$ and $N_G(C)$ is an independent set in $H$. Thus, in the end $H[C\cup (\tilde{C}-N_G(x))]$ contains the same size of maximum bipartite matching as $H[C\cup (N_G(C)-N_G(x))]$, according to \cref{lem:reducingvertexBMM}. In the end we set $C'=\tilde{C}$ which certainly contains $N_G(x)\cap N_G(C)$ for every $x\in X$, so $C'$ preserves $\min_{L' \subseteq C : x\in L'} |N_{G'}(L')|$ for every $x\in X$. Since there are $O(\log n)$ iterations where each maximal bipartite matching has size at most $|C|$, we have $|C'|=\hO{|C|}$. The algorithm can be implemented in $\tO{m}$ work and $\tO{1}$ depth because maximal bipartite matching can be solved in $\tO{m}$ work and $\tO{1}$ depth. Moreover, in \congest{}, maximal bipartite matching can be solved in $\tO{1}$ rounds.

    In order to also make sure $\min_{L' \subseteq C : x\in L',|L'\cap X|=1} |N_G(L')|=\min_{L' \subseteq C : x\in L',|L'\cap X|=1} |N_{G'}(L')|+K$, we the same algorithm while setting $C\leftarrow C-X$. In this way, one maximum matching between $C-X$ and $N_G(C)-N_G(x)$ is preserved for every $x\in X$. According to \cref{lem:sparsify by mm generic}, the value of
    $\min_{L' \subseteq C : x\in L',|L'\cap X|=1} |N_G(L')|$ is preserved for every $x\in X$.

\end{proof}

\section{Isolating Cuts Lemma for Parallel and Distributed Algorithms (Proof of \cref{lem:refined isolating cut lemma})}   \label{sec:isolating cut lemma parallel dist}

In this section we will focus on proving the following lemma.

\begin{lemma} [Parallel and Distributed Isolating Cuts Lemma] \label{lem:refined isolating cut lemma 2}
Given a graph $G = (V,E)$ and an independent set $T \subseteq V$ of size at least $2$, there is an algorithm that outputs for each $v \in T$ a $(v, T -\{v\})$-min-separator $C_v$. The (sequential) algorithm makes calls to s-t vertex mincut on graphs with $\tilde O(n)$ total number of vertices and $O(m)$ total number of edges and takes $\tilde O(m)$ additional time.
\begin{itemize}
    \item In the PRAM model, the algorithm can be implemented to run in work $\tO{W(m,n)}$ and depth $\tO{D(m,n)}$ where $W(m,n)$ and $D(m,n)$ are the work and depth of s-t vertex connectivity.
    \item In the distributed \congest model, the algorithm can be implemented to run in $\tilde O(R(m,n,D))$ rounds where $R(m,n,D)$ is the round complexity of subgraph S-T vertex connectivity and $D$ is the diameter of $G$.
\end{itemize}
\end{lemma}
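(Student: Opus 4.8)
The plan is to reuse the scheme of \cref{lem:isolating cut old} up to the point where it reduces isolating cuts to a family of \emph{single-source} vertex-mincut computations, and then to run each such computation on a \emph{vertex-sparsified} subgraph so that the subgraphs together carry only $\tilde O(n)$ vertices rather than $O(m)$. Concretely: enumerate $T=\{v_1,\dots,v_k\}$, and for each bit position $i\in\{0,\dots,\lceil\log k\rceil\}$ call (subgraph) S-T vertex mincut to get a minimum vertex separator $X_i$ between the terminals whose $i$-th index bit is $0$ and those whose $i$-th bit is $1$ (in PRAM one may contract the two terminal sets and use ordinary s-t vertex mincut). Put $X=\bigcup_i X_i$; since distinct terminals differ in some bit, the components $\{U_v\}_{v\in T}$ of $G-X$ that contain a terminal are pairwise disjoint, each contains exactly one terminal, and $N_G(U_v)\subseteq X$ is terminal-free. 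Exactly as in the proof of \cref{lem:isolating cut old}, it then suffices to compute, for every $v$, the value $\min_{v\in L\subseteq U_v}|N_G(L)|$ together with a minimizer $L$ (whose neighborhood is the desired $(v,T\setminus v)$-min-separator); equivalently, adding a fresh super-sink $t_v$ adjacent to all of $N_G(U_v)$, this equals $\kappa_{G_v}(v,t_v)$ for $G_v:=G[N_G[U_v]]+t_v$.

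The one new ingredient is to shrink each $G_v$ before calling s-t mincut. First delete $Z_v:=N_G(v)\cap N_G(U_v)$, which lies in every $(v,t_v)$-separator; by \cref{lem:remove z} this only decreases $\kappa(v,t_v)$ by the recorded amount $|Z_v|$, and afterwards $v$ and $t_v$ have disjoint closed neighborhoods, with $A:=V(G_v-Z_v)\setminus N[t_v]=U_v$ and $B:=N_{G_v-Z_v}(t_v)=N_G(U_v)\setminus Z_v$. If $|B|\le|U_v|$ keep $B$ intact; otherwise apply \cref{lem:repeat mm} to the bipartite graph $(U_v,B,E_G(U_v,B))$ to obtain $D_v\subseteq B$ with $|D_v|=O(|U_v|\log n)$ such that some maximum matching of that bipartite graph uses only $B$-endpoints in $D_v$. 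Deleting $B\setminus D_v$ preserves $\kappa(v,t_v)$: \cref{lem:sparsify by mm generic} gives this when one deletes \emph{all} unmatched $t_v$-neighbors of a fixed maximum matching, and $B\setminus D_v$ is a subset of that set, so $\kappa$ is squeezed to equality; moreover any minimum $(v,t_v)$-separator of the reduced graph $G'_v:=G[U_v\cup D_v]+t_v$ already avoids the padding $D_v\setminus$(matched) by a size argument, hence is a minimum separator of $G_v-Z_v$ by \cref{lem:sparsify by mm generic}, and then $C_v:=S'_v\cup Z_v$ (with $S'_v$ the separator returned by the s-t mincut call on $G'_v$) is a $(v,T\setminus v)$-min-separator of $G$ by \cref{lem:remove z}. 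For the accounting: $\sum_v|V(G'_v)|=\tilde O(\sum_v|U_v|)=\tilde O(n)$ since the $U_v$ are disjoint, and $\sum_v|E(G'_v)|=O(m)$ since every edge of $G$ is incident to at most one set $U_v$; computing $X$, the $U_v$, the $Z_v$ and the matchings fits in $\tilde O(m)$ additional work.

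For the model-specific bounds: the $O(\log k)$ bit-trick calls run on graphs with $\le n$ vertices and $\le m$ edges, and the $\le k$ component calls run on the $G'_v$. In PRAM all calls are issued in parallel; connectivity and the maximal matchings used by \cref{lem:repeat mm} are polylogarithmic depth, and by superadditivity of $W$ in $m$ (assumed throughout), $\sum_v W(|E(G'_v)|,n)\le W(O(m),n)=O(W(m,n))$, yielding work $\tilde O(W(m,n))$ and depth $\tilde O(D(m,n))$. In CONGEST the bit-trick calls are issued sequentially ($\tilde O(R(m,n,D))$ rounds), and the component calls are handled simultaneously by embedding the disjoint union $\bigsqcup_v G'_v$ into $G$ (each real vertex hosts its copies, each real edge lies in at most one $G'_v$, and each $t_v$ is hosted inside $U_v$), which has $\tilde O(1)$ congestion and uses the whole network of diameter $D$ for coordination, so a single call to subgraph S-T vertex connectivity with source set $T$ and the per-component sink sets costs $\tilde O(R(m,n,D))$ rounds. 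I expect the main obstacle to be the boundary-sparsification step itself: one must certify that restricting $N_G(U_v)$ to an $\tilde O(|U_v|)$-sized subset preserves the \emph{value} of $\min_{v\in L\subseteq U_v}|N_G(L)|$ (and lets a minimizer be recovered), which is precisely the content and König/uncrossing proof of \cref{lem:sparsify by mm generic}; a secondary subtlety is ensuring that the CONGEST disjoint-union simulation stays low-congestion and that the diameter parameter entering the s-t mincut oracle is that of $G$, not of the individual (possibly long) subgraphs $G'_v$.
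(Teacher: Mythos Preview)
Your proposal is correct and takes essentially the same route as the paper: the bit-encoding step of \cite{LiNPSY21} to isolate the components $U_v$, followed by per-component boundary sparsification via \cref{lem:remove z}, \cref{lem:repeat mm}, and \cref{lem:sparsify by mm generic} to bring the total vertex count down to $\tilde O(n)$, and then combining all sparsified instances into one subgraph S-T call for the \congest{} bound. The only implementation detail the paper handles differently is that in \congest{} it avoids hosting a single high-degree super-sink $t_v$ at one node of $U_v$ by instead attaching a pendant terminal to each boundary vertex (\cref{lem:cut-equivalent terminal graph}), which is precisely what your phrase ``per-component sink sets'' should resolve to.
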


We first show that \cref{lem:refined isolating cut lemma} directly follows from \cref{lem:refined isolating cut lemma 2}.
\begin{proof}[Proof of \cref{lem:refined isolating cut lemma 2}]
    Given a graph $G$, a vertex set $C$ and a vertex set $T$, we first use \cref{lem:vertexsparsification} to construct $C'\subseteq N_G[C]$ such that $G[C\cup C']$ preserves the minimum isolating cut. Write $G'=G[C\cup C']$. Then for each $v\in N_{G'}(C)$, we add a new node $v'$ and connect $v'$ to $v$ by an edge, add $v'$ to $T$, let the resulting terminal set be $T'$. In the distributed network, $v'$ is simulated by $v$. Now we run the algorithm described in \cref{lem:refined isolating cut lemma 2} on $G'$ with $T'$. It is easy to see that this solves \cref{lem:refined isolating cut lemma}. Moreover, $G'$ only has $\hO{C}$ many vertices according to \cref{lem:vertexsparsification}. Thus, the work and depth for \pram{} are $\tilde{W(m,|C|)}$ and $\tilde{D(m,|C|)}$.
\end{proof}

We prove the parallel version of the isolating cut lemma~\cite{LiNPSY21}. Previously, the isolating cut lemma~\cite{LiNPSY21} can only guarantee $O(m)$ total number of vertices which is suboptimal for parallel implementation.
Before we prove \Cref{lem:refined isolating cut lemma 2}, we first review an algorithm for the isolating cut lemma~\cite{LiNPSY21} and then we state the refinement.

\paragraph{Algorithm~\cite{LiNPSY21}} The inputs consist of the input graph $G = (V,E)$ and an independent set $T \subseteq V$ of size at least $2$.
\begin{enumerate}[noitemsep]
    \item Encode each $t \in T$ as a binary string of length $\ceil{\log_2|T|}$. For each $i \leq  \ceil{\log_2|T|}$, define       $A_i :=  \{ t \in T \colon i^{\text{th}} \text{bit of }  $t $ = 0 \}$ and   $B_i := T - A_i$
    Compute $F_i := (A_i,B_i)$-vertex mincut.
    \item For each $s \in T$, compute a connected component containing $s$, denoted by $U_s$, in $G - \bigcup_iF_i$.
    \item For each $s \in T$, define $G'_s$ as $G[N_G[U_s]]$ followed by (1) adding an additional vertex $t$ and all edges from $t$ to $N_G(U_s)$, and (2) remove all edges inside $N_G(U_s)$. Then, compute a minimum $(s,t)$-separator denoted by $C_s$ in $G'_s$.
\end{enumerate}

We summarize the correctness into the following lemma.
\begin{lemma} [\cite{LiNPSY21}]
For each  $s \in T$, $C_s$ is an $(s, T - \{s\})$-min-separator in $G$.
\end{lemma}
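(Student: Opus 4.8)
The plan is to verify, for every $s\in T$, the two-sided bound $|C_s|=\kappa_G(s,T-\{s\})$ while simultaneously checking that $C_s$ is a genuine $(s,T-\{s\})$-separator of $G$, following the structure of the Li--NPSY argument. First I would isolate two structural facts about the component $U_s$. Since the terminals get distinct binary codes, any $t\in T\setminus\{s\}$ differs from $s$ in some bit $i$, so $s$ and $t$ lie on opposite sides of the pair $(A_i,B_i)$; because $F_i$ is an $(A_i,B_i)$-mincut it is disjoint from $A_i\cup B_i=T$ and separates $s$ from $t$ already in $G-F_i$, hence in $G-\bigcup_j F_j$. This gives $U_s\cap T=\{s\}$, and the same disjointness $F_i\cap T=\emptyset$ also forces $N_G(U_s)\cap T=\emptyset$ (a terminal adjacent to $U_s$ and missed by every $F_i$ would itself be in $U_s$). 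In particular $N_G[U_s]\cap T=\{s\}$, so every vertex of $G'_s$ other than the artificial sink $t$ lies in $N_G[U_s]$ and is a non-terminal except for $s$.

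Next I would prove the easy inequality $|C_s|\ge\kappa_G(s,T-\{s\})$ by showing $C_s$ is a valid separator in $G$. Take any path in $G$ from $s$ to some $t'\in T\setminus\{s\}$; since $t'\notin U_s$, truncate the path at the first vertex $w$ it reaches in $N_G(U_s)$. The truncated prefix uses only edges of $G[U_s]$ and edges from $U_s$ to $N_G(U_s)$, all of which survive in $G'_s$, and in $G'_s$ the vertex $w$ is joined to the sink $t$; hence the prefix followed by the edge $wt$ is an $(s,t)$-path in $G'_s$ and must meet $C_s$. As $C_s\subseteq N_G[U_s]\setminus\{s\}$ contains neither $s$ nor $t$, the hit occurs on the prefix, i.e.\ on the original $G$-path, and $C_s$ is disjoint from $T$; so $C_s$ is an $(s,T-\{s\})$-separator of $G$.

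The substantial part is the reverse inequality $|C_s|\le\kappa_G(s,T-\{s\})$. Writing $\kappa:=\kappa_G(s,T-\{s\})$, I would fix a minimum $(s,T-\{s\})$-separator and let $L^*$ be the component of $s$ in $G$ with that separator removed, so that $G[L^*]$ is connected and $N_G(L^*)$ is itself a minimum separator (of size $\kappa$); among all minimum separators I choose one making $|L^*|$ smallest. The crux is the claim $L^*\subseteq U_s$, equivalently $L^*\cap\bigcup_i F_i=\emptyset$ (if $L^*$ avoids every $F_i$ then the connected $G[L^*]$ survives in $G-\bigcup_i F_i$ and pulls $L^*$ into $U_s$). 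Suppose instead $L^*\cap F_i\neq\emptyset$ for some $i$, and assume w.l.o.g.\ $s\in A_i$, so $B_i\subseteq T\setminus\{s\}$ and in particular $B_i\cap L^*=\emptyset$. Let $P$ be the set of vertices reachable from $A_i$ in $G-F_i$; w.l.o.g.\ $F_i=N_G(P)$. Then $N_G(P\cup L^*)$ is again an $(A_i,B_i)$-separator and $N_G(P\cap L^*)$ is an $(s,T-\{s\})$-separator --- both checks use only that $N_G(P)$ and $N_G(L^*)$ avoid $T$ and that $s\in A_i\cap L^*$ while $T\setminus\{s\}$ is disjoint from both $L^*$ and $P$. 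By submodularity of the vertex-boundary map $X\mapsto N_G(X)$ (which holds for every graph, by a short vertex-counting argument), $|N_G(P\cup L^*)|+|N_G(P\cap L^*)|\le|N_G(P)|+|N_G(L^*)|=|F_i|+\kappa$; minimality of $F_i$ gives $|N_G(P\cup L^*)|\ge|F_i|$, so $|N_G(P\cap L^*)|\le\kappa$, hence $=\kappa$. Since $F_i=N_G(P)$ is disjoint from $P$ but meets $L^*$, we have $P\cap L^*\subsetneq L^*$; the component of $s$ in $G-N_G(P\cap L^*)$ is contained in $P\cap L^*$ (leaving $P\cap L^*$ would enter $N_G(P\cap L^*)$), so it is a strictly smaller $s$-side of a minimum separator, contradicting the choice of $L^*$. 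Hence $L^*\subseteq U_s$. Finally, inside $G'_s$ all edges leaving $L^*$ go to $N_G(L^*)\subseteq N_G[U_s]$ and are retained, and $L^*$ is not adjacent to the sink $t$ (whose neighborhood $N_G(U_s)$ is disjoint from $U_s$); thus $(L^*,N_G(L^*),\cdot)$ is an $(s,t)$-cut of $G'_s$ of size $\kappa$, so $|C_s|\le\kappa$.

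Combining the two inequalities yields $|C_s|=\kappa_G(s,T-\{s\})$ with $C_s$ a genuine $(s,T-\{s\})$-separator, i.e.\ a minimum one. I expect the uncrossing step to be the main obstacle: one has to state and apply submodularity of the vertex boundary correctly, and --- more delicately --- keep careful track of the terminals of $A_i$ other than $s$, which sit on the far side of the optimal $(s,T-\{s\})$-cut, in order to be sure that $N_G(P\cup L^*)$ and $N_G(P\cap L^*)$ really are separators of the two required kinds.
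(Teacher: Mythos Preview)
The paper does not prove this lemma; it merely cites \cite{LiNPSY21} and states the result. Your proof is correct and is precisely the standard isolating-cuts argument from \cite{LiNPSY21}: the binary encoding of terminals forces $U_s\cap T=\{s\}$ and $N_G(U_s)\cap T=\emptyset$; any $(s,T\setminus\{s\})$-path in $G$ yields an $(s,t)$-path in $G'_s$, giving $|C_s|\ge\kappa_G(s,T\setminus\{s\})$; and the reverse inequality is obtained by uncrossing a minimum-$|L^*|$ optimal $(s,T\setminus\{s\})$-cut against each $F_i$ via submodularity of $X\mapsto|N_G(X)|$ to conclude $L^*\subseteq U_s$. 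The submodularity of the open-neighbourhood size does hold for arbitrary vertex sets (a vertexwise check shows each indicator $v\mapsto \mathbf{1}[v\in N(X)]$ is submodular), so the uncrossing step is sound, and your careful tracking of which terminals of $A_i$ and $B_i$ lie on which side is exactly what is needed to certify that $N_G(P\cup L^*)$ and $N_G(P\cap L^*)$ are separators of the required types.
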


This construction yields $O(m)$ total number of vertices because the boundary of $U_s$ for all $s \in T$ can be large.

\paragraph{Refinement.}  Our refinement is at step 3: before computing the minimum $(s,t)$-separator in $G'_s$, we further sparsify $G'_s$, while preserving the minimum $(s,t)$-separator, so that the total number of vertices is $\tilde O(|U_s|)$.

\begin{lemma} [Cluster Sparsification] \label{lem:sparsify by MM}
    Let $G'_s$ be given as an input, and denote $Z_s := N_{G'_s}(s) \cap N_{G'_s}(t)$ as the common neighbors between $s$ and $t$ in $G'_s$. There is an $\tilde O(vol_{G}(U_s))$-time algorithm that returns a vertex set $D \subseteq N_G(U_s) - Z_s$ satisfying the following property:
    Let $G''_s$ be the graph $G'_s$ after removing $Z_s \cup (N_G(U_s) - D)$.
    \begin{itemize} [noitemsep]
        \item $\kappa_{G'_s}(s,t) = \kappa_{G''_s}(s,t) + |Z_s|$,
        \item If $W$ is an $(s,t)$-min-separator in $G''_s$,  then $W \cup Z_s$ is an $(s,t)$-min-separator in $G'_s$, and
        \item  $|V(G''_s)| = O(|U_s|\log n)$.
    \end{itemize}

    In the  PRAM model, $G''_s$ can be constructed in $\tilde O(vol_{G}(U_s))$-work and $\tilde O(1)$-depth.

\end{lemma}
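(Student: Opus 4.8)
The plan is to reduce to the already-established singleton cluster boundary sparsification (\cref{lem:sparsify by mm generic}) after first peeling off the common neighbours of $s$ and $t$ via \cref{lem:remove z}, and then to implement the required maximum-matching step in parallel using the maximal-matching-based reduction of \cref{lem:repeat mm}. First I would recall the structure of $G'_s$: its vertex set is $U_s \cup N_G(U_s) \cup \{t\}$, where $N_G(U_s)$ is an independent set each of whose vertices is adjacent only to $t$ and to $U_s$, and there is no $(s,t)$-edge (since $s\in U_s$ while $t$ is adjacent only to $N_G(U_s)$, which is disjoint from $U_s$); in particular $G'_s$ has $O(vol_G(U_s))$ edges. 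Set $Z_s := N_{G'_s}(s)\cap N_{G'_s}(t)$ and $H := G'_s - Z_s$. By \cref{lem:remove z}, $\kappa_{G'_s}(s,t) = \kappa_H(s,t) + |Z_s|$, and any $(s,t)$-min-separator $W$ in $H$ yields an $(s,t)$-min-separator $W\cup Z_s$ in $G'_s$. Hence it suffices to sparsify $H$ while preserving $\kappa_H(s,t)$ and its min-separators, return that graph as $G''_s$ and $D := (N_G(U_s)-Z_s)\cap V(G''_s)$; composing the two reductions then gives the first two bullets.

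Next I would observe that $H$ satisfies the hypothesis $N_H[s]\cap N_H[t]=\emptyset$ of \cref{lem:sparsify by mm generic}, with $B := N_H(t) = N_G(U_s)-Z_s$ and $A := V(H)-N_H[t] = U_s$. If $|B|\le |U_s|$, simply take $G''_s := H$ (i.e.\ $D := B$); then $|V(G''_s)|\le 2|U_s|+1$ and there is nothing more to prove. Otherwise $|A|=|U_s|<|B|$, so I would invoke \cref{lem:repeat mm} on the bipartite graph $(A,B,E_H(A,B))$ to obtain, in $\tilde O(vol_G(U_s))$ work and $O(\log n)$ depth, a set $D\subseteq B$ of size $O(|U_s|\log n)$ containing the $B$-side endpoints $M_B$ of some maximum matching $M$ of $(A,B,E_H(A,B))$. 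Define $G''_s := H - (B-D) = G'_s - \bigl(Z_s \cup (N_G(U_s)-D)\bigr)$.

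To argue correctness of this step, applying \cref{lem:sparsify by mm generic} to $H$ with the matching $M$ gives $\kappa_H(s,t) = \kappa_{H-(B-M_B)}(s,t)$ and that every min-separator of $H-(B-M_B)$ is a min-separator of $H$. Since $H-(B-M_B) \subseteq G''_s \subseteq H$ are induced subgraphs on nested vertex sets all containing $s$ and $t$, monotonicity of $\kappa(\cdot,\cdot)$ forces $\kappa_{G''_s}(s,t)=\kappa_H(s,t)$; re-running the path-rerouting argument in the proof of \cref{lem:sparsify by mm generic} verbatim with $D$ in place of $M_B$ (legitimate because $M_B\subseteq D$, so the matching $M''$ produced there still satisfies $M''_B\subseteq D$) shows every min-separator of $G''_s$ is a min-separator of $H$. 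Combined with $|V(G''_s)| = |U_s| + |D| + 1 = O(|U_s|\log n)$, this yields all three bullets. For the parallel bound: computing $Z_s$ and forming $H$ costs $\tilde O(vol_G(U_s))$ work and $\tilde O(1)$ depth (intersecting two adjacency lists), and the only nontrivial ingredient, \cref{lem:repeat mm}, already runs in $\tilde O(m)$ work and $O(\log n)$ depth on a graph with $m = O(vol_G(U_s))$ edges; building $G''_s$ from $D$ is then immediate.

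I expect the main obstacle to be the bookkeeping in the third paragraph: verifying that keeping the superset $D\supseteq M_B$ rather than exactly $M_B$ still lets the matching-rerouting proof of \cref{lem:sparsify by mm generic} go through, and — more delicately — that \emph{min-separators}, not just the value $\kappa_H(s,t)$, transfer back through both the sandwiching $H-(B-M_B)\subseteq G''_s\subseteq H$ and the $Z_s$-removal of \cref{lem:remove z}. Everything else (the structural description of $G'_s$, the degree/volume accounting, and the trivial $|B|\le|U_s|$ case) is routine.
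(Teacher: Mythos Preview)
Your construction and overall argument match the paper's: peel off $Z_s$ via \cref{lem:remove z}, set $A=U_s$ and $B=N_G(U_s)-Z_s$, handle $|B|\le|A|$ trivially, otherwise apply \cref{lem:repeat mm} to obtain $D\supseteq M_B$ of size $O(|U_s|\log n)$, and cite \cref{lem:sparsify by mm generic} for cut preservation. The paper's proof is in fact terser than yours---it argues only the first bullet and the size bound by citing these three lemmas, without spelling out the sandwiching you describe.

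Your instinct that the second bullet (min-\emph{separator} transfer, not just value) is the delicate point is correct, but your proposed resolution does not go through. The path-rerouting argument of \cref{lem:sparsify by mm generic} exhibits $\kappa_H(s,t)$ disjoint $(s,t)$-paths in $G''_s$ and hence gives value equality; it does \emph{not} show that an arbitrary min-separator of $G''_s$ separates $s$ from $t$ in $H$. Concretely: take $U_s=\{s,a\}$ with edge $s\text{--}a$, $Z_s=\emptyset$, $N_G(U_s)=\{b_1,b_2\}$ both adjacent only to $a$, and $D=\{b_1\}$; then $\{b_1\}$ is a min-separator in $G''_s$, yet the path $s\text{--}a\text{--}b_2\text{--}t$ survives in $H$, so $\{b_1\}\cup Z_s$ is not a separator in $G'_s$. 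The paper's own proof does not address this either, so this is a shared gap rather than a defect of your plan relative to the paper. One clean fix is to take specifically the $L$-minimal min-$(s,t)$-cut $(L,W,R)$ in $G''_s$: any min-cut $(L^*,S^*,R^*)$ of $G'_s$ restricts to a min-cut of $G''_s$ with the same left side $L^*\supseteq L$, and a counting argument forces $S^*-Z_s\subseteq V(G''_s)$, so no removed boundary vertex can be adjacent to $L$, whence $N_{G'_s}(L)=W\cup Z_s$ is a genuine min-separator in $G'_s$.
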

\begin{proof}
    We are given the graph $G'_s$, and the goal is to compute a graph $G''_s$ with the properties described in the statement.  We describe the construction of $G''_s$ in two steps.
\begin{enumerate} [noitemsep]
    \item [(R1)] Let $Z := N_{G'_s}(s) \cap N_{G'_s}(t)$ from $G'_s$.
    \item [(R2)] Remove all vertices in $N_G(U_s) - (Z \cup D)$  where $D$ is computed from the following.
     \begin{enumerate}
         \item
         Define the bipartite graph $(A,B,E')$ where $A := U_s, B := N_G(U_s) - Z$ and $E' = E_G(A,B)$.
    \item   If $|A| \leq |B|$, apply \Cref{lem:repeat mm} on the bipartite graph $(A,B,E')$ to obtain the vertex set $D \subseteq B$.  Otherwise, $D = N_G(U_s) - Z$.
     \end{enumerate}
\end{enumerate}

\paragraph{Analysis}
In terms of work, it takes $O(m)$ work to construct the bipartite graph instance $(A,B,E')$. In step $(R2)$, we compute the set $D$ in $\tilde O(m)$ work and $O(\log n)$ depth~\Cref{lem:sparsify by mm generic}. All these steps can be implemented in $\tilde O(1)$ depth.
We now argue the size. In step (R2), if $|A| > |B|$, then we are done. Otherwise, the set $D$ has size $O(|A|\log n)$ by \Cref{lem:repeat mm}. Therefore, $|V(G''_s)| \leq |U_s| + 2|D| = O(|U_s|\log n)$.  For correctness, the fact that $\kappa_{G'_s}(s,t) = \kappa_{G''_s}(s,t) + |Z|$ follows from \Cref{lem:remove z,lem:sparsify by mm generic,lem:repeat mm}.
\end{proof}
Therefore, \Cref{lem:refined isolating cut lemma 2} can be shown as follows. We apply the same algorithm in the first two steps, but we slightly modify Step 3 as follows. In Step 3, for each $s\in T$,  we replace the graph $G'_s$ with $G''_s$ produced by \Cref{lem:refined isolating cut lemma 2}, and return $Z_s \cup C'_s$ where $Z_s$ is the common neighbors between $s$ and $t$ as stated in \Cref{lem:sparsify by MM}, and $C'_s$ is a minimum $(s,t)$-separator $C'_s$ on $G''_s$. The total number of edges over all $s\in T$ is $O(m)$ as before because we only remove edges from $G'_s$. Now, the total number of vertices is \[\sum_{s \in T}|V(G''_s)| \leq O(\sum_{s\in T} |U_s|\log n) \leq \tilde O(n).\]
It is easy to see that we can implement in the PRAM model in total $\tilde O(m)$ work and $\tilde O(1)$ depth outside of the calls to s-t vertex mincut. Note that $(A_i,B_i)$-vertex mincut can be reduced to s-t vertex mincut by contracting $A_i$ into $s$ and $B_i$ into $t$.

\paragraph{Implementation in distributed \congest setting.} The algorithm is similar to PRAM except that contracting a vertex set in Step 3 is a global operation. To deal with this issue,
we modify $G'_s$ into a \textit{cut-equivalent} graph by adding new terminal sets along $N_G(U_s)$. For any vertex sets $A,B \subseteq V$, $\kappa_G(A,B)$ is the size of an $(A,B)$-vertex mincut.
\begin{lemma} \label{lem:cut-equivalent terminal graph}
Given $G'_s$, define $H_s$ as follows. First, remove $t$ from $G'_s$. Then, for each vertex $u \in N(U_s)$, add a new vertex $y$ along with an edge $(u,y)$. Denote $Y$ as the set of new vertices. Then, every $(s,t)$-cut in $G'_s$ is an $(s,Y)$-cut in $H_s$ and vice ver sa. In particular, $\kappa_{G'_s}(s,t) = \kappa_{H_s}(s,Y)$.
\end{lemma}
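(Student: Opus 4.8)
The plan is to prove the equivalence by a direct reachability argument, exploiting the fact that $H_s$ is obtained from $G'_s$ by replacing the single high-degree terminal $t$ with a ``bouquet'' of degree-one terminals $Y=\{y_u : u\in N_G(U_s)\}$, one hanging off each former neighbour of $t$. The only structural fact I will need is that $H_s - Y$ and $G'_s - t$ are literally the same graph: both equal the induced subgraph of $G'_s$ on $U_s\cup N_G(U_s)$, since $G'_s$ is built from $G[N_G[U_s]]$ by making $N_G(U_s)$ independent and attaching $t$ to all of $N_G(U_s)$, and $H_s$ then deletes $t$ and attaches the pendants $Y$. I will work at the level of separators (the middle part $S$ of a cut $(L,S,R)$), which is harmless because the $\kappa$ values are defined through separators and the side-sets $L,R$ translate in the obvious way. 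Note that any $(s,t)$-separator in $G'_s$ is automatically contained in $U_s\cup N_G(U_s)$ (it avoids $s$ and $t$), and any $(s,Y)$-separator in $H_s$ is likewise contained in $U_s\cup N_G(U_s)$ (it avoids $s$, and being disjoint from the $R$-side it avoids all of $Y$); so both families of separators range over the same ground set, and it suffices to show they coincide.

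For the forward direction I would take an $(s,t)$-separator $S\subseteq U_s\cup N_G(U_s)$ in $G'_s$ and suppose, for contradiction, that some $y_u\in Y$ is reachable from $s$ in $H_s-S$. Since $u$ is the unique neighbour of $y_u$, this forces $u\notin S$ together with an $s$–$u$ path in $H_s-S$; the portion of this path up to $u$ avoids the pendant set $Y$, hence lives in $(H_s-Y)-S=(G'_s-t)-S$, so $s$ reaches $u$ in $G'_s-S$. But $(u,t)\in E(G'_s)$ with $u,t\notin S$, so $s$ reaches $t$ in $G'_s-S$, contradicting that $S$ is an $(s,t)$-separator. The reverse direction is symmetric: given an $(s,Y)$-separator $S\subseteq U_s\cup N_G(U_s)$ in $H_s$, suppose $s$ reaches $t$ in $G'_s-S$, pick such a path and let $u\in N_G(U_s)$ be the vertex it visits immediately before $t$; the prefix up to $u$ avoids $t$, hence lies in $(G'_s-t)-S=(H_s-Y)-S$, so $s$ reaches $u$ in $H_s-S$, and then the edge $(u,y_u)$ shows $y_u\in Y$ is reachable from $s$ — a contradiction. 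Combining the two inclusions, the $(s,t)$-separators of $G'_s$ are exactly the $(s,Y)$-separators of $H_s$, and taking minimum size gives $\kappa_{G'_s}(s,t)=\kappa_{H_s}(s,Y)$; translating the sides $L,R$ back and forth gives the stated cut-by-cut correspondence.

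There is no genuine obstacle here — the argument is a short case analysis on reachability — but I would flag two minor points of care. First, the degenerate case $N_G(U_s)=\emptyset$, where $t$ is isolated in $G'_s$ and $Y=\emptyset$: this is handled trivially (then $\kappa_{G'_s}(s,t)=0$ and the terminal $s$ is vacuously isolating), and it does not arise for clusters of interest in the isolating-cuts construction; if desired one simply discards such $s$. Second, one should make explicit that every separator considered is forced into $U_s\cup N_G(U_s)$ before comparing the two graphs, since $G'_s$ and $H_s$ have different vertex sets; this is exactly what makes the correspondence ``local'' and is what the surrounding proof relies on when simulating the virtual terminals $Y$ in the distributed model.
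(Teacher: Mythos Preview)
Your proof is correct and takes essentially the same approach as the paper: both exploit that $H_s - Y = G'_s - t$ and that each $y_u$ is a pendant attached to $u \in N_{G'_s}(t)$. The paper phrases the forward direction at the level of the partition (since $N_{G'_s}(t) \subseteq S \cup R$, the pendants $Y$ have no edges to $L$, so $(L,S,(R\cup Y)-\{t\})$ is a valid cut in $H_s$) and dispatches the reverse direction in one line by contracting $Y$ to $t$; your reachability argument unpacks the same observation path-by-path, which is slightly longer but equally valid.
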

\begin{proof}
If $S$ is an $(s,t)$-separator in $G'_s$, then denote a vertex cut $(L,S,R)$ in $G'_s$ where $s \in L, t \in R$. Since $N_{G'_s}(t) \subseteq S \cup R$,   every vertex in $Y$ in $H_s$ does not have an edge to $L$. Therefore, $(L, S, (R \cup Y) - \{t\})$ is a vertex cut in $H_s$. If $S$ is an $(s,Y)$-separator in $H_s$, then the same separator must be an $(s,t)$-separator in $G'_s$ by contracting all $Y$ into $t$.
\end{proof}

Note that the set $Y$ can be simulated inside the incident vertices. Now, we can state the cluster sparsification lemma for the distributed version as follows. The proof is similar to \Cref{lem:sparsify by MM}.

\begin{lemma} \label{lem:sparsify by MM congest}
    Given $H_s$,  denote $Z_s := N_{H_s}(s) \cap N_{H_s}(Y)$ as the common neighbors between $s$ and $Y$. There is an $\tilde O(1)$-round algorithm that returns a vertex set $D \subseteq N_G(U_s)-Z_s$ satisfying the following property. Define $H'_s$ as the graph $H_s$ after removing $Z_s \cup (N_{G}(U_s) - D)$ and isolated vertices in $Y$, and denote $Y'$ to be the remaining nodes in $Y$. Then, $\kappa_{H_s}(s,Y) = \kappa_{H'_s}(s,Y') + |Z|$. If $W$ is an $(s,Y')$-mincut in $H'_s$,  then $W \cup Z$ is an $(s,Y)$-mincut in $H_s$.  Furthermore, $|V(H'_s)| = O(|U_s|)$.

\end{lemma}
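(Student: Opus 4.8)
The plan is to follow the proof of \Cref{lem:sparsify by MM} almost verbatim, making only the two changes needed for a \congest{} implementation: phrase everything on the cut-equivalent terminal graph $H_s$ of \Cref{lem:cut-equivalent terminal graph} so that the global sink $t$ is never contracted, and take every matching computation to be a \emph{maximal} bipartite matching (a local primitive) rather than a maximum one. Concretely, the algorithm I would run has two steps. (R1): identify $Z_s = N_{H_s}(s)\cap N_{H_s}(Y)$; since every $y\in Y$ is a pendant on some $u\in N_G(U_s)$ we have $N_{H_s}(Y)=N_G(U_s)$, so this is just $Z_s = N_G(s)\cap N_G(U_s)$, found by having $s$ announce itself to its neighbours. (R2): form the bipartite graph with parts $A:=U_s$ and $B:=N_G(U_s)-Z_s$ and edge set $E_G(A,B)$; if $|A|>|B|$ set $D:=B$, and otherwise compute $D\subseteq B$ with $|D|=O(|A|\log n)$ and $M_B\subseteq D$ for some maximum matching $M$ of $(A,B,E_G(A,B))$ by running the procedure behind \Cref{lem:repeat mm}, i.e.\ $O(\log n)$ iterations of ``take a maximal matching between $A$ and $B-D$ and add its $B$-endpoints to $D$''. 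Finally delete $Z_s\cup(N_G(U_s)-D)$ together with the $Y$-vertices this isolates, and call the resulting graph $H'_s$, with $Y'$ its surviving terminals.

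For correctness I would first pass, via \Cref{lem:cut-equivalent terminal graph}, to the graph $\widehat H_s$ in which $Y$ is contracted to a single sink $t$; there $N_{\widehat H_s}(t)=N_G(U_s)$, so $Z_s=N_{\widehat H_s}(s)\cap N_{\widehat H_s}(t)$, and \Cref{lem:remove z} gives that every $(s,t)$-separator of $\widehat H_s$ contains $Z_s$, that $\kappa_{\widehat H_s}(s,t)=\kappa_{\widehat H_s-Z_s}(s,t)+|Z_s|$, and that $Z_s$ together with any $(s,t)$-mincut of $\widehat H_s-Z_s$ is an $(s,t)$-mincut of $\widehat H_s$. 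After deleting $Z_s$ we are exactly in the situation of \Cref{lem:sparsify by mm generic}: the hypothesis $N[s]\cap N[t]=\emptyset$ holds, the closed neighbourhood of $t$ is $\{t\}\cup(N_G(U_s)-Z_s)$ so its complement is $U_s$, hence the relevant bipartite graph is precisely $(A,B,E_G(A,B))$, and keeping only $M_B$ preserves both $\kappa(s,t)$ and all its minimizers. Since $M_B\subseteq D\subseteq B$, the graph obtained by keeping $D$ is an induced subgraph of $\widehat H_s-Z_s$ sandwiched between the ``$M_B$-graph'' and $\widehat H_s-Z_s$; monotonicity of $\kappa(s,t)$ under vertex deletion (already used in the proof of \Cref{lem:sparsify by mm generic}) then forces all three graphs to have the same $(s,t)$-connectivity and makes any min separator of the $D$-graph a min separator of $\widehat H_s-Z_s$. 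Translating back with \Cref{lem:cut-equivalent terminal graph} --- and noting that a $Y$-vertex isolated by our deletions is unreachable from $s$ and so imposes no separation constraint --- yields $\kappa_{H_s}(s,Y)=\kappa_{H'_s}(s,Y')+|Z_s|$ and the claimed lifting of an $(s,Y')$-mincut $W$ of $H'_s$ to the $(s,Y)$-mincut $W\cup Z_s$ of $H_s$.

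For the size bound, the surviving boundary is exactly $D$ and each surviving boundary vertex keeps one pendant, so $|V(H'_s)|=|U_s|+2|D|$; this is $<3|U_s|$ when $|U_s|>|N_G(U_s)-Z_s|$, and otherwise $|D|=O(|U_s|\log n)$ by \Cref{lem:repeat mm}, so $|V(H'_s)|=\tilde O(|U_s|)$, matching \Cref{lem:sparsify by MM}. For the round bound, (R1) is $O(1)$ rounds (every boundary vertex already knows it lies in $N_G(U_s)$ from the connected-component computation in the isolating-cuts algorithm, plus one announcement by $s$); (R2) is $O(\log n)$ iterations, each a single maximal bipartite matching on the subgraph induced by $U_s\cup(N_G(U_s)-Z_s)$, computable in $\tilde O(1)$ \congest{} rounds independently of the diameter of $H_s$ because it is maximal rather than maximum; and assembling $H'_s$ is one further round, so the whole algorithm is $\tilde O(1)$-round.

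The main obstacle, and the only genuinely new point over \Cref{lem:sparsify by MM}, is obtaining the $\tilde O(1)$-round \congest{} guarantee despite $H_s$ (and $G'_s$) possibly having large diameter: this is precisely why I route through the pendant-terminal graph $H_s$ of \Cref{lem:cut-equivalent terminal graph} instead of contracting $Y$ to a global sink, and why I rely on \emph{maximal} (hence local, diameter-free) bipartite matching inside \Cref{lem:repeat mm}'s routine. A secondary point to verify is congestion across clusters: a boundary vertex of $U_s$ may also be a boundary vertex of other $U_{s'}$'s, but since the clusters $\{U_{s'}\}$ are vertex-disjoint and each cluster's sparsification uses only $\tilde O(1)$ rounds, pipelining the $O(\log n)$ matching iterations keeps the cost $\tilde O(1)$ rounds per cluster and $\tilde O(R(m,n,D))$ overall when substituted into \Cref{lem:refined isolating cut lemma 2}.
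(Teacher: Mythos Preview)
Your proposal is correct and takes essentially the same approach as the paper, which simply states that the proof is similar to \Cref{lem:sparsify by MM}; you have faithfully reconstructed that argument with the two modifications the \congest{} setting forces (working on the pendant-terminal graph $H_s$ of \Cref{lem:cut-equivalent terminal graph} instead of contracting to a sink, and using maximal matchings inside the \Cref{lem:repeat mm} routine so that everything is local). Your observation that the achieved bound is $|V(H'_s)|=\tilde O(|U_s|)$ rather than the stated $O(|U_s|)$ matches the $O(|U_s|\log n)$ bound of \Cref{lem:sparsify by MM} and is almost certainly what the paper intends.
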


Finally, observe that each graph $H'_s$ alone can have a larger diameter than the original graph. To handle this case, instead of running each graph separately, we will combine all $H'_s$ instances and compute $(s,Y)$-mincut over all $s \in T$ and over all the union of $H'_s$. Running on all instances the combined graph will preserve the diameter. We are ready to state the distributed algorithm in \congest model.

\paragraph{Algorithm (\congest).} The inputs consist of graph $G = (V,E)$, an independent set $T \subseteq V$ of size at least $2$.
\begin{enumerate} [noitemsep]
    \item Encode each $t \in T$ as a binary string of length $\ceil{\log_2|T|}$. For each $i \leq  \ceil{\log_2|T|}$, define       $A_i :=  \{ t \in T \colon i^{\text{th}} \text{bit of }  $t $ = 0 \}$ and   $B_i := T - A_i$.
    Compute $F_i := (A_i,B_i)$-vertex mincut.
    \item For each $s \in T$, compute a connected component containing $s$, denoted by $U_s$, in $G - \bigcup_iF_i$.
    \item For each $s \in T$, construct $H_s$ and then compute $H'_s$ using \Cref{lem:sparsify by MM congest}. Let $Z_s$ be the common neighbors in the lemma.
    \item Compute an $(s,Y')$-min-separator $W$ in $H'_s$ for all $s \in T$ on the union of $H'_s$ overall $s \in T$.
    \item  For each $s \in T$, by \Cref{lem:sparsify by MM congest}, $Z_s \cup W_s$ is an $(s,Y)$-mincut in $H_s$.
\end{enumerate}

\paragraph{Running Time.}  The first two steps can be implemented in $O( R(m,n,D) \cdot \log |T|)$ rounds (\Cref{lem:ST vc problem in matching}). Step 3 can be done in $\tilde O(1)$ rounds by \Cref{lem:sparsify by MM congest}. For each $s \in T$, let $m_s$ and $n_s$ be the number of edges and number of vertices of $H'_s$. We can run Step 4 all in parallel on the union of $H'_s$ as a single graph, which can be done in $O(R(\sum_{s \in T} m_s, \sum_{s \in T} n_s, D)) = \tilde O(R(m,n,D))$ rounds.

\paragraph{Correctness.} Steps 1 and 2 are identical to the sequential algorithm. We now focus on Step 3. Fix a vertex $s \in T$. We prove that the algorithm returns an $(s,t)$-min-separator in $G'_s$. By \Cref{lem:cut-equivalent terminal graph}, an $(s,Y)$-min-separator in $H_s$ is also a $(s,t)$-min-separator in $G'_s$.  By \Cref{lem:sparsify by MM congest}, an $(s,Y')$-min-separator in $H'_s$ is also an $(s,Y)$-min-separator in $H_s$, and we are done.

\section{Parallel and Distributed Algorithms for \mnnccn{} (Proof of \cref{lem:mnnccn})}\label{sec:mnnccn}

\subsection{\pram{} Algorithm}
This section proves part 1 of \cref{lem:mnnccn}.   We are given the \mnnccn{} instance $(G,C,\ell)$ (\Cref{def:minneibornearclique}).    The PRAM algorithm goes as follows. We first sample $\hO{|C|/\ell}$ random vertices in $C$, denoted by $X$. Given $X, C$ and $\ell$, we apply the batched version of the cluster boundary sparsification lemma (\Cref{lem:vertexsparsification}) to replace the boundary $N_G(C)$ to $C'$ where $|C'| = \tO{|C'|}$ without changing the minimizer. For each $x \in X $, we compute $(x,t)$-min separator in the cluster after sparsification using similar reduction rules in~\cite{LiNPSY21}.

\paragraph{Algorithm.} The inputs consist of $(G,C,\ell).$
\begin{enumerate} [noitemsep]
    \item As preprocessing, we first apply \cref{thm:sketching} to compute $sk_{z}(N(u))$ for every $u\in C$ where $z=2^{\log^{0.8}n}\ell = \hO{\ell}$.
    \item Let $X$ be a set of  $\hO{|C|/\ell}$ random vertices sampled in $C$.
    \item Given $G,C,\ell, X$, apply the batched version of the cluster boundary sparsification lemma (\Cref{lem:vertexsparsification}) to obtain $C' \subseteq N_G(C)$ and an integer $K$.
    \item For each $x \in X$, construct the following graph $H_x = (V_x,E_x)$ where \begin{align*}
    V_x &:= C \sqcup C'_x \sqcup \{t\}, \\
    E_x &:= \bigcup_{v \in C}E_{G}(v, V_x - (N_G(x) \cap C)) \cup \{(v,t) \colon v \in C'_x\} \mbox{, and }\\
    C'_x &:= C' - Z_x \text{ where } Z_x := N_G(x) \cap C'.
\end{align*}
\item Let $x^* := \arg\min_{x \in X}\bigg \{ \kappa_{H_x}(x,t) + |Z_x|+K\bigg\}$
\item Let $(L,S,R)$ be a vertex mincut in $H_x$ such that $x^* \in L$ and $t \in R$.
\item {\bf Return} $L$ and $|S|$.
\end{enumerate}

Before we present the analysis, we establish the following sparsification lemma at Step 4.

\begin{lemma}\label{lem:sparsify G in pram}
Let $C' \subseteq N_G(C)$ be the sparsified boundary of $C$ defined in Step 3. For each $x \in X$ where $X$ is defined in Step 2, the graph $H_x$ defined in Step 4 can be constructed in $\hO{|C|\cdot \ell}$ work and polylog depth and $H_x$ satisfies the following properties:
\begin{enumerate} [nosep]
    \item For every vertex $v \in C, \textdeg_{H_x}(v,V_x - (N_G(x) \cap C)) \leq O(2^{\log^{0.7}n}\cdot \ell)$
    \item $\kappa_{H_x}(x,t) +|Z_x| \geq \min_{L' \subseteq C: L' \neq \emptyset}|N_G(L')|$.
    \item If $x \in \LL$, then $\kappa_{H_x}(x,t) + |Z_x| = \min_{L' \subseteq C: L' \neq \emptyset}|N_G(L')|$.
\end{enumerate}
\end{lemma}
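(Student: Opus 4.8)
The plan is to verify the three listed properties of $H_x$ and the claimed resource bounds separately, drawing on the two structural guarantees of the \mnnccn{} instance from \cref{def:minneibornearclique}---the common-neighborhood bound $|N_G(u)\triangle N_G(v)|\le 2^{\log^{0.7}n}\ell$ and the near-clique bound $|C-N_G(u)|\le 2^{\log^{0.8}n}\ell$ for $u,v\in C$---together with the cluster boundary sparsification lemma \cref{lem:vertexsparsification} (which produced $C'$ and the additive constant $K$), the common-neighbor removal fact \cref{lem:remove z}, and the linear sketches $sk_z(N(\cdot))$ of Step~1. Throughout I use that $C'_x=C'-Z_x=C'-N_G(x)$.

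Property~1 is immediate: every edge of $H_x$ incident to $v\in C$ and reaching $V_x-(N_G(x)\cap C)$ is an edge of $G$ whose other endpoint lies in $(C\setminus N_G(x))\cup C'_x$, a set disjoint from $N_G(x)$, so the number of such edges is at most $|N_G(v)\setminus N_G(x)|\le|N_G(v)\triangle N_G(x)|\le 2^{\log^{0.7}n}\ell$. For the size of $H_x$: the edges inside $C\cap N_G(x)$ are all deleted, there are only $|C\setminus N_G(x)|\le 2^{\log^{0.8}n}\ell=\hO{\ell}$ vertices in $C\setminus N_G(x)$ (each with at most $\tO{|C|}$ incident $H_x$-edges), and $t$ contributes $|C'_x|\le|C'|=\tO{|C|}$ edges; hence $H_x$ has $\tO{|C|}$ vertices and $\hO{|C|\ell}$ edges. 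To build it within $\hO{|C|\ell}$ work and polylogarithmic depth I would never list $N_G(x)$ in full: since $|N_G(v)\triangle N_G(x)|\le z:=2^{\log^{0.8}n}\ell$, linearity of sketching recovers $D_v:=N_G(v)\triangle N_G(x)$ from $sk_z(N(v))-sk_z(N(x))$ in $\hO{\ell}$ time per $v$ by \cref{thm:sketching}, and then the set identities $N_G(v)\cap C'_x=(D_v\cap C')\setminus Z_x$ and $N_G(v)\cap C=(N_G(x)\cap C)\triangle(D_v\cap C)$ enumerate the $H_x$-edges at each $v$; the $\hO{\ell}$ exceptional vertices of $C\setminus N_G(x)$ are handled directly, and $N_G(x)\cap C$ and $Z_x=N_G(x)\cap C'$ are computed once. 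All of this sits on top of the one-time $\tO{m}$ preprocessing of Steps~1 and~3.

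The core of the lemma is the identity $\kappa_{H_x}(x,t)+|Z_x|+K=\min_{L'\subseteq C:\, x\in L'}|N_G(L')|$ (the additive $K$ from \cref{lem:vertexsparsification} is exactly the constant threaded through Step~5). By \cref{lem:vertexsparsification} applied with $G'=G[C\cup C']$, this reduces to $\kappa_{H_x}(x,t)+|Z_x|=\min_{L'\subseteq C:\, x\in L'}|N_{G'}(L')|$. Observe first that $H_x$ is precisely the graph obtained from $G'$ by adding a sink $t$ adjacent to all of $C'$, deleting the common neighbors $Z_x=N_{G'}(x)\cap N_{G'}(t)$ of $x$ and $t$, and then deleting all edges internal to $N_G(x)\cap C$ and internal to $C'$, where the last two deletions do not affect $N_{G'}(L')$ for any $L'\subseteq C$. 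I would then prove two inequalities directly. For the first, take a minimizer $L'\ni x$; since $Z_x=N_{G'}(x)\cap C'\subseteq N_{G'}(L')$, the set $N_{G'}(L')\setminus Z_x$ has size $|N_{G'}(L')|-|Z_x|$, lies in $V_x\setminus\{x,t\}$, and is an $(x,t)$-separator in $H_x$: a surviving $(x,t)$-path would, at its first vertex $w$ off $L'$, either reach $t$ (impossible, as $t$'s only $H_x$-neighbors lie in $C'_x\subseteq C'$ while $L'\subseteq C$) or have $w$ a $G'$-neighbor of $L'$ with $w\notin Z_x$, i.e., $w\in N_{G'}(L')\setminus Z_x$, a contradiction. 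For the second, take an $(x,t)$-mincut $(L,S,R)$ in $H_x$, set $L'=L\cap C$ (nonempty, as $x\in L'$), and check $N_{G'}(L')\subseteq S\cup Z_x$ by cases on a $G'$-neighbor $w$ of $L'$: if $w\in C$ the relevant edge survives in $H_x$---the only case where the direct $L'$-to-$w$ edge was deleted has $w\in N_G(x)\cap C$, but then the edge $(x,w)$ is present in $H_x$ because $x\notin N_G(x)$---forcing $w\in S$; if $w\in C'_x$ then $w$ is $H_x$-adjacent to $t\in R$, again forcing $w\in S$; and if $w\in Z_x$ we are done. As $S\cap Z_x=\emptyset$ (the vertices of $Z_x$ are absent from $H_x$), this yields $|N_{G'}(L')|\le|S|+|Z_x|=\kappa_{H_x}(x,t)+|Z_x|$.

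From the identity, Property~2 follows since $\min_{L'\subseteq C:\, x\in L'}|N_G(L')|\ge\min_{\emptyset\ne L'\subseteq C}|N_G(L')|$, and Property~3 follows because if $x\in\LL$ then $\LL$ itself is a feasible set with $x\in\LL$ attaining $\min_{\emptyset\ne L'\subseteq C}|N_G(L')|$, so the two minima coincide. I expect the main obstacle to be the second (``$\ge$'') direction of the identity: carefully tracking which edges of the idealized graph---$G'$ with a sink joined to all of $C'$---survive the two sparsifying deletions used in $H_x$ and checking that none of those deletions can open a spuriously small $(x,t)$-cut, the one delicate point being that the edge from $x$ to each $w\in N_G(x)\cap C$ is retained in $H_x$ (precisely because $x\notin N_G(x)$), which is what keeps every such $w$ inside the separator.
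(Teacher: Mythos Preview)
Your proposal is correct and follows the same approach as the paper, just with far more detail: the paper's own proof dispatches Properties~2 and~3 in a single sentence (``the correctness proof is similar to \cite{LiNPSY21}''), whereas you give a self-contained argument for the identity $\kappa_{H_x}(x,t)+|Z_x|+K=\min_{x\in L'\subseteq C}|N_G(L')|$ by reducing to $G'=G[C\cup C']$ via \cref{lem:vertexsparsification} and then checking both directions directly. Your observation that the additive~$K$ should appear in the statement (matching Step~5) is also a valid reading. One small cosmetic point: in your edge-count for $H_x$ you itemize edges incident to $C\setminus N_G(x)$ and edges at~$t$, but not the edges from $C\cap N_G(x)$ to $C'_x$; these are at most $|C|\cdot\hO{\ell}$ by Property~1, so your $\hO{|C|\ell}$ bound still holds, but it is worth stating explicitly.
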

\begin{proof}
 We focus on the fast construction of $H_x$. By the preprocessing step, we have an access to the oracle that can list all the elements in the symmetric difference  $N(s)\triangle N(s')$ for any pair $(s,s')$ up to $\hO{\ell}$ elements in $\hO{\ell}$ time. By the property of $C'$, $|N_{H_x}(C)| \leq \tilde O(|C|)$. Hence, the set $Z_x$ can be computed quickly. For each vertex $v \in C - \{x\}$, we list all the neighbors of $v$ that are not the neighbors of $x$ in $\hO{\ell}$ work using the oracle, and we can do it all in parallel.  Given the batch version of the cluster boundary sparsification lemma, the correctness proof is similar to \cite{LiNPSY21}.
\end{proof}

We are ready to analyze the algorithm.
\paragraph{Analysis.} Since we sample $\hO{|C|/\ell}$ many vertices $X$, there is $x \in X$ such that $x \in \LL$ with high probability. Therefore, the correctness follows   from \Cref{lem:sparsify G in pram}.  For the running time, the algorithm is easily parallelizable. We bound the total work needed to complete this operation. The sketching can be done in almost linear work~\Cref{thm:sketching}. By \Cref{lem:sparsify G in pram}, the total work is
\[\sum_{x \in X} \hO {W(|C| \ell, |C|)} \leq \hO{W(|C|^2,|C|)} \leq \hO{W(m,n)}  .\]

\subsection{\congest{} Algorithm}
In this section, we prove part 2 of \cref{lem:mnnccn}.  We are given the \mnnccn{} instance $(G,C,\ell)$ (\Cref{def:minneibornearclique}).

\paragraph{High-level algorithm.}   At high level, we sample $\tilde \Theta (|C|/ \ell)$ many vertices in $C$. With high probability, there is a node $x \in L \subseteq C$. Fix $x$. We construct a graph $H_x$ such that the minimizer of $\min_{L' \subseteq C: L' \neq \emptyset}|N_G(L')|$ can computed by calling maximum bipartite matching in $H_x$.
However, this may cause too much congestion when we run all instances of $H_x$ for all $x \in X$. To reduce congestion, we \emph{map} its communication of $\mathcal{A}$ on $H_x$ to a random graph $H'_x$ so that the congestion due to running $\mathcal{A}$ is \textit{load-balanced} in that with high probability, the maximum overlap (in terms of the number of instances on a pair $(u,v) \in C \times C$) of $H'_x$ instances overall $x \in X$ is at most $\widehat O(1)$.

\paragraph{Step 1. Graph Sparsification.} Given $x \in C$, we construct a graph $H_x$ as in the following lemma.
\begin{lemma} \label{lem:sparsify G in congest}
Given $x \in C$, we construct graph $H_x = (V_x, E_x)$ in $O(1)$ rounds where
\begin{align*}
    V_x &:= C \sqcup C'_x \sqcup Y_x \mbox{, and }\\
    E_x &:= \bigcup_{v \in C}E_{G}(v, V_x - (N_G(x) \cap C)) \sqcup M_x,\\
    M_x &:=\text{the perfect matching between $C'_x$ and $Y_x$} \\
    C'_x &:= N_G(C) - Z_x \text{ where } Z_x := N_G(x) \cap N_G(C).
\end{align*}
The graph $H_x$ satisfies the following properties:
\begin{enumerate} [nosep]
    \item For every vertex $v \in C, \textdeg_{H_x}(v,V_x - (N_G(x) \cap C)) \leq O(2^{\log^{0.7}n}\cdot \ell)$
    \item $\kappa_{H_x}(x,Y_x) +|Z_x| \geq \min_{L' \subseteq C: L' \neq \emptyset}|N_G(L')|$.
    \item If $x \in \LL$, then $\kappa_{H_x}(x,Y_x) + |Z_x| = \min_{L' \subseteq C: L' \neq \emptyset}|N_G(L')|$.
\end{enumerate}
\end{lemma}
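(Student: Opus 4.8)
The plan is to imitate the \pram{} construction of \cref{lem:sparsify G in pram}, with the one change that the distributed setting forces: instead of adding a single super-node $t$ adjacent to the whole boundary (creating such a node would amount to contracting $N_G(C)$, a global operation in \congest{}), we attach to every $u\in C'_x$ a private pendant $y_u\in Y_x$ joined to $u$ by the matching edge in $M_x$, and ask for $(x,Y_x)$-vertex connectivity; this is precisely the gadget of \cref{lem:cut-equivalent terminal graph}.

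\paragraph{Construction in $O(1)$ rounds.} First $x$ sends one bit along each incident edge ($1$ round), so every $w\in N_G(x)$ learns it belongs to $N_G(x)$; each such $w$ then forwards this bit to all of its neighbors ($1$ more round). After these two rounds every $w\in N_G(C)$ knows whether $w\in Z_x=N_G(x)\cap N_G(C)$ or $w\in C'_x$, and every $v\in C$ knows which incident edges lead into $N_G(x)\cap C$. Each vertex now locally retains only the edges appearing in $E_x$ --- discarding every edge incident to a vertex of $Z_x$, every edge with both endpoints in $N_G(x)\cap C$, and every edge with both endpoints in $N_G(C)$ (one checks that these are exactly the edges of $G[N_G[C]]$ that the definition of $E_x$ omits) --- and each $u\in C'_x$ locally simulates its pendant $y_u$ and the edge $(u,y_u)$. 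No further communication is needed.

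\paragraph{Property 1 (degree bound).} Fix $v\in C$. An edge of $H_x$ from $v$ to $V_x-(N_G(x)\cap C)$ is an original edge of $G$ from $v$ to either $C-N_G(x)$ or $C'_x=N_G(C)-N_G(x)$ (the pendants are adjacent only inside $C'_x$). In both cases the far endpoint lies in $N_G(v)-N_G(x)\subseteq N_G(v)\triangle N_G(x)$, and since $v,x\in C$ the common-neighborhood guarantee \eqref{eq:cn diff prop} bounds $|N_G(v)\triangle N_G(x)|\le 2^{\log^{0.7}n}\cdot\ell$, which gives the claimed $O(2^{\log^{0.7}n}\cdot\ell)$ bound.

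\paragraph{Properties 2 and 3 (cut value).} Both follow from the single identity
\[
\kappa_{H_x}(x,Y_x)+|Z_x|=\min_{L'\subseteq C:\,x\in L'}|N_G(L')|.
\]
Indeed, the right-hand side is a minimum over a subfamily of $\{L'\subseteq C:L'\neq\emptyset\}$, which gives Property 2; and when $x\in\LL$ the minimizer $\LL$ of $\min_{L'\subseteq C,\,L'\neq\emptyset}|N_G(L')|$ is itself admissible on the right, so the two minima coincide, giving Property 3. To prove the identity I would compose three reductions, each essentially from \cite{LiNPSY21}. \emph{(i) Edge sparsification.} For every $L'\subseteq C$ with $x\in L'$ we have $Z_x\subseteq N_G(x)-L'\subseteq N_G(L')$, and none of the deleted edges changes $N_G(L')$: a vertex $w\in N_G(x)\cap C$ is either in $L'$ (hence never in $N_G(L')$, regardless of its incident edges) or in $N_G(x)-L'\subseteq N_G(L')$ (again regardless of edges), while an edge with both endpoints in $N_G(C)$ is disjoint from $L'\subseteq C$ and so does not affect $N_G(L')$. \emph{(ii) Min-neighbor $\to$ vertex cut.} Let $\widehat G$ be the graph on $C\cup C'_x\cup\{t\}$ obtained from $G[N_G[C]]$ by deleting $Z_x$ and the above edges and adding $t$ joined to $C'_x$; the classical argument shows any minimum $(x,t)$-separator $W$ in $\widehat G$ has an $x$-side component contained in $C$, say $L'$, with $N_{\widehat G}(L')=N_G(L')-Z_x\subseteq W$, and conversely $N_G(L')-Z_x$ is an $(x,t)$-separator for every such $L'$, so $\kappa_{\widehat G}(x,t)=\min_{x\in L'\subseteq C}|N_G(L')|-|Z_x|$. \emph{(iii) Pendant gadget.} By the argument of \cref{lem:cut-equivalent terminal graph}, replacing the contracted node $t$ by the matched pendant set $Y_x$ is cut-preserving --- every $(x,t)$-separator of $\widehat G$ is an $(x,Y_x)$-separator of $H_x$ and vice versa, since no minimum separator lies inside $Y_x$ --- so $\kappa_{H_x}(x,Y_x)=\kappa_{\widehat G}(x,t)$, which rearranges to the identity.

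\paragraph{Main obstacle.} The round count and Property 1 are short. The work is concentrated in step (i)--(iii): verifying that the edge sparsification around $N_G(x)$, the min-neighbor--to--$(x,t)$-cut translation, and the pendant-gadget replacement of the contracted boundary node all compose without losing any cut of value below the optimum. I expect this to go through by the same case analysis as in \cite{LiNPSY21} together with \cref{lem:cut-equivalent terminal graph}; the one point worth double-checking is that the edge deletions touch only edges incident to $N_G(x)\cap C\subseteq C$, leaving the pendant part $Y_x$ and its matching intact, so that (iii) applies verbatim on top of (i)--(ii).
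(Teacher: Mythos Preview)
Your proposal is correct and follows essentially the same approach as the paper's proof. The paper likewise argues the $O(1)$-round construction by local broadcast of $x$'s identity, derives Property~1 directly from the common-neighborhood bound \eqref{eq:cn diff prop}, and obtains Properties~2--3 by composing exactly your three reductions: (a) the local-graph/min-neighbor correspondence (stated there as \Cref{lem:correct G_x} and \Cref{cor:xY separator congest}), (b) removal of $Z_x$ together with the edge deletions inside $N_G(x)$ and $N_G(C)$ (stated as \Cref{def:tilde H x} and the accompanying lemma, invoking \Cref{lem:remove z} and the reduction rules of \cite{LiNPSY21}), and (c) the pendant replacement of $t$ by $Y_x$ via \Cref{lem:cut-equivalent terminal graph}. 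Your inline organization of these steps is if anything more explicit than the paper's terse appeal to the lemma chain.
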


The proof is based on the reduction rules in \cite{LiNPSY21}. For completeness, we prove \Cref{lem:sparsify G in congest} in \Cref{sec:sparsfy in congest}.

\paragraph{Step 2. Mapping the communication.} We would like to run $\mathcal{A}$ on $H_x$. There are two issues. First, the graph $H_x$ is a subgraph of $G$ so the diameter can be large. Second,  the same edge in $H_x$ can be used for multiple instances of $x \in X$. To deal with the first issue, we reduce the diameter of $H_x$ by adding a star to $H_x$.
\begin{definition} [$H^*_x$]
We define $H^*_x = (V(H_x) \cup \{s^*\}, E(H_x) \cup E^*)$ where we add $(s^*,v)$ edge to $E^*$ for all $v \in V(H_x)$.
\end{definition}
Observe that the diameter of $H^*_x$ is $O(1)$, and adding a star to $H_x$ does not change the minimizer of $\min_{L' \subseteq C: L' \neq \emptyset}|N_G(L')|$. We assume $s^*$ exists for now;  we describe how to simulate the star node $s^*$ later, which can be done efficiently since $C$ is nearly a clique.

To deal with the second issue, we \textit{simulate an algorithm} $\mathcal{A}$ on $H^*_x$ by \textit{mapping the communication} of $\mathcal{A}$ to a random graph.

\begin{definition} [Mapping the communication of $\mathcal{A}$ on $H^*$] \label{def:mapping comm}
 Let $f_x: C \cup C'_x \cup \{s^*\} \rightarrow C$ be a function. Given $H^*_x = (C \cup C'_x \cup \{s^*\}, E_x \cup E^*)$, define the graph $H'_x := (C,E'_x)$ where $(u,v) \in E_x \cup E^*$  if and only if $(f_x(u),f_x(v))_{(u,v)} \in E'_x$ (with possibly parallel edges, and self-loop).  We say that the communication of $\mathcal{A}$ on $H^*_x$ \emph{is mapped} to $H'_x$ \emph{via} $f_x$  if every node $v \in C$ knows the neighbor sets of its preimage, i.e., $v$ knows $N_{H_x}(w)$ for all $w \in f^{-1}(v)$.
\end{definition}

If we can set up such a mapping of the communication, then the algorithm $\mathcal{A}$ on $H^*_x$ can be \textit{simulated} entirely on $H'_x$.

\begin{lemma} [Simulation] \label{lem:simulation alg}If the communication of $\mathcal{A}$ on $H^*_x$ is mapped to $H'_x$ via $f_x$ then there is an algorithm $\mathcal{A'}$ running on $H'_x$ such that every round the message exchange $(u,v)$ in $H^*_x$ by $\mathcal{A}$ is equivalent to exchanging the same message between $(f_x(u),f_x(v))_{(u,v)}$ in $H'_x$ by $\mathcal{A'}$.
\end{lemma}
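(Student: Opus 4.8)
The plan is to exhibit $\mathcal{A}'$ as a faithful round-by-round simulation of $\mathcal{A}$ and prove correctness by induction on the round number. The simulation invariant I would maintain is: after $r$ rounds of $\mathcal{A}'$ on $H'_x$, every physical node $v \in C$ holds a copy of the complete local state — ID, (private) random tape, local memory, and the collection of messages just received along each incident edge of $H^*_x$ — of every virtual node $w \in f_x^{-1}(v)$ at the end of round $r$ of $\mathcal{A}$ on $H^*_x$. Recall that $f_x$ is the identity on $C$, sends each boundary vertex of $C'_x$ to a vertex of $C$, maps $s^*$ to a fixed vertex of $C$, and that each degree-one vertex of $Y_x$ can be folded into its unique matched partner in $C'_x$; so this invariant covers all of $V(H^*_x)$.

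For the base case I would invoke the hypothesis that the communication of $\mathcal{A}$ on $H^*_x$ is mapped to $H'_x$ via $f_x$ (\Cref{def:mapping comm}): by definition every $v$ knows $N_{H_x}(w)$, hence the full initial state that a \congest{} algorithm starts from (the node's ID together with the IDs of its neighbours and the incident star edge to the publicly-known $s^*$), for each $w \in f_x^{-1}(v)$; private coins are sampled locally and public coins are shared. For the inductive step, given the state after round $r$, each $v = f_x(w)$ runs locally the transition function of $\mathcal{A}$ for $w$ to produce, for every edge $(w,w')$ of $H^*_x$, the $O(\log n)$-bit message $w$ sends in round $r+1$. By \Cref{def:mapping comm} there is a parallel edge of $H'_x$ indexed by $(w,w')$ between $f_x(w)$ and $f_x(w')$; $\mathcal{A}'$ sends $w$'s message along exactly this edge (and if $f_x(w) = f_x(w')$ it is a self-loop, delivered by $v$ to itself with no communication). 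On the receiving side, $f_x(w')$ reads the message arriving on the parallel copy indexed by $(w,w')$, recognizes $w'$ as the destination from the index, files it into $w'$'s received-message buffer, and after all of round $r+1$'s messages are in, applies $w'$'s transition function — re-establishing the invariant. Terminating outputs of $\mathcal{A}$ are reported by the simulating nodes. Since each parallel edge of $H'_x$ carries one $O(\log n)$-bit message per round, $\mathcal{A}'$ is a legitimate \congest{} algorithm on the multigraph $H'_x$, and the message-for-message equivalence claimed in the lemma is exactly the inductive invariant.

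The one point that needs care is consistency of the parallel-edge indices at both endpoints: for the message that $f_x(w)$ puts on copy $(w,w')$ to be interpreted by $f_x(w')$ as coming to $w'$ from $w$, the two physical endpoints must agree on the list of virtual edges mapped onto their pair and on an ordering of the corresponding parallel copies. This is guaranteed because $f_x$ is a public (shared-randomness) function and the edge set of $H^*_x$ is known to the incident nodes, so both endpoints can deterministically reconstruct the same indexing. I do not expect this to be the real obstacle; the real quantitative issue — that a single $v \in C$ may have to transmit the messages of many virtual nodes of $f_x^{-1}(v)$ in one round, which translates to load on the clique edges of $G[C]$ — is deliberately \emph{not} addressed by this lemma, whose conclusion is only the message-level equivalence on $H'_x$; bounding that load in expectation is precisely the purpose of choosing $f_x$ at random and is carried out in the load-balancing step that follows.
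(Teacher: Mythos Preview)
Your proposal is correct and follows the same approach as the paper: define $\mathcal{A}'$ so that $f_x(u)$ sends/receives on behalf of $u$ along the parallel edge indexed by $(u,v)$, relying on the fact that $f_x$ is public and that each physical node knows the neighbourhoods of its preimages. The paper's proof is a two-line description of this simulation; your inductive formulation with the explicit invariant and the discussion of parallel-edge indexing is a more careful version of the same argument (one small descriptive slip: $f_x$ is the identity only on $N_G[x]\cap C$, not all of $C$, but this does not affect your reasoning).
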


Furthermore, if $f_x$ is a random function, then $H'_x$ is a random graph, and thus running multiple instances of $\mathcal{A}$ on $H^*_x$ (using many $x$'s) can be better load-balanced by simulating on multiple instances of $H'_x$.

\begin{lemma} \label{lem:map}
Given $x \in C$ and the description of algorithm $\mathcal{A}$ running on $H^*_x$, we can map the communication of $\mathcal{A}$ to the graph $H'_x$ (\Cref{def:mapping comm}) via a random function $f_x$ where, in $H'_x$, every vertex $v \in C$ has $\widehat O(\ell)$ random neighbors. The mapping can be constructed using $O(1)$ rounds of communication in $H'_x$.
\end{lemma}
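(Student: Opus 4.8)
The plan is to describe $f_x$ explicitly and then check (i) that the required local knowledge can be gathered cheaply and (ii) that the random images are load-balanced. I will take $f_x$ to be the identity on $C$ and a uniformly random independent choice $f_x(u)\in C$ for each $u\in C'_x$; for the star center I set $f_x(s^*)$ to be a uniformly random vertex of $C$ as well (or, since $s^*$ will anyway be simulated by a broadcast inside the near-clique $C$, I may spread its role over all of $C$ — either choice works and I will pick whichever keeps the bookkeeping cleanest). Since $f_x$ only needs to be consistent among the parties, it can be fixed using public randomness (\Cref{rem:publicrandomness}) at no communication cost, so the only communication is in making each $v\in C$ learn $N_{H_x}(w)$ for every $w\in f_x^{-1}(v)$.

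Next I would verify the local-knowledge step. For a vertex $w\in C'_x=N_G(C)-Z_x$, its neighborhood in $H_x$ consists of its $C$-side neighbors together with its partner in the matching $M_x$; the $C$-side neighbors it has in $H_x$ are exactly $N_G(w)\cap C$ minus nothing relevant, so $w$ itself knows $N_{H_x}(w)$ from its local view of $G$. To transmit this to $f_x(v)$ I route it along the $G$-edges between $w$ and $C$: each neighbor $v'\in N_G(w)\cap C$ of $w$ sends to $f_x(v)$ the single fact ``$w\in N_{H_x}(v')$ and $w$ maps to you'' (together with $w$'s matching partner, which can be agreed on publicly). Here is where I must be slightly careful about \emph{congestion}: a fixed $v'\in C$ is incident in $H_x$ to at most $O(2^{\log^{0.7}n}\ell)=\widehat O(\ell)$ vertices of $V_x-(N_G(x)\cap C)$ by Property~1 of \Cref{lem:sparsify G in congest}/\Cref{lem:sparsify G in pram}, so $v'$ sends only $\widehat O(\ell)$ messages total, each of size $O(\log n)$, and these go to uniformly random targets in $C$ — so over $O(1)$ rounds of the clique $H'_x$ (using the standard routing that a clique can deliver any degree-$\widehat O(\ell)$ message pattern with $\widehat O(1)$ dilation when $|C|\ge\widehat\Omega(\ell)$; if $|C|<\widehat O(\ell)$ the whole cluster has $\widehat O(\ell^2)=\widehat O(m_{\text{local}})$ edges and everything is learned by brute force) the exchange completes. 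The near-clique assumption \eqref{eq:near clique prop} is exactly what guarantees $H'_x$ is a (virtual) clique on $C$ up to an $O(\log n)$ simulation overhead, as noted in the remark after \Cref{sec:mnnccn}'s overview.

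Finally I would prove the load-balancing claim: in $H'_x$, each $v\in C$ has $\widehat O(\ell)$ random neighbors. The edges of $H'_x$ incident to $v$ come from (a) $H_x$-edges with one endpoint $v\in C$, of which there are $\widehat O(\ell)$ by Property~1, mapped to $(v,f_x(\cdot))$ with $f_x(\cdot)$ uniform in $C$; (b) $H_x$-edges $(w,w')$ with $w'\in C'_x$ and $f_x(w')=v$ — but each such $w'$ has $H_x$-degree $O(1)+$(its $C$-side degree), and the number of $w'$ mapping to $v$ is, in expectation, $|C'_x|/|C|$ which combined with the degree bound contributes only $\widehat O(\ell)$ more in expectation, and a Chernoff bound over the independent choices $f_x(w')$ gives $\widehat O(\ell)$ with high probability; (c) the star edges through $s^*$, which contribute $O(1)$ per vertex once $s^*$'s role is spread over $C$. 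Summing, $\deg_{H'_x}(v)=\widehat O(\ell)$ w.h.p., and the images are uniform by construction. The main obstacle I anticipate is the congestion analysis in step (ii): I must make sure that the $\widehat O(\ell)$ messages each $v'$ sends — to random destinations — can actually be delivered in $\widehat O(1)$ rounds of the virtual clique $H'_x$, which needs both the degree bound from the common-neighborhood property and the standard fact that a clique (or near-clique) routes such sparse random permutations with polylog slack; everything else is routine bookkeeping with public randomness and Chernoff bounds.
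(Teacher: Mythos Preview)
Your choice of $f_x$ --- identity on all of $C$ --- differs from the paper's, which takes $f_x$ to be the identity only on $N_G[x]\cap C$ and random on $C\setminus N_G[x]$ as well. This difference breaks your item~(a). The graph $H_x$ retains every $G$-edge with at least one endpoint outside $N_G(x)\cap C$; in particular, a vertex $v\in C\setminus N_G(x)$ keeps all of its edges to $N_G(x)\cap C$ (each such edge $(u,v)$ enters $E_x$ from $u$'s term in the union defining $E_x$, since $v\in V_x\setminus(N_G(x)\cap C)$). Hence $v$ has $|N_G(v)\cap N_G(x)\cap C|=\Omega(|C|)$ neighbors in $H_x$, and under your $f_x$ both endpoints of every one of these edges are fixed by the identity, so in $H'_x$ this $v$ carries $\Omega(|C|)$ \emph{fixed} neighbors. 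Your (a) is therefore wrong on two counts: Property~1 of \Cref{lem:sparsify G in congest} bounds only $\deg_{H_x}(v,\,V_x\setminus(N_G(x)\cap C))$, not the full degree $\deg_{H_x}(v)$; and ``$f_x(\cdot)$ uniform in $C$'' is false for the intra-$C$ endpoints under your map.

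The paper's remedy is precisely to randomize $C\setminus N_G[x]$ too. Then every vertex kept fixed by $f_x$ lies in $N_G[x]\cap C$, and since $H_x$ has no edge with both endpoints in $N_G(x)\cap C$, each such vertex sees only $H_x$-neighbors in $V_x\setminus(N_G(x)\cap C)$ --- at most $\widehat O(\ell)$ of them by Property~1 --- and (apart from the single edge to $x$ itself) all of these are randomly mapped. Your identity-on-$C$ choice can in fact be salvaged for the downstream per-edge congestion bound, but only via an argument you did not write: a fixed intra-$C$ edge $(u,v)$ lies in $H_x$ only when $x$ is not a common neighbor of $u$ and $v$, which by the near-clique property~\eqref{eq:near clique prop} holds for at most $\widehat O(\ell)$ values of $x\in C$, hence for $\widehat O(1)$ of the $\widehat O(|C|/\ell)$ sampled instances in expectation. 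That is a legitimate alternative, but it proves per-edge congestion across instances rather than the per-vertex ``$\widehat O(\ell)$ random neighbors'' conclusion of the lemma.
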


We prove \Cref{lem:simulation alg,lem:map} in \Cref{sec:mapping comm congest}.

\paragraph{Step 3. Virtual Clique in $C$.} Notice that the algorithm $\mathcal{A}'$ assumes the communication on every edge inside $H'_x$, which may not exist in the original network $G_x$. To run $\mathcal{A}'$ on $H'_x$, we can simulate clique communication inside $C$.

\begin{definition}
    An $t$-virtual clique network in $C \subseteq V$ of a graph $G = (V,E)$ is a protocol that simulates the clique communication in $C$ in that any message exchange between $u,v \in C \cup \{s^*\}$ can be done by at most $t$ rounds in $G$.
\end{definition}

\begin{lemma} \label{lem:virtual clique}
Given a cluster $C \subseteq V$ in a graph $G = (V,E)$, we can construct a $t$-virtual clique network $\mathcal{V}$ in $\tilde O(1)$ rounds. With high probability, $t \leq O(\log n)$.
\end{lemma}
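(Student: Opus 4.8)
The plan is to exploit that $C$ is a near-clique. Write $D^\ast := 2^{\log^{0.8}n}\cdot\ell$ (the bound on how many vertices of $C$ a node of $C$ may be non-adjacent to) and $d := 2^{\log^{0.7}n}\cdot\ell$ (the common-neighborhood parameter). One round of clique communication on $C$ has every $u\in C$ send an $O(\log n)$-bit message to every vertex of $C\cup\{s^\ast\}$; all but at most $D^\ast$ of $u$'s recipients are already neighbors of $u$ in $G$ and are reached in a single round, so it remains to deliver, for each $u$, the $\le D^\ast$ messages to its non-neighbors in $C$ (and symmetrically the ones it must receive), plus the $\tilde O(|C|)$ messages between $C$ and the virtual star $s^\ast$.

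First I would set up the relay structure. For a non-adjacent pair $u,v\in C$, route $u\to w\to v$ through a relay $w\in R_{uv}:=N_G(u)\cap N_G(v)$; both edges of this path are incident to $C$, so only edges incident to $C$ are used. The quantitative point is that $R_{uv}$ is large: the common-neighborhood property gives $|R_{uv}|\ge\max(\deg_G u,\deg_G v)-d$, so provided the minimum degree of $G$ exceeds $D^\ast$ by a polylogarithmic factor we get $|R_{uv}|=\Omega(\deg_G u)$; alternatively, if $|C|\ge 10 D^\ast$ one may take $w\in C$ using $|R_{uv}\cap C|\ge|C|-2D^\ast-2$. (In the degenerate regime where neither holds---i.e.\ $\ell$ so large that these bounds are vacuous---the relevant \mnnccn{} instance is solved directly, e.g.\ by random sampling, without building a virtual clique, so we may assume we are in one of the two good regimes.)

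Next I would give the randomized routing and its congestion analysis. For each missing pair $(u,v)$, node $u$ samples $\Theta(\log n)$ independent uniform candidates from $N_G(u)$; a candidate $w$ is tested for $v\in N_G(w)$ in $O(1)$ rounds ($u$ forwards the target $v$ to $w$, which replies), and since each candidate is valid with probability $|R_{uv}|/\deg_G u=\Omega(1)$, w.h.p.\ at least one candidate works and the message is pushed through it. For the load, each $u\in C$ originates $O(D^\ast\log n)$ relay attempts, spread uniformly over its $\ge\delta$ incident edges; as $\delta$ exceeds $D^\ast$ by a polylog factor, Chernoff plus a union bound over the $\le|C|^2$ relevant edges and the $n$ vertices shows every edge carries $\tilde O(1)$ attempts w.h.p., and the receiving side is symmetric ($v$ gets $O(D^\ast\log n)$ messages through near-uniform relays). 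Hence one $C$-clique round is simulated by $\tilde O(1)$ rounds of $G$ with $\tilde O(1)$ per-edge congestion. Finally, $s^\ast$ is simulated by a fixed representative $r\in C$ that also plays $s^\ast$'s role: traffic between $x$ and $s^\ast$ becomes traffic between $x$ and $r$, routed as above, and since $r$ has $\ge\delta$ incident edges the $\tilde O(|C|)$ star-messages add only $\tilde O(1)$ rounds. Altogether $t=O(\log n)$ w.h.p.

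I expect the main obstacle to be the congestion bookkeeping rather than reachability: one must argue that the random relay choices load-balance so that, across a whole simulated clique round, no single edge incident to $C$ is used more than $\tilde O(1)$ times, and this is exactly where the near-clique bound $D^\ast=\hO{\ell}$ and the lower bound on $|R_{uv}|$ (hence on the minimum degree, or on $|C|$) are indispensable. A secondary subtlety is routing the neighbor-membership probes and the relay-failure notifications without their own traffic violating the $\tilde O(1)$-congestion guarantee, which is handled by piggybacking them onto the same randomized schedule and absorbing the extra $O(\log n)$ round blow-up into $t$.
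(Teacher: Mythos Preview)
Your proposal is correct and follows essentially the same route as the paper: relay each message between a non-adjacent pair $u,v\in C$ through a randomly chosen common neighbor, and use the near-clique guarantee $|C|\gg D^\ast$ together with a Chernoff bound to show every edge incident to $C$ carries $O(\log n)$ relayed messages w.h.p. The paper differs only cosmetically---it restricts relays to $N_G(u)\cap N_G(v)\cap C$ rather than all of $N_G(u)\cap N_G(v)$, and it fixes one relay $r_{u,v}$ per pair in a $\tilde O(1)$-round preprocessing phase instead of resampling each round---and your sketch additionally addresses simulating the star node $s^\ast$, which the paper's proof of this lemma leaves implicit.
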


We prove \Cref{lem:virtual clique} in \Cref{sec:virtual clique}.

We are now ready to state the main \congest{} algorithm.

\paragraph{Algorithm (\congest{}).} The input is a graph $G = (V,E)$ along with a cluster $C \subseteq V$ and an estimate $\ell$.

\begin{enumerate} [noitemsep]
    \item Let $\mathcal{V}$ be the virtual clique network on $C$ constructed from \Cref{lem:virtual clique}.
    \item Sample $100\cdot \frac{|C|}{\ell}  \cdot \log n$ random vertices in $C$ into $X$.
    \item For each $x \in X$,
    \begin{enumerate} [noitemsep]
        \item Construct a graph $H_x$ using \Cref{lem:sparsify G in congest}. Let $\mathcal{A}$ be the algorithm that computes minimum $(x,Y_x)$-separator in $H_x$.
        \item Apply \Cref{lem:map} to map the communication of $\mathcal{A}$ to $H'_x$ via $f_x$. Let $\mathcal{A}'$ be the algorithm running on $H'_x$ obtained from \Cref{lem:simulation alg}.
        \item Run $\mathcal{A'}$ on $H'_x$ using the virtual clique network $\mathcal{V}$
        \item Obtain the output of $\mathcal{A}$ on $H^*_x$ from the output of $\mathcal{A}'$ in $H'_x$ using $f_x$.
    \end{enumerate}
    \item Return the smallest cut and its size found so far.
\end{enumerate}

\paragraph{Analysis.} In step 2, there exists $x \in X$ such that $x \in L$ with high probability since $\ell = \Theta (|L|)$. Therefore,  \Cref{lem:sparsify G in congest} implies that a minimum $(x,Y_x)$-separator of $H_x$ plus $Z_x$, and thus we obtain the minimizer of $\min_{L'\subseteq C: L' \neq \emptyset}|N_G(L')|$ at the end of step 4. In terms of round complexity, observe that the bottleneck is to run $\mathcal{A'}$ running on a $H'_x$ as described in \Cref{lem:map}. Fix a vertex $v \in C$, we show in expectation there are at most $O(\log n)$ instances of $\mathcal{A}'$ on every edge from $v$ to $C$. This follows because there are $\tilde O(\frac{|C|}{\ell})$ independent instances of $\mathcal{A}'$ running on $v$ where each instance corresponds to random $\widehat O(\ell)$ edges inside $C$. Therefore, in expectation, each edge has $\widehat O(1)$ congestion due to running $\mathcal{A}'$.

\subsubsection{Mapping the Communication} \label{sec:mapping comm congest}

In this section, we prove \Cref{lem:simulation alg,lem:map}. We start with the proof of \Cref{lem:map}. The goal is to map the communication of $\mathcal{A}$ on $H^*_x$ to $H'_x$ using a function $f_x$ defined as follows.

\begin{definition}
   Given $x \in C$, we define $f_x: C \cup C'_x \cup \{s^*\} \rightarrow C$ where
\begin{align*}
f_x(v) = \begin{cases}
  v & \text{if } v \in N_G[x] \cap C, \\
   \text{ a random node in } C& \text{otherwise}.
\end{cases}
\end{align*}
\end{definition}

Note that $f_x$ is a public random function. So every node can access $f_x$.   To establish a mapping (\Cref{def:mapping comm}), we describe an algorithm to broadcast neighborhoods.

\paragraph{Algorithm.} For all $u \in C \cup C'_x - (N_G[x] \cap C)$, every node $v \in N_{H^*_x}(u)$ sends its id and $u$ to $f(u)$ so that $f(u)$ receives $(v,u)$ to learn that $v$ is a neighbor of $u$.

\paragraph{Complexity.} It is convenient to fix a node $v$ and describe the set of vertices to which $v$ sends its id. By design, every $v$ sends its id to the image $f(w)$ of every node $w \in N(v) - (N_G[x] \cap C)$ where $f(w)$ is a random node in $C$. The distribution can be described as a random graph where every node $u \in C -\{x\}$ has $|N(v)- N_G[x]| \leq O(\ell \log n)$ random neighbors in $C$. The inequality follows from \Cref{lem:sparsify G in congest}.

Therefore, \Cref{lem:map} follows from $f_x$ and this algorithm.  We next prove \Cref{lem:simulation alg}. After the mapping above, we have established the following property.

\begin{quote}
  \textbf{Property 1}: For all $u \in C$, and for all $w \in f^{-1}_x(u)$, $u$ learned the set of $N_{H_x}(w)$ of $w$.
\end{quote}

Property 1 means that for every $u \in C \cup C'_x \cup \{s^*\}$ in $H^*_x$, $f_x(u)$ can act as a surrogate of $u$ since $f_x(u)$ knows all the neighbors of $u$.  That is, we can simulate any algorithm running on $H^*_x = (C \cup C'_x \cup \{s^*\}, E_x \cup E^*)$ by running on another network $H'_x := (C, E'_x)$ where an edge $(u,v) \in E_x$ if and only if $(f_x(u),f_x(v))_{(u,v)} \in E'_x$ to $E'_x$ where we allow parallel edges and self-loops indexed by the pair $(u,v)$ in $H'_x$.

We are ready to prove \Cref{lem:simulation alg}.
\begin{proof} [Proof of \Cref{lem:simulation alg}]
Given $\mathcal{A}$, we define another algorithm $\mathcal{A'}$ that simulates the execution of $\mathcal{A}$ as follows:
\begin{itemize}
    \item For each round in $\mathcal{A}$,
    \begin{itemize}
        \item if $u$ sends/receives a message $m$ to $v$ in $\mathcal{A}$ then $f_x(u)$ sends/receives the message $m$ to $f_x(v)$ in $\mathcal{A'}$ using the edge $(f_x(u),f_x(v))_{(u,v)}$.
    \end{itemize}
     Note that every node $v$ knows its preimage $f_{x}^{-1}(v)$ since $f_x$ is a public (random) function.
\end{itemize} \qedhere
\end{proof}

\subsubsection{Virtual Clique Network} \label{sec:virtual clique}
We prove \Cref{lem:virtual clique} in this section.
We start with basic facts about $C$ being a near-clique.  For all $u \in C$, let $NN_C(u) := C - N_G[u]$ be the non-neighbors of $u$ in $C$.

We assume that
\begin{align} \label{eq: C much larger than L}
|C| \geq 100\cdot 2^{\log^{0.8}n}\cdot \ell.
\end{align}
Otherwise, we sample $n^{o(1)}$ nodes inside $C$ and run bipartite matching trivially on each of them and we are done.

\begin{claim} \label{claim:near clique prop}
 For all $u \in C$,
    \begin{align} \label{eq:Nu_cap_C_omega_NN_u prop}
        |N_G(u) \cap C| \geq \Omega(|NN_C(u)|).
    \end{align}
    For all $u,v \in C$,
    \begin{align}
    |N_G(u) \cap N_G(v) \cap C| &\geq \Omega (|C|). \label{eq:Nu_Nv_C_eq_omega_C prop}
    \end{align}

\end{claim}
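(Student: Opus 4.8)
The plan is to read off both bounds as elementary counting consequences of two hypotheses already in hand: the near-clique guarantee from \cref{def:minneibornearclique}, namely $|C - N_G(u)| \le 2^{\log^{0.8}n}\cdot\ell$ for every $u \in C$, and the standing assumption \eqref{eq: C much larger than L} that $|C| \ge 100\cdot 2^{\log^{0.8}n}\cdot\ell$. Together these say that for every $u \in C$ the neighborhood $N_G(u)$ already contains at least a $99/100$ fraction of $C$, after which the claimed lower bounds are almost immediate. Concretely, for \eqref{eq:Nu_cap_C_omega_NN_u prop} I would first note $NN_C(u) = C - N_G[u] \subseteq C - N_G(u)$, so $|NN_C(u)| \le |C - N_G(u)| \le 2^{\log^{0.8}n}\cdot\ell \le |C|/100$; on the other hand $|N_G(u)\cap C| = |C| - |C - N_G(u)| \ge (99/100)|C|$, and combining the two gives $|N_G(u)\cap C| \ge (99/100)|C| \ge (99/100)\,|NN_C(u)| = \Omega(|NN_C(u)|)$ (and incidentally $|N_G(u)\cap C| = \Omega(|C|)$ as well).

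For \eqref{eq:Nu_Nv_C_eq_omega_C prop} I would use a union bound inside $C$ on the two complements: $C - (N_G(u)\cap N_G(v)) = (C - N_G(u)) \cup (C - N_G(v))$, hence $|C| - |N_G(u)\cap N_G(v)\cap C| \le |C - N_G(u)| + |C - N_G(v)| \le 2\cdot 2^{\log^{0.8}n}\cdot\ell \le 2|C|/100$, where the middle inequality applies the near-clique bound to both $u$ and $v$ and the last one uses \eqref{eq: C much larger than L}. Rearranging yields $|N_G(u)\cap N_G(v)\cap C| \ge (98/100)|C| = \Omega(|C|)$.

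There is no real obstacle here: the statement is a direct quantitative consequence of the near-clique assumption, and the only point requiring a bit of care is the bookkeeping when passing between $N_G[u]$ and $N_G(u)$, which changes the relevant counts by at most one vertex and is harmlessly absorbed into the constants. One could instead route the second inequality through the common-neighborhood bound $|N_G(u)\triangle N_G(v)| \le 2^{\log^{0.7}n}\cdot\ell$, but the near-clique bound already suffices and keeps both parts uniform.
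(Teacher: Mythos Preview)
Your proposal is correct and follows essentially the same approach as the paper: both proofs derive the two inequalities by elementary counting from the near-clique bound \eqref{eq:near clique prop} together with the assumption \eqref{eq: C much larger than L}, handling the $\pm 1$ discrepancy between $N_G[u]$ and $N_G(u)$ as an absorbed constant. The only cosmetic difference is that the paper chains the inequalities slightly differently (writing $|N_G(u)\cap C| = |C| - |NN_C(u)| - 1$ directly), but the substance is identical.
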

\begin{proof}

We prove \Cref{eq:Nu_cap_C_omega_NN_u prop}.
\begin{align*}
    |N_G(u) \cap C| = |C| - |NN_C(u)| - 1  \overset{(\ref{eq:near clique prop})}{\geq} |C| - \ell \cdot 2^{\log ^{0.8}n} -1 \overset{(\ref{eq: C much larger than L})}{\geq} 99 \cdot 2^{\log^{0.8}n} \cdot \ell -1 \overset{(\ref{eq:near clique prop})}{\geq} \Omega (|NN_C(u)|).
\end{align*}
We prove \Cref{eq:Nu_Nv_C_eq_omega_C prop}.
\begin{align*}
|N_G(u) \cap N_G(v) \cap C|
 \geq |C| - |NN_C(u)| - |NN_C(v)| -1
 \overset{(\ref{eq:near clique prop})}{\geq}
|C| - 2\cdot 2^{\log^{0.8}n}\cdot \ell  -1
 \overset{(\ref{eq: C much larger than L})}{\geq} \Omega(|C|).
\end{align*}
\end{proof}

\paragraph{Preprocessing.} For each $u,v \in C$ such that $u$ and $v$ are non-neighbors in $G$,  $u$ and $v$ agree on a random node $r_{u,v}$ from the common neighbors in $C$, i.e., $(N_G(u) \cap N_G(v)) \cap C$.   This step can be implemented as follows.
\begin{itemize}
    \item   For all $u \in C$, and for all $v \in C - N_G[u]$ where  $u < v$ (the id of $u$ is less than the id of $v$),
    \begin{enumerate}
        \item $u$ sends a message requesting to connect to $v$ to $\Theta (\log n)$ random neighbors $N_G(u) \cap C$.
        \item An intermediate node $w \in N_G(u)$, upon receiving the request from $u$, sends to $v$ if $v \in N_G(w) \cap C$.
        \item If node $v$ receives the request of $u$ from the intermediate nodes, then  $v$ selects one node $w$ arbitrarily and replies to $w$ so $w$ sends the confirmation back to $u$
    \end{enumerate}
\end{itemize}

\paragraph{Correctness.} It is enough to prove that the preprocessing is \textit{successful}; that is for every pair of non-neighbors $u,v$ in $C$, $u$ and $v$ agree on the node $r_{u,v} \in N_G(u) \cap N_G(v) \cap C$ as an intermediate node. By design, this happens if and only if, during the preprocessing, one of $u$'s messages requesting to $v$ was sent to $N_G(u) \cap N_G(v) \cap C$. Fix a node $u \in C$, the probability that a random message its request is sent to $N_G(u) \cap N_G(v) \cap C$ is $\frac{|N_G(u) \cap N_G(v) \cap C|}{|C|} \overset{(\ref{eq:Nu_Nv_C_eq_omega_C prop})}{\geq} \Omega(1)$.
Therefore, after $\Theta(\log n)$ trials and by the union bounds, the setup is successful with high probability.

\paragraph{Clique Simulation.} We simulate a message from $u$ to $v$ in clique for all $u,v \in C$ as follows. If there is an edge $(u,v)$, then we send the same message directly. Otherwise, $u$ sends the message to $r_{u,v}$ for which $r_{u,v}$ forwards to $v$.  This step can be implemented in \congest{} easily.

\paragraph{Complexity of the Simulation.} We prove that each clique round can be implemented in $O(\log n)$ rounds of clique simulation.  For each $u,v$ where $u < v$ and $v \not \in N_G(u)\cap C$,  we define a  2-path (a path of length 2), denoted by $P_{u,v} := (u,r_{u,v},v)$.   Let $\mathcal{P} := \{P_{u,v} \colon u < v, v \not \in N_G(u)\cap C\}$. For any edge $e \in E_G(C,C)$, the \textit{congestion}, denoted by $c(e)$, is the number of paths $P_{u,v}$ that contains $e \in E_G(C,C)$. We show that the maximum congestion $\max_ec(e)$ over all edges is $O(\log n)$ with high probability.

Fix an edge $e = (x,y) \in E_G(C,C)$. If a 2-path $P$ contains $e$, then $e$ is either the first or the second edge of $P$. Let $\Gamma_1(x,y) := \{w \in C - N_G[x] \colon \exists P_{x,w} = (x,r_{x,w},w) \in \mathcal{P},  r_{x,w} = y\}$ be the set of all $2$-paths in $\mathcal{P}$ that contains $(x,y)$ as the first edge. Similarly let $\Gamma_2(x,y)$ be the set of all $2$-paths in $\mathcal{P}$ that contains $(x,y)$ as the second edge. By definitions, $c(e) = |\Gamma_1(x,y)| + |\Gamma_2(x,y)|$.

Next, we show that  $|\Gamma_1(x,y)| \leq O(\log n)$ and  $|\Gamma_2(x,y)| \leq O(\log n)$ with high probability. We focus on bounding the size of  $\Gamma_1(x,y)$. By definition of $\Gamma_1(x,y)$, $(x,y)$ belongs to the first edge of a path $P_{x,w} = (x,r_{x,w},w) \in \mathcal{P}$  if $w$ is a non-neighbor of $x$ where $x$ and $w$ agree on $r_{x,w} = y$ as an intermediate node. Thus, $|\Gamma_1(x,y)|$ is the number of non-neighbors $w$ of $x$ whose intermediate node is selected to be $y$. Let $I[r_{x,w} =y]$ be an indicator random variable. We have
\[ |\Gamma_1(x,y)| = \sum_{w \in NN_C(x)} I[r_{x,w} = y] \]

For all non-neighbors $w \in NN(x)$,  since $r_{x,w}$ is a random node in $N_G(x) \cap N_G(w) \cap C$, the probability that $r_{x,w} = y$ is
\begin{align*}
\mathbb{P} (r_{x,w} = y) &=  \begin{cases}
  0 & \text{if } y \not \in N_G(w) \cap C, \\
  \frac{1}{|N(x)\cap N(w) \cap C|} & \text{otherwise}
\end{cases}\\
 &\leq  \frac{1}{|N(x)\cap N(w) \cap C|}  \\
 &\leq  O(\frac{1}{|N(x) \cap C|}).
\end{align*}
The last inequality follows since
$$  |N(x) \cap N(w) \cap C| \overset{(\ref{eq:Nu_Nv_C_eq_omega_C prop})}{\geq} \Omega (|C|) \geq \Omega (|N(x) \cap C|) . $$

Therefore, in expectation,
\begin{align*}
    \mathbb{E}[|\Gamma_1(x,y)|] &= \sum_{w \in NN_C(x)} \mathbb{P}[r_{x,w} = y]\\
    &\leq |NN_C(x)| \cdot  O(\frac{1}{|N(x) \cap C|}) \\
    &\overset{(\ref{eq:Nu_cap_C_omega_NN_u prop})}{\leq}  O(1).
\end{align*}
Similarly, we can also prove that $\mathbb{E}[|\Gamma_2(x,y)|] \leq O(1).$ By Chernoff bound, we conclude that  $c(x,y) \leq O(\log n)$ with high probability.

 \paragraph{Round Complexity.} The round complexity for the preprocesing is $O(\log n)$ factor larger than one iteration of clique simulation.

\section{Parallel Implementation in Matrix Multiplication Work}\label{sec:parallelimplementation mm work}

In this section, we prove the main lemma for solving \SSUDVC{} in \pram{} model in matrix multiplication work and subpolynomial depth.

\begin{lemma}\label{lem:SSVCprammatrix}
    There is a randomized \pram{} algorithm that solves \SSUDVC{} in $n^{\omega+o(1)}$ work and $n^{o(1)}$ depth.
\end{lemma}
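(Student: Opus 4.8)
The plan is to instantiate the schematic reduction of \Cref{alg:framework} in the \pram{} model, supplying matrix‑multiplication‑work, subpolynomial‑depth implementations of its two subroutines, \isocut{} and \mnnccnalg{}. Correctness is immediate from \Cref{lem:framework pram congest} once we boost its $1/n^{o(1)}$ success probability by running $n^{o(1)}$ independent copies and returning the smallest cut found. The common‑neighborhood clustering step (\cnc{}) costs $\tO{m}\le \tO{n^2}\le n^{\omega+o(1)}$ work and $n^{o(1)}$ depth by \Cref{lem:sncrunningtime}, so it is not the bottleneck. For \isocut{} we use the \emph{vertex‑preserving} isolating cuts lemma (\Cref{lem:refined isolating cut lemma 2}): across all clusters and all terminal sets it makes s‑t vertex mincut calls on instances whose \emph{total} number of vertices is $\tO{n}$ (and total edges $O(m)$). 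We realize each such call by the folklore reduction to bipartite matching and solve matching in matrix‑multiplication work via \cite{Lovasz79,KarpUW86,MulmuleyVV87}, i.e.\ an instance on $n_i$ vertices in $\tO{n_i^\omega}$ work and $\tO{1}$ depth. Since $x\mapsto x^\omega$ is superadditive for $\omega\ge 1$, the total work is $\sum_i \tO{n_i^\omega}\le \tO{\big(\sum_i n_i\big)^\omega}=\tO{n^\omega}$, with depth $\tO{1}$; this is exactly where preserving the \emph{number of vertices} (not merely of edges, as in \cite{LiNPSY21}) is indispensable, because $n^\omega$ is not superadditive in $m$.

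The technical heart is a \pram{} algorithm for \mnnccn{} on a cluster $C$ running in $\tO{n_0^\omega}$ work and polylog depth, where $n_0=\tO{|C|}$ after replacing the boundary $N_G(C)$ by the sparsified boundary $C'$ of \Cref{lem:vertexsparsification} (applied with the sampled source set $X$ below). Fix $C$, let $G_C$ be the local graph on $C\cup C'$, add a super‑sink $t$ joined to all of $C'$, and delete common neighbours of a source and $t$; then the $x$‑source \mn{} value equals $\kappa_{G_C}(x,t)$. Following \cite{LinialLW88}, we construct a convex $N_{G_C}(t)$‑embedding $f_t$ in general position over a suitable finite field (random substitution plus Schwartz--Zippel) so that $\kappa_{G_C}(x,t)=\rank\!\big(f_t(N_{G_C}(x))\big)$ for every candidate source $x\in C$; this amounts to solving a linear system and parallelizes to $\tO{n_0^\omega}$ work and $\tO{1}$ depth. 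We sample $X\subseteq C$ of size $\hO{|C|/\ell}$, so w.h.p.\ $X\cap \LL\ne\emptyset$; compute $\rank\!\big(f_t(N_{G_C}(x_0))\big)$ from scratch for one $x_0\in X$ in $\tO{n_0^\omega}$ work while \emph{preprocessing} the corresponding $n_0\times n_0$ matrix $M$ for low‑rank updates using the dynamic matrix data structure (\Cref{lem:matrix ds}). For every other $x\in X$, the common‑neighbourhood property \cref{eq:cn diff prop} gives $|N_{G_C}(x)\triangle N_{G_C}(x_0)|\le 2^{\log^{0.7}n}\ell=\hO{\ell}$, so $f_t(N_{G_C}(x))$ differs from $M$ in only $\hO{\ell}$ columns and each rank query costs $T_{\MM}(\hO{\ell},n_0,\hO{\ell})$ field operations. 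Hence the work on $C$ is $\tO{n_0^\omega}+\hO{|C|/\ell}\cdot T_{\MM}(\hO{\ell},n_0,\hO{\ell})\le \tO{n_0^\omega}$ (using $\ell\le|C|\le n_0$ and $\omega\ge 2$ to bound the rectangular‑multiplication term), with polylog depth. We then return $\min_{x\in X}\kappa_{G_C}(x,t)$ plus the additive correction $K$ from \Cref{lem:vertexsparsification}.

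Summing over the $\tO{1}$ clusters containing each vertex (sparse property of \Cref{lem:snc}), $\sum_C n_0^\omega=\sum_C\tO{|C|^\omega}\le \tO{\big(\sum_C|C|\big)^\omega}=\tO{n^\omega}$, and likewise for \isocut{}; over the $O(\log n)$ values of $\ell$ in \Cref{alg:framework} and the $n^{o(1)}$ repetitions the total work is $n^{\omega+o(1)}$ and, since clustering has $n^{o(1)}$ depth, all clusters are processed in parallel, and both subroutines have polylog depth, the total depth is $n^{o(1)}$. The main obstacle is the \mnnccn{} step: one must (i) parallelize the Linial--Lovász--Wigderson convex‑embedding/rank characterization so that it fits in matrix‑multiplication work and subpolynomial depth, and (ii) exploit the common‑neighbourhood structure through a dynamic matrix data structure so that the $\hO{|C|/\ell}$ source instances collectively cost $\tO{n_0^\omega}$ rather than $\hO{|C|/\ell}\cdot n_0^\omega$; carefully verifying the rectangular matrix‑multiplication bound in (ii), together with checking that the boundary sparsification of \Cref{lem:vertexsparsification} still preserves the minimizer here, are the delicate points.
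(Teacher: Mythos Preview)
Your proposal is correct and follows essentially the same approach as the paper: instantiate \Cref{alg:framework}, handle \isocut{} per cluster via the vertex-preserving isolating cuts lemma plus matrix-multiplication BMM, and handle \mnnccn{} via boundary sparsification (\Cref{lem:vertexsparsification}), the Linial--Lov\'asz--Wigderson convex embedding, and a dynamic matrix data structure exploiting the $\hO{\ell}$-column neighborhood differences; then sum $|C|^\omega$ over clusters using sparsity and superadditivity of $x\mapsto x^\omega$. The only minor discrepancy is that the paper's matrix data structure (\Cref{lem:matrix ds}) only tests $\rank\ge k$ rather than outputting the rank, so the paper wraps the per-source queries in a binary search over $k$---a detail you glossed over but which does not change the asymptotic bounds.
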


\cref{lem:SSVCprammatrix} can be proved by the following two implementations of \isocut{} and \mnnccn{}.
\begin{lemma} \label{lem:near-clique in MM time}
There is a randomized \pram{} algorithm that solves an \mnnccn{} instance $(G,C,\ell)$ (\Cref{def:minneibornearclique}) in $\tilde O(|C|^\omega)$ work and $\tilde O(1)$ depth.
\end{lemma}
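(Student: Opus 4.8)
The plan is to implement the strategy sketched in \cref{subsec:overviewparallel}. Given a \mnnccn{} instance $(G,C,\ell)$ satisfying the guarantees of \cref{def:minneibornearclique} (if they fail, any output is permitted), write $L:=\LL$ for a minimizer with $|L|\le \ell\le 2|L|$. First I would sample $X\subseteq C$ by keeping each vertex independently with probability $\Theta(\log n/\ell)$; by a Chernoff bound $|X|=\tO{|C|/\ell}$ and $X\cap L\neq\emptyset$ with high probability. Next, apply the batched cluster boundary sparsification lemma (\cref{lem:vertexsparsification}) with input $(G,C,\ell,X)$ to obtain $C'\subseteq N_G(C)$ with $|C'|=\tO{|C|}$ and an integer $K$ such that, for $G':=G[C\cup C']$ and every $x\in X$, $\min_{x\in L'\subseteq C}|N_G(L')|=\min_{x\in L'\subseteq C}|N_{G'}(L')|+K$; this step also caps the maximum degree inside $C$ at $\tO{|C|}$ and in particular removes the simplifying assumption $|N_G(C)|\le|C|$ made in the overview. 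Set $n_0:=|C\cup C'|+1=\tO{|C|}$. Finally, by the reduction rules of~\cite{LiNPSY21} (add a sink $t$ adjacent to the sparsified boundary and delete the edges inside it, reading off $\kappa$ from a rank so that no per-$x$ vertex deletions are needed), it suffices to compute, for every $x\in X$, the value $\kappa_{G_{V'}}(x,t)$ in a single local graph $G_{V'}$ on $n_0=\tO{|C|}$ vertices and then return $\min_{x\in X}\bigl(\kappa_{G_{V'}}(x,t)+K\bigr)$ together with a corresponding minimizer.

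To compute the quantities $\kappa_{G_{V'}}(x,t)$ I would invoke the convex-embedding characterization of vertex connectivity of~\cite{LinialLW88}: there is a convex $N_{G_{V'}}(t)$-embedding $f_t$ of $G_{V'}$ such that, with high probability, $\kappa_{G_{V'}}(x,t)=\rank\bigl(f_t(N_{G_{V'}}(x))\bigr)$ for every $x\in C$ simultaneously. The embedding is obtained by placing the $|N_{G_{V'}}(t)|$ boundary vertices at independently uniform points of $\mathbb{F}^{|N_{G_{V'}}(t)|-1}$ for a field $\mathbb{F}$ of size $\poly(n)$ and solving the defining linear system ($f_t(v)$ an affine combination of $f_t(N_{G_{V'}}(v))$ for interior $v$) — one $\tO{n_0}\times\tO{n_0}$ linear solve. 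Both the solve and the subsequent rank computations are standard parallel linear algebra, running in $\tO{n_0^\omega}$ work and $\poly\log(n)$ depth; genericity of the boundary placement makes the rank identity hold w.h.p.\ by a union bound over the $\poly(n)$ relevant minors (the classical~\cite{LinialLW88} argument).

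The naive cost $\tO{n_0^\omega}$ per $x\in X$ totals $\tO{(|C|/\ell)\,n_0^\omega}$, which is too large. The key observation is that, by the common-neighborhood guarantee \eqref{eq:cn diff prop} of \cref{def:minneibornearclique}, fixing any reference $s_0\in C$ the matrix $M_x$ whose columns are $\{f_t(u):u\in N_{G_{V'}}(x)\}$ differs from $M_{s_0}$ in at most $d:=2^{\log^{0.7}n}\cdot\ell$ columns, for every $x\in C$. Hence I would preprocess $M_{s_0}$ with the dynamic matrix data structure of \cref{lem:matrix ds} in $\tO{n_0^\omega}$ work and $\poly\log(n)$ depth, and then, for each $x\in X$ and each candidate value $k$ (binary-searched, a $\log n$ factor), decide whether $\rank(M_x)\ge k$ via the corresponding $d$-column update in $T_{\MM}(d,n_0,d)$ work and $\poly\log(n)$ depth — the random projections used for the rank test commute with the column update through the matrix-determinant-lemma-style identity underlying \cref{lem:matrix ds}, and correctness over all $\tO{|C|/\ell}\cdot n_0$ queries follows from a single union bound over a $\poly(n)$-size field. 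Summing, and using $T_{\MM}(d,n_0,d)=O(n_0 d^{\omega-1})$ (recomputing from scratch in $\tO{n_0^\omega}$ in the degenerate case $d>n_0$), the total work is $\tO{n_0^\omega}+\tO{|C|/\ell}\cdot O(n_0 d^{\omega-1})\le n_0^{\omega+o(1)}\le |C|^{\omega+o(1)}$, using $n_0=\tO{|C|}$, $\ell\le |C|$ and $\omega\ge 2$; this matches the claimed $\tO{|C|^\omega}$ bound up to the $n^{o(1)}$ factor contributed by the $2^{\log^{0.7}n}$ slack in the common-neighborhood parameter, which can be tracked explicitly if desired. Depth is $\poly\log(n)$ throughout.

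The main obstacle I expect is twofold. First, making the convex-embedding rank identity rigorous \emph{for all sampled $x$ at once} with a genuinely parallel ($\tO{n_0^\omega}$ work, $\poly\log$ depth) randomized construction of a sufficiently generic embedding, and correctly handling the $(A,B)$-separator versus $x$-source \mn{} bookkeeping inherited from the~\cite{LiNPSY21} reduction together with the boundary-sparsification additive term $K$. Second, setting up the dynamic matrix data structure so that a $d$-column update query costs $T_{\MM}(d,n_0,d)$ — in particular supporting \emph{rank} queries (not merely determinant/invertibility) under low-rank column changes and interfacing them cleanly with the random-projection rank test — which is precisely the content of \cref{lem:matrix ds} and the most delicate point of the argument.
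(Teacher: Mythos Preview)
Your proposal is correct and essentially identical to the paper's proof: sample $X$, apply the batched boundary sparsification (\cref{lem:vertexsparsification}) to get $C'$ with $|C'|=\tO{|C|}$, build the local graph with a sink $t$ adjacent to $C'$, construct the random modular $N(t)$-embedding of~\cite{LinialLW88}, and then binary-search on $k$ while answering each $\rank\!\ge\! k$ query for $x\in X$ via a low-rank column update through the matrix data structure of \cref{lem:matrix ds}. The two obstacles you flag are exactly what the paper addresses via \cref{lem:random construction} (the rank identity holds for all $x$ simultaneously w.h.p.) and the Sankowski-style reduction of a rank-threshold test to a determinant of an augmented matrix inside \cref{lem:matrix ds}; note in particular that the preprocessing in \cref{lem:matrix ds} is per value of $k$, so the binary search reconstructs the data structure $O(\log n)$ times, which is also how the paper organizes the computation.
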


\begin{lemma} \label{lem:isolating cut in MM time}
There is a randomized \pram{} algorithm that solves an \isocut{} instance $(G,C,T)$ (\Cref{def:isolatingcutsproblem}) in $\tilde O(|C|^\omega)$ work and $\tilde O(1)$ depth.
\end{lemma}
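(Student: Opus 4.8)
The plan is to reuse the refined isolating-cut reduction of \cref{lem:refined isolating cut lemma 2} and to replace each s-t vertex mincut call it makes by the algebraic, matrix-multiplication-based implementation coming from the convex-embedding machinery of \cite{LinialLW88} — the same machinery that underlies \cref{lem:near-clique in MM time}.

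First I would shrink the instance. The \isocut{} instances produced by \cref{alg:framework} carry the common-neighborhood structure $|N_G(u)\triangle N_G(v)| = \hO{\ell}$ for $u,v\in C$ and satisfy $|T| = \hO{|C|/\ell}$ with high probability, so I can apply the cluster boundary sparsification lemma \cref{lem:vertexsparsification} with $X = T$, together with its degree-capping preprocessing (deletion of $(\bigcap_{x\in T}N_G(x))\cap N_G(C)$). This replaces the boundary $N_G(C)$ by a set $C'$ of size $\hO{|C|}$ and caps the degree of every vertex of $C$ at $\hO{|C|}$, while shifting every isolating-cut value by a common additive constant $K$; all of this costs $\hO{m}$ work in $\hO{1}$ depth, and afterwards the working graph $G' = G[C\cup C']$ has $\hO{|C|}$ vertices and $\hO{|C|^2}$ edges.

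On $G'$ I would run the LNPSY isolating-cut scheme exactly as in \cref{lem:refined isolating cut lemma 2}: binary-encode $T$; compute the $O(\log|T|)$ auxiliary $(A_i,B_i)$-vertex mincuts; extract the connected components $U_s$ of $G'-\bigcup_i F_i$; and replace each $G'_s$ by the further-sparsified graph $G''_s$ of \cref{lem:sparsify by MM}, which guarantees $\sum_{s\in T}|V(G''_s)| = \hO{|C|}$. Taking the disjoint union of the $G''_s$ produces a single graph $H$ on $\hO{|C|}$ vertices in which one wants, per component, a single $(s,t_s)$-min separator; since the $G''_s$ are vertex-disjoint this is one batched problem. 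Each $(A_i,B_i)$-vertex mincut, and the batched mincut on $H$, is then solved algebraically: reduce s-t vertex connectivity to bipartite matching and use a convex $X$-embedding (with $X$ the neighborhood of the sink) so that each cut value equals the rank of a suitable $\hO{|C|}\times\hO{|C|}$ matrix, computable in $\tilde O(|C|^\omega)$ field operations and $\tilde O(1)$ depth; and recover an actual minimizer $L$ from residual reachability (transitive closure, again $\tilde O(|C|^\omega)$ work, $\tilde O(1)$ depth) after computing a maximum matching by the randomized isolation-lemma algorithm of \cite{MulmuleyVV87}. Because the convex embedding, the matching, and the reachability all decompose over connected components, one pass over $H$ settles all terminals simultaneously. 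Undoing the shift (adding back $K$ and the sets $Z_s$ from \cref{lem:sparsify by MM}) yields the true answers. The total work is $\hO{m} + \tilde O(|C|^\omega)$, which is $\tilde O(|C|^\omega)$ because $m = \hO{|C|^2}$ after the first step, and the depth is $\hO{1}$.

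The step I expect to be the main obstacle is witness recovery inside the matrix-multiplication regime: the rank formula hands over the cut \emph{value} $\kappa(x,t)$ for free, but \Cref{def:isolatingcutsproblem} also asks for a minimizer $L^*$, and extracting one in $\tilde O(1)$ depth requires either the residual-reachability route sketched above — which depends on having a genuine maximum matching and on the $\tilde O(1)$-depth primitives for matrix rank and transitive closure being pieced together carefully — or a direct structural reading of $L^*$ off the convex embedding. A smaller point to be careful about is that the clean $\tilde O(|C|^\omega)$ bound (rather than $\tilde O(m + |C|^\omega)$) genuinely uses the common-neighborhood property of the \isocut{} instance, which is what makes the degree-capping in the first step effective.
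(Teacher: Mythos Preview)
Your approach is correct and essentially identical to the paper's, which is a three-line proof: solve s-t vertex connectivity in $\tilde O(n^\omega)$ work and $\tilde O(1)$ depth via Lov\'asz's algebraic BMM plus reachability on the residual graph (to extract the min vertex cover, hence the cut), then plug this black box into \cref{lem:refined isolating cut lemma}. Your invocation of convex embedding is unnecessary here---that machinery is only needed for the low-rank batching in \mnnccn{}, not for standalone s-t mincut calls---but it is harmless, and the witness-recovery concern you flag is handled exactly as you sketch (BMM followed by residual reachability/transitive closure).
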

\begin{proof}
    We use the algorithm \cite{Lovasz79} that solves maximum bipartite matching in matrix multiplication time and work. We can turn that algorithm into an s-t vertex connectivity algorithm by using \cref{lem:ST vc problem in matching} (it reduces s-t vertex connectivity to minimum vertex cover, which can be found by first finding a maximum bipartite matching, then run reachability). Now s-t vertex connectivity can be solved in $\tO{n^{\omega}}$ work and $\tO{1}$ depth; we can plug it in \cref{lem:refined isolating cut lemma} to finish the proof.
\end{proof}

We prove \Cref{lem:near-clique in MM time} in \Cref{sec:proof ncmmtime}. We are now ready to prove \Cref{lem:SSVCprammatrix}.
\begin{proof}[Proof of \cref{lem:SSVCprammatrix}]
    We use \cref{alg:framework}. The correctness is implied by \cref{lem:framework pram congest}. So, we only need to calculate the complexity. We run the algorithm $\hO{1}$ times to boost the probability of succeeding and take the minimum cut as output.

    There are $\log n$ outer loops. Inside each loop, for each $C$, the work for running \isocut{} and \mnnccn{} is $|C|^{\omega}$ and the depth is $\tO{1}$. We run the algorithm for all $C\in\cC$ simultaneously so the depth does not change. The total work is $\tO{\sum_{C\in\cC}|C|^\omega}=\tO{n^{\omega}}$ because $\sum_{C\in\cC}|C|=\tO{n}$ according to \cref{lem:snc}.
\end{proof}

\subsection{Proof of \Cref{lem:near-clique in MM time}} \label{sec:proof ncmmtime}
This section is devoted to proving \Cref{lem:near-clique in MM time}.
 Recall the notations introduced in \Cref{sec:framework}. We are given the \mnnccn{} instance $(G,C,\ell)$ (\Cref{def:minneibornearclique}).
 
In this section, we describe a sequential algorithm for simplicity, but these algorithms can be easily parallelized.  We denote $T_{\MM}(n,k,r)$ as the number of field operations needed to multiply an $n \times k$ matrix with a $k \times r$ matrix.
\begin{lemma} [\cite{CGLZ20}]
\[ T_{\MM}(n,k,r) = O(T_{\MM}(k,n,r)) = O(T_{\MM}(n,r,k)).\]
\end{lemma}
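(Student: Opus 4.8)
The plan is to prove this classical invariance of matrix-multiplication complexity by reducing it to a transposition and a cyclic rotation of the three dimensions. Since a transposition together with the $3$-cycle $(n\,k\,r)\mapsto(k\,r\,n)$ generate all of $S_3$, it suffices to establish (i) $T_{\MM}(n,k,r)=O(T_{\MM}(r,k,n))$ (swap the two outer indices) and (ii) $T_{\MM}(n,k,r)=O(T_{\MM}(k,r,n))$ (cyclic rotation); the two identities asserted in the statement, $T_{\MM}(n,k,r)=O(T_{\MM}(k,n,r))$ and $T_{\MM}(n,k,r)=O(T_{\MM}(n,r,k))$, are then obtained by composing these, and by symmetry of the arguments all the ``$O(\cdot)$'' bounds hold in both directions, giving equalities.

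The first reduction is elementary. Given a depth-$O(\log^2 n)$ circuit that multiplies an $r\times k$ matrix by a $k\times n$ matrix, we compute $AB$ for $A\in\mathbb{F}^{n\times k}$, $B\in\mathbb{F}^{k\times r}$ via $AB=(B^{\top}A^{\top})^{\top}$: form $A^{\top}$ and $B^{\top}$, run the given circuit on $(B^{\top},A^{\top})$ to obtain $(AB)^{\top}=B^{\top}A^{\top}\in\mathbb{F}^{r\times n}$, and transpose the result. Transpositions and relabelings cost only $O(nk+kr+nr)$ additional operations in $O(1)$ depth, and since any straight-line program computing the product must be of size at least linear in the total size of the matrices involved, this overhead is dominated by $T_{\MM}(n,k,r)$ (equivalently, by $T_{\MM}(r,k,n)$ up to the bound being proved), and the $O(\log^2 n)$ depth is unaffected. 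Hence $T_{\MM}(n,k,r)=O(T_{\MM}(r,k,n))$.

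The cyclic rotation is the substantive step, and the natural route is the trilinear (tensor) picture. Computing $(A,B)\mapsto AB$ with $m$ essential multiplications corresponds to an identity of trilinear forms $\operatorname{tr}(ABS)=\sum_{t=1}^{m}\langle\beta^{(t)},A\rangle\,\langle\gamma^{(t)},B\rangle\,\langle\alpha^{(t)},S\rangle$ in $A\in\mathbb{F}^{n\times k}$, $B\in\mathbb{F}^{k\times r}$, and a dummy $S\in\mathbb{F}^{r\times n}$ encoding the output; cyclicity $\operatorname{tr}(ABS)=\operatorname{tr}(BSA)=\operatorname{tr}(SAB)$ permutes the three slots, and reading the output off as the partial derivative in whichever slot is declared ``dual'' converts such an identity into a bilinear algorithm for the correspondingly rotated product with the same $m$. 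The input- and output-linear layers get interchanged, which by Tellegen's transposition principle preserves their size and their depth up to constant factors, so both the total operation count and the $O(\log^2 n)$ depth bound are preserved; to apply this to an arbitrary (not necessarily bilinear) depth-$O(\log^2 n)$ algorithm one first passes to a bilinear algorithm of the same order of size by Strassen's normal-form theorem. I expect the main obstacle to be exactly this circuit-model bookkeeping — converting a general straight-line program to a bilinear one while simultaneously controlling operation count and depth, and checking that the transposition of the linear layers does not blow up the depth — which is why the statement as used is cleanest to invoke from \cite{CGLZ20}; the tensor-symmetry skeleton above is standard (see e.g.\ Bürgisser–Clausen–Shokrollahi), and \cite{CGLZ20} supplies the careful version.
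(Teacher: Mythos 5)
The paper does not prove this lemma at all: it is stated purely as a citation to \cite{CGLZ20}, so there is no ``paper proof'' to compare against, and the real question is whether your reconstruction is sound. Your skeleton is the standard one (Hopcroft--Musinski / trilinear symmetry of the matrix-multiplication tensor), and the group-theoretic reduction is correct: the outer transposition $(n,k,r)\mapsto(r,k,n)$ together with the $3$-cycle $(n,k,r)\mapsto(k,r,n)$ generate $S_3$, and each application costs only a constant factor, so the two permutations actually asserted in the statement follow by composition. The transposition half is complete and rigorous: $AB=(B^{\top}A^{\top})^{\top}$ gives $T_{\MM}(n,k,r)\le T_{\MM}(r,k,n)+O(nk+kr+nr)$, and the additive overhead is absorbed since $T_{\MM}(n,k,r)=\Omega(nk+kr+nr)$ (every input entry must be read and every output entry written for a generic instance), with only $O(1)$ extra depth.

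The cyclic half is where your argument is genuinely incomplete, as you yourself flag. The trace identity $\operatorname{tr}(ABS)=\operatorname{tr}(BSA)$ directly yields invariance of the \emph{bilinear} complexity (tensor rank) of $\langle n,k,r\rangle$ under cyclic rotation, but the lemma is about total arithmetic cost in bounded depth, and two conversions must be controlled: (a) passing from an arbitrary straight-line program to a bilinear algorithm of comparable total size and depth (Strassen's normalization controls the number of multiplications, not the size of the linear layers), and (b) arguing that transposing the input/output linear maps under the rotation preserves operation count and depth, which is Tellegen's transposition principle and does require a circuit-level argument rather than a tensor-level one. Since you explicitly identify this bookkeeping as the reason to invoke \cite{CGLZ20} rather than claim a self-contained proof, the proposal is an honest and essentially correct account of why the lemma holds, but it should be read as a proof sketch deferring step (a)--(b) to the reference, exactly as the paper itself does.
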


\paragraph{Matrix Data Structures.} We will use the matrix data structure such that the  $\rank(M)$ can be maintained quickly under low-rank updates.

\begin{definition}
Let $n,a$ be natural numbers where $a < n $.   A matrix $V \in \mathbb{R}^{n \times a}$ is a \emph{selector matrix} if $V$ is a binary matrix, every column has exactly one 1's and every row has at most one 1's.
\end{definition}

Selector matrices can be used to construct a larger matrix. To illustrate, let
   \[
   \mathbf{u}_j = \begin{bmatrix} u_{j1} \\ u_{j2} \\ \vdots \\ u_{jn} \end{bmatrix} \quad \text{and} \quad \mathbf{u}_k = \begin{bmatrix} u_{k1} \\ u_{k2} \\ \vdots \\ u_{kn} \end{bmatrix}
   \]
   where \( \mathbf{u}_j \) will define the values in the \( j \)-th column and \( \mathbf{u}_k \) will define the values in the \( k \)-th column.
and
 \[
   \mathbf{v}_j = \begin{bmatrix} 0 \\ \vdots \\ 1 \\ \vdots \\ 0 \end{bmatrix} \quad \text{and} \quad \mathbf{v}_k = \begin{bmatrix} 0 \\ \vdots \\ 1 \\ \vdots \\ 0 \end{bmatrix}
   \]
where the \( 1 \) in \( \mathbf{v}_j \) is in the \( j \)-th position and the \( 1 \) in \( \mathbf{v}_k \) is in the \( k \)-th position.

Define a selector matrix $V = \begin{bmatrix} \mathbf{v}_j &  \mathbf{v}_k \end{bmatrix}.$ Also, define $U = \begin{bmatrix} \mathbf{u}_j & \mathbf{u}_k \end{bmatrix}.$ We obtain a larger matrix by

\[
   UV^T = \begin{bmatrix} \mathbf{u}_j & \mathbf{u}_k \end{bmatrix} \begin{bmatrix} \mathbf{v}_j^T \\ \mathbf{v}_k^T \end{bmatrix}  = \begin{bmatrix} 0 & \cdots & u_{j1} & \cdots & u_{k1} & \cdots & 0 \\ 0 & \cdots & u_{j2} & \cdots & u_{k2} & \cdots & 0 \\ \vdots & & \vdots & & \vdots & & \vdots \\ 0 & \cdots & u_{jn} & \cdots & u_{kn} & \cdots & 0 \end{bmatrix}.
 \]

We will represent column updates of $M$ of the form $M + UV^T$ where $V$ is a selector matrix.

\begin{lemma} \label{lem:matrix ds} There is a data structure $\mathcal{M}$ that supports the following operations.
\begin{itemize}
    \item $\mathcal{M}.\textsc{Construct}(M,k)$ where $M \in \mathbb{R}^{n \times n}$: Preprocess and construct the data structure using $M$ and integer $k \leq n$ in $O(n^\omega)$ time. With high probability, $\mathcal{M}$ correctly operates $\textsc{IsRankGtK}$ defined next.
    \item $\mathcal{M}.\textsc{IsRankGtK}(U,V)$ where $U,V \in \mathbb{R}^{n \times a}, a \leq n $,  and $V$ is a selector matrix: decide if $\rank(M+ U V^\top) \geq k$ in $O(T_{\MM}(n,a,a))$ time.
\end{itemize}
\end{lemma}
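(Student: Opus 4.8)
The plan is to build the data structure from two standard ingredients: (i) the randomized reduction from rank testing to testing invertibility of a small random bilinear projection, and (ii) the Sherman–Morrison–Woodbury / matrix-determinant identity, so that a single $\textsc{IsRankGtK}$ call only needs to decide invertibility of an $a\times a$ matrix. In $\textsc{Construct}(M,k)$ I would first fix random matrices $R\in\mathbb{R}^{k\times n}$ and $S\in\mathbb{R}^{n\times k}$ with entries drawn independently and uniformly from a finite set of size $\mathrm{poly}(n)$. By the Polynomial Identity (Schwartz–Zippel) Lemma, for any fixed $N\in\mathbb{R}^{n\times n}$ we have $\rank(N)\ge k$ iff $RNS\in\mathbb{R}^{k\times k}$ is invertible, except with probability $\le\mathrm{poly}(n)/(\text{set size})$; a union bound over the (polynomially many) queries issued by the algorithm of \Cref{sec:proof ncmmtime} makes the whole structure correct w.h.p. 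Hence $\textsc{IsRankGtK}(U,V)$ reduces to deciding whether
\[
R(M+UV^\top)S \;=\; \underbrace{RMS}_{=:\hat M}\;+\;\underbrace{(RU)}_{=:P}\;\underbrace{(V^\top S)}_{=:Q}
\]
is invertible, with $\hat M\in\mathbb{R}^{k\times k}$, $P\in\mathbb{R}^{k\times a}$, $Q\in\mathbb{R}^{a\times k}$.

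Assuming $\hat M$ is invertible (equivalently $\rank(M)\ge k$; the deficient case is handled below), the matrix-determinant lemma gives $\det(\hat M+PQ)=\det(\hat M)\cdot\det(I_a+Q\hat M^{-1}P)$, so the question becomes whether the $a\times a$ matrix $I_a+Q\hat M^{-1}P$ is invertible, an $O(a^\omega)\le O(T_{\MM}(n,a,a))$ test. The remaining issue is producing $Q\hat M^{-1}P$ within budget, and here is where the two hypotheses ($V$ a selector; updates arising, in the application, from columns of a fixed matrix) are used. In $\textsc{Construct}$ I would store $\hat M$, $\hat M^{-1}$ (cost $O(k^\omega)\le O(n^\omega)$), and the ``folded'' matrix $Y:=S\hat M^{-1}(RM)\in\mathbb{R}^{n\times n}$ (a constant number of rectangular products, each $O(n^\omega)$ since $k\le n$). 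Since $V$ is a selector, $Q=V^\top S$ is an $a$-row selection of $S$ and needs no arithmetic; writing the updated columns of $U$ as signed columns of $M$, i.e. $U=MW$ for a signed selector $W$ (exactly the situation in the convex-embedding reduction, where every column that enters or leaves is a column of the fixed embedding matrix), one gets $P=(RM)W$ and therefore $Q\hat M^{-1}P=V^\top Y W$ — simply the $a\times a$ submatrix of the precomputed $Y$ selected by $V$ and $W$, read off in $O(a^2)$ time. For the statement with fully general $U$, I would instead take $R,S$ to be random Vandermonde/Cauchy matrices, so that $RU$ and $V^\top S$ can be evaluated by fast multipoint evaluation in $\tilde O(na)$ time, and store $S\hat M^{-1}$, reducing $Q\hat M^{-1}P$ to one $(a\times k)\cdot(k\times a)$ product of cost $O(T_{\MM}(a,k,a))=O(k\,a^{\omega-1})\le O(n\,a^{\omega-1})=O(T_{\MM}(n,a,a))$.

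The main obstacle is exactly the per-query bound $O(T_{\MM}(n,a,a))$ rather than the trivial $O(n^\omega)$: every operation performed during a query must collapse to $a$-sized (or fast structured) matrix arithmetic even though the inner dimension $k$ can be as large as $n$. This is what forces the selector structure of $V$ (making $Q=V^\top S$ free), the precomputation of the $n\times n$ matrix $Y=S\hat M^{-1}RM$ (turning the triple product $Q\hat M^{-1}P$ into an $a\times a$ table lookup), and a structured choice of $R$ when the updates are unstructured. A secondary subtlety is the rank-deficient case $\rank(M)<k$, where $\hat M$ is singular: when the updates are column submatrices of $M$ (the application), every queried matrix is a column submatrix of $M$ and so has rank $\le\rank(M)<k$, and the query returns NO immediately; in general one instead evaluates the bordered determinant $\det\!\begin{pmatrix}\hat M & -P\\ Q & I_a\end{pmatrix}$, which equals $\det(\hat M+PQ)$ and does not require $\hat M^{-1}$, using a rank decomposition $\hat M=EF$ computed once in $\textsc{Construct}$; this reduces the test to an invertibility question on a $(k-\rank(M)+a)\times(k-\rank(M)+a)$ matrix, which stays within budget in the application because all the ranks in question differ by at most the common-neighborhood parameter $\hO{\ell}$, i.e. $k-\rank(M)=O(a)$.
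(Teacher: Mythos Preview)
Your proposal is essentially correct and shares the paper's core mechanism: reduce ``$\rank\ge k$'' to an invertibility test, precompute an inverse once, and use the matrix determinant lemma so each query only needs $\det(I_a+\text{something})$, with the selector $V$ making the row-extraction step free. The difference is in the randomization. You project two-sidedly to $\hat M=RMS\in\mathbb R^{k\times k}$; the paper instead \emph{augments} $M$ to the $3n\times 3n$ block matrix
\[
f(M)=\begin{pmatrix}M&X&0\\ Y&0&I\\ 0&I&I^{(n-k)}\end{pmatrix},
\]
with random $X,Y$, for which Sankowski showed $f(N)$ is nonsingular iff $\rank(N)\ge k$ w.h.p. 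The payoff is that the query is uniform in $U$: after precomputing $f(M)^{-1}$, one selects $a$ rows via $V_0^\top$ and does a single $(a\times 3n)\cdot(3n\times a)$ product with $U_0$, cost $T_{\MM}(a,n,a)$, no structured randomness and no case split.

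Your route runs into the obstacle you correctly flag---computing $RU$ is $T_{\MM}(k,n,a)$, too large when $k\approx n$---and you patch it with Vandermonde $R,S$ and fast multipoint evaluation, which introduces log factors and a more delicate Schwartz--Zippel argument (one needs Cauchy--Binet with generalized Vandermonde minors to see $\det(RA)\not\equiv 0$). A much simpler fix, and essentially what the paper's $f(M)^{-1}$ achieves, is to precompute the $n\times n$ matrix $Z:=S\hat M^{-1}R$ in $O(n^\omega)$; then $Q\hat M^{-1}P=(V^\top Z)\,U$ is one $a$-row selection followed by one $(a\times n)\cdot(n\times a)$ product. One small error: your application-specific shortcut writes $U=MW$, but in the convex-embedding use the entering columns are columns of the \emph{full} embedding matrix $M_V$, not of the preprocessed $M=M_{N(s)}$; precomputing $S\hat M^{-1}R\,M_V$ fixes this, but as stated the claim is wrong. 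For the singular case $\rank(M)<k$, both your treatment and the paper's are sketchy; the paper simply assumes $f(M)$ invertible and defers to Sankowski for the removal of that assumption.
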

We will use low-rank update techniques as in \cite{BrandNS19, Sankowski07}, but their proofs are described using sequential updates. For our purpose, we prove the batch version and perform the updates in one shot.
\begin{proof} We explain the construction first.

\paragraph{$\mathcal{M}.\textsc{Construct}(M,k)$.} We are given $M \in \mathbb{R}^{n \times n}$ and an integer $k \leq n$ for preprocessing.
The data structure constructs the following matrix:
\begin{align}
f(M) =
\begin{bmatrix}
M & X & 0\\
Y& 0 &  I \\
0 & I & I^{(q)} \\
\end{bmatrix}
\end{align}
where $I^{(q)}$ is an $n$-by-$n$ binary matrix whose first $q$ diagonal entries are 1, $X$ and $Y$ are independent and uniformly random matrices where each entry is randomly chosen from some field extension of size $\Omega(n^2)$. Sankowski (\cite{Sankowski07}, Lemma 4.1, Theorem 4.1) showed that the matrix $f(M)$ is full rank w.h.p. if and only if $\rank(M) \geq n - q$.

By setting $q = n-k$, we can decide if $f(M)$ has rank $\geq k$ by checking if $f(M)$ is full rank, which can be checked by computing its determinant.  We fix $q = n - k$ from now on.

In preparation for the query operation where we will be given two matrices $U,V \in \mathbb{R}^{n \times a} $. We can decide if $\rank (f(M + U V^\top)) \geq k$ by computing the determinant of $f(M+ U V^\top)$. That is,
\begin{lemma} \label{lem:test rank via full rank}
Given  $U,V \in \mathbb{R}^{n \times a}$ where $a \leq n$,
$\rank (f(M + U V^\top)) \geq k$ if and only if $f(M + U V^{\top})$ is full rank with high probability.
\end{lemma}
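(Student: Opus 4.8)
The plan is to obtain Lemma~\ref{lem:test rank via full rank} as an immediate corollary of Sankowski's rank-to-full-rank reduction, applied to the single fixed matrix $M' := M + UV^\top$. Once the query pair $(U,V)$ is fixed, $M'$ is an ordinary $n \times n$ matrix, and $f(M')$ is exactly the block matrix $\begin{bmatrix} M' & X & 0 \\ Y & 0 & I \\ 0 & I & I^{(q)} \end{bmatrix}$ built from the \emph{same} random blocks $X,Y$ drawn during $\mathcal{M}.\textsc{Construct}$ and with $q = n-k$. By Sankowski (\cite{Sankowski07}, Lemma~4.1 and Theorem~4.1), for any fixed matrix $M'$, if the entries of $X$ and $Y$ are chosen independently and uniformly from a field extension of size $\poly(n)$, then with probability at least $1 - n^{-c}$ we have: $f(M')$ is full rank if and only if $\rank(M') \geq n-q$. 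Substituting $q = n-k$ yields precisely the claim: with high probability $f(M + UV^\top)$ is full rank iff $\rank(M + UV^\top) \geq k$, which is what $\textsc{IsRankGtK}$ needs, since full-rankness of $f(\cdot)$ is detected by a nonzero determinant. (The displayed equivalence should be read with $\rank$ referring to $M+UV^\top$, consistent with the $\textsc{IsRankGtK}$ specification, rather than to the padded matrix $f(M+UV^\top)$, whose rank is always at least $2n\ge k$.)

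The point requiring care is that $X$ and $Y$ are committed during preprocessing, before any query $(U,V)$ is revealed, so I must argue Sankowski's guarantee still applies to $M' = M + UV^\top$. This holds because the queries are oblivious to $X,Y$: in the $\mnnccn{}$ application the update matrices encode the symmetric differences $N_G(s)\triangle N_G(s')$ for pairs $s,s' \in C$, so there are at most $\binom{n}{2}$ relevant queries and they are determined solely by the input graph. I would union-bound Sankowski's failure probability over this polynomial-size family of queries (equivalently, choose the field-extension size large enough that a single $n^{-c}$ bound survives the union bound). This is exactly what upgrades the per-query statement of Lemma~\ref{lem:test rank via full rank} into the global guarantee promised by $\textsc{Construct}$, namely that $\mathcal{M}$ answers \emph{all} $\textsc{IsRankGtK}$ calls correctly with high probability.

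For completeness I would also record the routine computational fact underpinning the $O(T_{\MM}(n,a,a))$ running time claimed for $\textsc{IsRankGtK}$ in Lemma~\ref{lem:matrix ds}. During $\textsc{Construct}$ we spend $O(n^\omega)$ time to compute a factorization giving access to $f(M)^{-1}$ and $\det(f(M))$; padding $U,V$ with zero rows to $\hat U,\hat V \in \mathbb{R}^{3n \times a}$, we have $f(M + UV^\top) = f(M) + \hat U \hat V^\top$, so by the matrix determinant lemma $\det\!\big(f(M + UV^\top)\big) = \det(f(M))\cdot \det\!\big(I_a + \hat V^\top f(M)^{-1} \hat U\big)$. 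Forming the $a\times a$ matrix $I_a + \hat V^\top f(M)^{-1}\hat U$ costs $O(T_{\MM}(n,a,a))$ field operations and computing its determinant a further $O(a^\omega)$, which is absorbed; the degenerate case where $f(M)$ is itself singular is handled as in \cite{BrandNS19, Sankowski07} (e.g.\ working symbolically over a rational function field or maintaining a rank-robust decomposition), which is standard low-rank-update bookkeeping.

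The main obstacle I anticipate is expository rather than mathematical: faithfully importing the exact statement of Sankowski's lemma (its probability bound and field-size requirement) and cleanly setting up the union bound over the oblivious query family. The underlying mathematical content is just the observation that $M + UV^\top$ is an ordinary fixed matrix to which the black-box reduction applies, so no genuinely hard new step is expected.
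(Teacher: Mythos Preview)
Your proposal is correct and matches the paper's approach exactly: the paper also derives this lemma as an immediate application of Sankowski's result (\cite{Sankowski07}, Lemma~4.1 and Theorem~4.1) to the fixed matrix $M' = M + UV^\top$, with $q = n-k$. Your treatment is in fact more careful than the paper's on two points it leaves implicit: you correctly flag that the displayed statement should read $\rank(M+UV^\top)\ge k$ rather than $\rank(f(M+UV^\top))\ge k$, and you spell out the union bound over the oblivious query family needed to upgrade the per-query guarantee to the global one promised by $\textsc{Construct}$.
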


We also precompute $\textdet(f(M))$ and $f(M)^{-1}$ in $O(n^{\omega})$ time. For simplicity, we assume that $f(M)$ is invertible. This assumption can be removed; see Theorem C.10 of \cite{Sankowski07}.

\paragraph{$\mathcal{M}_k.\textsc{IsRankGtK}(U,V)$.} We are given where $U,V \in \mathbb{R}^{n \times a}, k,a < n $,  and $V$ is a selector matrix. By \Cref{lem:test rank via full rank}, it is enough to compute the determinant $\textdet(f(M+ UV^\top))$. To do so, we  use the matrix determinant lemma.
\begin{lemma} [Matrix Determinant Lemma]
Let $A \in \mathbb{R}^{n \times n}$ be an invertible matrix and $U,V \in \mathbb{R}^{n\times a}$.
\begin{align} \label{eq:matrix det lemma}
    \textdet(A + U V^\top) = \textdet(A) \textdet(I + V^\top A^{-1} U).
\end{align}
\end{lemma}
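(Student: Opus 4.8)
The plan is to derive \eqref{eq:matrix det lemma} from a single $(n+a)\times(n+a)$ block matrix whose determinant we evaluate in two different ways; this is the textbook route to the matrix determinant lemma, and invertibility of $A$ is exactly what makes one of the two evaluations legitimate.

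First I would introduce the auxiliary block matrix
\[
M \;=\; \begin{pmatrix} A & U \\ -V^{\top} & I_a \end{pmatrix},
\]
and compute $\det(M)$ by two Schur-complement eliminations. Eliminating the $(2,1)$-block against the invertible pivot $A$: left-multiplying $M$ by the unit-determinant block matrix $\bigl(\begin{smallmatrix} I_n & 0 \\ V^{\top}A^{-1} & I_a \end{smallmatrix}\bigr)$ produces the block upper-triangular matrix with diagonal blocks $A$ and $I_a + V^{\top}A^{-1}U$, so $\det(M) = \det(A)\,\det(I_a + V^{\top}A^{-1}U)$. Eliminating the $(1,2)$-block against the (always invertible) pivot $I_a$: left-multiplying $M$ by the unit-determinant block matrix $\bigl(\begin{smallmatrix} I_n & -U \\ 0 & I_a \end{smallmatrix}\bigr)$ produces the block lower-triangular matrix with diagonal blocks $A + UV^{\top}$ and $I_a$, so $\det(M) = \det(A + UV^{\top})$. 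Equating the two expressions yields \eqref{eq:matrix det lemma}.

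There is essentially no serious obstacle: the only things to get right are the signs and placements in the two elimination matrices (so that each is triangular with identity diagonal blocks, hence has determinant $1$), and the observation that the identity remains valid in the degenerate case where $I_a + V^{\top}A^{-1}U$ is singular — there the left side $\det(A+UV^{\top})$ also vanishes, since it equals $\det(M)=\det(A)\det(I_a+V^{\top}A^{-1}U)=0$. If desired, one can additionally record the equivalent form $\det(I_a + V^{\top}A^{-1}U) = \det(I_n + A^{-1}UV^{\top})$ (same argument with the roles of the two pivots swapped), which is sometimes convenient when applying the lemma inside \Cref{lem:matrix ds}, but it is not required for the statement as worded.
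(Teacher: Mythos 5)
Your proof is correct; it is the standard Schur-complement argument for the matrix determinant lemma, and both block eliminations check out (the first uses the invertibility of $A$, the second the trivially invertible pivot $I_a$, and both elimination matrices are unit-triangular). The paper, however, does not actually supply a proof of this lemma at all: it is stated inline inside the proof of \Cref{lem:matrix ds} as a classical fact and invoked directly. So there is no paper argument to compare against; your write-up simply fills in the well-known derivation that the authors chose to omit.
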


Let $U_0$ be a matrix and  $V^{\top}_0$ be a selector matrix such that
\begin{align}  \label{eq:U0V0}
f(M) + U_0 V^{\top}_0 = f(M+UV^\top).
\end{align}

Therefore, we can compute $\textdet(f(M+ UV^\top))$ as follows.
\begin{align*}
  \textdet(f(M+ UV^\top)) &\overset{(\ref{eq:U0V0})}{=} \textdeg(f(M) + U_0 V^{\top}_0) \\
  &\overset{(\ref{eq:matrix det lemma})}{=}  \textdet(f(M)) \textdet(I_{a} + V_0^\top (f(M))^{-1} U_0).
\end{align*}

Since we precomputed $\textdet(f(M))$ during the preprocessing, it remains to compute \[\textdet(I + V_0^\top (f(M))^{-1} U_0),\] which can be computed in time \[O(n\cdot a +  T_{\MM}(a,n,a) + a^\omega) = O(T_{\MM}(a,n,a)).\]
Indeed, we break down the computation as follows.
 \begin{enumerate} [noitemsep]
     \item Compute $Q_0 := V^\top_0 f(M)^{-1}$ in $O(n\cdot a)$ time.  Since $V_0$ is a selector matrix, the matrix $V^\top_0 f(M)^{-1}$ can be obtained by reading corresponding rows $f(M)^{-1}$ selected by the 1's entries in $V^\top_0$ in $O(n\cdot a)$ time. The matrix $(f(M))^{-1}$ was stored at the preprocessing step.
     \item Compute $Q_1 := Q_0 U_0$ in $O(T_{\MM}(a,n,a))$ time.
     \item Compute $Q_2 := I_a + Q_1$ in $O(a^2)$ time.
     \item Compute $\textdet(Q_2) = \textdet(I_{a} + V_0^\top (f(M))^{-1} U_0)$ in $O(a^\omega)$ time.
 \end{enumerate}

This concludes the query operation in $O(T_{\MM}(a,n,a))$ time.
\end{proof}

\paragraph{Convex Embedding.}

Let \(\mathbb{F}\) be a finite field. For \(k \geq 0\), the space \(\mathbb{F}^k\) is a \(k\)-dimensional linear space over \(\mathbb{F}\). Let \(X = \{ x_1, \ldots, x_n \}\) be a finite set of points within \(\mathbb{F}^k\). The \emph{affine hull} of \(X\) is defined as
\[
\text{aff}(X) = \left\{ \sum_{i=1}^k c_i x_i : x_i \in X \text{ and } \sum_{i=1}^k c_i = 1 \right\}.
\]
The rank of \(X\), represented as \(\text{rank}(X)\), is one plus the dimension of \(\text{aff}(X)\). Specifically, if \(\mathbb{F} = \mathbb{R}\), we refer to the \emph{convex hull} of \(X\), denoted \(\text{conv}(X)\).
For any sets $V, W $, any function $f : V \rightarrow W$, and any subset $U \subseteq V$, we denote $f(U) := \{ f(u) \colon u \in U \}$.

\begin{definition}[Convex $X$-embedding~\cite{LinialLW88}]
For any $X \subset V$, a convex  $X$-embedding of
a graph $G = (V,E) $ is a function $f : V  \rightarrow \mathbb{R}^{|X|-1}$
such that for each $v \in V \setminus X$, $f(v) \in \text{conv}(f(N_G(v)))$.
\end{definition}

For efficient construction, \cite{LinialLW88,CheriyanR94} define convex-embedding over finite field $\mathbb{F}$. In
particular, they construct the convex $X$-embedding over the field of integers modulo a prime $p$, $\mathbb{Z}_p$ by fixing a random prime number $p \in [n^5, n^6]$, and choosing a random nonzero coefficient function $c: E \rightarrow (\mathbb{Z}_p \setminus \{ 0 \})$ on edges. This construction yields a function $f : V \rightarrow (\mathbb{Z}_p^{|X|-1})$ called \textit{random modular $X$-embedding}.

\begin{lemma} [Theorem 4.13 of \cite{LinialLW88}] \label{lem:random construction}
Let $t$ be a fixed node in a graph $G = (V,E)$.  A random modular $N(t)$-embedding $f$ can be constructed in $\tilde O(n^{\omega})$ time. With high probability, for every node $s \in V - N[r]$, we have $\kappa_G(s,t) = \rank(f(N(s)))$.
\end{lemma}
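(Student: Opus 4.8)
The plan is to reduce the statement to the single structural identity $\rank(f(N_G(s)))=\kappa_G(s,t)$, required to hold simultaneously for all $s\notin N_G[t]$ with high probability; the running time is then immediate. Recall that $f$ is the unique solution of a linear system: the $|X|$ vertices of $X:=N_G(t)$ are pinned to the standard basis of $\mathbb{Z}_p^{|X|}$ (so every $f(v)$ has coordinate sum $1$, and $\rank$ is just the linear rank of these vectors), and every $v\notin X$ satisfies $f(v)=\sum_{u\in N_G(v)}\lambda_{vu}f(u)$ with $\lambda_{vu}=c_{\{u,v\}}/\sum_w c_{\{v,w\}}$. Clearing denominators this is $(I-A)F=B$ with $A,B$ linear in the $c_e$; its determinant is a rational function that is nonzero over $\mathbb{R}$ for generic positive weights (on the component of $t$ the substochastic matrix $A$ has spectral radius $<1$), hence nonzero for random $c_e\in\mathbb{Z}_p$, $p\in[n^5,n^6]$, by Schwartz–Zippel. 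Thus $F=(I-A)^{-1}B$ is obtained from a single $\tO{n^\omega}$-time matrix inversion. (Vertices in components of $G$ avoiding $t$ have $\kappa_G(\cdot,t)=0$ and need no embedding.) It remains to prove the two inequalities.

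For $\rank(f(N_G(s)))\le\kappa_G(s,t)$: fix a minimum $(s,t)$-separator $S$ and let $L$ be the component of $s$ in $G-S$, so $N_G(s)\subseteq L\cup S$ and, since no edge joins $L$ to $t$, $X\cap L=\emptyset$. For any affine functional $\phi$ vanishing on $f(S)$, the identity $\sum_u\lambda_{vu}=1$ gives $\phi(f(v))=\sum_{u\in N_G(v)\cap L}\lambda_{vu}\phi(f(u))$ for $v\in L$, so $(\phi(f(v)))_{v\in L}\in\ker(I-A_L)$; because $A_L$ is strictly substochastic on every $G[L]$-component (each touches $S$, otherwise it would be a full component of $G$), $I-A_L$ is invertible and $\phi$ vanishes on $f(L)$ as well. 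Hence $f(N_G(s))\subseteq\operatorname{aff}(f(S))$, of rank at most $|S|=\kappa_G(s,t)$. This is an identity of rational functions in the $c_e$, so it transfers verbatim to $\mathbb{Z}_p$ whenever the (generic) denominators are nonzero.

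The reverse bound $\rank(f(N_G(s)))\ge\kappa_G(s,t)$ is the main obstacle, and it is exactly where the modular randomness is essential. Put $k:=\kappa_G(s,t)$; by Menger's theorem take $k$ internally vertex-disjoint $s$–$t$ paths, with distinct second vertices $u_1,\dots,u_k\in N_G(s)$ and distinct last-before-$t$ vertices $w_1,\dots,w_k\in X$. Over $\mathbb{R}$ with positive weights, $f(v)_x$ is the probability that the $\lambda$-weighted random walk from $v$ is absorbed at $x\in X$, and I would show the $k\times k$ submatrix $P':=(f(u_i)_{w_j})_{i,j}$ is nonsingular: already in the subgraph formed by the $k$ paths one has $P'=\operatorname{diag}(r_1,\dots,r_k)+aq^\top$, where $r_i$ (resp.\ $a_i$) is the chance the walk from $u_i$ reaches $w_i$ (resp.\ returns to $s$) first and $q_j$ the chance the walk from $s$ ends at $w_j$, so by the matrix determinant lemma $\det P'=(\prod_i r_i)(1+\sum_i a_iq_i/r_i)\ne0$; by continuity the corresponding $k\times k$ minor of $[f(u):u\in N_G(s)]$ is a not-identically-zero rational function of all the $c_e$, hence nonzero for random $c_e\in\mathbb{Z}_p$ by Schwartz–Zippel, giving $\rank(f(N_G(s)))\ge k$. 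Finally a union bound over the at most $n$ vertices $s$ and over the finitely many nonsingularity events yields the identity with high probability, all within the one $\tO{n^\omega}$-time inversion; the only subtle point throughout is this passage from the transparent real/probabilistic picture to $\mathbb{Z}_p$ via polynomial-identity testing, which is precisely what the random modular embedding of \cite{LinialLW88} is engineered to make rigorous.
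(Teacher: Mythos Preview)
The paper does not prove this lemma; it is quoted as Theorem~4.13 of \cite{LinialLW88} and used as a black box, so there is no in-paper argument to compare against. Your reconstruction is correct and captures the essential content of the Linial--Lov\'asz--Wigderson argument: the upper bound $\rank(f(N(s)))\le\kappa_G(s,t)$ holds for every convex $X$-embedding (the relevant $(|S|{+}1)$-minors vanish identically once they vanish on the open set of positive real weights), and the matching lower bound is certified by one nonvanishing $k\times k$ minor, for which your path-only limit together with the matrix determinant lemma gives a clean explicit witness; Schwartz--Zippel then transfers both facts to random $c_e\in\mathbb{Z}_p$ with $p\in[n^5,n^6]$, and a union bound over the $n$ choices of $s$ finishes.

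Two places where the writeup is slightly loose but easily repaired: first, the system is polynomial in the $c_e$ only after clearing the row-denominators $\sum_w c_{vw}$, i.e., the correct object is the principal submatrix of the weighted Laplacian on $V\setminus X$ rather than $I-A$ with ``$A,B$ linear in the $c_e$''; second, for the continuity step in the lower bound you should record that the $k$ Menger paths may be chosen to meet $X=N_G(t)$ only at their final internal vertices $w_i$ (shorten any path the first time it enters $X$), so that in the $\epsilon\to 0$ limit the random walk from each $u_i$ truly sees only $w_1,\dots,w_k$ as absorbing states and your rank-one-plus-diagonal formula $P'=\operatorname{diag}(r)+aq^\top$ is valid.
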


Using matrix data structure (\Cref{lem:matrix ds}), we can construct a fast connectivity data structure as follows.

\begin{lemma} \label{lem:connectivity oracle}
Let $t$ be a fixed node in a graph $G = (V,E)$.  Suppose we have an oracle that can list all the elements in the symmetric difference  $N(s)\triangle N(s')$ for any pair $(s,s')$ up to $a$ elements in $\tilde O(a)$ time.
\begin{itemize}
    \item A random modular $N(t)$-embedding $f$ can be constructed in $\tilde O(n^{\omega})$ time.
    \item With additional $\tilde O(n^{\omega})$ preprocessing time on another fixed node $s \in V - N[t]$ and parameter $k$, we have the following with high probability: for all $s' \in V - N[t]$, we can decide if $\kappa_G(s',t) \geq k$ in $O(T_{\MM}(n,a,a))$ time where $a = |N(s) \triangle N(s')|$.
\end{itemize}
\end{lemma}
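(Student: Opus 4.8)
The plan is to obtain the first bullet directly from \Cref{lem:random construction} and the second bullet by rephrasing ``$\kappa_G(s',t) \geq k$'' as a rank-at-least-$k$ test on a \emph{low-rank column update} of the matrix representing $f(N(s))$, which is precisely what the data structure of \Cref{lem:matrix ds} supports. All arithmetic will take place over the field $\mathbb{Z}_p$ on which the random modular $N(t)$-embedding $f$ of \cite{LinialLW88} is defined; since $p = \Omega(n^5)$, this field is large enough that \Cref{lem:matrix ds} (whose randomness only needs a field of size $\Omega(n^2)$) applies verbatim, and parallelizing everything to $\tilde O(1)$ depth is routine because every step is matrix arithmetic.

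First I would invoke \Cref{lem:random construction} to construct in $\tilde O(n^\omega)$ time a random modular $N(t)$-embedding $f : V \to \mathbb{Z}_p^{|N(t)|-1}$ for which, with high probability, $\kappa_G(s'',t) = \rank(f(N(s'')))$ holds \emph{simultaneously} for every $s'' \in V - N[t]$; this already establishes the first bullet. Now fix the node $s \in V - N[t]$ and the parameter $k$ from the second bullet. Form the $n \times n$ matrix $M$ over $\mathbb{Z}_p$ whose column indexed by $u \in V$ is the vector $f(u)$ (zero-padded from dimension $|N(t)|-1$ up to $n$) if $u \in N(s)$, and is the zero column otherwise. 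Zero rows and zero columns do not affect the rank, so $\rank(M) = \rank(f(N(s))) = \kappa_G(s,t)$. The preprocessing step runs $\mathcal{M}.\textsc{Construct}(M,k)$ in $O(n^\omega)$ time.

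For a query node $s' \in V - N[t]$, let $M'$ be the analogous $n \times n$ matrix built from $N(s')$ in place of $N(s)$, so $\rank(M') = \kappa_G(s',t)$ with high probability. The columns of $M$ and $M'$ coincide at every index $u \notin N(s)\triangle N(s')$: if $u \in N(s)\cap N(s')$ both columns equal $f(u)$, and if $u \notin N(s)\cup N(s')$ both are zero. Hence $M' - M$ is supported on the $a := |N(s)\triangle N(s')|$ columns indexed by $J := N(s)\triangle N(s')$, and I would write $M' = M + UV^\top$, where $V \in \mathbb{Z}_p^{n\times a}$ is the selector matrix whose $i$-th column is the standard basis vector $e_{j_i}$ (with $j_i$ the $i$-th element of $J$) and the $i$-th column of $U \in \mathbb{Z}_p^{n\times a}$ equals $-f(j_i)$ if $j_i \in N(s)\setminus N(s')$ and $+f(j_i)$ if $j_i \in N(s')\setminus N(s)$. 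Since the $j_i$ are distinct, $V$ is a selector matrix in the sense of \Cref{lem:matrix ds}, and $a \leq |N(s)\cup N(s')| \leq n$, so the hypotheses of $\textsc{IsRankGtK}$ are satisfied. The query then proceeds: (i) call the oracle to list $J$ in $\tilde O(a)$ time; (ii) fetch the relevant embedding vectors and assemble $U$ and $V$ in $O(na)$ time; (iii) call $\mathcal{M}.\textsc{IsRankGtK}(U,V)$, which in $O(T_{\MM}(n,a,a))$ time decides whether $\rank(M+UV^\top) = \rank(M') \geq k$, i.e.\ whether $\kappa_G(s',t) \geq k$. Because $T_{\MM}(n,a,a) = \Omega(na)$, the assembly cost of (ii) is absorbed, giving the stated $O(T_{\MM}(n,a,a))$ query time; correctness holds with high probability after a union bound over the (single, $s'$-uniform) success event of \Cref{lem:random construction} and the success event of \Cref{lem:matrix ds}.

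The proof is essentially bookkeeping once \Cref{lem:random construction} and \Cref{lem:matrix ds} are available; the only points requiring care are verifying that the column-difference matrix $UV^\top$ respects the selector-matrix structure demanded by \Cref{lem:matrix ds} (which is why one updates columns in place rather than permuting them) and ensuring that the embedding $f$, the matrix $M$, and the data structure all live over the same field $\mathbb{Z}_p$. I do not anticipate a genuine obstacle here: the difficulty of this approach is concentrated inside the two cited lemmas, not in combining them.
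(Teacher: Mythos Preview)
Your proposal is correct and follows essentially the same approach as the paper: build the embedding via \Cref{lem:random construction}, preprocess the matrix of $f$-columns of $N(s)$ with \Cref{lem:matrix ds}, and answer each query by expressing the change from $N(s)$ to $N(s')$ as a rank-$a$ selector-matrix column update. If anything, your write-up is more explicit than the paper's in handling the dimension mismatch (zero-padding to $n\times n$ and using zero columns for $u\notin N(s)$) and in verifying that the update matrix has the selector structure required by \Cref{lem:matrix ds}.
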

\begin{proof}
Let $f$ be a random modular $N_{G}(t)$-embedding $f$, which can be constructed in $\tilde O(n^{\omega})$ using \Cref{lem:random construction}. We can represent $f(V)$ as a matrix $M_V := [\ldots f(v) \ldots]$ where each column corresponds to a vector $f(v)$. For all $U \subseteq V$, we denote $M_U$ as a submatrix of $M_V$ when we restrict the columns of $M_V$ to those in $f(U)$.

We will use the matrix data structure $\mathcal{M}$ in \Cref{lem:matrix ds}. We first use $M := M_{N(x)}$ where $x$ is a fixed node and $k$ is a parameter for the construction of $\mathcal{M}$ in $O(n^{\omega})$ time.  Next, we are given query $s' \in V - N[t]$, where $|N(s) \triangle N(s')| = a' \leq a$. Let $W = N(s')$, and so we can write $M_W := M + UV^{T}$ where $U \in \mathbb{R}^{k \times a'}$ corresponds to the column differences in the embedding, and $V \in \mathbb{R}^{k \times a'}$ is a selector matrix.
Therefore, we can check if $\rank(M_W) \geq k$ by calling $\mathcal{M}.\textsc{IsRankGtK(U,V)}$, which can be done in $O(T_{\MM}(k,a,a))$ time by  \Cref{lem:matrix ds}.
\end{proof}

We are ready to present the PRAM algorithm and prove \Cref{lem:near-clique in MM time}.

\paragraph{\pram{} Algorithm.} We are given  $G=(V,E)$, $C\subseteq V$ and $\ell$ as the inputs to \mnnccn{} satisfying the guarantees described in \cref{def:minneibornearclique}.
\begin{enumerate} [noitemsep]
    \item   Use \cref{thm:sketching} to compute $sk_{z}(N(u))$ for every $u\in C$ where $z=2^{\log^{0.8}n}\ell$
    \item Let $X$ be a set of  $\hO{|C|/\ell}$ random vertices sampled in $C$.
    \item Apply \Cref{lem:vertexsparsification} using $G,C,\ell$ and $X$ as parameters to obtain $C' \subseteq N_G(C)$. Let $K$ be the fixed number in the lemma statement.
    \item Let $G' = G[C \cup C']$.
    \item Let $G''$ be $G'$ after adding a sink $t$ and the set of edges from $t$ to every node in $C'$.
    \item Binary search on $k$ until we obtain $k  = \min_{x \in X}\kappa_{G''}(x,t)$ on the following problem: decide if $\min_{x \in X}\kappa_{G''}(x,t) \geq k$.
    \begin{enumerate}  \item Given the guess value $k$, apply \Cref{lem:connectivity oracle} using $t$ as the fixed node, and the sketching $sk_z$ as an oracle that can list $N_G(s) \triangle N_G(s')$ for any pair $(s,s')$ up to $\hO{\ell}$ elements in $\hO{\ell}$ work.
        \begin{enumerate} [noitemsep]
            \item Construct a random modular $N_{G''}(t)$-embedding $f$.  Let $s \in C$ be the other fixed node and use parameter $k$ for the preprocessing.
            \item For all $x \in X$, decide if $\kappa_{G''}(x,t) \geq k$ using \Cref{lem:connectivity oracle}.
        \end{enumerate}
    \end{enumerate}
    \item returns the corresponding minimum cut.
\end{enumerate}

\paragraph{Correctness.} Let $\LL$ be the minimizer of $\min_{L' \subseteq C : L' \neq \emptyset} |N_G(L')|$. By \Cref{lem:vertexsparsification}, we know that $G''$ have the same minimizer as in $G$ for all $x \in X$. That is, for all $x \in X$, $\min_{L' \subseteq C : x\in L'} |N_G(L')|=\min_{L' \subseteq C : x\in L'} |N_{G'}(L')|+K$. Since we sample $\hO{|C|/\ell}$ random vertices into $X$, there is $x \in \LL$ with high probability. Therefore, by the correctness of \Cref{lem:connectivity oracle}, we also obtain the correct minimizer with high probability.

\paragraph{Complexity.}  The preprocessing steps and sparsification can be done in $\hO{m}$ work. The main bottleneck is to binary search on $k$ in step 6. In this step, the running time follows from \Cref{lem:connectivity oracle}. That is,
\[ \tilde O(|C|^{\omega})  +  \hO{\frac{|C|}{\ell}} \cdot \tilde O(T_{\MM}(n,\ell,\ell)) = \hO{|C|^{\omega}}. \]
The first term on LHS follows from the preprocessing time since $|V(G'')| = \tilde O(|C|)$ by \Cref{lem:vertexsparsification}. The second term on LHS follows from $|X|$ calls of $\kappa_G(s',s) \geq k$ queries of \Cref{lem:connectivity oracle}.

\section{Communication and Streaming}\label{sec:streaming}

\subsection{A Simple Communication and Streaming algorithm}
In this section, we will provide a simple communication algorithm for \SSUDVC{}, and its direct implementation in streaming model, i.e., the following two theorems.
\begin{lemma}\label{lem:communicationSSUDVC}
    There is a randomized communication protocol solving \SSUDVC{} in $\tOh{n^{1.5}}$.
\end{lemma}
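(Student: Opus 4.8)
The plan is to combine the two pieces that are already present in the paper: the common-neighborhood clustering of \cref{lem:snc}, and the reductions of \SSUDVC{} to \isocut{} and \mnnccn{} encapsulated in \cref{alg:framework} and \cref{lem:framework pram congest}. Since we are in the two-party communication model, ``running an algorithm'' just means: Alice and Bob both hold $V$, Alice holds $E_1$, Bob holds $E_2$, and they exchange bits to evaluate the quantities the algorithm needs. The key observation is that all the combinatorial subroutines (sketch computation, BFS-style growth of clusters, maximal matchings, s-t vertex mincut on a given subgraph) can each be carried out by exchanging a number of bits comparable to the number of edges involved, because revealing an edge costs $O(\log n)$ bits and the s-t vertex connectivity communication complexity is $\tilde O(n)$ by \cite{blikstad2022nearly}.

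First I would set up the outer structure exactly as in \cref{alg:framework}: guess $\ell = 2^i$ for the size of (a connected component of) the minimum $t$-sink cut $\LL$, run \cnc{}$(G,2\ell)$ to get the clustering $\cC$, and restrict each cluster to $C - N_G[t]$. Computing the sketches $\ska(N_G(u))$ for all $u$ costs $\tilde O(n)$ communication: Alice and Bob each locally compute the sketch of their own half of $N_G(u)$ using the linearity of \cref{cor:approxsketching}, then one of them sends the $\tilde O(1)$-word partial sketch to the other, for $\tilde O(n)$ words total. Growing the BFS-style clusters in \cnc{} only needs, at each step, a comparison of two sketches $\ska(N_G(u)) - \ska(N_G(v))$, which both parties can already do once the sketches have been shared, so the clustering phase costs only $\tilde O(n)$ communication and no further rounds of interaction. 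By \cref{lem:snc}, each vertex lies in $\tilde O(1)$ clusters, every cluster is a common-neighborhood cluster (neighborhood difference $\le \ell \cdot 2^{\log^{0.7}n} = \hO{\ell}$), and $\LL$ is contained in some cluster with probability $1/n^{o(1)}$; we boost this to high probability by $n^{o(1)}$ independent repetitions.

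Second, for each cluster $C$, \cref{lem:framework pram congest} (via \cref{lem:twocases}) tells us we only need to solve an \isocut{} instance and an \mnnccn{} instance on the subgraph $G'$ of edges incident to $C$. For \mnnccn{} I would use the simple case split from \cref{subsec:overviewcc}: if $\ell > \sqrt n$ we sample $\tilde O(\sqrt n)$ vertices of $C$, hit $\LL$ w.h.p., and for each sampled $s$ solve one $s$-source \mn{} via a single s-t vertex connectivity call, costing $\tilde O(\sqrt n) \cdot \tilde O(n) = \tilde O(n^{1.5})$; if $\ell \le \sqrt n$ we pick one vertex $u \in C$, spend $O(n)$ bits to make both parties learn $N_G(u)$, and then for each other $v \in C$ spend only $O(\sqrt n \log n)$ bits to learn $N_G(v)$ — this is possible because $|N_G(u) \triangle N_G(v)| = \hO{\ell} \le \hO{\sqrt n}$ and the party who is missing an edge can be told exactly the (small) symmetric-difference set, which has the right size up to the $n^{o(1)}$ slack. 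Once both parties know all edges incident to $C$, they can solve \mnnccn{} and \isocut{} on $C$ with no further communication. For \isocut{} one can alternatively invoke \cref{lem:refined isolating cut lemma 2}, whose s-t mincut calls total $\tilde O(n)$ vertices and $O(m)$ edges; but since here $m$ restricted to edges incident to $C$ is already $\hO{|C|^2}$ and, summed over clusters, $\hO{m}$, the cleaner route is just the local reconstruction above. Summing over all $C$, each edge is revealed $\tilde O(1)$ times, giving $\tilde O(m) = \tilde O(n^{1.5})$ in the dense regime and $\tilde O(n^{1.5})$ overall after the case split; together with the $\tilde O(n)$ for clustering and the $n^{o(1)}$ repetition factor, the total is $\tOh{n^{1.5}}$.

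The main obstacle I expect is bookkeeping the communication cost of the local reconstruction step in the $\ell \le \sqrt n$ case so that it genuinely stays $\tOh{n^{1.5}}$ even when summed over the $\tilde O(1)$-fold-overlapping clusters and over all $O(\log n)$ guesses of $\ell$: one must argue that the symmetric-difference sets are small relative to the correct reference vertex in each cluster (using the common-neighborhood guarantee, not just \cref{lem:unbalancedclose}), that Alice and Bob can agree on which edges the other is missing using $O(\log n)$ bits per edge, and that the $\ell > \sqrt n$ branch's $\tilde O(\sqrt n)$ sampled s-t calls can be run in parallel on a shared transcript so the $n^{o(1)}$ factors do not compound. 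All of these are routine given the machinery already developed, but they are where the constant/subpolynomial factors must be checked carefully. (The remark following the lemma also indicates an even simpler standalone protocol avoiding clustering entirely, which I would present separately in \cref{sec:streaming}.)
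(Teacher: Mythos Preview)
Your proposal takes a different route from the paper, and it contains a genuine gap.

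The paper does \emph{not} prove this lemma through the framework of \cref{alg:framework} and \cnc{}. It instead uses the standalone \cref{alg:communication}: compute the heavy sketches $sk_{\sqrt n}(N_G(v))$ for every $v$ (each sketch is $\tOh{\sqrt n}$ words, so exchanging all of them is the $\tOh{n^{1.5}}$ cost), then split on $|L|$ versus $\sqrt n/3$. For large $L$, sample $\tOh{\sqrt n}$ vertices globally and run s-t vertex connectivity for each. For small $L$, the crucial point is that once the $sk_{\sqrt n}$ sketches are shared, \emph{no further communication is needed}: for every $u\in A$ both parties can locally recover $N_G(u)\triangle N_G(v)$ whenever it has size $\le \sqrt n$, build the sparse virtual graph $G'$ of \cref{alg:communication}, and solve the local minimization entirely offline. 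There is no clustering step at all, and consequently no per-cluster accounting to do.

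The gap in your approach is the claim that ``growing the BFS-style clusters in \cnc{} only needs, at each step, a comparison of two sketches.'' That is not what \cref{alg:snc} does: lines~\ref{pickan} and~\ref{pickar} grow $S_u$ by taking each $C\in\cP'$ that is \emph{adjacent} to $S_u$, and adjacency is a predicate on edges of $G$, which neither party can evaluate from $\ska$ alone. Making the BFS work in two-party communication would require revealing edges (up to $\tOh{m}$ of them) or importing edge-sampling sketches you have not set up. A second, smaller issue is that the sketches you exchange are only the $\ska$ of \cref{cor:approxsketching}, which return an approximate \emph{size}; to actually list the $\hO{\sqrt n}$ elements of $N_G(u)\triangle N_G(v)$ in the $\ell\le\sqrt n$ branch you need the exact $sk_{\sqrt n}$ of \cref{thm:sketching}, and those already cost $\tOh{n^{1.5}}$ to share --- at which point the paper's simpler argument applies directly, with no clustering required.
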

\begin{lemma}\label{lem:streamingSSUDVC}
    If there is a randomized semi-streaming algorithm solving Bipartite Maximum Matching (BMM) in $P(n)$ passes, then there is a randomized streaming algorithm solving \SSUDVC{} in $\tOh{n^{1.5}}$ space and $O(P(n))$ passes.
\end{lemma}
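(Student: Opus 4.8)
The plan is to realize, inside the streaming model, the two-ingredient strategy behind the communication protocol of \cref{lem:communicationSSUDVC}. The first ingredient is the family of linear sketches of \cref{thm:sketching}: because these sketches are linear and are built incrementally over the non-zero coordinates of a vector (\cref{rem:sketchingimplementation}), a single pass over $E(G)$ suffices to maintain $\sk{s}(N_G(u))$ for every $u\in V$ with sparsity parameter $s=\Theta(\sqrt n\log^{c}n)$ for a suitable constant $c$, using $\tilde O(n^{1.5})$ total space; by linearity the sketches of the co-neighborhoods $V\setminus N_G[u]$ (obtained by subtracting $\sk{s}(N_G(u))$ and the unit vector at $u$ from a precomputed sketch of the all-ones vector) and of symmetric differences $N_G(u)\triangle N_G(v)$ are then available, so any such set of size at most $s$ can be recovered exactly. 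The second ingredient is s-t vertex connectivity, which by the folklore equivalence equals a single \BMM{} computation; a \BMM{} instance obtained from $G$ by the standard in/out vertex split (possibly after adding a super-sink) has its edge stream produced on the fly from the stream of $E(G)$, so a semi-streaming \BMM{} algorithm using $P(n)$ passes and $\tilde O(n)$ space can be run on it, and up to $\tilde O(\sqrt n)$ such instances can be run \emph{in parallel} within $\tilde O(n^{1.5})$ space and $O(P(n))$ passes.

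Fix a minimum $t$-sink cut $(L^{*},S^{*},R^{*})$ with $G[L^{*}]$ connected; by \cref{lem:unbalancedclose} the \nd{} of $L^{*}$ is at most $2|L^{*}|$. Since $|L^{*}|$ is unknown, the algorithm runs two procedures in parallel. If $|L^{*}|\ge\sqrt n$: sample $\Theta(\sqrt n\log n)$ vertices of $V\setminus N_G[t]$; with high probability one, say $s$, lies in $L^{*}$, and then the $s$-source minimum-neighbor value --- computable by a single s-t vertex connectivity call (add a super-sink joined to $N_G(V\setminus N_G[t])$), hence one \BMM{} call --- equals the minimum $t$-sink cut size; running this for every sample fits the budget above. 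If $|L^{*}|<\sqrt n$: every $v\in L^{*}$ satisfies $N_G(v)\subseteq L^{*}\cup S^{*}$ and $|N_G(v)|\le|L^{*}|+|S^{*}|-1$, and all pairwise neighborhood differences within $L^{*}$ are below $2\sqrt n$; whenever the relevant sets ($N_G(v)$ for low-degree $v$, or $V\setminus N_G[v]$ for high-degree $v$) are $O(s)$-sparse they are recovered exactly from the stored sketches, which reconstructs in memory all edges (resp.\ non-edges) incident to the candidate vertices, after which $L^{*}$ and $|N_G(L^{*})|$ follow from a purely local minimum-neighbor computation requiring no further passes; the residual regime --- $|L^{*}|$ small but $|S^{*}|=|N_G(L^{*})|$ of intermediate size --- is reduced back to a bounded number of s-t vertex connectivity (i.e.\ \BMM{}) calls run in the same parallel way.

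Every cut produced by either procedure is a valid $t$-sink cut, hence has size at least the minimum, and with high probability at least one procedure attains the minimum, so returning the smallest cut found is correct. The pass count is $O(1)$ for building the sketches and vertex degrees plus $O(P(n))$ for the parallel \BMM{} calls, so $O(P(n))$ overall; the space is $\tilde O(n^{1.5})$. I expect the main obstacle to be the ``unbalanced'' regime $|L^{*}|<\sqrt n$ with $|N_G(L^{*})|$ neither tiny nor close to $n$ --- precisely the case that common-neighborhood clustering is designed to handle elsewhere in the paper --- which here must be dispatched using only \cref{lem:unbalancedclose} together with sketching, and without any unbounded-depth BFS. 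Secondary points needing care: keeping all sketch manipulations linear so that they compose over a single pass, and checking that the $\tilde O(\sqrt n)$ \BMM{} instances, each on a graph derived on the fly from the one pass over $E(G)$, genuinely coexist within the $\tilde O(n^{1.5})$ space bound.
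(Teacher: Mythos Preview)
Your overall architecture matches the paper's: sketches $\sk{s}(N_G(v))$ with $s=\Theta(\sqrt n)$ in one pass, then a two-phase split on $|L^*|\gtrless\sqrt n$, with Phase~1 handled by $\tilde O(\sqrt n)$ parallel \BMM{} instances. Phase~1 is fine.

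The gap is in Phase~2. Your case split into ``low degree / high degree / residual'' does not work. When $|L^*|<\sqrt n$ but $|S^*|$ is of intermediate size (say $n/2$), neither $N_G(v)$ nor $V\setminus N_G[v]$ is $O(\sqrt n)$-sparse for $v\in L^*$, so you cannot recover them from the sketches. You acknowledge this ``residual regime'' and propose to reduce it to ``a bounded number of s-t vertex connectivity calls'', but you do not say which calls: you do not know any vertex of $L^*$, and since $|L^*|<\sqrt n$ uniform sampling needs $\Omega(n/\sqrt n)=\Omega(\sqrt n)$ samples per hit only if you sample from a set of size $O(n)$ that \emph{contains} $L^*$ with density $\Omega(1/\sqrt n)$ --- but you have no such set. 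Trying every $u\in V\setminus N_G[t]$ would be $n$ \BMM{} instances, exceeding the space budget.

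The paper's Phase~2 avoids \BMM{} entirely. For \emph{every} $u\in A:=V\setminus N_G[t]$ it builds a virtual graph $G'$ on $V$ whose edges are exactly the recovered symmetric-difference edges: for each $v\in A$ with $|N_G(u)\triangle N_G(v)|\le s$ (these are the $v$ added to a set $V'$), it adds the edges $(u,x)$ for $x\in N_G(u)\setminus N_G(v)$ and $(v,y)$ for $y\in N_G(v)\setminus N_G(u)$. The key structural claim (\cref{cla:NGLu} in the paper) is that for every $L$ with $u\in L\subseteq V'$ one has $N_G(L)\setminus N_{G'}(L)=N_G(u)\setminus N_{G'}(u)$, a set independent of $L$. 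Hence minimizing $|N_G(L)|$ over such $L$ is the same as minimizing $|N_{G'}(L)|$, which is a purely local computation on $G'$ (at most $\sqrt n$ edges are added per $v$, so $G'$ fits in $\tilde O(n^{1.5})$ space). Since $L^*\subseteq V'$ whenever $u\in L^*$ (by \cref{lem:unbalancedclose}), iterating over all $u$ catches the true minimum. None of this touches the stream after the sketching pass, so Phase~2 costs zero additional passes. This structural lemma is the missing idea in your proposal; without it the ``intermediate $|S^*|$'' case has no handle.
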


We first give a schematic algorithm as described in \cref{alg:communication}, and we will show its correctness, and how to implement it in the communication and streaming setting. According to~\cref{rem:publicrandomness}, we assume public randomness.

\begin{algorithm}[!ht]
	\caption{Schematic Algorithm for \SSUDVC{}}
	\label{alg:communication}
	\KwData{An undirected graph $G=(V,E)$ and a vertex $t\in V$.}
	\KwResult{A minimum $t$-sink vertex cut $(L,S,R)$.}

            Let $s=\lceil\sqrt{n}\rceil$ in~\cref{thm:sketching}, calculate $sk_v=\sk{s}(N_G(v))$ for any $v\in V$\;

            Let $A=V-N_G(t)$\;

            \tcp{\color{blue}Suppose $(L,S,R)$ is a minimum $t$-sink vertex cut.}
            \tcp{\color{blue}Phase 1: $|L|\ge \sqrt{n}/3$}

            Sample $\left\lceil100\sqrt{n}\log n\right\rceil$ nodes $P\subseteq A$ uniformly at random \tcp{w.h.p. $P\cap L\not=\emptyset$}

            $S^*\leftarrow V$\;
            \ForEach{$v\in P$}
            {
                Find the minimum $(v,t)$ vertex cut in $G$ denoted by $S_v$\; \label{stremline66}
                Let $S^*=S_v$ if $|S_v|<|S^*|$\;\label{stremline77}
            }

            \tcp{\color{blue}Phase 2: $|L|< \sqrt{n}/3$}
            \ForEach{$u\in A$\label{streamline4}}
            {
                Maintain a virtual graph $G'=(V,E')$ and a set $V'\subseteq A$. Initially $E'=\emptyset,V'=\emptyset$\;
                \ForEach{$v\in A$\label{stremline6}}{
                    Let $X=N_G(u)-N_G(v)$ and $Y=N_G(v)-N_G(u)$, which is calculated by the recovery algorithm of~\cref{thm:sketching}\;
                    \If{the recovery algorithm does not return $\bot$}
                    {
                    Add $v$ to $V'$\;
                    Add $(u,x)$ to $E'$ for any $x\in X$\;\label{stremline10}
                    Add $(v,x)$ to $E'$ for any $x\in Y$\;\label{stremline11}
                    }
                }
                    Among all $L$ with $u\in L\subseteq V'$, find one with the smallest $|N_{G'}(L)|$ denoted as $L_u$\;\label{streamline14}
                    $\kappa_u\leftarrow |N_{G'}(L_u)|+|N_{G}(u)|-|N_{G'}(u)|$\tcp{w.h.p. $\kappa_u=|N_G(L_u)|$}
            }
        If there exists $u$ such that $\kappa_u<|S^*|$, let $S^*=N_G(L_u)$ where $\kappa_u$ is the smallest\;

        \Return{$S^*$}

\end{algorithm}

The following lemma shows the correctness of~\cref{alg:communication}.
\begin{lemma}\label{lem:correctnessFramework}
    At the end of~\cref{alg:communication}, $S^*$ is the minimum $t$-sink vertex cut in $G$.
\end{lemma}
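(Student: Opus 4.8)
The plan is to show that \cref{alg:communication} correctly computes the minimum $t$-sink vertex cut by a case analysis on the size of the "left side" $L$ of an optimal cut $(L,S,R)$. Fix an optimal $t$-sink vertex cut $(L,S,R)$ and (as in \cref{lem:framework pram congest}) assume WLOG that $G[L]$ is connected, so that $S = N_G(L)$ and $L \subseteq A = V - N_G(t)$. We must argue two things: (i) every candidate that the algorithm ever sets $S^*$ to is a \emph{valid} $t$-sink separator, and (ii) at least one of the two phases produces a candidate of size $|S|$.

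\textbf{Validity.} In Phase 1, each $S_v$ is a genuine $(v,t)$-vertex cut, hence a $t$-sink cut. In Phase 2, for each $u$ the set $L_u$ satisfies $u \in L_u \subseteq V' \subseteq A = V - N_G(t)$, so $t \notin N_G[L_u]$ and $(L_u, N_G(L_u), V - N_G[L_u])$ is a valid $t$-sink cut. So whatever $S^*$ holds at the end, it is a valid separator; it remains to show optimality.

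\textbf{Phase 1 ($|L| \ge \sqrt n/3$).} Since $|A| \le n$ and we sample $\lceil 100\sqrt n \log n\rceil$ nodes uniformly from $A$, a standard Chernoff/union-bound argument gives $P \cap L \neq \emptyset$ w.h.p. Pick $v \in P \cap L$; then $t \in R$ means $S$ is a $(v,t)$-separator, so $\kappa_G(v,t) \le |S|$, and the minimum $(v,t)$-cut $S_v$ found in \cref{stremline66} has $|S_v| = \kappa_G(v,t) \le |S|$. Combined with validity, $|S_v| = |S|$ and $S^*$ is updated to an optimum (lines \ref{stremline66}--\ref{stremline77}).

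\textbf{Phase 2 ($|L| < \sqrt n/3$).} This is the main obstacle, and the argument has two ingredients. First, a \emph{sketch-recovery} claim: by \cref{lem:unbalancedclose}, $L$ has neighborhood difference at most $2|L| < \tfrac{2}{3}\sqrt n \le s$, so for every $u \in L$ and every $v \in L$ the vector $N_G(u) - N_G(v) \in \{-1,0,1\}^n$ has squared $\ell_2$-norm $\le s$, and by \cref{thm:sketching} the recovery from $\sk{s}(N_G(u)) - \sk{s}(N_G(v)) = \sk{s}(N_G(u) - N_G(v))$ (using linearity of the sketch) does not return $\bot$ and yields exactly $X = N_G(u) - N_G(v)$ and $Y = N_G(v) - N_G(u)$. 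Hence when the outer loop reaches some fixed $u \in L$ (it does, since $L \subseteq A$), every $v \in L$ is added to $V'$, i.e. $L \subseteq V'$. Second, a \emph{graph-surgery} claim: I claim that the virtual graph $G'$ built at $u$ satisfies $N_{G'}(L') = N_G(L') \cap (V' \cup \{u\text{'s neighbors}\})$ correction --- more precisely, that for \emph{every} $L'$ with $u \in L' \subseteq V'$ we have $|N_{G'}(L')| = |N_G(L')| - |N_G(u)| + |N_{G'}(u)|$; this is exactly the sparsification idea of \cref{fig:sparse}, since $G'$ retains only edges $(u,x)$ with $x \notin N_G(v)$ for the relevant $v$'s and edges $(v,x)$ with $x \notin N_G(u)$, so the common neighbors of $u$ with the rest of $L'$ are deleted uniformly and contribute the fixed additive shift $|N_G(u)| - |N_{G'}(u)| = |N_G(u) \cap N_G(V'\setminus\{u\})|$... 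I will phrase this as: the quantity $\kappa_u$ defined in the algorithm equals $\min_{u \in L' \subseteq V'} |N_G(L')|$ w.h.p. Taking $u \in L$ and noting $L \subseteq V'$ gives $\kappa_u \le |N_G(L)| = |S|$; combined with validity this forces the final $S^*$ to be an optimum. The delicate part is making the graph-surgery claim precise — one must check that deleting edges from $u$ to neighbors of the various $v$'s does not change $N_{G'}(L')$ for sets $L' \ni u$ beyond the fixed additive term, and that no spurious cut of size $< |S|$ is created; this follows because $G'$ is a subgraph of $G$ on the relevant vertices plus the observation that only $u$'s incident edges are modified. I would cite the analogous argument in \cref{subsec:buildingtheframework} (the discussion around \cref{fig:sparse}) and Lemma~4.2-style reductions of \cite{LiNPSY21} to keep this short.

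Finally, combining the two phases: one of $|L| \ge \sqrt n/3$ or $|L| < \sqrt n/3$ holds, and in either case some update sets $S^*$ to a separator of size exactly $|S|$, while every update keeps $S^*$ a valid separator of size $\ge |S|$; hence at termination $S^* = S$ in size, i.e. $S^*$ is a minimum $t$-sink vertex cut w.h.p. The later lemmas \cref{lem:communicationSSUDVC,lem:streamingSSUDVC} then follow by observing that all sketches use $\tOh{\sqrt n}$ space per vertex (total $\tOh{n^{1.5}}$), Phase 1 needs $\sqrt n$ s-t/BMM computations each in $\tOh{n}$ communication (or $P(n)$ passes), and Phase 2's recovery is local once the sketches are shared.
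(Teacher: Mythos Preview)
Your high-level structure is the paper's: case-split on $|L|$, Phase~1 via sampling, Phase~2 via the constant-shift identity $|N_G(L')| - |N_{G'}(L')| = |N_G(u)| - |N_{G'}(u)|$ for all $u \in L' \subseteq V'$ (the paper's \cref{cla:NGLu}). You state this identity correctly, and from it everything else follows. The gap is that you do not actually prove it, and your informal justification contains two concrete errors. First, your formula $|N_G(u)| - |N_{G'}(u)| = |N_G(u) \cap N_G(V'\setminus\{u\})|$ is wrong: $N_G(u) \setminus N_{G'}(u)$ equals $\bigcap_{v\in V'} N_G(v)$ (the vertices adjacent to \emph{every} node of $V'$), not $N_G(u) \cap \bigcup_{v\neq u} N_G(v)$. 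Second, ``only $u$'s incident edges are modified'' is false---every $v \in V'\setminus\{u\}$ loses all of its edges into $N_G(u)$. The appeal to \cref{fig:sparse} captures the $v\neq u$ side of the pruning, but here $u$'s own edges to $\bigcap_{v\in V'} N_G(v)$ are \emph{also} deleted, which is exactly why the shift is nonzero rather than zero as in that figure; so you cannot simply cite it.

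The paper closes this by first characterizing when an edge $(v_1,v_2)\in E$ with $v_1\in V'$ is missing from $E'$ (iff either $v_2\in N_G(w)$ for all $w\in V'$, or $v_1\neq u$ and $v_2\in N_G(u)$), and then deducing that $w \in N_G(L') \setminus N_{G'}(L')$ iff $w \in \bigcap_{v\in V'} N_G(v)$---a set visibly independent of $L'$. An equivalent two-step route you could have taken: first apply the \cref{fig:sparse} deletion (edges from $V'\setminus\{u\}$ into $N_G(u)$), which preserves $N(L')$ exactly for $L'\ni u$; then observe that the additionally deleted $u$-edges all go to $\bigcap_{v\in V'} N_G(v)$, and after the first step these vertices are reachable from $L'\subseteq V'$ only through $u$, so removing those edges drops $|N(L')|$ by exactly $|\bigcap_{v\in V'} N_G(v)|$, uniformly in $L'$. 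Without one of these arguments, Phase~2 is not established---in particular you have not ruled out that some $\kappa_u$ is spuriously smaller than $|N_G(L_u)|$, which would make the algorithm select a suboptimal $L_u$.
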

\begin{proof}
    Suppose $(L,S,R)$ is a minimum $t$-sink vertex cut, suppose $|L|\ge \sqrt{n}/3$, then w.h.p. $P\cap L\not=\emptyset$, thus, $S^*$ has size at most $|S|$ according to~\cref{stremline66,stremline77}. In order to prove $|S^*|=|S|$, we first need to prove $\kappa_u\ge \kappa_G(u,t)$ for any $u$. Let us first prove some helpful claims.

    For any arbitrary $u\in L$, let us consider the loop for $u$ in~\cref{streamline4}.
    \begin{claim}
        W.h.p., for any $v_1,v_2\in V$ such that $v_1\in V',(v_1,v_2)\in E,(v_1,v_2)\not\in E'$ (which we say $(v_1,v_2)$ is \emph{missing}) iff. one of the following events happen: (i) for any $v\in V'$, we have $(v,v_2)\in E$, (ii) $(u,v_2)\in E,v_1\not = u$.
    \end{claim}
    \begin{proof}
        ``$\Leftarrow$''~\cref{stremline10,stremline11} only add edges to $x$ where $u,v\in V'$ differs by the neighborhood relationship to $x$, so $(i)$ implies $(v_1,v_2)$ is missing; if $v_1\not=u$, then $(v_1,v_2)$ will only be added to $E'$ when $(u,v_2)\not\in E$, so $(ii)$ implies $(v_1,v_2)$ is missing.
    ``$\Rightarrow$''. If both (i),(ii) do not happen, then either (i) $v_1=u$, there exists $v\in V'$ such that $(v,v_2)\not\in E$, then $v_2$ is in the difference between $A_u,A_v$, so $(u,v_2)$ is not missing, or (ii) $v_1\not=u$, then $(u,v_2)\not\in E$, so $v_2$ is in the difference between $A_u,A_{v_1}$, $(u,v_2)$ is not missing.
    \end{proof}

    \begin{claim}\label{cla:NGLu}
        W.h.p., for any $L\subseteq V'$, $N_{G}(L)-N_{G'}(L)=N_G(u)-N_{G'}(u)$.
    \end{claim}
    \begin{proof}
        Notice that for any $u\in L\subseteq V'$, a vertex $v\in N_G(L)-N_{G'}(L)$ iff. $\forall w\in V', (w,v)\in E$. Proof: ``$\Rightarrow$'', if $v\in N_G(L)-N_{G'}(L)$, then $(u,v)$ is missing, which means (i) happens. ``$\Leftarrow$'', it implies $(w,v)$ is missing for any $w\in V'$.

        Based on the claim above, we prove $N_{G}(L)-N_{G'}(L)=N_G(u)-N_{G'}(u)$ as follows. ``$\subseteq$'' if $w\in N_{G}(L)-N_{G'}(L)$, then $\forall w\in V', (w,v)\in E$, so $w\in N_G(u)$, and $w\not\in N_{G'}(u)$ (otherwise $w\in N_{G'}(L)$, which leads to $w\not\in N_{G}(L)-N_{G'}(L)$). ``$\supseteq$'' if $w\not\in N_{G}(L)-N_{G'}(L)$, then there exists $v\in V'-L$ such that $(v,w)\not\in E$, then $(u,w)$ is not missing and $w\not\in N_G(u)-N_{G'}(u)$. So we have $|N_G(L)|=|N_{G'}(L_u)|+deg_G(u)-deg_{G'}(u)$. $C_u$ correctly computes the size of $N_G(L_u)$.

    \end{proof}

    Now we are ready to prove that for any $u\in A$ we have $\kappa_u\ge \kappa_G(u,t)$. $\kappa_u=N_{G'}(L_u)+N_G(u)-N_{G'}(u)=N_G(L_u)$ according to~\cref{cla:NGLu}, thus, $\kappa_u\ge\kappa_G(u,t)$.

    Now suppose $|L|<\sqrt{n}/3$. We have $|S^*|\ge |S|$ according to the fact that $\kappa_u\ge\kappa_G(u,t)$ for any $u$. Suppose $u\in L$. Consider the loop for $u$ in~\cref{streamline4}. According to~\cref{lem:unbalancedclose} and the fact that $V'$ includes all nodes $v$ with $|N_G(u)\triangle N_G(v)|\le\lceil\sqrt{n}\rceil$ w.h.p., we have $L\subseteq V'$. Since $N_{G'}(L_u)$ and $N_{G}(L_u)$ differs by a fixed value according to~\cref{cla:NGLu}, $\kappa_u\le |N_G(L)|$. Thus, $|S^*|=|S|$.

\end{proof}

Now we prove the complexity and our main lemmas of this section.

\begin{proof}[Proof of~\cref{lem:streamingSSUDVC}]

    The correctness is implied by~\cref{lem:correctnessFramework}. Now we show the complexity.

    \paragraph{Before Phase 1.} $\mathbf{V}$ can be sampled without accessing the stream. The sketches can be computed in $\tOh{n\sqrt{n}}$ space and $1$ pass according to~\cref{rem:sketchingimplementation}. $A$ can be computed with $O(n)$ space and $1$ pass.

    \paragraph{Phase 1.} $P$ is sampled without accessing the stream. We simulate the $(S,T)$-vertex connectivity algorithm on the stream for any $u\in P$. We do all of them in parallel, using $\tOh{n|P|}$ space and $P(n)$ passes. Remember that $\tOh{n|P|}=\tOh{n\sqrt{n}}$.

    \paragraph{Phase 2.} All calculations of Phase 2 are done without accessing the stream, and they will use $\tOh{n^{1.5}}$ space since the number of edges added to $G'$ for each loop in~\cref{stremline6} is at most $\sqrt{n}$.

    In summary, we use $\tOh{n^{1.5}}$ space and $O(P(n))$ passes.

\end{proof}

\begin{proof}[Proof of~\cref{lem:communicationSSUDVC}]
    We use the folklore simulation of a streaming algorithm using a communication protocol. Alice and Bob simulate the streaming algorithm in the previous proof on the data stream where the stream first goes through all of Alice's edges, and then Bob's edges. The memory space of the streaming algorithm (which is $\tOh{n^{1.5}}$ is sent to Bob, and Bob continues simulating the streaming algorithm).

    We can simulate the algorithm before Phase 1 and Phase 2 using the technique above, which requires $\tOh{n^{1.5}}$ communication.
\end{proof}

\subsection{Lower Bounds}\label{subsec:n15lowerbound}

In this section, we will prove a communication lower bound for \UDVC{}.
\begin{lemma}\label{lem:lowerboundsundirvc}
    In the two-party communication model, any randomized algorithm solving \UDVC{} needs $\Omega(n^{1.5})$ bits of communication.
\end{lemma}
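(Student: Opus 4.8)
The plan is to reduce from the \textsc{OR}-of-\subtri{} problem of \cref{lem:ORANDsd}: Alice and Bob hold sets $A^{(i)}_j,B^{(i)}_j\subseteq[\sqrt n]$ for $i,j\in[\sqrt n]$ with $|A^{(i)}_j|=\sqrt n/2$, and must decide whether there is an index $i$ with $B^{(i)}_j\subseteq A^{(i)}_j$ for every $j$; this task needs $\Omega(n^{1.5})$ bits of randomized communication. I would construct, with no communication, a graph $G$ on $\Theta(n)$ vertices whose vertex connectivity encodes the answer, splitting the input-dependent edges so that Alice owns exactly those edges determined by the $A^{(i)}_j$'s and Bob owns exactly those determined by the $B^{(i)}_j$'s. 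A randomized $c$-bit protocol for \UDVC{} then yields a randomized $(c+O(\log n))$-bit protocol for the hard problem (run the \UDVC{} protocol on $G$, and spend $O(\log n)$ extra bits to agree on the size of the returned minimum cut), forcing $c=\Omega(n^{1.5})$.

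The graph is the one sketched in \cref{subsec:overviewcc} and \cref{fig:n15bound}. Take pairwise disjoint cliques $V_1,\dots,V_{\sqrt n}$, each of size $\sqrt n$ with vertices $v^{(i)}_1,\dots,v^{(i)}_{\sqrt n}$, together with one clique $U$ of size $n$ split into blocks $U_1,\dots,U_{\sqrt n}$ of size $\sqrt n$. For $k\neq j$ join $v^{(i)}_j$ to the copies of $A^{(i)}_k$ inside $U_k$, and join $v^{(i)}_j$ to the copies of $B^{(i)}_j$ inside $U_j$; add no edges other than the clique edges. Clique edges carry no information and can be assigned to Alice; the edges from $v^{(i)}_j$ into $U_k$ with $k\neq j$ are Alice's and those from $v^{(i)}_j$ into $U_j$ are Bob's, so each player can assemble its own edge set from its input alone. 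A short count gives $N_G(V_i)=\bigcup_k\bigl(A^{(i)}_k\cup B^{(i)}_k\bigr)\subseteq U$, hence $|N_G(V_i)|=n/2+\sum_k|B^{(i)}_k\setminus A^{(i)}_k|$, which equals $n/2$ precisely when $B^{(i)}_j\subseteq A^{(i)}_j$ for all $j$ and is at least $n/2+1$ otherwise. Since $(V_i,N_G(V_i),V\setminus N_G[V_i])$ is always a valid cut, the reduction succeeds once we verify the structural claim that every cut which does not cut off a single $V_i$ has size at least $n/2+1$: then $\kappa(G)=n/2$ if some index is a ``yes''-index, and $\kappa(G)\geq n/2+1$ otherwise, so testing $\kappa(G)\le n/2$ decides the hard problem.

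Proving this structural claim is the one delicate step, and I would do it with the usual ``cliques are essentially monochromatic under a cut'' reasoning. Because $U$ is a clique, no cut can separate two $U$-vertices, so up to renaming $U$ sits on the $L$-plus-$S$ side; likewise each $V_i$ sits on a single $L/R$ side apart from vertices placed in $S$. Using that each $v^{(i)}_j$ has at least $(\sqrt n-1)\sqrt n/2$ neighbors in $U$ (from the $A^{(i)}_k$'s with $k\neq j$), a case analysis on which blocks a candidate set $L$ meets shows: a nonempty proper subset $L$ of some $V_i$ has $|N_G(L)|\geq n/2+(\sqrt n-|L|)\geq n/2+1$, with the singleton case giving the even larger $n/2+\sqrt n/2-1$; a set $L$ containing a full $V_i$ that is a ``no''-index has $|N_G(L)|\geq|N_G(V_i)|\geq n/2+1$; and a cut that genuinely splits $U$ must put the entire $U$-neighborhood of every far-side $V_i$-vertex into $S$, again forcing $|S|\geq n/2+1$ in the ``no'' case. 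The bookkeeping of block indices and of the $+1$ slack everywhere is the main obstacle; the rest is routine. Finally $G$ has $\sqrt n\cdot\sqrt n+n=2n$ vertices and is connected for $n\geq 4$, so $\kappa(G)\geq 1$ always and the $\Omega(n^{1.5})$ lower bound is $\Omega(N^{1.5})$ in the number of vertices $N$, which is the claimed bound.
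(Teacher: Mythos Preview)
Your reduction and graph construction are exactly those of the paper, and your computation $|N_G(V_i)| = n/2 + \sum_k |B^{(i)}_k \setminus A^{(i)}_k|$ correctly handles the ``yes'' direction. Where you diverge is in certifying $\kappa(G) \ge n/2+1$ in the ``no'' case: the paper applies Menger's theorem and explicitly builds $n/2+1$ internally vertex-disjoint $u$--$v$ paths for every pair $u,v$ (three path types routed through $V_i$ and $U$, with the crucial extra path supplied by a witness $z \in B^{(i)}_j \setminus A^{(i)}_j$), whereas you lower-bound $|N_G(L)|$ directly for every candidate $L$.

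Your route is legitimate and arguably more economical, but the case split as written is incomplete and slightly muddled. Two fixes make it airtight. First, since $U$ is a clique it cannot straddle $L$ and $R$, so after possibly swapping sides you may assume $L \cap U = \emptyset$; your ``cut that genuinely splits $U$'' case is then vacuous and should be dropped. Second, you do not cover $L$ meeting several $V_i$'s partially without containing any of them in full; the clean way to close this is to observe that for a \emph{minimum} cut one may take $G[L]$ connected, and since there are no edges between distinct $V_i$'s this forces $L \subseteq V_i$ for a single $i$. With that reduction only your two real cases remain: $L = V_i$ (then $|N_G(L)| = |N_G(V_i)| \ge n/2+1$ by the ``no'' hypothesis) and $\emptyset \ne L \subsetneq V_i$ (your count $n/2 + (\sqrt n - |L|)$ is correct for $|L| \ge 2$; the singleton case gives the \emph{smaller} bound $n/2 + \sqrt n/2 - 1$, not a larger one as you wrote, but it is still $\ge n/2+1$ for $n \ge 16$). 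The paper's path construction is more explicit; your cut-side argument, once tightened as above, is shorter.
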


It directly implies a streaming lower bound by standard reduction from communication to streaming.

\begin{corollary}\label{lem:lowerboundsundirvc}
    In the graph streaming model, any randomized algorithm solving \UDVC{} in $P(n)$ pass needs $\Omega(n^{1.5}/P(n))$ space.
\end{corollary}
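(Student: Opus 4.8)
The plan is to obtain \cref{lem:lowerboundsundirvc} (the streaming bound) as an immediate consequence of the $\Omega(n^{1.5})$ two-party communication lower bound for \UDVC{} stated just above, via the standard simulation of a multi-pass streaming algorithm by a communication protocol. Fix $P := P(n)$ and suppose, for contradiction, that a $P$-pass randomized streaming algorithm $\mathcal{A}$ solved \UDVC{} using $S := S(n)$ bits of working memory with $P \cdot S = o(n^{1.5})$. Take the hard graph $G = (V,E)$ underlying the communication lower bound, together with its edge bipartition $E = E_1 \sqcup E_2$ ($E_1$ given to Alice, $E_2$ to Bob). The two players jointly run $\mathcal{A}$ on the stream that lists $E_1$ first (in some fixed order) and then $E_2$: in each of the $P$ passes, Alice feeds $E_1$ into $\mathcal{A}$ and sends the resulting $S$-bit memory snapshot to Bob, Bob feeds $E_2$ and sends the updated $S$-bit snapshot back. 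After $P$ passes both players know $\mathcal{A}$'s output, i.e.\ $\kappa(G)$, so they have solved the communication instance, and the total communication is $O(P \cdot S) = o(n^{1.5})$, contradicting the communication lower bound. Hence $S(n) = \Omega(n^{1.5}/P(n))$.

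Next I would dispatch a few routine points. Since $\mathcal{A}$ is randomized, the players fix its random string using public coins (or convert to private coins at an additive cost of $O(\log n)$ bits by Newman's theorem, absorbed into the $\Omega(\cdot)$); as $\mathcal{A}$ is correct with high probability on every input it is in particular correct on $G$, so the simulated protocol is a correct randomized protocol for the hard instance (and if the communication lower bound is distributional, averaging over the hard distribution still yields the contradiction). Because the streaming model permits an adversarial edge order, presenting the stream as ``$E_1$ then $E_2$'' is a legitimate input, so the particular bipartition witnessing the communication lower bound is realizable in the reduction. Finally, the hard graph has $\Theta(n)$ vertices, so the parameter $n$ in the two statements agrees up to constant factors hidden by the asymptotics; and the argument makes no assumption on $P(n)$, so it applies even for super-constant pass counts.

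The substance therefore lies entirely in the communication lower bound itself, whose blueprint is sketched in \cref{subsec:overviewcc}: one reduces the \subtri{}-style predicate --- does there exist $i$ with $B^{(i)}_j \subseteq A^{(i)}_j$ for all $j$ --- to \UDVC{} by the gadget graph built from $\sqrt n$ disjoint cliques $V_1,\dots,V_{\sqrt n}$ of size $\sqrt n$ and one clique $U$ of size $n$ partitioned into blocks $U_1,\dots,U_{\sqrt n}$, wiring the $j$-th vertex of $V_i$ to $U_j$ according to $B^{(i)}_j$ and to $U_k$ ($k \neq j$) according to $A^{(i)}_k$, with the two players each able to install their half of the crossing edges from their own input. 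I expect the main obstacle to be the correctness of this gadget: one must verify that $|N_G(V_i)| = n$ holds exactly when the $i$-th tribe is satisfied, and --- the delicate part --- that every vertex cut that is \emph{not} of the form $(V_i, N_G(V_i), \cdot)$ (for instance a cut that carves into the big clique $U$, or one that lumps several $V_i$'s together) has size at least $n+1$, so that $\kappa(G)$ faithfully records the OR over the tribes. Given a correct gadget, the $\Omega(n^{1.5}) = \Omega(\sqrt n \cdot \sqrt n \cdot \sqrt n)$ randomized communication lower bound for this OR-of-AND-of-subset problem (\cref{lem:ORANDsd}), which comes from composing the linear lower bound for one length-$\sqrt n$ subset/disjointness instance through the inner AND and the outer OR, transfers verbatim to \UDVC{}, and then to streaming by the simulation above.
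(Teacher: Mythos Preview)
Your proposal is correct and follows the same approach as the paper: the standard simulation of a $P$-pass, $S$-space streaming algorithm by a two-party protocol exchanging $O(PS)$ bits, which together with the $\Omega(n^{1.5})$ communication lower bound (\cref{lem:lowerboundsundirvc}) immediately yields the claim. The extra care you take with randomness and with recapping the gadget is fine but not required for the corollary itself.
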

\begin{proof}
    As in the last section, we try to simulate any streaming algorithm in the following way: the stream first goes through all of Alice's edges, and then Bob's edges. The memory space of the streaming algorithm is sent to Bob after Alice's edges are all presented, and Bob continues simulating the streaming algorithm. Thus, solving \UDVC{} in $P(n)$ pass needs $o(n^{1.5}/P(n))$ space implies a communication algorithm with $o(n^{1.5})$ communication.

\end{proof}

The following communication problem is where we will reduce \UDVC{} from. For convenience, suppose $\sqrt{n}$ is an even integer.

\paragraph{$\sqrt{n}$-OR $\sqrt{n}$-AND subset problem.} Alice is given $\sqrt{n}$ batches, the $i$-th batch contains $\sqrt{n}$ sets $A^{(i)}_1,A^{(i)}_2,....,A^{(i)}_{\sqrt{n}} \subseteq [\sqrt{n}]$, each with size exactly $\sqrt{n}/2$. Bob is given $\sqrt{n}$ batches, the $i$-th batch contains $\sqrt{n}$ sets $B^{(i)}_1,B^{(i)}_2,....,B^{(i)}_{\sqrt{n}} \subseteq [\sqrt{n}]$. Their goal is to determine whether there exists $i\in[\sqrt{n}]$ such that for any $j\in[\sqrt{n}]$, $B^{(i)}_j\subseteq A^{(i)}_j$.

The following lemma shows the randomized communication complexity of the problem above.

\begin{lemma}\label{lem:ORANDsd}
    There exists $k=\Theta(\sqrt{n})$ such that any randomized protocol solving $\sqrt{n}$-OR $\sqrt{n}$-AND $k$-subset problem requires $\Omega(n^{1.5})$ communication.
\end{lemma}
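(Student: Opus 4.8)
\textbf{Proof proposal for \cref{lem:ORANDsd}.}
The plan is to prove the lower bound via a direct-sum / composition argument built on top of the standard randomized communication complexity of the single-instance \emph{set disjointness} (equivalently, \textsc{subset}) problem. Recall that for the one-shot problem ``given $A,B\subseteq[k]$ with $|A|=k/2$, decide whether $B\subseteq A$'' (i.e.\ whether $B\cap\overline{A}=\emptyset$), the randomized communication complexity is $\Omega(k)$; this holds even under the promise $|A|=k/2$, and even for distributional complexity under a suitable hard distribution $\mu$ on which $B$ is typically a small set. We will instantiate $k=\Theta(\sqrt n)$ so that the whole structure has the right parameters. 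The target problem is an OR over $i\in[\sqrt n]$ of an AND over $j\in[\sqrt n]$ of such subset instances; the answer is $1$ iff some batch $i$ is ``all-yes''.

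First I would set up the hard distribution. For each of the $n$ coordinates $(i,j)$ I put an independent copy of the hard distribution $\mu$ for the $k$-subset problem, conditioned on the promise $|A^{(i)}_j|=k/2$. Under $\mu$, the single-instance answer ($B^{(i)}_j\subseteq A^{(i)}_j$) is $1$ with constant probability $p$; to control the global answer I will actually use a slightly tilted product so that the probability that a \emph{fixed} batch $i$ is all-yes is $\Theta(1/\sqrt n)$ — e.g.\ make each coordinate answer $1$ with probability $\approx 1-c/\sqrt n$ within a batch is wrong direction, so instead condition the per-coordinate ``yes'' probability to be a constant $q<1$ chosen so that $q^{\sqrt n}$ is negligible; then the global OR is almost always $0$, and the reduction will exploit that flipping a single planted coordinate from ``no'' to ``yes'' in one batch can flip the global answer. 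The cleaner route: use the standard \emph{embedding} argument for $\mathrm{OR}_m\circ\mathrm{AND}_m\circ\textsc{DISJ}_k$. The AND of $\sqrt n$ disjointness instances on ground set $k$ is itself (after merging ground sets) essentially a single disjointness instance on ground set $\sqrt n\cdot k$, whose complexity is $\Omega(\sqrt n\cdot k)=\Omega(n)$ when $k=\Theta(\sqrt n)$; and ORing $\sqrt n$ independent such instances should cost $\Omega(\sqrt n)$ times as much, giving $\Omega(n^{1.5})$.

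Concretely, the key steps in order: (1) Reduce $\textsc{DISJ}_k$ with the size promise to the single $k$-subset problem, fixing $k=\Theta(\sqrt n)$; cite the classical $\Omega(k)$ bound (Kalyanasundaram–Schnitger / Razborov / Bar-Yossef–Jayram–Kumar–Sivakumar) and note it survives the $|A|=k/2$ promise. (2) Prove an ``AND-composition'' step: the complexity of deciding the AND over $j\in[\sqrt n]$ of $\sqrt n$ independent $k$-subset instances is $\Omega(\sqrt n\cdot k)=\Omega(n)$. This follows because the conjunction $\bigwedge_j (B_j\subseteq A_j)$ is exactly $\bigl(\bigsqcup_j B_j\bigr)\subseteq\bigl(\bigsqcup_j A_j\bigr)$ on the disjoint union ground set of size $\sqrt n\cdot k$, i.e.\ it \emph{is} a disjointness instance of that size — with a mild promise on the set sizes, which again only costs constants — so the classical lower bound applies directly; call this the ``\subtri'' primitive. (3) Prove an ``OR-composition'' (direct sum) step: deciding $\bigvee_{i\in[\sqrt n]}(\text{batch }i\text{ all-yes})$ costs $\Omega(\sqrt n)$ times the single-batch cost. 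Here I would use the standard information-complexity direct-sum for OR of independent instances under a product-of-hard-distributions input where the OR is $0$ with constant probability: one shows that a protocol for the OR must, in an information-theoretic sense, ``solve'' each of the $\sqrt n$ inner instances, because the embedding $\mathsf{GapOR}$ / $\mathsf{OR}$-composition theorems (e.g.\ via the ``information cost of $\mathrm{OR}$'' lemma used in multi-party and streaming lower bounds, cf.\ Bar-Yossef et al.\ and the CKS framework) yield $\mathrm{IC}_\mu(\mathrm{OR}_m\circ f)\ge \Omega(m)\cdot\mathrm{IC}_{\mu_0}(f)$ for the appropriate ``all-zero-ish'' distribution. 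Combining (2) and (3): $\Omega(\sqrt n)\cdot\Omega(n)=\Omega(n^{1.5})$.

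The main obstacle I anticipate is making the OR-composition step (3) fully rigorous with the \emph{right} hard distribution: one needs the inner-instance distribution to be such that (a) the single-instance problem is hard in the \emph{information} (not just communication) sense under a distribution supported mostly on NO-instances, and (b) the product distribution makes the outer OR balanced-enough (the all-zero input has constant probability, and a random single-coordinate perturbation toggles the answer). The textbook way around this is exactly the ``$\mathrm{AND}$ of $\mathrm{OR}$s'' / ``$\mathrm{OR}$ of $\mathrm{AND}$s'' composition lemmas proven in the set-disjointness information-complexity literature; I would either cite one of these composition theorems in the precise form needed, or, to be safe, give a self-contained round-elimination / rectangle argument: fix all but one batch to a ``hard'' NO-configuration via an averaging (min-max) argument, observe that the protocol restricted to the free batch must solve the $\sqrt n\cdot k$-size disjointness instance of step (2), and do a counting argument over the $\sqrt n$ choices of free batch to force total communication $\Omega(\sqrt n)\times\Omega(n)$. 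I expect step (1) and (2) to be routine citations with a promise-reduction remark, and step (3) to be where all the care goes; picking $k=\Theta(\sqrt n)$ with the implicit constant chosen so that both the size promises and the balance condition hold simultaneously is the bookkeeping that ties it together.
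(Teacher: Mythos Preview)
Your plan is essentially sound, but it takes the long way around compared to the paper. The paper's proof is a one-step reduction from a problem whose $\Omega(n^{1.5})$ bound is already in the literature: given $\sqrt{n}$ pairs $(X_i,Y_i)$ with $X_i,Y_i\subseteq[n/2]$, decide whether some $X_i\cap Y_i=\emptyset$ (OR-of-disjointness / tribes), citing \cite{JayramKS03,HarshaJ13}. From there, the paper simply splits each universe $[n/2]$ into $\sqrt n$ blocks of size $\sqrt n/2$ and pads to enforce the $|A^{(i)}_j|=\sqrt n/2$ promise; this splitting is exactly the inverse of your step~(2) merge $\bigwedge_j(B_j\subseteq A_j)\Leftrightarrow\bigsqcup_jB_j\subseteq\bigsqcup_jA_j$. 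So your steps (1)--(2) are correct and present in the paper in reverse, while your step~(3) amounts to reproving the cited tribes lower bound from scratch via information complexity. That is a valid route (and your pointer to the Bar-Yossef et al.\ framework is appropriate), but it is substantially more effort than a citation.

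One caution on your fallback suggestion: the ``fix all but one batch by averaging, then the restricted protocol must solve the free batch'' argument does \emph{not} straightforwardly yield an OR-composition bound. A protocol that is correct for the OR need not be correct (or informative) on any single coordinate once the others are fixed to NO-instances --- the classical difficulty that distinguishes OR-composition from genuine direct-sum. The information-complexity route you sketched first is the one that actually goes through; the rectangle/averaging shortcut would need significantly more care than you indicated.
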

\begin{proof}
We will reduce the following problem to $\sqrt{n}$-OR $\sqrt{n}$-AND $k$-subset. Alice is given $\sqrt{n}$ sets $X_1,...,X_{\sqrt{n}}\subseteq[n/2]$, Bob is given $Y_1,...,Y_{\sqrt{n}}\subseteq[n/2]$, and they want to determine whether or not there exists $i$ such that $X_i\cap Y_i\not=\emptyset$. This problem has $\Omega(n^{1.5})$ lower bound \cite{JayramKS03,HarshaJ13}. Now we construct a $\sqrt{n}$-OR $\sqrt{n}$-AND $k$-subset instance as follows. Firstly, define $X'_i=[n/2]-X_i$. Then, we split $X'_i$ into $\sqrt{n}$ blocks and fit each block into $[\sqrt{n}/2]$, i.e., define the $j$-th block $X^{(i)}_j=\{k\mid k\in[\sqrt{n}/2],(j-1)(\sqrt{n}/2)+k\in X'_i\}$. We do the same thing to $Y_i$, let $Y^{(i)}_j=\{k\mid k\in[\sqrt{n}/2],(j-1)(\sqrt{n}/2)+k\in Y_i\}$ Define $A^{(i)}_j=X^{(i)}_j\cup\{\sqrt{n}/2+k\mid 1\le k\le \sqrt{n}/2-|X^{(i)}_j|\}$. Similarly, define $B^{(i)}_j=Y^{(i)}_j\cup \{\sqrt{n}/2+k\mid 1\le k\le \sqrt{n}/2-|X^{(i)}_j|\}$. Roughly speaking, we do padding to make sure $|A^{(i)}|=\sqrt{n}/2$, while make sure $Y^{(i)}_j\subseteq X^{(i)}_j$ iff. $B^{(i)}_j\subseteq A^{(i)}_j$. Notice that ``$Y^{(i)}_j\subseteq X^{(i)}_j$ for all $j$'' iff. ``$Y_i\subseteq X'_i$'' iff. ``$X_i\cap Y_i=\emptyset$''. Thus, the $\sqrt{n}$-OR $\sqrt{n}$-AND subset problem on $A^{(i)}_j,B^{(i)}_j$ output ``yes'' iff. the OR of set-disjointness problem on $X_i,Y_i$ outputs ``yes''.

\end{proof}
Now we are ready to prove the lower bound for~\UDVC{}.

\begin{proof}
    Given a $\sqrt{n}$-OR $\sqrt{n}$-AND subset instance $A^{(i)}_j,B^{(i)}_j$ for $i,j\in[\sqrt{n}]$, define graph $G=(V,E)$ as follows. We will assume $\sqrt{n}\ge 10$.

    \begin{align*}
    V=&U\cup \left(\cup_{i\in[\sqrt{n}]}V^{(i)}\right)\\
    &U=\{u^{(i)}_j\mid i,j\in[\sqrt{n}]\}\\
    &V^{(i)}=\{v^{(i)}_j\mid j\in[\sqrt{n}]\}\\
    E=&E_{clique}\cup E_{V}\cup E_A\cup E_B\\
    &E_{clique}=\{(u,v)\mid u,v\in U, u\not=v\}\\
    &E_{V}=\bigcup_{i\in[\sqrt{n}]}\{(u,v)\mid u,v\in V_i,u\not=v\}\\
    &E_A=\bigcup_{i\in[\sqrt{n}]}\left(\bigcup_{j\in[\sqrt{n}]}\{(v^{(i)}_{j},u^{(x)}_{y})\mid x\in[\sqrt{n}],x\not=j,y\in A^{(i)}_x\}\right)\\
    &E_B=\bigcup_{i\in[\sqrt{n}]}\left(\bigcup_{j\in[\sqrt{n}]}\{(v^{(i)}_{j},u^{(j)}_{y})\mid y\in B^{(i)}_j\}\right)
    \end{align*}

    We first prove that if there exists $i\in[\sqrt{n}]$ such that for any $j\in[\sqrt{n}]$, $B^{(i)}_j\subseteq A^{(i)}_j$, then there exists a vertex cut of size $n/2$. We claim that $N(V_i)$ is the vertex cut that we want. Firstly, notice that $V_j\cap N(V_i)=\emptyset$ for any $j\not=i$, so $N(V_i)$ is a valid vertex cut. Further, according to $E_A,E_B$ and the fact that $B^{(i)}_j\subseteq A^{(i)}_j$ for any $i$, we know $N(V_i)=\{u^{(x)}_y\mid x\in[\sqrt{n}],y\in A^{(i)}_x\}$. Thus, $|N(V_i)|=\sum_{x\in[\sqrt{n}]}|A^{(i)}_x|=n/2$ (remember that $|A^{(i)}_x|=\sqrt{n}/2$).

    Now we prove that if for any $i\in[\sqrt{n}]$, there exists $j\in[\sqrt{n}]$, $B^{(i)}_j\not\subseteq A^{(i)}_j$, then any vertex cut has size at least $n/2+1$. Taking any two vertices $u,v\in V$, we will prove that there are at least $n/2+1$ internal vertex disjoint paths connecting $u$ and $v$. If $u,v\in V_i$ for some $i$ or $u,v\in U$, then there is an edge between $u,v$. Thus, we only need to consider the following two cases.

    \paragraph{Case 1.} If $u\in V_i$ for some $i$ and $v\in U$. We will first construct $n/2+1$ paths $p^u_1,...,p^u_{n/2+1}$ such that (i) they start with vertex $u$, (ii) they are vertex disjoint except the starting vertex $u$. Suppose $j\in[\sqrt{n}]$ satisfies $B^{(i)}_j\not\subseteq A^{(i)}_j$. Suppose $u=v^{(i)}_{j'}$ for some $j'$. The paths contain three types
    \begin{enumerate}
        \item Type 1: length $1$ paths from $u$ to every vertex in $\{u^{(x)}_y\mid x\in[\sqrt{n}],x\not=j',y\in A^{(i)}_x\}$. There are $(\sqrt{n}/2)\cdot (\sqrt{n}-1)$ many type 1 paths.
        \item Type 2: pick arbitrary $\sqrt{n}/2$ vertices in $V_i$ not identical to $v_j$ and $v_{j'}$ (remember that $\sqrt{n}>10$ and $|V_i|=\sqrt{n}$), denoted as $P$. Notice that every vertex in $P$ has edges to any node in $\{u^{(j')}_y\mid y\in A^{(i)}_{j'}\}$ according to the definition of $E_A$. We one-to-one match these nodes to $\{u^{(j')}_y\mid y\in A^{(i)}_{j'}\}$. Each type two path is $(u,v,m(v))$ where $v\in P$ and $m(v)$ is the matched node. There are $\sqrt{n}/2$ type 2 paths.
        \item Type 3: Let $z\in B^{(i)}_j\backslash A^{(i)}_j$, such $z$ must exist since $B^{(i)}_j\not\subseteq A^{(i)}_j$. The type 3 path is $(u,v_j,u^{j}_z)$.
    \end{enumerate}

    There are in total $n/2+1$ paths, and one can verify that they are vertex disjoint except $u$. Now since $U$ is a clique, we can find $n/2+1$ internal vertex disjoint path from $u$ to $v$ by connecting the $n/2+1$ end points of the $n/2+1$ paths to $v$.

    \paragraph{Case 2.} If $u\in V_i$ and $v\in V_j$ for some $i\not=j$. According to case 1, both $u$ and $v$ has paths $p^u_1,....,p^u_{n/2+1}$ with end point set $P_1\subseteq U$ and $p^v_1,...,p^v_{n/2+1}$ with end point set $P_2\subseteq U$. For each $w\in P_1\cap P_2$, we connect the two paths $p^u_i,p^v_j$ that both take $w$ as end point, which leads to a path connecting $u$ and $v$. For the rest path, we match them one-by-one arbitrarily and connect $p^u_i,p^v_j$ by connecting their end point by an edge (remember that $U$ is a clique). Thus, we get $n/2+1$ internal vertex disjoint path from $u$ to $v$.

    \paragraph{Conclusion.} The graph $G$ can be constructed with $O(\log n)$ communication: $E_{clique},E_V,E_A$ is known by Alice and $E_B$ is known by Bob. The value answer to \UDVC{} solves the $\sqrt{n}$-OR $\sqrt{n}$-AND subset problem. According to~\cref{lem:ORANDsd}, \UDVC{} has lower bound $\Omega(n^{1.5})$.

\end{proof}

\section{Reductions Between Problems and Proofs of Main Theorems}\label{subsec:reductions}

For convenience, all the functions appearing in this section (for example, $W(m,n)$ in \cref{lem:ST vc problem in matching}), we assume they are \emph{smooth}, in the sense that $W(Km,Kn)=\poly(K)\cdot W(m,n)$.

\paragraph{From $(S,T)$-vertex connectivity to \BMC{}.}  We describe the reduction from the $(S,T)$ vertex connectivity problem to \BMC{} (bipartite minimum vertex cover) using the construction from \cite{schrijver2003combinatorial}(Section 16.7c), see also \cite{BlikstadBEMN22,BrandLNPSSSW20}.

\begin{definition} [Construction] \label{def:reduction bipartite}
    Let $G = (V,E)$ be a graph with two vertex sets $A,B \subseteq V$. We construct the bipartite matching instance as follows. We may assume that $A\cap B = \emptyset$.   We define the bipartite graph $H =(V',V'',E')$ as follows.  For each $v \in V - A$, we add a new vertex $v'$ to $V'$, and for each $v \in V- B$ we add a new vertex $v''$ to $V''$. We define $E'$ as the set of pairs $(u',v'')$ with $u \in V - A, v \in V - B$ with the property that $(u,v) \in E$ or $u = v$. We define the capacity of each $v' \in V'$ to be $\infty$  if $v \in B$, and $1$ otherwise. Similarly, we define the capacity of each $v'' \in V''$ to be $\infty$ if $v \in A$ and $1$ otherwise.
\end{definition}

We can extract a min $(A,B)$-separator in $G$ from the minimum vertex cover of $H$ as follows. For any $U \subseteq V$, we denote $U' = \{ v' \mid v \in U\}$, and $U'' = \{ v'' \mid v \in U\}$.

\begin{lemma} [\cite{schrijver2003combinatorial}(Section 16.7c)] \label{lem:recovering vertex cut}
Let $U \subseteq V - B, W \subseteq V - A$ be such that $D = U \cup W$ is a minimum vertex cover in $H$. Then, the set $C = (U \cap A) \cup (U \cap W) \cup (W \cap B)$ is a min $(A,B)$-separator in $G$.
\end{lemma}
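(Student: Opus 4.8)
The plan is to prove \Cref{lem:recovering vertex cut} by carrying out the K\H{o}nig/Menger argument underlying the construction of \Cref{def:reduction bipartite}. Write $I:=V-(A\cup B)$. As the statement does, identify the $V''$-part of a cover with a set $U\subseteq V-B$ (so $u\in U$ iff $u''\in D$) and the $V'$-part with a set $W\subseteq V-A$ (so $w\in W$ iff $w'\in D$). I would establish three claims and combine them: (i) for \emph{every} vertex cover $D=U\cup W$ of $H$, the set $C=(U\cap A)\cup(U\cap W)\cup(W\cap B)$ is an $(A,B)$-separator in $G$; (ii) for every vertex cover $D$ one has $|C|=|D|-|I|$; (iii) a minimum vertex cover of $H$ has size exactly $|I|+\kappa_G(A,B)$. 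Together, (i)--(iii) say that $C$ is an $(A,B)$-separator of size $\kappa_G(A,B)$, hence a minimum one.

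For (i): let $P=(v_0,\dots,v_\ell)$ be any $A$--$B$ path in $G$; replacing $P$ by a subpath we may assume $v_0$ is its only vertex in $A$, $v_\ell$ its only vertex in $B$, so $v_1,\dots,v_{\ell-1}\in I$ and all $v_i$ are distinct. If $v_0\in U$ then $v_0\in U\cap A\subseteq C$; if $v_\ell\in W$ then $v_\ell\in W\cap B\subseteq C$; so assume $v_0''\notin D$ and $v_\ell'\notin D$. For each $0\le i<\ell$ the pair $(v_{i+1}',v_i'')$ is an edge of $H$, since $v_{i+1}\in V-A$, $v_i\in V-B$, and $(v_i,v_{i+1})\in E$. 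As $D$ covers $(v_1',v_0'')$ and $v_0''\notin D$, we get $v_1'\in D$. Now walk forward: whenever $v_i'\in D$ with $1\le i\le\ell-1$, either $v_i''\in D$, so $v_i\in U\cap W\subseteq C$ and $P$ meets $C$, or $v_i''\notin D$, in which case $D$ covering $(v_{i+1}',v_i'')$ forces $v_{i+1}'\in D$ — but at $i+1=\ell$ this contradicts $v_\ell'\notin D$, so in fact $v_{\ell-1}''\in D$ and again $v_{\ell-1}\in C$. Thus $P\cap C\neq\emptyset$, so $C$ is an $(A,B)$-separator and in particular $|C|\ge\kappa_G(A,B)$.

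For (ii): the three sets $U\cap A$, $U\cap W$, $W\cap B$ are pairwise disjoint, using $A\cap B=\emptyset$, $W\cap A=\emptyset$ (as $W\subseteq V-A$) and $U\cap B=\emptyset$ (as $U\subseteq V-B$), so $|C|=|U\cap A|+|U\cap W|+|W\cap B|$. Every $v\in I$ has $(v',v'')\in E'$, so a cover $D$ contains $v'$ or $v''$, giving $I\subseteq U\cup W$, and also $U\cap W\subseteq I$. Since $|D|=|U|+|W|$ (the parts lie on opposite sides of $H$), splitting $|U|=|U\cap A|+|U\cap I|$ and $|W|=|W\cap B|+|W\cap I|$ and using $|U\cap I|+|W\cap I|=|(U\cup W)\cap I|+|U\cap W|=|I|+|U\cap W|$ yields $|D|=|C|+|I|$.

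Claim (iii) is the classical part and, I expect, the real obstacle. By K\H{o}nig's theorem it suffices to show $\mathrm{mm}(H)=|I|+\kappa_G(A,B)$, where $\mathrm{mm}$ is the maximum matching size; this is the correctness of the reduction in \Cref{def:reduction bipartite}. For ``$\ge$'', the canonical matching $M_0:=\{(v',v''):v\in I\}$ has size $|I|$, and each path $a,u_1,\dots,u_k,b$ among $\kappa_G(A,B)$ vertex-disjoint $A$--$B$ paths (Menger) gives an $M_0$-augmenting path $a''\,u_1'\,u_1''\,u_2'\,\cdots\,u_k'\,u_k''\,b'$ in $H$; these are vertex-disjoint in $H$, so augmenting simultaneously gives a matching of size $|I|+\kappa_G(A,B)$. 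For ``$\le$'', take any matching $M$ and decompose $M\triangle M_0$ into paths and cycles; the components augmenting with respect to $M_0$ run from some $a''$ ($a\in A$) to some $b'$ ($b\in B$), necessarily have the alternating form $a''\,u_1'\,u_1''\,\cdots\,u_k'\,u_k''\,b'$ with every $u_i\in I$, number at least $|M|-|I|$, and project to $|M|-|I|$ vertex-disjoint $A$--$B$ paths $a,u_1,\dots,u_k,b$ in $G$, whence $|M|-|I|\le\kappa_G(A,B)$. Hence $\mathrm{mm}(H)=|I|+\kappa_G(A,B)$, and with (i)--(ii) we get $|C|=\kappa_G(A,B)$, so $C$ is a minimum $(A,B)$-separator. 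The bottleneck is the ``$\le$'' direction: one must check that every augmenting component of $M\triangle M_0$ has exactly this clean shape — so that each traversed original vertex contributes \emph{both} of its copies, which is what makes the $G$-projections of distinct components vertex-disjoint — and that their count is at least $|M|-|I|$, a parity/edge-count bookkeeping on $M\triangle M_0$; steps (i) and (ii) are routine, although (i) needs a moment of care at the two endpoints of $P$, where the relevant $H$-edge uses a copy of a vertex of $A$ or $B$ that lies outside $D$ unless we have already found a vertex of $C$ on $P$.
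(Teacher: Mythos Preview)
The paper does not provide its own proof of this lemma; it is quoted from Schrijver's book and used as a black box. Your proposal is a correct, self-contained reconstruction of the classical K\H{o}nig--Menger argument behind the reduction: part~(i) verifies that $C$ blocks every minimal $A$--$B$ path by chasing the cover along the corresponding $H$-edges, part~(ii) is a clean inclusion--exclusion using $I\subseteq U\cup W$ and $U\cap W\subseteq I$, and part~(iii) establishes $\mathrm{mm}(H)=|I|+\kappa_G(A,B)$ by augmenting the canonical matching $M_0=\{(v',v''):v\in I\}$ along Menger paths (for $\ge$) and projecting $M_0$-augmenting components of $M\triangle M_0$ back to vertex-disjoint $A$--$B$ paths in $G$ (for $\le$). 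The ``clean shape'' concern you flag is handled automatically: any $M_0$-augmenting path in $H$ has odd length and alternates sides, so its $M_0$-edges force every internal pair to be $(u',u'')$ with $u\in I$; hence both copies of each internal vertex appear, and distinct augmenting components project to vertex-disjoint $G$-paths. This is exactly the argument in Schrijver, so there is no meaningful difference to report.
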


  Observe that the biparite graph $H$ in \Cref{def:reduction bipartite} has $2n - |A| - |B| = O(n)$ vertices. Also, the construction above contains some nodes with $\infty$ capacity. We can reduce to unit capacity WLOG as follows. First, we replace $\infty$ with $n$. For each node with capacity $n$, we simulate it to behave as if it is a duplicate of $n$ unit-capacity nodes. With appropriate implementation, we apply vertex reduction algorithm in Appendix B of \cite{assadi2022semi} to reduce to $O(n)$ nodes while preserving the maximum matching. Roughly speaking, the vertex reduction algorithm greedily finds $O(\log n)$ maximal matchings and preserves only the nodes in all these maximal matchings. Finding a maximal matching in a node-capacity graph can be done efficiently.

  In conclusion, we have the following.

\begin{lemma}  \label{lem:ST vc problem in matching}
If the \BMC{} problem with $n$ vertices, $m$ vertices and diameter $D$ can be solved in
\begin{itemize}[noitemsep]
    \item (\pram{}) $W(m,n)$ work and $D(m,n)$ depth,
    \item (\congest{}) $T(m,n,D)$ rounds,
    \item (Communication) $f(n)$ bits of communication,
    \item (Streaming) $s(n)$ space, $p(n)$ passes,
\end{itemize}
then $(S,T)$-vertex connectivity problem can be solved in
 \begin{itemize}[noitemsep]
    \item (\pram{}) $O(W(m,n))$ work and $O(D(m,n))$ depth,
    \item (\congest{}) $O(T(m,n,D))$ rounds,
    \item (Communication) $O(f(n))$ bits of communication,
    \item (Streaming) $O(s(n))$ space, $O(p(n))$ passes,
 \end{itemize}
\end{lemma}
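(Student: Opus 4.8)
The plan is to turn the reduction already recorded in \Cref{def:reduction bipartite} and \Cref{lem:recovering vertex cut} into a model-by-model implementation. Given an $(S,T)$-vertex-connectivity instance --- a graph $G$ on $n$ vertices and $m$ edges together with two vertex sets $A,B$ --- I would first build the bipartite graph $H=(V',V'',E')$ of \Cref{def:reduction bipartite}, call the assumed \BMC{} solver on $H$ to obtain a minimum vertex cover $D=U\cup W$, and then output the set $C=(U\cap A)\cup(U\cap W)\cup(W\cap B)$, which is a minimum $(A,B)$-separator by \Cref{lem:recovering vertex cut}. Since $|V'|+|V''|=2n-|A|-|B|=O(n)$, and every edge of $H$ is either a ``diagonal'' pair $(v',v'')$ or comes from an edge $(u,v)\in E$ (contributing $(u',v'')$ and $(v',u'')$), the instance $H$ has $O(n)$ vertices and $O(m+n)$ edges and is produced from $G$ by a purely local rule: each vertex $v$ of $G$ spawns $v'$ and/or $v''$ (co-located at $v$), and each edge of $G$ spawns two edges of $H$ routed along the same link. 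The one non-cosmetic preprocessing step is to remove the $\infty$ capacities: replace each $\infty$ by $n$, realise a capacity-$n$ node as $n$ parallel unit-capacity copies, and then apply the greedy vertex-sparsification of Appendix~B of \cite{assadi2022semi} ($O(\log n)$ maximal matchings, keep only matched vertices) to return to $O(n)$ vertices while preserving the maximum-matching value, hence --- by K\H{o}nig's theorem --- the minimum-vertex-cover value that the solver reports. Computing a maximal matching in a node-capacitated bipartite graph is cheap in every model (a greedy / Luby-type routine), so this step is absorbed into the target bound.

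It then remains to check, in each of the four models, that (i) building $H$ and the capacity gadget, (ii) the $O(\log n)$-round sparsification, and (iii) the final cover-to-separator read-off all fit within the stated complexity. In \pram{} all three are parallelisable in $\tilde O(m)$ work and polylogarithmic depth, so the total is $O(W(m,n))$ work and $O(D(m,n))$ depth. In streaming, the edge stream of $G$ is transformed on the fly into the edge stream of $H$ with $O(1)$ space overhead (each incoming $(u,v)$ becomes $(u',v'')$ and $(v',u'')$), while the $O(n)$ diagonal edges and the capacity copies are injected using $O(n)$ extra space; the read-off needs only the returned cover, so the bound is $O(s(n))$ space and $O(p(n))$ passes. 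In communication, Alice and Bob build the parts of $E'$ induced by their respective halves of $E$, and the diagonal edges and capacities are public, so simulating the \BMC{} protocol on $H$ and recovering $C$ costs $O(f(n))$ bits.

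The \congest{} case is the one requiring care, and I expect it to be the main obstacle: we must run a \emph{subgraph} \BMC{} protocol for $H$ on the communication network $G$ without inflating congestion or, crucially, the diameter parameter. I would have each vertex $v\in V$ host $v'$, $v''$, and their (virtual) unit-capacity copies, so all intra-$v$ communication in $H$ is free. Every edge of $H$ is then either a self-loop at some $v$ (a diagonal $(v',v'')$), needing no communication, or corresponds to an edge $(u,v)\in E$, simulated by the single network link $(u,v)$; hence each link carries the traffic of only $O(1)$ edges of $H$ and the dilation is $O(1)$, so one round of the $H$-protocol costs $O(1)$ rounds on $G$ and the diameter seen by the solver is $O(D)$. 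The total round count is therefore $O(T(m,n,D))$. A final remark is that we only need the \emph{value} of the $(S,T)$-separator for the applications, and this is fine because K\H{o}nig's theorem equates the minimum-vertex-cover size with the maximum-matching size, both of which the \BMC{} solver outputs; assembling these four arguments, together with the observation that the capacity-removal and maximal-matching steps are themselves implementable within each target bound, completes the proof.
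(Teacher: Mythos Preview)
Your proposal is correct and follows essentially the same approach as the paper: the paper's ``proof'' is just the discussion preceding the lemma (\Cref{def:reduction bipartite}, \Cref{lem:recovering vertex cut}, and the paragraph on removing $\infty$ capacities via the vertex-reduction of \cite{assadi2022semi}), after which it simply says ``In conclusion, we have the following.'' You carry out the model-by-model verification that the paper leaves implicit, but the reduction itself and the handling of capacities are identical.
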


\paragraph{From (global) vertex connectivity to single sink vertex connectivity.}

\begin{lemma}[Global to Single Source]\label{lem:globaltosinglesource}
    Suppose there exist algorithms solving \SSUDVC{} with the following complexities in different models.
    \begin{enumerate}
        \item $T(m,n)$ work and $D(m,n)$ depth in the \pram{} model.
        \item $R(m,n,D)$ rounds in \congest{} model.
        \item $C(n)$ communication bits in the two-party communication model,
        \item $S(n)$ space and $P(n)$ passes in the streaming model,
    \end{enumerate}

\noindent
    Then there exist algorithms solving \UDVC{} in
    \begin{enumerate}
        \item $\tO{T(m,n)}$ work and $\tO{D(m,n)}$ depth in the \pram{} model.
        \item $\tO{R(m,n,D)}$ rounds in \congest{} model.
        \item $\tO{C(n)}$ communication bits in the two-party communication model,
        \item $\tO{S(n)}$ space and $\tO{P(n)}$ passes in the streaming model,
    \end{enumerate}

\end{lemma}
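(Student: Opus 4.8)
The plan is to give a randomized reduction that, with high probability, reduces computing the global vertex connectivity to a polylogarithmic number of \SSUDVC{} instances on graphs of the same size (up to constant factors), so that the stated complexity bounds follow immediately in each of the four models. Let $\kappa = \kappa(G)$ denote the unknown vertex connectivity and let $(L,S,R)$ be a minimum vertex cut, so $|S| = \kappa$. The key observation is the standard one: if we can guess a vertex $t$ that lies in $R$ (i.e.\ on the ``large side'' away from $L$), then the minimum $t$-sink cut has size exactly $\kappa$ and is in fact a global minimum cut; conversely, for \emph{every} $t$, the minimum $t$-sink cut has size $\ge \kappa$, so taking the best answer over a family of sinks never underestimates. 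Hence it suffices to ensure that with good probability at least one sink $t$ we try satisfies $t \in R$ (for some fixed minimum cut), and that the minimum $t$-sink cut for that $t$ has size $\le \kappa$ — which it automatically does whenever $L \neq \emptyset$ and $t\notin L\cup S$.

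First I would handle the ``balanced/large'' case. Since $(L,S,R)$ is a partition of $V$ and $|S| \le n-1$, at least one of $|L|, |R|$ is $\ge (n - \kappa)/2 \ge \Omega(n)$ unless $\kappa$ is very close to $n$; WLOG assume $|R| \ge |L|$, so $|R| \ge (n-\kappa)/2$. If $|R| \ge n/4$, then a uniformly random vertex lands in $R$ with probability $\ge 1/4$, so sampling $O(\log n)$ vertices $t_1,\dots,t_{O(\log n)}$ independently at random, running \SSUDVC{} from each, and outputting the smallest cut found succeeds w.h.p. This already costs only $O(\log n)$ \SSUDVC{} calls. The remaining case is when \emph{both} sides are small, i.e.\ $\kappa \ge n - o(n)$ — equivalently the graph is ``almost complete''. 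Here I would note that $\kappa = n-1$ iff $G$ is complete (checkable trivially in all models), and more generally when $\kappa$ is within, say, $n/4$ of $n$, both $L$ and $R$ have size $O(n)$ but could be as small as $1$. To cover this, I would use the complementary trick of sampling from a \emph{small} set: standard in the vertex-connectivity literature (e.g.\ \cite{LiNPSY21}), one observes that there is always a minimum cut where $\min(|L|,|R|)$ is small OR one side is large; iterating over guesses $2^i$ for $|L|$ and, for each guess, sampling $\widetilde O(n/2^i)$ vertices guarantees hitting $L$ when $|L| \approx 2^i$, then picking any $t$ outside $N[u]$ for a hit vertex $u\in L$ (a single additional sink suffices since $R = V - N[L] \neq\emptyset$). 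Actually the cleanest route is: for each $i = 0,1,\dots,\log n$, sample a set $P_i$ of $\widetilde O(n/2^i)$ vertices; if $|L| \in [2^i, 2^{i+1})$ then w.h.p.\ some $u \in P_i \cap L$, and then any fixed $t \notin N_G[u]$ (e.g.\ the lowest-ID such vertex, or a random one, which exists since $R\neq\emptyset$) has a minimum $t$-sink cut of size $\kappa$. Summing, this is $\sum_i \widetilde O(n/2^i) = \widetilde O(n)$ sink vertices — too many. So instead I would combine both ideas: the balanced case with $O(\log n)$ random sinks handles $|L| \le |R|$ and $|R| = \Omega(n)$; and for the genuinely dense regime one reduces to \emph{complement}-type arguments or simply observes $|L|$ must be large on the other side. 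Let me restate: since $\max(|L|,|R|) \ge (n-1-\kappa+1)/\ldots$ — the honest statement is $\max(|L|,|R|) \ge (n - \kappa)/2$, and if $\kappa \le n/2$ this is $\ge n/4$, done by random sampling; if $\kappa > n/2$, then \emph{every} vertex $v$ has $|N(v)| \ge \kappa > n/2$, so $G$ is dense, and here I would instead guess $|L| = \min$ side by powers of two and sample $\widetilde O(n/|L|)$ vertices to hit $L$, giving $\widetilde O(n/|L|)$ sinks; but since in the dense regime $|L|$ can be $1$ this is $\widetilde O(n)$ calls, which only matches the claimed bound if we accept $\widetilde O(\cdot)$ hides $n$ — it does not.

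The resolution, which I expect to be the crux, is the following: we only need $\widetilde O(1)$ sink calls, so we must avoid enumerating over all sizes of $L$. I would appeal to the single-sink structure directly. Fix a uniformly random sink $t$; condition on the minimum $t$-sink cut $(L_t, S_t, R_t)$. With probability $1/2$ over the choice of $t$ (among any fixed global min-cut $(L,S,R)$), $t$ lands on the larger of $L,R$; but to turn this into only $O(\log n)$ calls I would run \SSUDVC{} from $O(\log n)$ i.i.d.\ uniform sinks and \emph{also} from $O(\log n)$ sinks obtained by: pick a random vertex $u$, run \SSUDVC{} with $u$ as sink to find a small cut $(L_u,S_u,R_u)$, then recurse with a fresh random sink inside $L_u$ — the standard $O(\log n)$-depth recursion (this is precisely the reduction used in \cite{LiNPSY21} in the sequential setting, and it only blows up work/passes/rounds by $\widetilde O(1)$ because each level uses one \SSUDVC{} call on a graph of size $O(n), O(m)$). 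The main obstacle is verifying that this recursion terminates in $O(\log n)$ rounds and correctly captures a global minimum cut regardless of which regime $\kappa$ is in; the argument is that either a random sink already hits the large side (success), or the returned small-side $L_u$ strictly shrinks, and after $O(\log n)$ halvings $|L_u| = O(1)$ so we can afford to brute-force over the (constantly many) vertices of $L_u$ as sinks' complements. Once the claimed ``$\widetilde O(1)$ calls to \SSUDVC{} on $O(n)$-vertex, $O(m)$-edge (and $O(D)$-diameter) instances'' is established, the four complexity bounds follow verbatim by plugging in $T(m,n), D(m,n)$; $R(m,n,D)$; $C(n)$; $S(n),P(n)$ respectively, using that all cost functions are smooth so that a constant-factor change in $m,n$ and an $\widetilde O(1)$ multiplicative overhead stay within $\widetilde O(\cdot)$. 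I would present the recursion carefully and defer the (routine) model-by-model bookkeeping to a short final paragraph.
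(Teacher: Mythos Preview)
Your proposal has a genuine gap in the dense regime. You correctly observe that when $\kappa \le n/2$ (so $\max(|L|,|R|) \ge n/4$), $O(\log n)$ uniformly random sinks suffice. But for $\kappa > n/2$ you never land on a workable idea: the per-size sampling gives $\widetilde O(n)$ calls (as you note), and the ``recurse into $L_u$'' scheme you sketch is neither the reduction of \cite{LiNPSY21} (which is based on isolating cuts, not sink recursion) nor clearly correct on its own. There is no argument that the $u$-sink minimum cut's small side $L_u$ shrinks geometrically or even relates usefully to a fixed global minimum cut when $u$ lands on the wrong side; you yourself flag this as ``the main obstacle'' and leave it unresolved.

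The paper's fix is a single clean trick you are missing: first compute the minimum degree $\delta$, then sample $z = \Theta(n\log n/(n-\delta))$ sinks. Since $\kappa \le \delta$, we have $|L|+|R| = n-\kappa \ge n-\delta$, so the larger side has size $\ge (n-\delta)/2$ and is hit w.h.p. Crucially, each sink $s_i$ only needs the instance restricted to $B_i := V - N_G[s_i]$, which has at most $n-\delta$ vertices; hence the $z$ instances have total size $z\cdot(n-\delta) = \widetilde O(n)$ and can be packed into a single \SSUDVC{} call on a graph with $\widetilde O(n)$ vertices (for \pram{}/communication/streaming). In \congest{} one instead runs all $z$ instances in place and bounds congestion: an edge $(u,v)$ lies in $H_i$ only if $s_i \notin N[u]$ or $s_i \notin N[v]$, which by the degree bound happens for $O(\log n)$ of the $z$ samples in expectation. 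This $\delta$-adaptive sampling is exactly what lets the number of calls grow in the dense case while keeping the \emph{total} instance size $\widetilde O(n)$, and it is what your proposal lacks.
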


\begin{proof}

    The general idea is to first compute the minimum degree $\delta$, then sample $z=10n\log n/(n-\delta)$ nodes $s_1,...,s_z$, and denote $B_i=V-N_G[s]$. Now construct a graph $G'=(V',E')$ where
    \[V'=\{s\}\cup V'\cup \left(\cup_{i\in[z]}B_i\right)\]
    here $V'=\{v'\mid v\in V\}$ is a copy of $V$, and $s$ is a new node.
    \[E'=\{(s,v')\mid v'\in V'\}\cup \left(\cup_{i\in[z]}\{(v,u)\mid u,v\in B_i,(u,v)\in E\}\cup \{(u',v)\mid u'\in V',v\in B_i,(v,u)\in E\}\right)\]

    Now we query the \SSUDVC{} on $G'$ and source $s$.

    To prove the correctness, suppose $(L,S,R)$ is a minimum vertex cut of $G$. We first prove $G'$ has a cut separating $s$ to another node with size $|S|$. We have $\delta\ge |S|$, otherwise, the neighborhoods of the minimum degree node should be a smaller cut. Thus, $|L|+|R|=n-|S|\ge n-\delta$. By sampling $z=10n\log n/(n-\delta)$ nodes, one of them (suppose it is $s_i$) will be in $R$ w.h.p. since $|R|\ge |L|$. Thus, $(S\cap B_i)\cup \{u'\mid u\in S\cap N(S))\}$ is a cut in $G'$ which has size $|S|$. Now we prove than $\kappa_{G'}(s,t)\ge |S|$ for any $t\in V'$. If $t\in N[s]$, then $\kappa_{G'}(s,t)=|V'|-1$. Otherwise, $t\in B_i$ for some $i$. If there exists $|S|$ internal vertex disjoint path in $G$ from $t$ to $s$, then the $|S|$ paths corresponds to $|S|$ internal vertex disjoint path in $G'$ from $s$ to $t$.

    To recover the cut for $G$ from the cut for $G'$, notice that the cut $(L,S,R)$ in $G'$ with $s\in R$ has the property that $L\subseteq B_i$ if $L$ is connected. Thus, $S$ corresponds to a cut in $G'$.

    The size of $V'$ is at most $(n+1)+(n-\delta)\cdot 10n\log n/(n-\delta)\le n'$.

    Now we prove the complexities in different models by showing how to implement the aforementioned algorithm.

    \paragraph{\pram{}.} If $|S|<0.1n$, the reduction is trivial since $O(\log n)$ sampled nodes suffice to hit a node in $R$. Thus, we assume $|S|\ge 0.1n$, in which case we have $m=\Omega(n^2)$. The minimum degree can be computed in $O(m)$ time. Sampling can be done in $O(n)$ time. Construct the graph $G'$ cost $\tOh{n^2}$ time. The total work is $\tOh{T(m,n)}$. The depth is constant plus the oracle depth.

    \paragraph{\congest{}.} The implementation for \congest{} is a bit different since we cannot explicitly construct $G'$. Instead, after sample $z$ sinks $s_1,...,s_z$, we run \SSUDVC{} on the subgraphs $H_1,...,H_z$ defined as $H_i$ containing all edges adjacent to $V-N_G[s_i]$ plus edges from $N_G(s_i)$ to $s_i$, with the sinks $s_1,...,s_z$. Notice that $H_i$ has the same diameter as the original graph since we are only ignoring edges between $N(s_i)$, which has mutual distances at most $2$. We run all the algorithm on $H_i$ simultaneously, which causes dilation at most $R(n,D)$. Now we bound the congestion, i.e., given an arbitrary edge $(u,v)$, we will bound the number of $H_i$ that includes $(u,v)$. Specifically, we are going to bound the number of $H_i$ including $(u,v)$ through (i) $(u,v)$ is an edge adjacent to $s_i$, this can happen at most $O(\log n)$ times w.h.p., (ii) $(u,v)$ is an edge adjacent to $V-N_G[s_i]$, in other words $s_i$ is not a neighborhood of $u$ or is not a neighborhood of $v$, according to Chernoff bound, both event can happen for at most $O(\log n)$ times since $u$ has degree at least $\delta$ and we sample $s_i$ uniformly at random where each node get sampled with probability $10\log n/(n-\delta)$. Thus, the total congestion is $\tO{R(m,n,D)}$.

    \paragraph{Communication.} The minimum degree can be computed in $\tOh{n}$ communication by computing $|N_G(v)|$ in $\tOh{1}$ communication. Sampling can be done by Alice and then the results are sent to Bob. The vertex set $B_i$ is known to both Alice and Bob in $\tOh{n-\delta}$ communication by using~\cref{thm:sketching} on $V$ and $N_G[s]$, which in total cost $\tOh{n}$ communication. Now edges in $G'$ are split into two parts for Alice and Bob, and they can simulate the \SSUDVC{} algorithm on $G'$ in $C(n')$ communication.

    \paragraph{Streaming.} The minimum degree can be computed in one pass and $\tOh{n}$ space. Sampling can be done without accessing the graph. The vertex sets $B_i$ can be computed in one pass and $\tOh{n}$ space (for all $i$) by~\cref{thm:sketching}. The we can simulate the stream for $G'$ by using the stream for $G$, in $S(n')$ space and $P(n')$ passes.

\end{proof}

\paragraph{From s-t vertex connectivity to (global) vertex connectivity.}
\begin{lemma}[s-t vertex connectivity to (global) vertex connectivity]\label{lem:sttoglobal}
    Suppose there exist algorithms solving \UDVC{} with the following complexities in different models.

    \begin{enumerate}
        \item $T(n)$ work and $D(n)$ depth in \pram{} model.
        \item $C(n)$ communication bits in the two-party communication model,
        \item $S(n)$ space and $P(n)$ passes in the streaming model,
    \end{enumerate}

\noindent
        Then there exist algorithms solving s-t \UDVC{} in
    \begin{enumerate}
        \item $O(C(n))$ communication bits in the two-party communication model,
        \item $O(S(n))$ space and $O(P(n))$ passes in the streaming model,
        \item $O(T(n))$ work and $O(D(n))$ depth in \pram{} model.
    \end{enumerate}

\end{lemma}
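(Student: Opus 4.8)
The plan is to dispatch one trivial case and then reduce, in a model-oblivious way, through bipartite minimum vertex cover. If $s\sim_G t$ there is no $(s,t)$-separator, so by the convention on vertex cuts we output the degenerate cut $(\emptyset,V\setminus\{t\},\{t\})$ of size $n-1$; whether $s$ and $t$ are adjacent is decidable with $O(\log n)$ communication, one streaming pass, or trivially in \pram{}. So assume $s\not\sim_G t$. Since s-t \UDVC{} is the singleton case of $(S,T)$-\UDVC{}, \cref{lem:ST vc problem in matching} reduces it---with only constant overhead and, in the communication and streaming settings, without any interaction beyond reading $G$---to solving \BMC{} on bipartite graphs with $O(n)$ vertices, while \cref{lem:recovering vertex cut} converts the resulting minimum vertex cover back into a minimum $(s,t)$-separator of $G$. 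Hence it suffices to turn the given \UDVC{} solver into a \BMC{} solver with $O(1)$ overhead on $O(n)$-vertex instances.

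Given bipartite $H=(L_H,R_H,E_H)$ on $h=|V(H)|$ vertices, I would add one isolated padding vertex $p_L$ to $L_H$ and one, $p_R$, to $R_H$, and form the graph $G_H$ on $L_H\cup\{p_L\}\cup R_H\cup\{p_R\}$ by making $L_H\cup\{p_L\}$ and $R_H\cup\{p_R\}$ into cliques and, for $a\in L_H,b\in R_H$, placing the edge $(a,b)$ in $G_H$ precisely when $(a,b)\in E_H$ (the padding vertices receive no cross edges); then call the \UDVC{} oracle on $G_H$. The claim to prove is $\kappa(G_H)=\mathrm{BMC}(H)$, with the returned minimum vertex cut immediately yielding a minimum vertex cover of $H$. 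Indeed, since $L_H\cup\{p_L\}$ and $R_H\cup\{p_R\}$ are cliques, neither can straddle the two sides of a vertex cut $(L,S,R)$, so after possibly swapping $L$ and $R$ we get $L\subseteq L_H\cup\{p_L\}$ and $R\subseteq R_H\cup\{p_R\}$; then ``no $G_H$-edge between $L$ and $R$'' means exactly ``no $H$-edge between $L$ and $R$'', i.e.\ $L\cup R$ is an independent set of the padded graph, with $|S|=h+2-|L\cup R|$. Thus minimizing $|S|$ maximizes $|L\cup R|$ over two-sided independent sets; but $p_L,p_R$ being isolated, every maximum independent set contains them and is automatically two-sided, so the optimum is $\alpha(H)+2$ and $\kappa(G_H)=h-\alpha(H)=\mathrm{BMC}(H)$ by König's theorem (and $\mathrm{BMC}(H)\le h<|V(G_H)|-1$, so a genuine vertex cut is attained). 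The returned cut's middle part $S$ is then a minimum vertex cover of the padded graph avoiding $p_L,p_R$, hence a minimum vertex cover of $H$, which we output.

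Implementation-wise, $G_H$ has $O(n)$ vertices; its clique and padding edges are data-independent and its remaining edges are exactly $E_H$, so each player builds its part of $G_H$ from its part of $H$ (hence of $G$) with no extra communication, the streaming algorithm is run on $G_H$ by regenerating the public edges and making one pass over the stream of $H$-edges for each pass it needs, and in \pram{} $G_H$ is materialized in $O(n^2)$ work and $O(1)$ depth---absorbed, since a \UDVC{} solver on a dense $n$-vertex graph already uses $\Omega(n^2)$ work. A single oracle call then costs $O(T(n))$ work and $O(D(n))$ depth, $O(C(n))$ bits, or $O(S(n))$ space and $O(P(n))$ passes by smoothness of the complexity functions, and the chain of recoveries (min vertex cut $\to$ min vertex cover of $H$ $\to$ min $(s,t)$-separator of $G$) is $\tilde{O}(n)$ local post-processing. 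The delicate point is precisely the identity $\kappa(G_H)=\mathrm{BMC}(H)$: without the two padding vertices $G_H$ can be a complete graph (e.g.\ when $H$ is complete bipartite) with $\kappa(G_H)=|V(G_H)|-1$ unrelated to $\mathrm{BMC}(H)$, and $p_L,p_R$ are exactly what guarantees a genuine vertex cut exists and that the optimal independent set is two-sided.
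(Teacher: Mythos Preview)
Your proof is correct and follows essentially the same route as the paper: reduce s-t \UDVC{} to \BMC{} via \cref{lem:ST vc problem in matching}, then reduce \BMC{} to global \UDVC{} by turning each side of the bipartition into a clique. The one difference is in handling the degenerate case where the clique-ified graph could be (nearly) complete: the paper leaves the bipartite graph as is and adds a post-hoc case split, returning the smaller side whenever the \UDVC{} oracle outputs a cut of size at least $\min(|A|,|B|)$; you instead add two isolated padding vertices $p_L,p_R$, which forces every minimum vertex cut of $G_H$ to be genuine and to place $p_L,p_R$ outside the separator, so the oracle's output is always directly a minimum vertex cover of $H$ with no case analysis needed.
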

\begin{proof}

    By \cref{lem:ST vc problem in matching}, we only need to reduce \BMC{} (bipartite minimum vertex cover) to \UDVC{}.

    Given a bipartite graph $G=(A,B,E)$ where $E\subseteq A\times B$, we construct a graph $G'=(A\cup B,E\cup E_A\cup E_B)$ where $E_A=\{(u,v)\mid u,v\in A\}$ and $E_B=\{(u,v)\mid u,v\in B$, i.e., we build a clique on both $A$ and $B$. We run \UDVC{} on $G'$, which gives us a minimum vertex cut $S$. We return (i) $S$ in the case that $|S|<\min(|A|,|B|)$, (ii) the smaller one among $A,B$ in the case that $|S|\ge \min(|A|,|B|)$.

    We first show that every vertex cut in $G'$ must be a vertex cover on $G$. Suppose $(L,S,R)$ is a vertex cut in $G'$, since there are no edges between $L$ and $R$, it cannot happen that both $L\cap A$ and $R\cap A$ are non-empty. W.o.l.g., assume $L\cap A$ is non-empty. If $L\cap B$ is non-empty as well, then $R$ cannot contain any node in $A$ or $B$, which is a contradiction to the fact that $R$ cannot be empty. Thus, $R\cap A=\emptyset$, so we have $L\subseteq A$. Similarly, we have $R\subseteq B$.
    For every edge in $G$, it is either adjacent to $S$, or both of its endpoints are inside $A$ (or inside $B$). But edges in $G'$ that are totally inside $A$ or $B$ are not edges in $G$, thus, $S$ is a vertex cover for $G$.

    Then we show that if there is a vertex cover $C$ in $G$ which satisfies that $|C|<\min(|A|,|B|)$, then $C$ must be a vertex cut for $G'$. W.o.l.g., assume $|A|\le |B|$. We have that $|C|<|A|$. Let $L=A-C$ and $R=B-C$. We claim that $(L,V-C,R)$ is a vertex cut. It is clear that $L,V-C,R$ form a partition of $A\cup B$. Moreover, if there exists an edge between $L$ and $R$, then this edge is not covered by $C$ in $G$, which is a contradiction. Thus, there are no edges between $L$ and $R$.

    Given the last two paragraphs, suppose $S^*$ is the returned set by our reduction algorithm, then either (i) the minimum vertex cover of $G$ is less than $\min(|A|,|B|)$, in which case there exists a veretx cut in $G'$ with the same size of the minimum vertex cover, so $S^*$ cannot be larger, and cannot be smaller since it is a vertex cover itself, or (ii) the minimum vertex cover of $G$ is $\min(|A|,|B|)$, then we returned the correct thing.

\end{proof}

Now we are ready to prove the main theorems.

\begin{proof}[Proof of \cref{thm:prammain}]
    Given an s-t vertex connectivity algorithm, we can solve \SSUDVC{} according to \cref{lem:SSVCpram}, then we can solve \UDVC{} using \cref{lem:globaltosinglesource}.
\end{proof}

\begin{proof}[Proof of \cref{thm:congestmain}]
    Given an S-T vertex connectivity algorithm, we first convert it into a subgraph S-T vertex connectivity algorithm by the same reduction described in \cref{sec:BMM}.
    Then we can solve \SSUDVC{} according to \cref{lem:SSVCcongest}, then we can solve \UDVC{} using \cref{lem:globaltosinglesource}.

\end{proof}

\begin{proof}[Proof of \cref{thm:pramexactrunningtimeVC}]
    We plug in \cref{thm:prammain} with the work and depth for s-t vertex connectivity. The latter one can be derived from the maximum bipartite matching algorithm of \cref{thm:pram-bmm}. To convert maximum bipartite matching algorithm to s-t vertex connectivity algorithm, we first convert it into bipartite minium vertex cover algorithm by running a reachability on the residual graph to find the minium vertex cover, then use \cref{lem:ST vc problem in matching}.
\end{proof}

\begin{proof}[Proof of \cref{thm:congestexactrunningtimeVC}]
    We plug in \cref{thm:congestmain} with the round complexity for S-T vertex connectivity. The round complexity of S-T vertex connectivity can be derived from \cref{thm:congest-bmm} combined with \cref{lem:ST vc problem in matching}.
\end{proof}

\begin{proof}[Proof of \cref{thm:pramexactrunningtimeVCmaxtrix}]
    We plug in \cref{lem:globaltosinglesource} with \cref{lem:SSVCprammatrix}.
\end{proof}

\begin{proof}[Proof of \cref{thm:ccmain}]
    We combine \cref{lem:communicationSSUDVC} with \cref{lem:globaltosinglesource}, and \cref{lem:lowerboundsundirvc}.
\end{proof}

\begin{proof}[Proof of \cref{thm:streamingupperbounds}]
    We combine \cref{lem:streamingSSUDVC} with \cref{lem:globaltosinglesource}.
\end{proof}

\section*{Acknowledgements}
 The project is partially supported by the European Research
Council (ERC) under the European Union’s Horizon 2020 research and innovation programme (grant agreement no. 759557–ALGOCom), and the ERC Starting Grant (CODY 101039914), Dr. Max Rössler, the Walter Haefner Foundation, and the ETH Zürich Foundation.
We thank Danupon Nanongkai for the helpful discussions and the anonymous reviewers for their helpful suggestions on the writing.

\bibliographystyle{alpha}
\bibliography{ref,ref-sorrachai}

\appendix
\section{Bipartite Matching in CONGEST and PRAM}\label{sec:BMM}

\newcommand{\lam}{\lambda}
\newcommand{\ma}{\boldsymbol{A}}
\newcommand{\mb}{\boldsymbol{B}}
\newcommand{\mzero}{\boldsymbol{0}}
\newcommand{\bx}{\bar{x}}
\newcommand{\by}{\bar{y}}
\newcommand{\R}{\mathbb{R}}
\newcommand{\dmcm}{d_{\mathrm{MCM}}}
\newcommand{\tmb}{\widetilde{\boldsymbol{B}}}
\newcommand{\LSSherman}{\mathsf{LowSpaceACMirrorProx}}
\newcommand{\norm}[1]{\left\lVert#1\right\rVert}
\newcommand{\x}{^\mathsf{x}}
\newcommand{\y}{^\mathsf{y}}
\newcommand{\diag}[1]{\textbf{\textup{diag}}\left(#1\right)}
\newcommand{\half}{\frac{1}{2}}
\newcommand{\Overflow}{\textup{Overflow}}
\newcommand{\tx}{\tilde{x}}
\subsection{Parallel BMM}

The following theorem is the state-of-the-art for parallel bipartite maximum matching.

\begin{theorem}[\cite{brand2025parallelminimumcostflow}]\label{thm:pram-bmm-fast}
    In \pram{} model, bipartite maximum matching can be solved in $\tilde{O}(m+n^{1.5})$ work and $\tilde{O}(n^{1/2})$ depth, with high probability.
\end{theorem}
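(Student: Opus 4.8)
The plan is to produce the matching by solving a minimum-cost (equivalently, maximum) flow instance on the bipartite incidence graph to high accuracy and then rounding the resulting near-optimal fractional matching to an integral one. Bipartite maximum matching on $G=(A,B,E)$ reduces, by attaching a source joined to $A$ and a sink joined to $B$ with unit capacities, to a unit-capacity flow instance with $O(n)$ vertices and $O(m)$ edges, which one may view as optimization over the matching polytope. I would run a path-following interior point method (IPM) on this instance. The key structural fact is that for bipartite-matching-type linear programs the number of IPM iterations can be taken to be $\tilde O(\sqrt{n})$, since the natural formulation has $O(n)$ constraints and hence a self-concordant barrier of parameter $O(n)$; each iteration consists of (i) forming the weighted graph Laplacian $L = B^\top W B$ determined by the current slacks, (ii) approximately solving one or two linear systems in $L$ to compute the Newton step, and (iii) applying an update of bounded $\ell_2$-norm to the slacks and weights.

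For the depth bound, each iteration is implementable in $\mathrm{polylog}(n)$ depth: the Laplacian solve uses a parallel SDD/Laplacian solver (Peng--Spielman style) running in $\tilde O(m)$ work and $\mathrm{polylog}(n)$ depth, and the remaining vector operations are embarrassingly parallel. Because the $\tilde O(\sqrt n)$ IPM iterations are inherently sequential — each Newton step depends on the previous iterate — the total depth is $\tilde O(\sqrt n)$, as claimed.

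The work bound is the delicate part. Naively, $\tilde O(\sqrt n)$ iterations each costing a full $\tilde O(m)$-work Laplacian solve gives $\tilde O(m\sqrt n)$, which is too much. To reach $\tilde O(m + n^{1.5})$ I would invoke the robust (inexact) IPM framework together with a parallel dynamic data structure: after an $\tilde O(m)$-work, $\mathrm{polylog}(n)$-depth preprocessing that builds a spectral sparsifier and an initial approximate inverse representation, the method maintains the Newton direction and slack vector implicitly, updating under the per-step coordinate changes. Robustness of the IPM guarantees that only $\tilde O(\sqrt n)$ coordinates change by a constant factor per step (and $\tilde O(n)$ amortized overall), so a heavy-hitter / sparse-recovery sketching data structure can detect and apply the relevant changes in $\tilde O(n)$ amortized work per step; summing over $\tilde O(\sqrt n)$ steps yields the $\tilde O(n^{1.5})$ term, and preprocessing contributes the $\tilde O(m)$ term. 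Finally, the near-optimal fractional matching is rounded to an integral maximum matching by a parallel routine — e.g.\ randomly perturb edge values to isolate a unique integral LP optimum, or perform parallel cycle cancelling in the fractional support — at an additional $\tilde O(m)$ work and $\mathrm{polylog}(n)$ depth.

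The main obstacle I expect is the parallel dynamic data structure. The data structures behind the fastest sequential IPMs process the $\mathrm{polylog}(n)$ changed coordinates of a step one at a time and rely on that sequentiality; making them batched — applying all the updates of one IPM step in $\mathrm{polylog}(n)$ depth while keeping $\tilde O(n)$ amortized work — requires parallel leverage-score sampling, parallel low-rank (Woodbury) inverse maintenance, and parallel heavy-hitter detection, all of which must stay correct against the adaptive queries generated by the IPM. Controlling this adaptivity, typically by re-randomizing the sketches every $\mathrm{polylog}(n)$ (or every $\sqrt n$) steps and appealing to the stability of the robust IPM, without pushing the work past $\tilde O(m+n^{1.5})$ or the depth past $\tilde O(\sqrt n)$, is the crux of the argument.
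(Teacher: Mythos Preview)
This theorem is not proved in the paper; it is simply cited as the state-of-the-art result from \cite{brand2025parallelminimumcostflow}, and the paper immediately moves on (noting only that the work function $\tilde O(m+n^{1.5})$ is not superadditive in $m$, so \Cref{thm:pram-bmm} is used instead for the reduction). There is therefore nothing to compare your proposal against within this paper.

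That said, your sketch is a reasonable high-level description of the approach in the cited work: a robust IPM with $\tilde O(\sqrt n)$ iterations, each implemented in polylog depth via parallel Laplacian solvers, combined with parallel dynamic data structures so that the amortized per-iteration work is $\tilde O(n)$ rather than $\tilde O(m)$ after an initial $\tilde O(m)$ preprocessing. You have correctly identified the main technical difficulty as making the inverse-maintenance and heavy-hitter data structures batched and low-depth while remaining robust to adaptivity. Your rounding step is slightly off in emphasis: rather than isolation or cycle-cancelling, the standard route is to use the near-optimal flow to bound the number of remaining augmenting paths by $\tilde O(1)$ and find them via parallel reachability, or to use a dedicated parallel fractional-to-integral rounding lemma (as in \Cref{thm:pram-bmm-rounding}). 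But since the paper does not supply a proof, this is all commentary on the external reference rather than on the paper itself.
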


However, we cannot plug \Cref{thm:pram-bmm-fast} to \Cref{thm:prammain} since the work function is not a super additive function on $m$. Instead, if we want the work to be super-additive in $m$, the following theorem from \cite{LeeS14} is the state-of-the-art.

\begin{theorem}[\cite{LeeS14}]\label{thm:pram-bmm}
    In \pram{} model, bipartite maximum matching can be solved in $\tilde{O}(mn^{0.5})$ work and $\tilde{O}(n^{1/2})$ depth, with high probability.
\end{theorem}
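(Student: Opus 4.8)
The plan is to obtain Theorem~\ref{thm:pram-bmm} as a parallel implementation of the weighted interior-point method of Lee and Sidford~\cite{LeeS14}. First I would cast bipartite maximum matching as a linear program: add a source $s$ adjacent to all of $A$ and a sink $t$ adjacent to all of $B$, and solve maximum $s$--$t$ flow with unit capacities, or equivalently optimize a linear objective over the bipartite matching polytope, whose vertices are integral. The Lee--Sidford path-following scheme solves such a program to accuracy $\epsilon$ in $\tilde{O}(\sqrt{r}\cdot\log(1/\epsilon))$ iterations, where $r = O(n)$ is the rank of the constraint matrix; since $\epsilon = 1/\poly(n)$ suffices for rounding, this is $\tilde{O}(\sqrt{n})$ iterations.

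The crucial point for parallelism is that every iteration is dominated by a constant number of linear-system solves with matrices of the form $E^\top D E$, where $E$ is the (signed) incidence matrix of the flow network and $D$ is a positive diagonal reweighting; these are symmetric diagonally dominant (Laplacian-type) systems. A parallel nearly-linear-work SDD solver (e.g.\ of Peng--Spielman type) solves each in $\tilde{O}(m)$ work and $\poly\log n$ depth. The remaining per-iteration work --- computing the Newton step, the multiplicative updates to the iterate and slacks, and the approximate leverage scores that define the Lee--Sidford weight function (obtained by Johnson--Lindenstrauss sketching composed with the same SDD solves) --- consists of a constant number of additional solves plus vector operations, hence also $\tilde{O}(m)$ work and $\poly\log n$ depth. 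Multiplying by the $\tilde{O}(\sqrt{n})$ iterations yields $\tilde{O}(m\sqrt{n})$ total work and $\tilde{O}(\sqrt{n})$ total depth. Finally I would round the near-optimal fractional flow to an integral maximum matching; this can be done in $\tilde{O}(m)$ work and $\poly\log n$ depth by standard parallel rounding (canceling fractional cycles, or isolating a unique optimum via random edge weights \`a la Mulmuley--Vazirani--Vazirani), which is absorbed into the stated bound.

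The main obstacle I expect is bookkeeping rather than a new idea: one must verify that each subroutine in the Lee--Sidford pipeline remains low-depth --- in particular that maintaining the weights which cap the iteration count at $\tilde{O}(\sqrt{n})$ does not require sequential updates, and that the accuracy to which leverage scores and linear systems are computed can be taken $1/\poly(n)$ so that both the $\tilde{O}(\sqrt{n})$-iteration bound and the correctness of rounding survive. Since all of these approximations already appear in the sequential analysis of~\cite{LeeS14} and only the linear-algebra primitives need to be swapped for their parallel counterparts, I expect no essential difficulty, but a careful statement of the parallel SDD-solver interface and its composition with the sketching step is the part that needs the most care.
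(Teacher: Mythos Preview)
The paper does not prove this theorem at all; it simply cites it as a known result from \cite{LeeS14} and uses it as a black box. So there is no ``paper's own proof'' to compare against.

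Your sketch is a reasonable account of what the Lee--Sidford algorithm does and why it parallelizes: $\tilde O(\sqrt{\mathrm{rank}}) = \tilde O(\sqrt{n})$ iterations of weighted path-following, each dominated by a constant number of Laplacian solves and leverage-score estimates, both of which admit nearly-linear-work polylog-depth parallel implementations. That is indeed the content of \cite{LeeS14} (and the parallel Laplacian solver line of work), so as an expansion of the citation your outline is on target. One minor caution: your aside about rounding via Mulmuley--Vazirani--Vazirani isolation would push the work to matrix-multiplication time, not $\tilde O(m)$; the cycle-canceling/flow-rounding route you mention first is the one that fits the stated work bound.
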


\subsection{A \congest{} Algorithm}
In this section, we will prove the following two theorems. Notice that \Cref{thm:pram-bmm-old} is subsumed by \Cref{thm:pram-bmm}, but we put it here as an alternative algorithm.

\begin{theorem}\label{thm:congest-bmm}
    In \congest{} model, maximum bipartite matching can be solved in $\tilde{O}(n^{3/4}D^{1/2} + n^{1/2}D + n^{7/10+o(1)}D^{7/10})$ rounds, with high probability.
In particular, when $D = O(n^{3/7 - \epsilon})$ for some constant $\epsilon >0$, this is sublinear in~$n$.
\end{theorem}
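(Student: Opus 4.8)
The plan is to implement a known semi-streaming bipartite matching algorithm in the \congest{} model, reading each ``pass'' over the edge stream as a constant number of \congest{} rounds of purely local aggregation plus one global aggregation over a BFS tree, and then to combine this approximate solver with a classical blocking-flow / augmenting-path cleanup and balance the resulting parameters. Concretely, I would start from the low-space area-convexity mirror-prox routine the paper denotes $\LSSherman$ (the semi-streaming algorithm of \cite{assadi2022semi}), which computes a $(1-\eps)$-approximate matching. The key structural observation is that each iteration of that continuous method is dominated by the two matrix-vector products $x \mapsto Ax$ and $z \mapsto A^{\top}z$ with the bipartite adjacency operator $A$, together with coordinate-wise nonlinear steps and one global renormalization: in \congest{} each matrix-vector product is a single round in which every vertex sums the values received from its incident edges, the coordinate-wise steps are free, and the renormalization (a sum or max over all of $V$) costs $O(D)$ rounds. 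Hence an approximate phase using $\tilde{O}(1/\eps)$ iterations is implementable in $\tilde{O}(D/\eps)$ \congest{} rounds, and with $\eps = \Theta(1/\sqrt{n})$ this is the $n^{1/2}D$ term and leaves only $O(\eps n) = O(\sqrt{n})$ augmenting paths to be found.

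To make the matching exact I would invoke the Hopcroft--Karp dichotomy: after a $(1-\eps)$-approximate matching the shortest augmenting path has length $\Omega(1/\eps)$, so few augmenting paths remain, and running additional blocking-flow phases whose layered residual graphs have depth $h$ drives the remaining count below $n/h$. A blocking-flow phase is implemented in \congest{} by (i) constructing the layered BFS graph, costing $O(h + D)$ rounds, and (ii) extracting a maximal set of vertex-disjoint shortest augmenting paths, which is the expensive step and which I would handle with a DFS-like procedure scheduled over low-diameter clusters so that many searches run simultaneously. Switching from the continuous phase to blocking flows at one threshold, and from blocking flows to bounded-depth single-path searches at another, and optimizing over $\eps$ and the two thresholds, is what produces the other two terms $n^{3/4}D^{1/2}$ and $n^{7/10+o(1)}D^{7/10}$ (the $n^{o(1)}$ factor coming from a recursive/hierarchical handling of the longest residual augmenting paths). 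The regime $D = O(n^{3/7-\eps})$ in which the round bound is sublinear is simply the range in which the dominant term $n^{7/10+o(1)}D^{7/10}$ is $o(n)$.

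The main obstacle is the distributed blocking-flow step: finding a maximal set of vertex-disjoint shortest augmenting paths is exactly the subroutine that is hard to parallelize (it is already the bottleneck for \pram{}), and in \congest{} it is aggravated because the layered residual graph can have diameter far larger than $D$, so a naive BFS inside it is too slow. My intended fix mirrors the treatment of the isolating-cuts lemma earlier in the paper: keep the searches shallow using that in each phase the relevant augmenting paths are short, route all the simultaneous searches through a sparse-neighborhood-cover-style clustering so that congestion is $\tilde{O}(1)$ per edge, and charge the cost across phases using that each edge participates in few searches. A secondary obstacle is that the continuous method returns a \emph{fractional} matching, which must be rounded to an integral one in \congest{} without extra rounds; I would use independent randomized rounding to get within $O(\log n)$ of optimal and then clean up the $O(\log n)$ residual augmenting paths. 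I expect the approximate phase and the final numerical balancing to be routine once the pass-to-round translation above is set up, so that these two obstacles are where the real work lies.
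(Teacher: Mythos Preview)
Your approximate phase is exactly what the paper does: implement the $\LSSherman$ iterations of \cite{assadi2022semi} in \congest{}, observing that each iteration is one local matrix--vector product plus one $O(D)$-round global aggregation, for a total of $\tilde O(D/\eps)$ rounds. That part is fine.

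The cleanup phase, however, is both overcomplicated and has a genuine gap. The paper does \emph{not} use blocking flows, Hopcroft--Karp layers, or any clustering of the residual graph. After rounding the fractional matching to an integral one (using the deterministic rounding of \cite{AhmadiKO18}, not randomized rounding), at most $\eps n$ augmenting paths remain, and the paper simply finds them \emph{one at a time} by invoking the black-box directed single-source reachability algorithm of \cite{CaoF23}, which runs in $\tilde O(n^{1/2}+D+n^{2/5+o(1)}D^{2/5})$ rounds on the oriented residual graph. The total cost is therefore
\[
\tilde O\!\left(\frac{D}{\eps}+\eps n\cdot\bigl(n^{1/2}+D+n^{2/5+o(1)}D^{2/5}\bigr)\right),
\]
and the three terms in the theorem arise directly from balancing $D/\eps$ against each of the three terms in the \cite{CaoF23} bound with a single choice of $\eps$. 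Your account of where the three terms come from (``switching from the continuous phase to blocking flows at one threshold, and from blocking flows to bounded-depth single-path searches at another'') is not what happens; there is only one threshold and one cleanup mechanism.

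Your blocking-flow plan is a real gap, not just a different route. Finding a maximal set of vertex-disjoint shortest augmenting paths in \congest{} is a hard problem in its own right; the layered residual graph can have diameter much larger than $D$, and ``a DFS-like procedure scheduled over low-diameter clusters'' with ``sparse-neighborhood-cover-style clustering'' is not a known technique for this task and you have not explained how it would achieve the required congestion and dilation bounds. The analogy to the isolating-cuts lemma is misleading: that lemma decomposes the graph into vertex-disjoint pieces \emph{before} running max-flow, whereas here the augmenting paths can thread arbitrarily through the whole graph. The paper sidesteps all of this by delegating the hard part to \cite{CaoF23}.
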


\begin{theorem}\label{thm:pram-bmm-old}
    In \pram{} model, bipartite maximum matching can be solved in $\tilde{O}(mn^{0.5+\epsilon})$ work and $n^{1+o(1)-\epsilon}$ depth for every $\epsilon\in [0,0.25]$, with high probability.

\end{theorem}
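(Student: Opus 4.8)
We prove \Cref{thm:pram-bmm-old} by realizing, in the \pram{} model, the pass-efficient semi-streaming algorithm for bipartite maximum matching of \cite{assadi2022semi}, instantiated with an augmenting-path-length parameter $\ell$ to be fixed later. The plan is to observe that each pass over the edge stream of that algorithm translates into one \emph{parallel round} of cost $\tilde O(m)$ work and depth $\tilde O(\ell)$. Concretely, the algorithm of \cite{assadi2022semi} maintains a matching together with $O(n)$ auxiliary state (prices and layer labels) and performs two kinds of passes: \emph{auction passes}, in which every unmatched left vertex bids on its currently best neighbour and every right vertex retains the highest bid it receives; and \emph{Hopcroft--Karp passes}, which advance the layered alternating structure of the residual graph by one BFS layer and, once the layered graph of shortest augmenting paths of length at most $\ell$ is built, extract and apply a maximal set of vertex-disjoint such paths. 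With the length cap set to $\ell$, the algorithm finishes in $n^{3/4+o(1)}/\sqrt{\ell}$ passes; setting $\ell := n^{1/2-2\epsilon}$, which is at least $1$ precisely when $\epsilon \le 1/4$, yields $n^{1/2+\epsilon+o(1)}$ passes in total.

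The first step I would carry out is the per-pass simulation lemma. An auction pass parallelizes immediately: all bids depend only on the state at the beginning of the pass, so they are computed in $\tilde O(m)$ total work and $\tilde O(1)$ depth, and each right vertex selects the winning bid by a parallel maximum over its incidence list (ties broken by identifier). A Hopcroft--Karp pass that builds the layered residual graph up to depth $\ell$ is a layered breadth-first search from the free vertices, costing $\tilde O(m)$ work and $O(\ell)$ depth; extracting a maximal set of vertex-disjoint shortest augmenting paths from this unit-vertex-capacity layered DAG is a blocking-flow computation, which I would implement in $\tilde O(m)$ work and $\tilde O(\ell)$ depth via the parallel layered-DAG blocking-flow primitive (repeated parallel advance/retreat over the $\ell$ levels), after which the disjoint augmentations are applied in parallel. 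Correctness then follows verbatim from \cite{assadi2022semi}, since the \pram{} execution produces exactly the same sequence of matchings.

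The second step is the accounting. Over the whole run there are $n^{1/2+\epsilon+o(1)}$ passes, each of work $\tilde O(m)$ and depth $\tilde O(\ell) = \tilde O(n^{1/2-2\epsilon})$, and the $O(\log n)$ auction scales together contribute only $n^{o(1)}$ passes (each of polylog depth). Hence the work is $n^{1/2+\epsilon+o(1)}\cdot\tilde O(m) = \tilde O(mn^{1/2+\epsilon})$ and the depth is $n^{1/2+\epsilon+o(1)}\cdot n^{1/2-2\epsilon} = n^{1+o(1)-\epsilon}$, for every $\epsilon\in[0,1/4]$. The main obstacle in filling in this plan is the parallel extraction of a blocking set of vertex-disjoint augmenting paths in $\tilde O(m)$ work and $\tilde O(\ell)$ depth: the textbook sequential blocking-flow routine is an inherently serial depth-first traversal, so one must instead invoke (and verify the work-efficiency of) a parallel layered-DAG blocking-flow algorithm on the unit-vertex-capacity instances that arise here. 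A secondary point to check is that the $O(n)$ auxiliary state of \cite{assadi2022semi} can be maintained so that each pass is simulated in $\tilde O(m)$ work without revisiting earlier passes; this is automatic in \pram{} but should be spelled out. Everything else --- the auction rounds, the layered BFS, and the pass-count bound --- is either a routine parallel reduction or quoted directly from \cite{assadi2022semi}.
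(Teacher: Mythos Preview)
Your proposal rests on a mischaracterization of \cite{assadi2022semi}. The algorithm you describe---auction rounds plus Hopcroft--Karp phases with a tunable augmenting-path length cap $\ell$, finishing the \emph{exact} problem in $n^{3/4+o(1)}/\sqrt{\ell}$ passes---is not what that paper does, and no such pass bound appears there. The algorithm the present paper actually imports from \cite{assadi2022semi} is the area-convex box--simplex solver for the $\ell_1$-regression formulation of matching (their Algorithm~3, restated here as \Cref{alg:lowspacesherman}): it returns a $(1-\epsilon')$-approximate \emph{fractional} matching in $\tilde O(1/\epsilon')$ iterations, each consisting of a constant number of matrix--vector products with the incidence matrix and one global normalization, hence $\tilde O(m)$ work and $\tilde O(1)$ depth per iteration in \pram. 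There is no augmenting-path-length parameter anywhere in that algorithm, and your claimed $n^{3/4+o(1)}/\sqrt{\ell}$ pass count is unsubstantiated.

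The paper's actual proof is: (i) run the box--simplex solver to obtain a $(1-\epsilon')$-approximate fractional matching in $\tilde O(1/\epsilon')$ depth and $\tilde O(m/\epsilon')$ work (\Cref{lem:pram-bmm-stream}); (ii) round it losslessly to an integral matching in polylog depth via \cite{GoldbergPST92} (\Cref{thm:pram-bmm-rounding}); (iii) find each of the at most $\epsilon' n$ remaining augmenting paths by one call to the parallel reachability algorithm of \cite{LiuJS19}, which has $\tilde O(m)$ work and $n^{1/2+o(1)}$ depth. Setting $\epsilon'=n^{-1/2-\epsilon}$ balances (i) against (iii) and yields the stated bounds. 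Your Hopcroft--Karp route cannot reproduce the trade-off: reaching a $(1-\epsilon')$-approximation by layered BFS and blocking flows requires $\Theta(1/\epsilon')$ phases, and in the later phases the shortest augmenting paths have length $\Theta(1/\epsilon')$, so the total depth for the approximation step alone is $\Omega(1/{\epsilon'}^{2})$ rather than $\tilde O(1/\epsilon')$. The paper explicitly flags the single-$1/\epsilon'$ dependence of the box--simplex method as the crucial ingredient; this is exactly what your plan is missing.
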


The main observation is that a semi-streaming $(1-\epsilon)$-approximation algorithm of \cite{assadi2022semi} almost straightforwardly can be implemented in the \congest{} and \pram{} models in roughly $\frac{1}{\epsilon}$ rounds/depth. Then, after computing a $(1-\epsilon)$ maximum bipartite matching, we combine with known reachability algorithms \cite{CaoF23,LiuJS19} in \congest{} and \pram{} to find the remaining $\epsilon n$ many augmenting paths one at a time. Hence, this section does not contain much technical novelty, but rather show how known algorithms can almost straightforwardly be generalized to other models and combined in order to show not yet document results for \congest{} (\cref{thm:congest-bmm}) and \pram{} (\cref{thm:pram-bmm}).

\paragraph{Related Work: Bipartite Matching in \congest{} and \pram{}.}
While bipartite matching has been extensively studied in many models, the exact version seems to resist sublinear (i.e.~ sublinear when $D = n^{1-\delta}$ for constant $\delta > 0$) algorithms in the \congest{} model. The work of \cite{AhmadiKO18} observed that a simple extension of the Hopcroft-Karp algorithm can solve bipartite matching in $O(M^*\log M^*)$ round in \congest{}, where $M^*$ is the size of the maximum matching (i.e., in general their algorithm runs in $O(n\log n)$ rounds).
The work of \cite{ForsterGLPSY21} takes a different approach based on Laplacian solves,
and show how unit-capacitated maximum flow (and hence the special case of bipartite matching) can be solved in $O(m^{3/7+o(1)}(n^{1/2}D^{1/4}+D))$ rounds in \congest{}.
This, to our knowledge, was the only known sublinear algorithm in \congest{}, although only in sparse graphs $m = \tilde{O}(n)$ and when $D = n^{2/7-\delta}$ for some constant $\delta > 0$.
 We note that the running time of our \cref{thm:congest-bmm} is an improvement for any regime of $m$ and $D$.
 We also note that \cite{Vos23} generalized the min-cost max-flow approach of \cite{LeeS14} in \cref{thm:pram-bmm} to also work in the \congest{} model, handling each of the $n^{1/2}$ rounds of solving a laplacian system in $O(\sqrt{n}+D)$ rounds in \congest{} (thus their total number of rounds is not sublinear in $n$).

There has also been a lot of work on bipartite maximum matching algorithms in the \pram{} model, or the more general min-cost max-flow problems (see, e.g., \cite{LeeS14, brand2025parallelminimumcostflow}).
It is known that the problem is in randomized NC
(i.e., polylog depth and polynomial work) due to a reduction to computing matrix inverse \cite{Lovasz79,KarpUW86,MulmuleyVV87,galil1988improved}.
Another sequence of work focuses on sublinear deterministic algorithms \cite{GoldbergPST92, GoldbergPV93}.

There are also many \pram{}, \congest{}, and streaming algorithms for $(1-\epsilon)$-approximating maximum bipartite matching, see e.g., \cite{cohen1995approximate,AssadiLT21,assadi2022semi,eggert2012bipartite,AhmadiKO18,lotker2015improved}, however most of these have depth/round/pass complexity at least $\frac{1}{\epsilon^2}$. A crucial property of \cite{assadi2022semi} is that it's running time only depends on a single factor of $\frac{1}{\epsilon}$, which is crucial to allow the exact algorithm to have sublinear depth/round/pass when combined with fast reachability algorithms in the different models.

\paragraph{Semi-Streaming to Other Models.}
The semi-streaming algorithm of \cite{assadi2022semi} will, when implemented in \congest{} and \pram{}, imply the below lemmas for approximating bipartite matching.
Since, to our knowledge, this connection has not been observed before, we sketch the proof in \cref{sec:generalizing_ajjst}.

\begin{lemma}\label{lem:congest-bmm-stream}
    In \congest{} model, there is an algorithm that computes a $(1-\epsilon)$-approximate fractional matching to the maximum bipartite matching problem in $\tilde{O}(\frac{D}{\epsilon})$ round.
\end{lemma}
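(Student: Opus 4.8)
The plan is to simulate, in \congest{}, the $O(1/\eps)$-pass semi-streaming algorithm of \cite{assadi2022semi} for computing a $(1-\eps)$-approximate \emph{fractional} bipartite matching, incurring only an $\tilde O(D)$ overhead per pass. Recall that their algorithm solves the bilinear saddle-point formulation of the matching LP by an area-convexity mirror-prox method: it maintains a fractional matching $x\in\mathbb{R}^E_{\ge 0}$ (with $\poly(n)$-bounded denominators, hence $O(\log n)$ bits per value) together with one ``price'' variable $y_v$ per vertex, and runs $O(1/\eps)$ iterations, each performing a single first-order update. The key structural observation is that one such update is \emph{local up to $O(1)$ global scalars}: the gradient used in an iteration is of the form $A^\top y$ and $A x$ for an incidence-type matrix $A$, so the new value of $x_e$ for $e=(u,v)$ depends only on $x_e,y_u,y_v$, and the new value of $y_v$ depends only on the slack $\sum_{e\ni v}x_e$; the only quantities not computable from a vertex's incident edges are a constant number of global scalars per iteration (the fixed step size, a normalization, and the current primal/dual objective value used to certify progress).

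First I would set up the \congest{} simulation. Store $x_e$ at the endpoint of $e$ with smaller id, so every vertex knows $x_e$ for all incident edges; build a BFS tree of the communication network once, in $O(D)$ rounds. Then each mirror-prox iteration is executed as: (i) every vertex $v$ locally computes its slack $\sum_{e\ni v}x_e$ and updates $y_v$; (ii) each $v$ sends $y_v$ to all neighbors in $O(1)$ rounds; (iii) the endpoint owning $x_e$ updates it from $(x_e,y_u,y_v)$; (iv) the $O(1)$ global scalars for that iteration (sums/maxima of $O(\log n)$-bit node values) are aggregated and broadcast along the BFS tree in $\tilde O(D)$ rounds by standard pipelined convergecast/broadcast. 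Over the $O(1/\eps)$ iterations this totals $\tilde O(D/\eps)$ rounds, after which every edge knows its fractional value, which by the guarantee of \cite{assadi2022semi} forms a $(1-\eps)$-approximate fractional matching.

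The main obstacle is simply to check that the simulation is faithful, i.e.\ that every step of \cite{assadi2022semi} genuinely has the ``local update $+$ $O(1)$ global scalars'' shape claimed above, and in particular that processing all edges \emph{simultaneously} rather than as a stream leaves the output unchanged. This holds because each pass of their algorithm produces its update by \emph{accumulating} additive contributions over the stream (matrix--vector products against the incidence structure), an order-independent operation, so the parallel update computed in \congest{} is exactly the one the streaming algorithm would compute; and the $\poly(n)$ numerical precision they already maintain keeps every transmitted scalar within $O(\log n)$ bits. It is worth noting that the fact that the pass complexity of \cite{assadi2022semi} carries only a single $1/\eps$ factor (rather than $1/\eps^2$) is precisely what makes the resulting $\tilde O(D/\eps)$ bound useful: it is what allows the exact algorithm of \cref{thm:congest-bmm}, obtained by finding the remaining $O(\eps n)$ augmenting paths one at a time via fast reachability, to be sublinear when the diameter is moderately small.
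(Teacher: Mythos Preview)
Your proposal is correct and takes essentially the same approach as the paper: both simulate the $\tilde O(1/\eps)$-iteration box-simplex/mirror-prox algorithm of \cite{assadi2022semi}, observing that each iteration consists of incidence-matrix--vector products that are local plus a constant number of global aggregations (the $\ell_1$-normalization in the paper, your ``$O(1)$ global scalars'') that cost $\tilde O(D)$ rounds each. The only cosmetic discrepancy is that the algorithm of \cite{assadi2022semi} actually maintains its primary state as vertex-indexed vectors $v,u,y\in\mathbb{R}^n$ (with the edge vector $x$ arising implicitly via a normalized exponential), rather than storing $x_e$ directly at an endpoint as you describe; this does not affect your argument, since the per-iteration computation is still local-plus-global-scalar either way.
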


\begin{lemma}\label{lem:pram-bmm-stream}
    In \pram{} model, there is an algorithm that computes a $(1-\epsilon)$-approximate fractional matching for the maximum bipartite matching problem in $\tilde{O}(\frac{1}{\epsilon})$ depth and $\tilde{O}(\frac{m}{\epsilon})$ work.
\end{lemma}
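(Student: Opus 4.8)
The plan is to port the semi-streaming $(1-\epsilon)$-approximate matching algorithm of \cite{assadi2022semi} to the \pram{} model essentially verbatim, replacing each streaming pass with one parallel round of edge aggregation. Recall that \cite{assadi2022semi} formulates bipartite matching as a bilinear saddle-point problem over the fractional matching polytope and its dual and solves it with a first-order (mirror-prox / extragradient type) method whose iteration count is $\tilde{O}(1/\epsilon)$ — crucially linear, not quadratic, in $1/\epsilon$. Each iteration performs a constant number of multiplications of the bipartite edge--vertex incidence structure against the current primal/dual iterates (these are exactly the ``one pass over the edge stream'' operations, each using $\tilde{O}(n)$ space in their setting), followed by proximal updates at the vertices and edges.

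First I would record the \pram{} cost of the two primitive operations used per iteration. (i) Computing all vertex aggregates $\sum_{e \ni v} x_e$ (and the symmetric edge quantities) from an edge vector $x \in \mathbb{R}^m_{\ge 0}$ is a sparse matrix--vector product, carried out by sorting/segmenting the $m$ edges by endpoint and applying parallel prefix sums, in $\tilde{O}(m)$ work and $O(\log n)$ depth. (ii) The proximal / mirror step: for the entropy-type regularizer over the vertex capacity simplices used in \cite{assadi2022semi}, the update is separable across edge coordinates up to a per-vertex normalization, each normalization being a one-dimensional monotone equation solvable to $\poly(n)$ accuracy by $O(\log n)$ steps of bisection; all such updates run in parallel in $\tilde{O}(m)$ work and $\tilde{O}(1)$ depth. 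Composing $\tilde{O}(1/\epsilon)$ iterations then gives $\tilde{O}(1/\epsilon)$ depth and $\tilde{O}(m/\epsilon)$ work. The method outputs a $(1-\epsilon)$-approximately optimal primal point as the average of the iterates $x^{(1)},\dots,x^{(T)}$; maintaining this average incrementally is $\tilde{O}(m)$ work and $O(1)$ depth per iteration, and the averaged point is directly a feasible fractional matching of value at least $(1-\epsilon)\,\mathrm{OPT}$, so no rounding is needed. Any randomness that \cite{assadi2022semi} uses to compress iterates is only there to keep the \emph{space} at $\tilde{O}(n)$ and may simply be dropped in \pram{}, where we store all $\Theta(m)$ coordinates explicitly — this is why the work bound is $\tilde{O}(m/\epsilon)$ rather than $\tilde{O}(n/\epsilon)$.

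The main obstacle I anticipate is bookkeeping rather than a new idea: one must go through the iteration of \cite{assadi2022semi} step by step and verify that \emph{every} sub-step is either a linear aggregation over edges/vertices or a coordinate-separable map, so that nothing forces a $\poly(n)$-depth bottleneck. The one genuinely delicate point is the proximal step associated with an area-convexity-style regularizer (in the spirit of Sherman's $\ell_\infty$ solvers), which is not literally coordinate-separable; here I would argue that it decomposes into $\tilde{O}(1)$-dimensional convex subproblems, one per vertex, each solved in parallel to sufficient precision, so the depth stays $\tilde{O}(1)$ and the work $\tilde{O}(m)$ per iteration. With that verification in hand the claimed $\tilde{O}(1/\epsilon)$ depth and $\tilde{O}(m/\epsilon)$ work follow immediately, and the analogous \congest{} statement (\cref{lem:congest-bmm-stream}) follows by the same argument, replacing each parallel edge aggregation by an aggregation along a BFS tree of the communication network in $\tilde{O}(D)$ rounds.
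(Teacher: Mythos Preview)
Your approach is essentially the paper's: port the box-simplex / area-convexity solver of \cite{assadi2022semi} to \pram{} by observing that each of its $\tilde O(1/\epsilon)$ iterations consists of a constant number of incidence-matrix products plus entrywise operations, each doable in $\tilde O(m)$ work and $\tilde O(1)$ depth. Two corrections are worth making.

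First, your claim that the averaged primal iterate ``is directly a feasible fractional matching'' is wrong. In the formulation of \cite{assadi2022semi} the output $\tilde x = 2M x_E$ is an approximate minimizer of an $\ell_1$-regression problem and can overflow vertex capacities; the paper explicitly runs a post-processing step (\textsf{RemoveOverflow}) that scales down edges incident to overloaded vertices and argues $\|\tilde x\|_1 - \mathrm{Overflow}(\tilde x) \ge (1-\epsilon)M^*$. This step is trivial in \pram{} ($O(m)$ work, $O(\log n)$ depth), so your complexity bounds survive, but the proof as written has a gap here.

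Second, the ``delicate point'' you anticipate is simpler than you describe. There are no per-vertex convex subproblems and no bisection: in the actual iteration the $y$-update is an entrywise median $\mathrm{med}(-1,1,\cdot)$, and the $x$-iterate is represented as $\exp(\ma v + |\ma| u + \lambda c)$ divided by its $\ell_1$-norm. The only non-separable operation per iteration is computing a single global $\ell_1$-norm, which is a parallel sum. This is exactly the step that costs $\tilde O(D)$ in \congest{} and is trivial in \pram{}; your worry about the area-convexity regularizer decomposing ``per vertex'' is misplaced.
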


\begin{remark}
For the \congest{} algorithm, the $\frac{1}{\epsilon}$-dependency in \cref{lem:congest-bmm-stream} is tight for low-diameter graphs where $D = O(\log(n))$, as shown by a corresponding lower-bound in \cite{AhmadiKO18}. Since there are several algorithms with $\tilde{O}(\frac{1}{\epsilon^2})$ complexity without dependency on $D$, e.g., \cite{AhmadiKO18,AssadiLT21,lotker2015improved}, a natural question that remains open is if there is a $\tilde{O}(\frac{1}{\epsilon})$-round \congest{} algorithm for $(1-\epsilon)$-approximate bipartite matching.
\end{remark}

Before proving these lemmas, we show how they imply \cref{thm:congest-bmm} and \cref{thm:pram-bmm}, by combining with known techiniques. Our plan is to first round the $(1-\epsilon)$-approximate fractional matching to integral, and then find the remaining augmenting paths one-by-one using state of the art reachability algorithms in the two models.

\paragraph{The \congest{} Model.}
In the \congest{} model, \cite{AhmadiKO18} showed how to almost losslessly round a fractional matching to an integral one. In particular, they showed the following theorem.
\begin{theorem}[\cite{AhmadiKO18}]
\label{thm:congest-bmm-rounding}
Let \( G = (V, E) \) be a bipartite graph with maximum degree $\Delta$, \( y\in \mathbb{R}^{E} \) be a fractional matching of \( G \), and \( \epsilon > 0 \) be a parameter. There is a deterministic \( O\left(\frac{\log^2(\Delta / \epsilon) + \log^* n}{\epsilon}\right) \)-time \congest{} algorithm that computes a matching \( M \) of \( G \) such that $|M| \ge (1-\epsilon)||y||_{1}$.
\end{theorem}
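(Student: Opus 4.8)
The statement is the standard distributed fractional-to-integral matching rounding of \cite{AhmadiKO18}, and the plan is to follow the hierarchical \emph{promote-the-light-edges} strategy. The first step is to reduce $y$ to a \emph{level-structured} fractional matching: bucket every edge by weight into a class $C_j=\{e:y_e\in(2^{-(j+1)},2^{-j}]\}$, round each weight up to $2^{-j}$ within its class and rescale by $1/2$ to restore validity, and then, at each vertex, truncate the lightest incident edges whose total is at most an $\epsilon/3$ fraction of that vertex's fractional degree; one then checks that the surviving positive weights all lie in $\{2^{-j}:0\le j\le L\}$ for $L=O(\log(\Delta/\epsilon))$, that the result $y'$ is a valid fractional matching, and that $\|y'\|_1\ge(1-\epsilon/2)\|y\|_1$. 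The delicate point is that this truncation must be charged \emph{per vertex against its own fractional degree} rather than against an absolute weight threshold (an absolute threshold would delete a union of disjoint $\epsilon/2$-weight edges entirely), and the loss incurred on both endpoints of a deleted edge has to be accounted for simultaneously; this is the first place where I expect to spend real effort.

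Given $y'$, the plan is to round it up one level at a time, from the finest level $j=L$ to the coarsest $j=0$. At step $j$ let $H_j$ be the current subgraph of weight-exactly-$2^{-j}$ edges; since every vertex has fractional degree at most $1$, its degree in $H_j$ is at most $2^{j+1}$. Using bipartiteness of $G=(A\cup B,E)$, the $A$-endpoints pair up their incident $H_j$-edges, and each kept edge of a pair has its weight doubled (to $2^{-(j-1)}$) while its partner is deleted; this leaves every $A$-vertex's fractional degree unchanged. To avoid overloading the $B$-side, I would not pair greedily but instead run a deterministic \congest{} \emph{degree-splitting} routine on $H_j$ so that the set of surviving (doubled) edges has at most roughly half the degree at every $B$-vertex, which keeps $y'$ feasible; the edges that must be dropped to realize this are the per-level loss. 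Each degree split is a known deterministic subroutine costing polylogarithmic-in-$(\Delta/\epsilon)$ rounds plus an $O(\log^* n)$ additive term (ultimately Linial-type coloring), and since we must keep the total promotion loss below $\epsilon/2$ over the $L$ levels we give each level an $\epsilon/(2L)$ error budget, which is where the $1/\epsilon$ blow-up and the $\log^2(\Delta/\epsilon)$ term come from. After all $L$ levels every surviving edge has weight $1$, i.e.\ the surviving edges form an integral matching $M$.

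Finally I would assemble the guarantees: the discretization loses at most an $\epsilon/2$ fraction and the $L$ promotion steps lose at most $\epsilon/2$ more, so $|M|\ge(1-\epsilon)\|y\|_1$, and the round count telescopes to $O\!\left(\frac{\log^2(\Delta/\epsilon)+\log^* n}{\epsilon}\right)$ as claimed; everything is deterministic because every subroutine (bucketing, truncation, degree splitting) is. The main obstacles, beyond the global charging in the discretization, are (i) making the per-level promotion choice respect the matching constraint at both endpoints while destroying only an $O(\epsilon/L)$ fraction of the level's weight, and (ii) importing the deterministic degree-splitting/coloring primitive with only a $\log^*n$-type overhead — I would cite \cite{AhmadiKO18} for the latter and concentrate the writeup on the weight-loss bookkeeping across levels.
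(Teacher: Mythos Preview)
The paper does not prove this theorem at all: it is stated purely as a citation to \cite{AhmadiKO18} and used as a black box in the derivation of \cref{thm:congest-bmm}. There is therefore no ``paper's own proof'' to compare your proposal against.

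That said, your sketch is a reasonable reconstruction of the argument in \cite{AhmadiKO18}: discretize to $O(\log(\Delta/\epsilon))$ weight levels, then repeatedly promote the lightest level via a deterministic degree-splitting primitive whose cost carries the $\log^* n$ term. The one place where your description drifts from the actual construction is the asymmetry you impose between the $A$-side and $B$-side (pairing at $A$-vertices, then fixing overload at $B$-vertices). The degree-splitting primitive in \cite{AhmadiKO18} handles both sides symmetrically in one shot, and your two-phase variant would need a separate argument that the $B$-side repair does not undo the $A$-side balance; this is not obviously free. If you intend to actually write out a proof rather than cite the result, it is cleaner to invoke degree splitting directly on $H_j$ and charge the per-level loss once, exactly as in the source.
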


\Cref{thm:congest-bmm-rounding} together with
\cref{lem:congest-bmm-stream} implies that we can find,
in $\tilde{O}(\frac{D}{\epsilon})$ rounds, an integral matching $M$ that is at most $\epsilon n$ far away from maximum. We will find the remaining $\le \epsilon n$ augmenting paths one by one, by using the current fastest directed reachability algorithm in \congest{}.
Indeed, note that given an integral matching $M$ on a bipartite graph $G = (L\cup R, E)$, one can put directions on the edges as follows: $e\in E$ is oriented from $R$ to $L$ if $e\in M$, and otherwise oriented from $L$ to $R$. Now, increasing the size of the matching $M$ by one is equivalent to finding any directed path from an unmatched vertex in $L$ to an unmatched vertex in $R$---such a path is called an augmenting path.

We use the single source shortest path algorithm of \cite{CaoF23}, setting all edge-weights to $1$.

\begin{theorem}[Distributed Reachability~\cite{CaoF23}]
There is a distributed algorithm that, given an \( n \)-node directed graph with positive integer edge weights of undirected hop-diameter \( D \), solves the single-source-shortest-paths problem with \( \tilde{O}(n^{1/2} + D + n^{2/5 + o(1)} D^{2/5}) \) rounds of communication in the \congest{} model, with high probability.
\end{theorem}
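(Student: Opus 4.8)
\noindent The plan is to follow the \emph{skeleton graph} paradigm for distributed exact shortest paths (due to Forster--Nanongkai and its successors), refined so that the three additive terms $n^{1/2}$, $D$, and $n^{2/5+o(1)}D^{2/5}$ each arise from a distinct bottleneck. First, one reduces the weighted problem to an essentially ``hop-bounded'' one: by standard weight scaling (the Klein--Subramanian / Bernstein--Nanongkai--Wulff-Nilsen reduction, carried out in \congest{} exactly as in Forster--Nanongkai), solving exact SSSP with positive integer weights in $[1,\poly(n)]$ reduces to $\tilde O(1)$ instances in which all relevant $s$--$v$ shortest paths use $O(n)$ hops; for the unit-weight / reachability case this reduction is immediate. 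So the core task is to compute $\dist_G(s,\cdot)$ when source-to-vertex shortest paths have $O(n)$ hops.

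Next, fix a hop parameter $h$ and sample $S\subseteq V$ of size $\tilde\Theta(n/h)$, adding $s$. A Chernoff-plus-union-bound argument shows that w.h.p.\ every shortest $s$--$v$ path breaks into at most $\tilde O(n/h)$ consecutive segments of at most $h$ hops each, with all breakpoints in $S\cup\{s\}$. Let $H$ be the \emph{skeleton graph} on $S\cup\{s\}$ with edge $(u,w)$ of weight $\dist_G^{\le h}(u,w)$, the least weight of a $u$--$w$ walk of at most $h$ edges. Then $\dist_H(s,\cdot)$ agrees with $\dist_G(s,\cdot)$ on $S\cup\{s\}$, and one extra $h$-hop Bellman--Ford pass extends the answer to all of $V$. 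The atomic primitive throughout is $h$-hop Bellman--Ford from a (super-)source, costing $\tilde O(h+D)$ \congest{} rounds.

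The crux is producing $H$ and then solving SSSP on it. Solving on $H$ is comparatively cheap: $H$ has $\tilde O(n/h)$ vertices, so one can aggregate it through the global network and solve locally (or run Bellman--Ford on $H$); either way this contributes the $\tilde O(\sqrt n + D)$-type terms once $h$ is chosen. The expensive step is computing the $\tilde O((n/h)^2)$ hop-bounded distances forming $H$'s edge weights: serializing $|S|$ independent $h$-hop Bellman--Fords costs $\tilde O(|S|h)=\tilde O(n)$, and bandwidth prevents a free parallel speedup. The fix --- and the main technical content --- is a \emph{recursive} skeleton construction together with Leighton--Maggs--Rao random-delay scheduling: to obtain $h$-bounded distances among $S$ one introduces a coarser intermediate skeleton with a smaller hop bound, recurses, and schedules all the resulting bounded-hop Bellman--Ford instances with random start offsets, so that the whole batch completes in roughly $\tilde O(\text{congestion}+\text{dilation})$ rounds. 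A constant (or slowly growing) number of recursion levels --- the $n^{o(1)}$ factor absorbing the recursion and scheduling overhead --- followed by optimizing $h$ against the $\sqrt n$ ``global'' barrier and the $D$ ``diameter'' barrier, yields $\tilde O(n^{1/2}+D+n^{2/5+o(1)}D^{2/5})$. I expect the hard part to be precisely this recursion-plus-scheduling analysis: one must verify that recursively built skeleton edges remain \emph{exact}, that the hitting-set guarantee survives every level, and that the total congestion of all bounded-hop Bellman--Ford calls across all levels is within an $n^{o(1)}$ factor of the total dilation, so that random delays indeed give a near-optimal schedule.
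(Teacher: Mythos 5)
This theorem is cited verbatim from \cite{CaoF23} and invoked as a black box; the paper supplies no proof of it, so there is nothing inside the paper to compare your argument against. Your sketch does correctly identify the high-level skeleton-graph framework --- hop reduction via weight scaling, hitting-set sampling, hop-bounded skeleton distances, recursive skeleton construction, and Leighton--Maggs--Rao random-delay scheduling --- that underlies this family of sublinear distributed SSSP/reachability results, and this is consistent with what \cite{CaoF23} actually does. So you are reconstructing a cited reference rather than a result proved in this paper.

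That said, the sketch is not yet a proof, and you flag the gap yourself: the entire content of the improved exponent lies in the recursion-plus-scheduling analysis, which you leave unresolved. Concretely, it is not established (a) how the recursion depth and the per-level hop/sample parameters are set so that the aggregate congestion of all bounded-hop Bellman--Ford instances stays within an $n^{o(1)}$ factor of the dilation (this is what makes random delays give a near-optimal schedule), (b) that the recursively assembled skeleton edge weights remain \emph{exact} $h$-hop distances rather than accumulating approximation error through the levels, and (c) that the final parameter balance actually yields $n^{2/5+o(1)}D^{2/5}$ rather than the $\tilde{O}(\sqrt{n}D^{1/4}+D)$ of the Forster--Nanongkai baseline, which a single non-recursive level would give. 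Those three points are exactly where \cite{CaoF23} earns its bound; without them the exponents in the statement are asserted, not derived.
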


We need to run the reachability algorithm at most $\epsilon n$ times.
Hence, for solving exact bipartite matching, the total round complexity becomes:
\begin{align*}
\tilde{O}\left(
\frac{D}{\epsilon} + \epsilon n \cdot (n^{1/2} + D + n^{2/5 + o(1)} D^{2/5})
\right)
\end{align*}
By setting $\epsilon \coloneqq \frac{D^{1/2}}{n^{3/4} + n^{1/2}D^{1/2} + n^{7/10}D^{1/5}}$ to balance out the terms, the final time complexity is
\begin{align*}
\tilde{O}\left(
n^{3/4}D^{1/2} + n^{1/2}D + n^{7/10+o(1)}D^{7/10}
\right).
\end{align*}
This proves \cref{thm:congest-bmm}.

\paragraph{The \pram{} Model.}
Our strategy in the \pram{} model is similar. First we round the fractional matching to an integral one using the following \pram{} rounding lemma.
\begin{theorem}[\cite{GoldbergPST92}]
\label{thm:pram-bmm-rounding}
Let \( G = (V, E) \) be a bipartite graph, \( y\in \mathbb{R}^{E} \) be a fractional matching of \( G \). There is a deterministic \( O(\log^2(n)) \)-depth, $O(m\log^2 n)$-work \pram{} algorithm that computes a matching \( M \) of \( G \) such that $|M| \ge \lceil\, ||y||_{1} \,\rceil$.
\end{theorem}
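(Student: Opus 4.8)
The statement is the parallel-algorithmic form of the classical fact that the bipartite fractional matching polytope is integral (its constraint matrix is totally unimodular). The plan is the usual one: starting from $y$, repeatedly remove fractional edges by pushing an alternating $\pm\epsilon$ perturbation along cycles and paths of the fractional support, always choosing directions so that $\|y\|_1$ is nondecreasing; once no fractional edge remains, $y$ is a $\{0,1\}$-valued fractional matching, i.e.\ an integral matching $M$, and since $\|y\|_1$ never decreased, $|M| = \|y\|_1 \geq \lceil\,\|y_{\mathrm{orig}}\|_1\,\rceil$ by integrality of $|M|$. The only real work is performing the cancellations in large parallel batches, and here I would follow \cite{GoldbergPST92}.

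Concretely, let $H'$ be the subgraph of edges $e$ with $0 < y_e < 1$. A single parallel round: (i) compute an Euler partition of $H'$ into edge-disjoint open and closed trails — a standard parallel subroutine (parallel Euler tours / list ranking) of near-linear work and polylogarithmic depth; (ii) along each trail $2$-color the edges so consecutive ones differ, and set $d \in \mathbb{R}^E$ to $+1$ on one color class and $-1$ on the other, orienting each open trail so its $(+1)$-class is no smaller than its $(-1)$-class (closed trails have even length by bipartiteness, so either orientation is neutral); then $\|y + td\|_1$ is nondecreasing in $t \geq 0$. At a vertex that a trail enters and leaves, the two incident trail-edges change oppositely, so the matching constraint there is unaffected to first order; at a trail endpoint the only incident fractional edge is the terminal trail-edge, so feasibility there is governed solely by how close $\sum_{e \ni v} y_e$ already is to $1$. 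Push $y \leftarrow y + t^\ast d$ with $t^\ast$ the largest step keeping $y$ a fractional matching, and recurse.

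The delicate point — which I expect to be the main obstacle — is bounding the number of rounds by $O(\log n)$, i.e.\ guaranteeing that stepping to $t^\ast$ makes \emph{geometric} progress toward an integral vector rather than merely making some new vertex constraint tight without saturating any fractional edge. GPST obtain this via a scaling argument: first reduce to $y$ whose entries are multiples of $2^{-L}$ for $L = O(\log n)$, then show each round at least halves the largest denominator still present among fractional edges; the interaction with tight vertex constraints (resolved by an appropriate choice of per-trail signs together with the step size) is exactly the part one imports from their proof rather than re-deriving. Granting it, $O(\log n)$ rounds each of polylogarithmic depth and $O(m\log n)$ work — Euler partition, $2$-coloring, and computing $t^\ast$ all fit this budget — yield the claimed $O(\log^2 n)$ depth and $O(m\log^2 n)$ work, and since every primitive used is deterministic, so is the overall algorithm.
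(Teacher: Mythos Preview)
The paper does not prove this statement at all: \Cref{thm:pram-bmm-rounding} is simply cited from \cite{GoldbergPST92} and used as a black box (to round the output of \Cref{lem:pram-bmm-stream} before running reachability). There is nothing in the paper to compare your proposal against.

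On its own merits, your sketch is a reasonable reconstruction of the standard Euler-partition-plus-bit-scaling argument, and you correctly identify the $O(\log n)$ round bound as the nontrivial part that must be imported from the reference. One minor wording issue: a trail-endpoint vertex need not have the terminal edge as its \emph{only} incident fractional edge (other trails can pass through it internally); what is true --- and what you actually need --- is that the \emph{net} perturbation at such a vertex comes solely from the terminal edge, since contributions from trails passing through cancel in pairs. A more substantive gap you gloss over is the tension between orienting open trails so that $\|y\|_1$ is nondecreasing and maintaining the degree constraints at the endpoints: for an odd-length path, the orientation that increases $\|y\|_1$ puts $+1$ on \emph{both} terminal edges, so the step size $t^\ast$ may be forced to zero by a tight endpoint before any edge saturates. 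This is precisely why the naive ``push to $t^\ast$ and recurse'' loop does not obviously terminate in $O(\log n)$ rounds, and why the scaling argument (round to multiples of $2^{-L}$, then in each phase cancel the least-significant bit) is the actual mechanism in \cite{GoldbergPST92}; you acknowledge this but could be clearer that the per-trail sign choice is subordinate to the bit-scaling structure rather than a separate degree of freedom.
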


\Cref{lem:pram-bmm-stream,thm:pram-bmm-rounding} together thus implies a
$\tilde{O}(\frac{1}{\epsilon})$ depth and $\tilde{O}(\frac{m}{\epsilon})$ work find an integral matching $M$ that is at most $\epsilon n$ far away from maximum. Again, we will find the remaining $\le \epsilon n$ augmenting paths one by one, by using a known reachability algorithm.

\begin{theorem}[Parallel Reachability \cite{LiuJS19}]
There is a parallel algorithm that, given an \( n \)-node, \( m \)-edge directed graph, solves the single-source reachability problem with work \( \tilde{O}(m) \) and depth \( n^{1/2 + o(1)} \) with high probability in \( n \).
\end{theorem}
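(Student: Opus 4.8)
The final statement is the parallel reachability theorem of Liu--Jambulapati--Sidford, an external result invoked here as a black box; a ``proof'' therefore amounts to recalling why single-source reachability in a directed graph admits an $\tilde O(m)$-work, $n^{1/2+o(1)}$-depth parallel algorithm. The plan is to reduce the problem to the small-diameter case via a \emph{shortcut set}. First I would note that if a directed graph $G'$ has the property that every ordered pair $(u,v)$ with $u$ reaching $v$ is joined by a walk of at most $h$ edges, then \emph{layered} breadth-first search --- repeatedly, and in parallel, expanding the current frontier along all out-edges --- computes the set of vertices reachable from the source $s$ in $O(h)$ rounds, each round using $\tilde O(|E(G')|)$ work and $O(\log n)$ depth, for a total of $\tilde O(|E(G')|)$ work and $\tilde O(h)$ depth. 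Hence it suffices to augment the input graph $G$ with a set $H$ of \emph{shortcut edges}, each $(u,v)\in H$ satisfying that $v$ is reachable from $u$ in $G$ (so $G\cup H$ has exactly the same reachability relation as $G$), such that $|H|=\tilde O(n)$, the hop-diameter of $G\cup H$ is $n^{1/2+o(1)}$, and $H$ itself can be constructed in $\tilde O(m)$ work and $n^{1/2+o(1)}$ depth; then layered BFS on $G\cup H$ finishes within the claimed bounds.

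The heart of the argument is thus the construction of $H$, for which I would follow the recursive ``long-chain hitting'' approach of Fineman as sharpened in \cite{LiuJS19}. Track the \emph{chain-height} $\ell$ of the current graph --- the number of vertices on its longest directed simple path. If $\ell=n^{1/2+o(1)}$ we are already done by layered BFS; otherwise perform one \emph{shortcutting round}: by a random sampling process extract a set $D$ of $n^{1/2+o(1)}$ ``hub'' vertices that with constant probability meets every directed chain of length $\geq \ell$, and for each hub $d\in D$ add shortcuts from the backward-reachable set of $d$ to $d$ and from $d$ to its forward-reachable set, collapsing all chains through $d$ to length $O(1)$. Since materializing full reachable sets is too expensive, this is done \emph{recursively}, restricting attention to $d$'s local component and balancing the sub-instance sizes so the recursion contributes only an $n^{o(1)}$ factor to the depth, an $\tilde O(1)$ factor to the work, and keeps $|H|=\tilde O(n)$; the ``$\sqrt n$'' arises from splitting a length-$\ell$ chain into $\sqrt\ell$ blocks at each level of recursion. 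Since one round drives the chain-height down by a polynomial factor, $n^{o(1)}$ rounds suffice, after which layered BFS on $G$ together with all accumulated shortcuts yields reachability from $s$ in $\tilde O(m)$ work and $n^{1/2+o(1)}$ depth.

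The hard part will be the shortcut-set construction under the tight depth budget: the naive approach --- sample $\tilde\Theta(\sqrt n)$ vertices and grow $\sqrt n$-hop BFS balls from each of them --- already costs $\Theta(m\sqrt n)$ work, far above the $\tilde O(m)$ target, so one genuinely needs the size-balanced recursion to amortize the cost of discovering reachable sets across the hub vertices without ever materializing large balls. Making this recursion simultaneously respect the $\tilde O(m)$ work bound, the $n^{1/2+o(1)}$ depth bound, and the $\tilde O(n)$ bound on $|H|$ is exactly the technical core of \cite{LiuJS19}, and in this paper I would simply cite that construction rather than re-derive it, since all that is needed downstream is the stated work/depth guarantee --- to find the remaining $\le \epsilon n$ augmenting paths one at a time and thereby complete the exact parallel bipartite matching algorithm.
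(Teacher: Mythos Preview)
Your proposal correctly identifies that this theorem is an external result cited from \cite{LiuJS19} and used as a black box; the paper provides no proof whatsoever, only the citation (together with the remark that path recovery, not just the decision version, is possible as observed in \cite{assadi2022semi}). Your high-level sketch of the shortcut-set/hopset approach is a reasonable summary of the ideas in \cite{LiuJS19}, and since the paper itself offers nothing beyond the citation, there is nothing further to compare.
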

Although in \cite{LiuJS19} they only solve the decision version and not how to recover the path. However this has been observed (see comment below \cite[Proposition~4]{assadi2022semi}) that the algoritm can actually recover the path too.

We need to run the reachability algorithm at most $\epsilon n$ times.
The total complexity becomes:
\begin{align*}
&&
\tilde{O}\left(
\frac{1}{\epsilon} + \epsilon n \cdot n^{1/2+o(1)})
\right)
\text{ depth,}
&&
\text{and}
&&
\tilde{O}\left(
\frac{m}{\epsilon} + \epsilon n \cdot m
\right)
\text{ work}.
&&
\intertext{
By setting $\epsilon \coloneqq n^{-1/2-\delta}$ for any $\delta\in [0,0.25]$, we get:
}
&&
\tilde{O}\left(
n^{1+o(1)-\delta}
\right)
\text{ depth,}
&&
\text{and}
&&
\tilde{O}\left(
m\cdot n^{1/2 + \delta}
\right)
\text{ work}.
&&
\end{align*}
This proves \cref{thm:congest-bmm}, since $n \le 2m$ after removing possibly isolated vertices.

\subsection{Generalizing a Semi-Streaming Algorithm}
\label{sec:generalizing_ajjst}
We now show how to generalize the semi-streaming algorithm of \cite{assadi2022semi} to work in the \congest{} and \pram{} settings in order to prove \cref{lem:congest-bmm-stream,lem:pram-bmm-stream}.
We note that the algorithm of \cite{assadi2022semi} is based on \cite{sherman2017area},
but where \cite{assadi2022semi} shows that this algorithm can be implemented in low space. We note that for our purposes in \congest{} and \pram{}, the algorithm of \cite{sherman2017area} would directly work as these models do not care about the space complexity. However, we choose to work with \cite{assadi2022semi} as it it more straightforward to see how it can be implemeted in our models.
This section will skip over most of the proofs and intuition, as they can be found in \cite{assadi2022semi}, and instead focus on simply arguing why the exact algorithm used by \cite{assadi2022semi} also works in \congest{} and \pram{}.

The algorithm of \cite{assadi2022semi} works by formulating the maximum bipartite matching problem as a $\ell_1$-regression problem and then solving it using a $\frac{1}{\epsilon}$-round optimization method for box-simplex games.

Let $\Delta^{m} \coloneqq \{x\in \R^{m} : \norm{x}_1 = 1\}$ be the $m$-dimensional simplex.
In particular, consider the following box-simplex optimization problem, for some $n\times m$ matrix $\ma$,
and vectors $c\in \mathbb{\R}^{m}$ and $b\in \mathbb{\R}^{n}$.
\begin{equation}\label{eq:boxsimplex}
	\min_{x \in \Delta^m} \max_{y \in [-1, 1]^n} y^\top\ma^\top x + c^\top x -b^\top y,
\end{equation}
which is equivalent to the following $\ell_1$-regression problem:
\begin{equation}\label{eq:boxsimplexprimal}
\min_{x \in \Delta^m} c^\top x + \norm{\ma^\top x - b}_1.
\end{equation}

We restate \cite[Algorithm~3]{assadi2022semi} here as \cref{alg:lowspacesherman}, which, according to \cite[Proposition~1]{assadi2022semi} outputs $\bar{x}$ that is a $\varepsilon$-approximate minimizer to the box-simplex game in \eqref{eq:boxsimplex}.

\begin{algorithm}[ht!]
	\DontPrintSemicolon
	\textbf{Input:} $\ma \in \R^{m \times n}_{\ge 0}$, $c \in \R^m$, $b \in \R^n$, $0 \le \varepsilon \le \norm{\ma}_\infty$\;
	\textbf{Output:} $\{v'_t\}_{0 \le t < T} \subset \R^n$, $\{u'_t\}_{0 \le t < T} \subset \R^n$, $\{\lam'_t\}_{0 \le t < T} \subset \R$, $\{y'_t\}_{0 \le t < T} \subset \R^n$ so that for
		\[\by \coloneqq \frac{1}{T}\sum_{0 \le t < T} y'_t,\; \bx \coloneqq \frac{1}{T}\sum_{0 \le t < T} \frac{\exp(\ma v'_t + |\ma|u'_t +  \lam'_t c)}{\norm{\exp(\ma v'_t + |\ma|u'_t + \lam'_t c)}_1},\]
		the pair $(\bx, \by)$ is an $\varepsilon$-approximate saddle point to \eqref{eq:boxsimplex}
  \;
	$T \gets O(\tfrac{\norm{\ma}_\infty \log m}{\varepsilon})$, $K \gets O(\log \tfrac{n\norm{\ma}_\infty}{\varepsilon})$\;
	$t \gets 0$, $\lam_0 \gets 0$, $v_0 \gets \mzero_n$, $u_0 \gets \mzero_n$, $y_0 \gets \mzero_n$\;
	\While{$t < T$}{
		$(v^{(0)}, u^{(0)}, \lam^{(0)}, y^{(0)}) \gets (v_t, u_t, \lam_t, y_t)$\;
		$\gamma\y \gets \tfrac 1 3 (b - \ma^\top \tfrac{\exp(\ma v_t + |\ma| u_t + \lam_t c)}{\norm{\exp(\ma v_t +  |\ma| u_t + \lam_t c)}_1}) - 2\diag{y_t} |\ma|^\top \tfrac{\exp(\ma v_t + |\ma| u_t + \lam_t c)}{\norm{\exp(\ma v_t + |\ma| u_t + \lam_t c)}_1}$\;
		\For{$0 \le k < K$}{
			$v^{(k + 1)} \gets -\tfrac{1}{10\norm{\ma}_\infty} (\tfrac 1 3 y_t  - 10\norm{\ma}_\infty v^{(0)})$\;
			$u^{(k+1)} \gets -\tfrac{1}{10\norm{\ma}_\infty} (- (y^{(0)})^2 + (y^{(k)})^2-10\norm{\ma}_\infty u^{(0)})$\;
			$\lam^{(k + 1)} \gets -\tfrac{1}{10\norm{\ma}_\infty} (\tfrac 1 3 - 10\norm{\ma}_\infty \lam^{(0)})$\;
			$d^{(k + 1)} \gets 2|\ma|^\top \tfrac{\exp(\ma v^{(k + 1)}+ |\ma| u^{(k + 1)} + \lam^{(k + 1)} c)}{\norm{\exp(\ma v^{(k + 1)}+ |\ma| u^{(k + 1)} +\lam^{(k + 1)} c)}_1}$\;
			$y^{(k + 1)} \gets \textup{med}(-1, 1, -\tfrac{\gamma\y}{d^{(k + 1)}})$ entrywise\; }
		$(v'_t,u'_t, \lam'_t, y'_t) \gets (v^{(K)}, u^{(K)}, \lam^{(K)}, y^{(K)})$\;
		$(v^{(0)}, u^{(0)}, \lam^{(0)}, y^{(0)}) \gets (v_t, u_t, \lam_t, y_t)$\;
		$\gamma\y \gets \tfrac 1 3 (b - \ma^\top \tfrac{\exp(\ma v'_t + |\ma| u'_t + \lam'_t c)}{\norm{\exp(\ma v'_t + |\ma| u'_t + \lam'_t c)}_1}) - 2\diag{y_t} |\ma|^\top \tfrac{\exp(\ma v_t + |\ma| u_t + \lam_t c)}{\norm{\exp(\ma v_t + |\ma| u_t + \lam_t c)}_1}$\;
		\For{$0 \le k < K$}{
			$v^{(k + 1)} \gets -\tfrac{1}{10\norm{\ma}_\infty} (\tfrac 1 3 y'_t - 10\norm{\ma}_\infty v^{(0)})$\;
			$u^{(k+1)} \gets -\tfrac{1}{10\norm{\ma}_\infty} (- (y^{(0)})^2 + (y^{(k)})^2-10\norm{\ma}_\infty u^{(0)})$\;
			$\lam^{(k + 1)} \gets -\tfrac{1}{10\norm{\ma}_\infty} (\tfrac 1 3 - 10\norm{\ma}_\infty \lam^{(0)})$\;
			$d^{(k + 1)} \gets 2|\ma|^\top \tfrac{\exp(\ma v^{(k + 1)} + |\ma| u^{(k + 1)}+ \lam^{(k + 1)} c)}{\norm{\exp(\ma v^{(k + 1)} + |\ma| u^{(k + 1)}+ \lam^{(k + 1)} c)}_1}$\;
			$y^{(k + 1)} \gets \textup{med}(-1, 1, -\tfrac{\gamma\y}{d^{(k + 1)}})$ entrywise \;
		}
		$(v_{t + 1}, u_{t+1}, \lam_{t + 1}, y_{t + 1}) \gets (v^{(K)}, u^{(K)}, \lam^{(K)}, y^{(K)})$\;
		$t \gets t + 1$\;
	}
	\caption{$\LSSherman(\ma, b, c, \varepsilon)\quad$ \cite{assadi2022semi}}\label{alg:lowspacesherman}
\end{algorithm}

The plan is as follows:
\begin{itemize}
\item Show how to set up bipartite matching problem as a box-simplex game in \eqref{eq:boxsimplex}, by choosing appropriate $\ma, b, c$ as in \cite[Section~4]{assadi2022semi}.
\item Show how \cref{alg:lowspacesherman} (and a simple post-processing step \cref{alg:removeOverflow}) can be implemented in both \congest{} and \pram{} efficiently, for the parameters given by the bipartite matching problem.
\end{itemize}

\paragraph{Setting up Bipartite Matching as a Box-Simplex Game.}
Let $G = (V,E)$ be a bipartite graph we wish to solve bipartite matching in. Let $M^{*}$ be the size of the maximum matching.
We follow \cite[Section~4]{assadi2022semi} to set it up as a box-simplex game, see their article for more details and discussion.

Let $M\in \mathbb{Z}$ be a $2$-approximation of the size of the maximum matching, i.e., it satisfies $M\le M^{*}\le 2M$.
There are multiple ways to obain this in our models: for example we can guess $M = 2^{i}$ for every $i = 0\ldots \lfloor\log n\rfloor$, and run the algorithm for all guesses and output the best achieved matching. Alternatively,
we can find a maximal matching (hence a $2$-approximation) in $\tilde{O}(1)$ rounds in \congest{},
\footnote{In the \congest{} model, we would need an additional $O(D)$ rounds to calculate the size of this maximal matching and broadcast it to every vertex, however this is dominated by the running time $\tilde{O}(\frac{D}{\epsilon})$ we are aiming to prove.}
or in $\tilde{O}(1)$ depth and $O(m)$ work in \pram{}, by using Luby's folklore (randomized) maximal independent set algorithm (applied to the line graph of $G$).

Let $\mb$ be the $m\times n$ adjacency matrix of $G$.
Let $\tmb$ be the $(m+1)\times (n+2)$ matrix obtained by adding with one extra row (a fake edge), and two extra columns (two fake vertices), all filled with zeroes, to $\mb$. Let $\dmcm \in \{0,1\}^{n+2}$ be $1$ everywhere except $0$ in the indices corresponding to the two fake vertices.

Now consider the following $\ell_1$-regression problem:
\begin{equation}\label{eq:matchl1reg}
\begin{gathered}
\min_{x \in \Delta^{m+1}} \norm{\ma^\top x - b}_1, \text{ where }  \ma \coloneqq M\tmb,\; b \coloneqq\half \dmcm.
\end{gathered}
\end{equation}

Use \cref{alg:lowspacesherman} to solve it up to $\Theta(\epsilon M)$ additive accuracy to obtain a solution $x$. Set
$\tilde{x} \coloneqq 2Mx_E$, where $x_E$ is $x$ but we dropped the coordinate of the fake edge.
Now $\tilde{x}$ will almost be an approximate maximum fractional matching, however it is not necessarily feasible.
See \cite[Section~4.1]{assadi2022semi} for an argument why $\norm{\tilde{x}}_1 - \Overflow(\tilde{x}) \ge (1-\epsilon) M^{*}$, where $\Overflow(x) \coloneqq \sum_{v\in V} \max\left(0, \left(\sum_{u : (u,v)\in E} x_{uv}\right)-1\right)
= \norm{\left(\mb^\top x - d\right)_+}_1$ is the sum of overflow at each vertex.

So, like in \cite{assadi2022semi}, we need a simple post-processing step where we remove the overflow of $\tilde{x}$ to make it into a feasible matching without removing too much.
We restate a simplified version of \cite[Algorithm~4]{assadi2022semi} below as \cref{alg:removeOverflow}.

\begin{algorithm}[!ht]
\DontPrintSemicolon
\textbf{Input:}
		{Bipartite graph $G = (V, E)$ with incidence matrix $\mb$, $x \in \R^E_{\ge 0}$}\;
  \textbf{Output:}
		{$\tx \in \R^E_{\geq 0}$ with $\tx \leq x$ entrywise, $\mb^\top \tx \leq \boldsymbol{1}$, and $\norm{\tx}_1 \geq \norm{x}_1 - \norm{\left(\mb^\top x - d\right)_+}_1$} (see \cite[Lemma~5]{assadi2022semi})\;
		$d^x \gets \mb^\top x$\;
		$f \gets (d^x - \boldsymbol{1})_{+}$\;
		$\tx_e \gets x_e\left(1 - \max\left(\tfrac{f_a}{d_a^x} , \tfrac{f_b}{d_b^x}\right)\right)$ for all $e = (a, b) \in E$ with $x_e > 0$\;
		\Return{$\tx$}
    \caption{$\mathsf{RemoveOverflow}(x, G)$ \cite{assadi2022semi}}\label{alg:removeOverflow}
\end{algorithm}

In particular, we get a fractional matching $x^{\star}\coloneqq \mathsf{RemoveOverflow}(\tx,G)$ of size $\norm{x^{\star}} > (1-\epsilon)M$.

\paragraph{Implementations in \congest{} and \pram{}.}
What remains is to argue that \cref{alg:lowspacesherman,alg:removeOverflow} can be implemented efficiently in \congest{} and \pram{}.

\Cref{alg:removeOverflow} can easily be implemented in $O(1)$ rounds in \congest{}, as all it does is compute the overflow at each vertex, and then decrease the value of the fractional matching on an edge $(u,v)$ based on the overflow at its endpoints. Similarly, it can be computed in $\tilde{O}(1)$ depth and $O(m)$ work in \pram{}.

\cref{alg:lowspacesherman} is a bit more complicated. It runs in $T = O(\frac{\norm{A}_{\infty} \log m}{\varepsilon})$ phases, which is $\tilde{O}(\frac{1}{\epsilon})$ as
$\ma\coloneq \tmb$ so $\norm{\tmb}_{\infty} = O(M)$ by construction, and we solve it up to error $\varepsilon \coloneq \epsilon M$ (note the different epsilons).
Similarly, note that $K = O(\log \frac{n\norm{A}_{\infty}}{\varepsilon}) = O(\log n)$. Hence, each line in the algorithm is run at most $\tilde{O}(\frac{1}{
\varepsilon})$ times.

We will show that each statement inside the loops in \cref{alg:lowspacesherman} can be implemented in $\tilde{O}(D)$ rounds in \congest{}, or $\tilde{O}(1)$ depth and $\tilde{O}(m)$ work in \pram{}.

The algorithm keeps tracks of many variables on supported on vertices.
It also computes intermediary values on the edges as expressions like:
		\[\frac{\exp(\ma v'_t + |\ma|u'_t +  \lam'_t c)}{\norm{\exp(\ma v'_t + |\ma|u'_t + \lam'_t c)}_1}\]
Computing matrix products $\ma v$ or $\ma^\top v$ for some vector $v$ can be done efficiently\footnote{One potential worry is numerical stability, as it might require many bits to represent the numbers. However, the algorithm is numerically stable, so values much smaller than the largest values can safely be rounded; see the comment in \cite{assadi2022semi} about numerical stability.} in parallel and locally, as $\ma$ is essentially the incidence matrix (the additional fake vertices and edges have corresponding zero-rows/columns, and hence can also be handled very efficiently). Similarly computing and multiplying by a diagonal matrix $\diag(y_t)$ is efficient to do locally and in parallel.
All the entrywise computations (like $\exp(v)$, or computing the entrywise median) is also efficiently locally/parallel computable.

The normilization, i.e., computing $\norm{v}_1$ (as in
$\norm{\exp(\ma v'_t + |\ma|u'_t + \lam'_t c)}_1$) is a bit more tricky for the \congest{} algorithm. It essentially boils down to computing the sum of values stored at each vertex, and requires $\tilde{O}(D)$ rounds to aggregate and sum up the value, and then broadcast it to all vertices.
In \pram{}, this normalization step is still easy to do in $\tilde{O}(m)$ work and $\tilde{O}(1)$ depth.
\begin{remark}
We note that this is the main bottleneck in the implementation of the \congest{} algorithm, and why it depends on $D$. An interesting open question is if there is an alternative $(1-\epsilon)$-approximation algorithm in \congest{} that only needs $\tilde{O}(\frac{1}{\epsilon})$ rounds, i.e., without dependency on $D$.
\end{remark}

We conclude that, even though \cref{alg:lowspacesherman} might look initially intimidating, and it's analysis is non-trivial (see \cite{assadi2022semi}), it is straightforward to see how it can be implemented efficiently in \congest{} and \pram{} models when $\ma$ is (almost) the incidence matrix of a graph. In particular, it can be implemented in $\tilde{O}(D/\epsilon)$ rounds in \congest{}, or $\tilde{O}(\frac{1}{\epsilon})$ depth and $\tilde{O}(\frac{m}{\epsilon})$ work in \pram{} to compute (together with \cref{alg:removeOverflow}) an $(1-\epsilon)$-approximate fractional matching. This proves \cref{lem:congest-bmm-stream,lem:pram-bmm-stream}.

\subsection{Solving Bipartite Matching on a Subgraph in CONGEST}\label{subsec:subgraph}

In our reductions, we need to solve bipartite matching on a bipartite subgraph $H$ of the original (not-necessarily bipartite) graph $G$. In most models of computation, we can just focus on solving bipartite matching on $H$ directly, however, in the \congest{} model the diameter of $H$ might be much larger than the diameter of $G$, and we are in fact allowed to use edges of $G\setminus H$ for the purpose of communication. While most algorithms can easily be adapted to solve bipartite matching on a subgraph, we can in fact show a perfect reduction irregardless of the underlying algorithm.

\begin{lemma}
Suppose there is a $R(n,m,D)$ round \congest{} algorithm for solving maximum bipartite matching on $n$-vertex, $m$-edge, $D$-diameter bipartite graphs $G_{\text{bip}}$.
Then there is a $R(O(n),O(m),O(D))$ round \congest{} algorithm that on $n$-vertex, $m$-edge, $D$-diameter (not-necessarily bipartite) graph $G$, solves maximum bipartite matching on a given bipartite subgraph $H$ (of potentially much larger diameter).
\end{lemma}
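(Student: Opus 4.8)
The plan is to build, on top of the original graph $G$, a low-diameter \emph{virtual bipartite graph} $G_{\text{bip}}$ that is cut-/matching-equivalent to the subgraph $H$, simulate the assumed black-box algorithm on $G_{\text{bip}}$ using $G$ for communication, and then translate the answer back. First I would take $H=(L\cup R, E_H)$, the bipartite subgraph on which we must solve matching, and note that every vertex knows its incident $H$-edges. The only obstruction to running the black box ``directly on $H$'' is that $H$ may be disconnected or of huge diameter, so message passing along $H$-edges could be slow. The fix is to add a single auxiliary \emph{hub} vertex $h$ adjacent to every vertex of $V(H)$, plus a little bit of padding to keep the graph bipartite: place $h$ on a fresh side and, if necessary, a second hub $h'$ so that the two hubs together keep a clean bipartition (this is the same two-fake-vertices trick already used in \Cref{sec:BMM}). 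Call the resulting graph $G_{\text{bip}}$; it has $O(n)$ vertices, $O(m)$ edges, and diameter $O(1)$, since any two vertices of $H$ are within distance $2$ through a hub. Crucially, the hubs are only adjacent to the ``padding'' side in such a way that they cannot participate in any maximum matching beyond a fixed additive amount, so the maximum matching of $G_{\text{bip}}$ equals that of $H$ up to a known constant, and a maximum matching of $G_{\text{bip}}$ restricted to $E_H$ is a maximum matching of $H$.

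Next I would explain the simulation. Each real vertex $v\in V$ simulates its copy in $G_{\text{bip}}$, and one designated real vertex (say the minimum-ID vertex of $G$, found in $O(D)$ rounds) simulates the hub(s). A single round of the algorithm on $G_{\text{bip}}$ requires: (i) messages along edges of $E_H$, which are edges of $G$ and hence cost $O(1)$ rounds; and (ii) messages between each $v\in V(H)$ and a hub, which we route along a BFS tree of $G$ rooted at the hub-simulator in $O(D)$ rounds per round of the simulated algorithm, using standard pipelining so that all $|V(H)|$ hub-messages of a given round are delivered in $O(D)$ rounds total (they aggregate/broadcast along the tree). Thus one round of the black box costs $O(D)$ rounds in $G$, so $R(n,m,D)$ rounds of the black box on $G_{\text{bip}}$ — which itself has diameter $O(1)$, $O(n)$ vertices and $O(m)$ edges — translate to $R(O(n),O(m),O(1))\cdot O(D)$ rounds on $G$, and since $R$ is assumed reasonably smooth this is $R(O(n),O(m),O(D))$ as claimed. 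Finally every vertex reads off its matched $H$-edge (if any) from the simulated output, discarding anything touching a hub or padding vertex.

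The main obstacle is the bookkeeping around the hub: we must guarantee simultaneously that (a) adding the hub does not increase the maximum matching of the bipartite instance by more than a controlled, recoverable amount, (b) the bipartition survives the addition of the hub, and (c) the hub-routing over the BFS tree of $G$ is genuinely congestion-free, i.e.\ all messages to/from the hub in one simulated round can be pipelined in $O(D)$ rounds rather than $O(D\cdot|V(H)|)$. Point (c) is the standard convergecast/broadcast argument (each tree edge carries at most one message per round after sorting by depth), and points (a)–(b) are handled exactly by the two-fake-vertex padding already invoked in \Cref{def:reduction bipartite} and the discussion after it, so in the end the reduction is lossless. One subtlety worth double-checking is that the black box is only guaranteed to return the matching, not its size; but since each vertex learns its own matched edge, and we only need the matching itself for our downstream reductions (or can compute the size with an extra $O(D)$-round aggregation), this causes no difficulty.
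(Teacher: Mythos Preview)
Your approach has a genuine gap in the hub simulation. You claim that all $|V(H)|$ messages between the hub and the real vertices in one simulated round can be delivered in $O(D)$ rounds over a BFS tree by ``standard pipelining'' or aggregation. This is false for a black-box algorithm: in a general round of the matching algorithm, the hub may send a \emph{different} $O(\log n)$-bit message to each of its $\Theta(n)$ neighbors and receive a different message back from each. The tree edge incident to the hub-simulator then has to carry $\Theta(n)$ distinct messages, which takes $\Theta(n)$ rounds in \congest{}, not $O(D)$. Aggregation/convergecast only works for associative operators (sum, max, etc.), not for transporting $n$ distinct payloads to a single node. Hence your simulation costs $R(O(n),O(m),O(1))\cdot \Theta(n)$ rounds, not $R(O(n),O(m),O(1))\cdot O(D)$. (Separately, the inequality $R(O(n),O(m),O(1))\cdot O(D) \le R(O(n),O(m),O(D))$ is not implied by the smoothness assumption stated in the paper; e.g.\ if $R$ has no dependence on $D$ it fails outright.)

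The paper avoids the hub entirely. It takes five copies of each vertex and builds the bipartite double cover of $G$ on two of the copies, attaches $H$ via the $v^{(H)}$ copies (connected to the appropriate side of the double cover according to the bipartition of $H$), and hangs a private pendant on each double-cover copy. The point is that \emph{every} edge of the virtual graph $G'$ is either an edge of $G$ (with endpoints' copies simulated by their owners) or an intra-vertex edge between two copies of the same $v$. Thus one simulated round costs $O(1)$ real rounds with no tree routing at all. The pendants force all double-cover vertices to be matched, so the maximum matching of $G'$ equals that of $H$ plus exactly $2n$, and the diameter of $G'$ is $O(D)$ because the double cover of a diameter-$D$ graph has diameter $O(D)$ (using that any odd cycle in $G$ has length at most $2D+1$). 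This is the missing idea: reduce the diameter by embedding $H$ into a bipartite ``copy'' of $G$ whose edges you already have, rather than by adding a high-degree hub you cannot simulate.
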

\begin{proof}[Proof Sketch.]
For each vertex $v\in V$, make $5$ copies of it: $v^{(H)}, v^{(a)}, v^{(a')}, v^{(b)}, v^{(b')}$. Let $(A,B)$ be the vertex bi-partition of $V$ induced by the bipartite subgraph $H$. Add the following edges:
\begin{align*}
(u^{(H)},v^{(H)}) &\text{ for all } (u,v)\in E(H)
\\
(u^{(a)},v^{(b)})
\text{ and }
(u^{(b)},v^{(a)})
&\text{ for all } (u,v)\in E(G)
\\
(v^{(a)},v^{(a')})
\text{ and }
(v^{(b)},v^{(b')})
&\text{ for all } v\in V
\\
(v^{(H)},v^{(a)})
&\text{ for all } v\in A
\\
(v^{(H)},v^{(b)})
&\text{ for all } v\in B
\end{align*}
Let this graph be called $G'$.
That is
$G'[V^{(H)}]$ is $H$,
and  $G'[V^{(a)}\cup V^{(b)}]$ is the binary lift of $G$. It is easy to verify that the graph is bipartite, by construction. Moreover, the diameter of any connected component\footnote{If $G$ is connected, $G'$ will have two connected components if and only if $G$ is bipartite, and else it will have one connected component.} of $G'$ is at most $O(D)$. This is since, if $G$ is not bipartite, let $g$ be its shortest odd-cycle and note that $|g| \le 2D+1$, which means that any vertex $v$ on this cycle will have $v^{(a)}$ and $v^{(b)}$ at most distance $2D+1$ from each other in $G'$.

Any maximum matching $M_H$ of $H$ can be extended to a matching $M_{G'}$ of $G'$, of size $|M_{G'}| = |M_H|+2n$, by adding all the $(v^{(a)},v^{(a')})$ and
$(v^{(b)},v^{(b')})$ edges.
For the other direction, suppose that $M_{G'}$ is a maximum matching of $G'$.
We know that all vertices
$v^{(a)}$ and $v^{(b)}$ must be matched (else  an edge $(v^{(a)},v^{(a')})$ can be added to the matching). We can without loss of generality change the matching edge of  $v^{(a)}$ to be matched to $v^{(a')}$ (and similar for
$(v^{(b)},v^{(b')})$).
Then
let $M_{H}
= M_{G'} \setminus (\cup_{v\in V} \{(v^{(a)},v^{(a')}),
(v^{(b)},v^{(b')})\})$, and note that $M_H$ is a matching on $H$ of size $|M_{G'}|-2n$.

Thus, solving the matching problem on $G'$ and on $H$ is essentially equivalent.
The \congest{} algorithm running on $G$ will simulate the bipartite matching \congest{} algorithm running on each connected component of $G'$,
where each node $v$ simulates the work of all its copies.
\end{proof}

\section{Graph Sparsification} \label{sec:sparsfy in congest}

We fix source $x \in C$ throughout this section.

 For all node $u \in C$, we denote $\mu_C(u) :=  \min_{L' \subseteq C : u \in L'} |N_G(L')|.$

\begin{definition} [$G_x$] \label{def:G_x}
Given a source $x \in C$ where $C \subseteq V$ in a graph $G = (V,E)$, we define the \emph{local graph} $G_x$ as the induced graph $G[N_G[C]]$ after adding a sink $t$ and the set of edges $(v,t)$ for all $v \in N_G(C)$.
\end{definition}

\begin{lemma} \label{lem:correct G_x}
     If $(L,S,R)$ is a minimum $(x,t)$-separator in $G_x$ where $x \in L, t \in R$, then $L$ is a minimizer of $\mu_C(x) = \min_{L' \subseteq C':x \in L'} |N_G(L')|$ and vice versa. In particular,  $\kappa_{G_x}(x,t) = \mu_C(x)$.
\end{lemma}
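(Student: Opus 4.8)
The plan is to prove the two inequalities $\kappa_{G_x}(x,t) \le \mu_C(x)$ and $\kappa_{G_x}(x,t) \ge \mu_C(x)$ by passing feasible objects back and forth between the two minimization problems; the claimed correspondence of minimizers will drop out of the same argument. Throughout I will use that $C$ and $N_G(C)$ are disjoint, that $V(G_x) = C \cup N_G(C) \cup \{t\}$, and that in $G_x$ the new vertex $t$ is adjacent exactly to $N_G(C)$.

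First I would show $\kappa_{G_x}(x,t) \le \mu_C(x)$. Take any $L' \subseteq C$ with $x \in L'$. Since $L' \subseteq C$ we have $N_G(L') \subseteq N_G[C]$, and no vertex of $L'$ is a neighbor of $t$ in $G_x$ (as $t$ is adjacent only to $N_G(C)$, which is disjoint from $C$); hence $N_{G_x}(L') = N_G(L')$ and $t \notin N_{G_x}(L')$. Consequently $(L',\, N_G(L'),\, V(G_x) - L' - N_G(L'))$ is a vertex cut of $G_x$ with $x$ on one side and $t$ on the other (the third part is nonempty and contains $t$), i.e.\ an $(x,t)$-separator of size $|N_G(L')|$. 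Minimizing over all such $L'$ gives $\kappa_{G_x}(x,t) \le \mu_C(x)$, and it already shows that for a minimizer $L'$ of $\mu_C(x)$ the triple $(L', N_G(L'), V(G_x)-L'-N_G(L'))$ is a minimum $(x,t)$-separator.

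Next I would prove the reverse inequality. Let $(L,S,R)$ be any minimum $(x,t)$-separator in $G_x$ with $x \in L$, $t \in R$. The key step — and, I expect, the only point requiring care — is to argue $L \subseteq C$: because $(L,S,R)$ is a vertex cut there are no edges between $L$ and $R$, and $t \in R$, so no vertex of $L$ is adjacent to $t$; equivalently $L \cap N_G(C) = \emptyset$. Together with $t \notin L$ and $V(G_x) = C \cup N_G(C) \cup \{t\}$ this forces $L \subseteq C$. Now, as in the first paragraph, $N_{G_x}(L) = N_G(L)$; moreover every vertex of $N_{G_x}(L)$ lies outside $L$ and, having a neighbor in $L$, cannot lie in $R$, so $N_{G_x}(L) \subseteq S$ and $|N_G(L)| \le |S|$. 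On the other hand $(L, N_{G_x}(L), V(G_x) - L - N_{G_x}(L))$ is itself a valid $(x,t)$ vertex cut, so minimality of $(L,S,R)$ gives $|S| \le |N_G(L)|$. Hence $|S| = |N_G(L)|$ and, since $x \in L \subseteq C$, $\kappa_{G_x}(x,t) = |S| = |N_G(L)| \ge \mu_C(x)$.

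Combining the two bounds yields $\kappa_{G_x}(x,t) = \mu_C(x)$. The second paragraph shows that the $L$-side of any minimum $(x,t)$-separator satisfies $x \in L \subseteq C$ and $|N_G(L)| = \mu_C(x)$, hence is a minimizer of $\mu_C(x)$; the first paragraph shows the converse. The whole argument is elementary graph manipulation — the substantive observation is the ``no leakage'' fact $L \subseteq C$, which is exactly where the construction of $G_x$ (attaching $t$ precisely along $N_G(C)$) is used.
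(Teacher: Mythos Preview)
Your proposal is correct and follows essentially the same approach as the paper: both argue the two inequalities by constructing an $(x,t)$-separator from any $L'\subseteq C$ with $x\in L'$, and conversely by showing that for a minimum $(x,t)$-separator $(L,S,R)$ one has $L\subseteq C$ (using that $t$ is adjacent to all of $N_G(C)$) and hence $|S|=|N_G(L)|$. Your write-up is in fact a bit more explicit than the paper's about why $S=N_{G_x}(L)$ at the minimum, but the substance is identical.
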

\begin{proof}
    First, we prove that $\kappa_{G_x}(x,t) \geq  \mu_C(x).$ Let $(L,S,R)$ be an $(x,t)$-min-separator in $G_x$ where $x \in L, t \in R$. We claim that $L \subseteq C$. Suppose $L$ contains a vertex $v \in N(C)$. This means $N_{G_x}(L)$ contains $t$, and thus $t \not \in R$, a contradiction. Observe that $N(L) = S$ because since $(L,S,R)$ is a minimum $(x,t)$-separator. Since $x \in L \subseteq C$ and $N(L) = S$, we have \[ \kappa_{G_x}(x,t) = |S| = |N_{G_x}(L)| =  |N_{G}(L)|  \geq  \min_{L' \subseteq C':x \in L'} |N_G(L')| = \mu_C(x).\]

    Next, we prove $ \mu_C(x) \geq \kappa_{G_x}(x,t).$  Let $L^*$ be a minimizer of $\min_{L' \subseteq C':x \in L'} |N_G(L')|$. Observe that  $N_G(L^*) \subseteq N_G[C]$. We now prove that $N_G(L^*)$ is an $(x,t)$-separator in $G_x$. Since $N_G(L^*)$ is a separator in $G$ and $N_G[L^*] \subseteq N_G[C]$, it is enough to prove that the set of edges from $t$ to every vertex in $N_G(C)$ will not destroy this cut. Indeed, since $L^* \subseteq C$, the new edges will not incident to $L^*$. Therefore, $N_G(L^*) = N_{G_x}(L^*)$ is an $(x,t)$-separator in $G_x$. So,
    \[  \mu_C(x) = |N_{G}(L^*)| = |N_{G_x}(L^*)| \geq \kappa_{G_x}(x,t). \]
    This concludes the proof.
\end{proof}

We cannot yet implement $G_x$ in a distributed network because connecting $s$ to every node in $N_G(C)$ is a global operation. Instead, we simulate $s$ by adding a perfect matching from each node in $N_G(C)$ to a new node (so we can remove $s$).

\begin{definition}[$G'_x$]
    Given $G_x$, define $G'_x$ as follows. First, remove $t$ from $G'_s$. Then, for each vertex $u \in N_G(C)$, add a new vertex $y$ along with an edge $(u,y)$. Denote $Y$ as the set of new vertices.
\end{definition}
\begin{lemma} \label{lem:cut-equivalent terminal graph}
 Every $(s,t)$-cut in $G_x$ is an $(s,Y)$-cut in $G'_x$ and vice ver sa. In particular, $\kappa_{G_x}(s,t) = \kappa_{G'_x}(s,Y)$.
\end{lemma}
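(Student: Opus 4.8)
The plan is to mimic the proof of the earlier \Cref{lem:cut-equivalent terminal graph} (the version stated for $H_s$ in the isolating-cuts section), adapting it to the present notation $G_x$, $G'_x$, and $Y$. The key structural observation is that in $G_x$ the sink $t$ is adjacent to exactly the set $N_G(C)$, and in $G'_x$ we have instead attached a pendant vertex $y_u$ to each $u \in N_G(C)$, collected into the set $Y$. So the only difference between the two graphs near the ``sink side'' is that the single vertex $t$ has been blown up into $|Y|$ leaves, each hanging off a distinct boundary vertex. Contracting all of $Y$ into a single vertex recovers $t$ together with its incident edges, which is the intuition driving both directions of the equivalence.

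First I would prove the forward direction. Let $S$ be an $(s,t)$-separator in $G_x$, with associated vertex partition $(L,S,R)$ where $s \in L$ and $t \in R$. Since every neighbor of $t$ in $G_x$ lies in $N_G(C) \subseteq S \cup R$ (because $t \in R$ and $(L,S,R)$ is a cut, so $L$ has no edge to $t$), no vertex of $N_G(C)$ lies in $L$. Therefore in $G'_x$ each pendant vertex $y_u \in Y$ is attached to a vertex $u \in N_G(C)$ that lies in $S \cup R$, hence $y_u \notin L$ as well (its only neighbor is $u$). Thus $(L, S, (R \setminus \{t\}) \cup Y)$ is a valid vertex cut in $G'_x$ separating $s$ from $Y$: there is still no edge from $L$ to $(R\setminus\{t\})\cup Y$, since any edge of $G'_x$ is either an edge of $G_x - t$ (already handled) or a pendant edge $(u,y_u)$ with $u \notin L$. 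Hence $S$ is an $(s,Y)$-separator in $G'_x$, giving $\kappa_{G'_x}(s,Y) \le \kappa_{G_x}(s,t)$.

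For the reverse direction, let $S$ be an $(s,Y)$-separator in $G'_x$. Obtain a graph from $G'_x$ by contracting all of $Y$ into a single vertex $t$; since each $y_u$ has the single neighbor $u \in N_G(C)$, this contraction produces precisely $G_x$ (the contracted vertex $t$ becomes adjacent to exactly $N_G(C)$, and no new adjacencies among $N_G(C)$ are created because the $y_u$ are pairwise non-adjacent). A vertex separator in a graph remains a separator under contraction of a connected set disjoint from it, so $S$ is an $(s,t)$-separator in $G_x$, giving $\kappa_{G_x}(s,t) \le \kappa_{G'_x}(s,Y)$. Combining the two inequalities yields $\kappa_{G_x}(s,t) = \kappa_{G'_x}(s,Y)$, and the bijection between cuts is exactly the correspondence $(L,S,R) \leftrightarrow (L, S, (R \setminus \{t\}) \cup Y)$ constructed above.

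The only mildly delicate point — and the step I would be most careful about — is checking that no $(s,Y)$-separator $S$ in $G'_x$ can place a boundary vertex $u \in N_G(C)$ on the $s$-side while still separating $s$ from all of $Y$; but this is immediate, since $u$ is adjacent to $y_u \in Y$, so if $u \in L$ then the edge $(u, y_u)$ would have to be cut, forcing $y_u \in S$, which contradicts $Y$ being entirely on the sink side (one may always assume $Y \cap S = \emptyset$ by the standard argument that leaves need never be included in a minimum separator — if $y_u \in S$, replace it by $u$, which cannot increase the size and still separates). With that normalization the argument above goes through cleanly, and everything is purely combinatorial with no estimates involved.
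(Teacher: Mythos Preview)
Your proposal is correct and follows essentially the same approach as the paper: the forward direction places $Y$ on the sink side because $N_{G_x}(t)\subseteq S\cup R$, and the reverse direction identifies $Y$ with $t$ to recover $G_x$. Your extra care in the last paragraph is in fact unnecessary, since by the paper's definition an $(s,Y)$-vertex cut $(L,S,R)$ already satisfies $Y\subseteq R$, so $Y\cap S=\emptyset$ and $N_G(C)\cap L=\emptyset$ are automatic; also note that $Y$ is not a connected set, so your phrase ``contraction of a connected set'' should be ``identification of $Y$ into a single vertex,'' though the conclusion is unaffected.
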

\begin{proof}
If $S$ is an $(s,t)$-separator in $G'_s$, then denote a vertex cut $(L,S,R)$ in $G'_s$ where $s \in L, t \in R$. Since $N_{G'_s}(t) \subseteq S \cup R$,   every vertex in $Y$ in $G''_s$ does not have an edge to $L$. Therefore, $(L, S, (R \cup Y) - \{t\})$ is a vertex cut in $G''_s$. If $S$ is an $(s,Y)$-separator in $G''_s$, then the same separator must be an $(s,t)$-separator in $G'_s$ by contracting all $Y$ into $t$.
\end{proof}
\begin{corollary} \label{cor:xY separator congest}
     If $(L,S,R)$ is a minimum $(x,Y)$-separator in $G'_x$ where $x \in L, Y \subseteq R$, then $L$ is a minimizer of $\min_{L' \subseteq C':x \in L'} |N_G(L')|$. In particular, $\kappa_{G'_x}(x,Y) = \min_{L' \subseteq C':x \in L'} |N_G(L')|$.
\end{corollary}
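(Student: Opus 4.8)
The plan is to derive \Cref{cor:xY separator congest} as a direct chaining of the two preceding results, namely \Cref{lem:correct G_x} and \Cref{lem:cut-equivalent terminal graph}. Both of these are already in hand, so the corollary is essentially a bookkeeping step, and the proof should be only a few lines long.

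First I would recall the setup: $G_x$ is the local graph from \Cref{def:G_x}, and $G'_x$ is obtained from $G_x$ by deleting the sink $t$ and attaching, to each $u \in N_G(C)$, a fresh pendant vertex; $Y$ denotes the set of these new pendant vertices. \Cref{lem:cut-equivalent terminal graph} tells us that every $(x,t)$-separator in $G_x$ is an $(x,Y)$-separator in $G'_x$ and conversely, and in particular $\kappa_{G_x}(x,t) = \kappa_{G'_x}(x,Y)$. \Cref{lem:correct G_x} tells us that a minimum $(x,t)$-separator $(L,S,R)$ of $G_x$ (with $x \in L$, $t \in R$) has $L$ a minimizer of $\mu_C(x) = \min_{L' \subseteq C : x \in L'} |N_G(L')|$, and conversely, with $\kappa_{G_x}(x,t) = \mu_C(x)$.

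So the argument would go: suppose $(L,S,R)$ is a minimum $(x,Y)$-separator in $G'_x$ with $x \in L$ and $Y \subseteq R$. By \Cref{lem:cut-equivalent terminal graph}, $S$ is also an $(x,t)$-separator in $G_x$, and moreover it is a \emph{minimum} one, since the cut values coincide: $|S| = \kappa_{G'_x}(x,Y) = \kappa_{G_x}(x,t)$. One should be slightly careful here to check that the corresponding vertex cut in $G_x$ indeed places $x$ on the appropriate side — but this is exactly what the proof of \Cref{lem:cut-equivalent terminal graph} produces (contracting all of $Y$ back into $t$ moves $R \setminus Y$ together with $t$ to the sink side while keeping $L$ on the source side). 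Then by \Cref{lem:correct G_x}, $L$ is a minimizer of $\mu_C(x) = \min_{L' \subseteq C : x \in L'} |N_G(L')|$. Finally, combining the two equalities, $\kappa_{G'_x}(x,Y) = \kappa_{G_x}(x,t) = \mu_C(x) = \min_{L' \subseteq C : x \in L'}|N_G(L')|$, which is the second assertion.

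There is no real obstacle here — the only thing to watch is the side-of-the-cut bookkeeping when translating between the $(x,t)$-separator language of $G_x$ and the $(x,Y)$-separator language of $G'_x$, i.e.\ making sure that the minimizer set $L \subseteq C$ survives the translation intact (it does, since $Y$ is a set of degree-one pendants whose only neighbors lie in $N_G(C)$, hence never in $L \subseteq C$). Since both lemmas are already proved and assert the equivalences in both directions, I would simply invoke them in sequence and state the two conclusions, without any further calculation.
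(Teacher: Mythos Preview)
Your proposal is correct and matches the paper's approach: the paper states this as a corollary with no explicit proof, treating it as an immediate consequence of \Cref{lem:correct G_x} and \Cref{lem:cut-equivalent terminal graph}, which is exactly the chaining you describe.
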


\begin{definition} [Sparsification] \label{def:tilde H x}
    Give $G_x$, define $\tilde H_x$ after the following reductions.
    \begin{enumerate} [noitemsep]
        \item Remove $Z := N_G(x) \cap N_G(t)$ from $G_x$.
        \item Remove every edge $(u,v) \in E_G(N_G(x),N_G(x)) \cup E_G(N_G(t),N_G(t))$.
    \end{enumerate}
\end{definition}

\begin{lemma}
    $\kappa_{G_x}(x,t) = \kappa_{\tilde H_x}(x,t) + |Z|$.  Furthermore, If $S$ is an $(s,t)$-min-separator in $H$, then $Z \cup S$ is an $(s,t)$-min-separator in $G$.
\end{lemma}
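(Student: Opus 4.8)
The statement concerns the sparsification $\tilde H_x$ of $G_x$ obtained by (1) deleting $Z = N_{G_x}(x)\cap N_{G_x}(t)$, and (2) deleting all edges inside $N_{G_x}(x)$ and inside $N_{G_x}(t)$. The plan is to decompose the claim into two independent reductions and verify each preserves $(x,t)$-mincut. For the first reduction, deleting $Z$, I would invoke \Cref{lem:remove z} essentially verbatim: every vertex in $N_{G_x}(x)\cap N_{G_x}(t)$ lies on some length-$2$ path from $x$ to $t$ and hence belongs to \emph{every} $(x,t)$-separator in $G_x$; therefore if $G_x'' := G_x - Z$, then $\kappa_{G_x}(x,t) = \kappa_{G_x''}(x,t) + |Z|$, and lifting any $(x,t)$-min-separator $S$ of $G_x''$ to $S\cup Z$ gives an $(x,t)$-min-separator of $G_x$. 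This is immediate from the cited lemma with $s:=x,t:=t$.

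\textbf{The edge-deletion step.} The substantive part is showing that deleting all edges inside $N_{G_x''}(x)$ and inside $N_{G_x''}(t)$ does not change $\kappa(x,t)$; call the resulting graph $\tilde H_x$. Since deleting edges can only destroy paths, $\kappa_{\tilde H_x}(x,t)\ge \kappa_{G_x''}(x,t)$ trivially, so I only need the reverse. I would argue via vertex-disjoint paths and Menger's theorem: take a maximum family of internally vertex-disjoint $x$–$t$ paths $p_1,\dots,p_\kappa$ in $G_x''$ (where $\kappa = \kappa_{G_x''}(x,t)$), and show each can be \emph{shortcut} so that it uses no edge inside $N(x)$ or inside $N(t)$ while keeping the paths internally vertex-disjoint. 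The key observation is that in $G_x''$ no vertex is a common neighbor of both $x$ and $t$ (that was the point of removing $Z$); so the first internal vertex $a_i$ of $p_i$ lies in $N(x)\setminus N(t)$, and the last internal vertex $b_i$ lies in $N(t)\setminus N(x)$, and $a_i\ne b_i$. If $p_i$ takes a step from $x$ into $N(x)$, then immediately follows an edge inside $N(x)$ to another neighbor of $x$, we can simply delete that first vertex and jump directly from $x$ to the second one — this cannot collide with another path since the vertex we jump to was already on $p_i$. Iterating, we may assume $p_i$ leaves $x$, enters $N(x)$ exactly once and the next vertex is \emph{not} in $N(x)$; symmetrically at the $t$ end. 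After these local surgeries the paths survive in $\tilde H_x$, are still internally vertex-disjoint, and still connect $x$ to $t$, giving $\kappa_{\tilde H_x}(x,t)\ge\kappa$. Finally the furthermore clause: if $S$ is an $(x,t)$-min-separator in $\tilde H_x$ then $|S| = \kappa_{\tilde H_x}(x,t) = \kappa_{G_x''}(x,t) = \kappa_{G_x}(x,t) - |Z|$, and $S\cup Z$ separates $x$ from $t$ in $G_x$ of size $\kappa_{G_x}(x,t)$, hence is an $(x,t)$-min-separator there. (The outer claim states this as "an $(s,t)$-min-separator in $G$"; interpreting $G$ as $G_x$, this is exactly what we obtain by composing the two reductions.)

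\textbf{Anticipated obstacle.} The only delicate point is making the path-shortcutting argument fully rigorous: I must be careful that successive local modifications of $p_i$ do not introduce a vertex already used by some $p_j$. The safe way is to note that every surgery only \emph{removes} vertices from a path (never adds new ones), so internal vertex-disjointness is automatically preserved; the content is just that after removing the right initial/final vertices the path still starts at $x$, ends at $t$, and uses only edges present in $\tilde H_x$. A clean alternative, which I would use if the direct surgery gets fiddly, is to observe that deleting an edge $(u,v)$ with $u,v\in N(x)$ cannot decrease $\kappa_{G_x''}(x,t)$ because any $x$–$t$ path using $(u,v)$ can be rerouted through $x$ (replacing the subpath $u,v$ or the detour into $N(x)$), combined with the fact that $x$ itself is never an internal vertex of any path in a vertex-disjoint family; the same for the $t$ side. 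Either way, the remove-$Z$ step is what guarantees $a_i\ne b_i$ and keeps the two surgeries (at the $x$ end and at the $t$ end) non-interfering, so the order "first remove $Z$, then delete edges" in \Cref{def:tilde H x} is essential and should be emphasized.
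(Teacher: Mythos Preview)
Your approach is correct and matches the paper's, which simply cites \Cref{lem:remove z} for the $Z$-removal step and defers to the reduction rules of \cite{LiNPSY21} for the edge-deletion step; you supply the Menger-based path-shortcutting argument that the paper leaves implicit.

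One slip to fix: in the edge-deletion paragraph you write ``Since deleting edges can only destroy paths, $\kappa_{\tilde H_x}(x,t)\ge \kappa_{G_x''}(x,t)$ trivially.'' The inequality is backwards---deleting edges can only \emph{decrease} the number of internally vertex-disjoint $x$--$t$ paths, so by Menger the trivial direction is $\kappa_{\tilde H_x}(x,t)\le \kappa_{G_x''}(x,t)$. The direction you then actually \emph{prove} (via shortcutting) is the non-trivial $\ge$, so the argument itself is fine; only the labeling of which direction is ``trivial'' needs to be swapped. For the shortcutting, the cleanest formulation is: on each $p_i$ let $a_i$ be the \emph{last} vertex in $N(x)$ and $b_i$ the \emph{first} vertex in $N(t)$ strictly after $a_i$; the emptiness of $N(x)\cap N(t)$ in $G_x''$ guarantees $a_i\neq b_i$ and that the segment $a_i\to b_i$ avoids both deleted edge sets. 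Finally, your ``furthermore'' derivation is right but relies on the point you flag in the obstacle paragraph: because each shortened path uses a \emph{subset} of the original internal vertices, any $(x,t)$-separator $S$ in $\tilde H_x$ is also an $(x,t)$-separator in $G_x''$, which is what lets you conclude $S\cup Z$ separates in $G_x$. Make that implication explicit in the final write-up.
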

\begin{proof}
  The first step follows because every vertex $v \in N_G(x) \cap N_G(t)$  belongs to every $(x,t)$-separator in $G$. The second step also follows from the same reduction rules in \cite{LiNPSY21}.
\end{proof}

Finally, we define the graph $H_x$ as follows.
\begin{definition} [$H_x$] \label{def: H x}
    Give $\tilde H_x$, define $H_x$ as follows. First, remove $t$ from $\tilde H_x$. Then, for each vertex $u \in N_G(C)$, add a new vertex $y$ along with an edge $(u,y)$. Denote $Y$ as the set of new vertices.
\end{definition}

\begin{lemma} \label{lem:cut-equivalent terminal graph H_x}
 Every $(s,t)$-cut in $G_x$ is an $(s,Y)$-cut in $G'_x$ and vice ver sa. In particular, $\kappa_{G_x}(s,t) = \kappa_{G'_x}(s,Y)$.
\end{lemma}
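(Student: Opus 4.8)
The plan is straightforward: the statement is, verbatim, the cut-equivalence already established in \cref{lem:cut-equivalent terminal graph}, restated here for the local graph so that it can be reused, so the proof just re-runs that argument, checking the claimed two-way correspondence between $(s,t)$-separators of $G_x$ and $(s,Y)$-separators of $G'_x$. First I would pin down the relationship between the two graphs. Recall $s$ denotes the fixed source $x\in C$. By \Cref{def:G_x} the sink $t$ of $G_x$ has neighborhood exactly $N_{G_x}(t)=N_G(C)$, and by the definition of $G'_x$ the latter is obtained from $G_x$ by deleting $t$ and attaching to every $u\in N_G(C)$ a fresh degree-one vertex $y_u$, with $Y=\{y_u:u\in N_G(C)\}$. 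Equivalently, identifying the whole set $Y$ into a single vertex restores $t$ together with all its edges, i.e. $G_x$ is the graph obtained from $G'_x$ by contracting $Y$ to one vertex. Note also that $s\in C$ is disjoint from $Y$ and from $N_G(C)$.

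For the forward direction I would take an arbitrary $(s,t)$-cut $(L,S,R)$ of $G_x$ (so $s\in L$, $t\in R$, $E_{G_x}(L,R)=\emptyset$). Since $t\in R$ and there is no $L$--$R$ edge, every neighbor of $t$ lies in $S\cup R$, hence $L\cap N_G(C)=\emptyset$. Setting $L':=L$ and $R':=(R\setminus\{t\})\cup Y$ then gives a partition of $V(G'_x)$ with $s\in L'$, $Y\subseteq R'$, and no $L'$--$R'$ edge: the edges inside $N_G[C]$ are exactly those of $G_x$ minus the ones incident to $t$ (which were already $L$--$R$-free), and each pendant edge $(u,y_u)$ joins $u\in N_G(C)\subseteq S\cup R$ to $y_u\in Y\subseteq R'$. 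So $S$ is an $(s,Y)$-separator of $G'_x$ of the same size.

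For the reverse direction I would take an arbitrary $(s,Y)$-cut $(L',S,R')$ of $G'_x$ and contract $Y$ (which lies entirely in $R'$) to the vertex $t$; this produces $G_x$, and $(L',S,(R'\setminus Y)\cup\{t\})$ is a partition of $V(G_x)$ with $s\in L'$ and $t$ on the opposite side. It is again a valid cut: contracting vertices within $R'$ creates no new edge leaving $L'$, and the merged vertex $t$ gets precisely the neighbors $N_G(C)$, each of which was adjacent to some $y_u\in R'$ and hence lies in $S\cup R'$. Thus $S$ is an $(s,t)$-separator of $G_x$ of the same size. Combining the two directions yields a size-preserving bijection between the two separator families (the degenerate case $|S|\ge n-1$, covered by clause (2) of the vertex-cut definition, being handled identically, since $G_x$ is a contraction of $G'_x$ so $(s,t)$-connectivity in $G_x$ and $(s,Y)$-connectivity in $G'_x$ agree by Menger), whence $\kappa_{G_x}(s,t)=\kappa_{G'_x}(s,Y)$. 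The only subtle point, and the crux of both directions, is that each $y_u$ is a private leaf whose unique neighbor already sits in $N_{G_x}(t)$, so it can always be pushed to the $t$-side at no cost; everything else is routine bookkeeping.
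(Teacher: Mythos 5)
Your proof is correct and follows the same contraction/de-contraction argument the paper uses for \cref{lem:cut-equivalent terminal graph}, which the paper simply references here. One note: the lemma as printed appears to have inherited a copy-paste typo (it repeats the statement about $G_x$ and $G'_x$ from the previous lemma, where the intended subjects, given its position after \cref{def: H x}, are $\tilde H_x$ and $H_x$); your argument adapts verbatim since $H_x$ is obtained from $\tilde H_x$ by the same delete-$t$-and-attach-pendants-on-$N_G(C)$ operation.
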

\begin{proof}
    The proof is similar to \Cref{lem:cut-equivalent terminal graph}.
\end{proof}
We are ready to prove \Cref{lem:sparsify G in congest}.
\begin{proof} [Proof of \Cref{lem:sparsify G in congest}]
Given $x \in C$, we construct $H_x$ as defined in \Cref{def: H x}. We can verify that this construction in \Cref{def:tilde H x} can be done in $O(1)$ rounds. We can compute $Z$ in two rounds. First, $s$ and $t$ broadcast its id along its neighbors, and in the second round, if a node receives both $s$ and $t$ then it belongs to $Z$. Every node $v \in N_G(C) - Z$ can simulate another node in $Y$ in the matching by itself. Removing every edge $(u,v) \in E_G(N_G(x),N_G(x)) \cup E_G(N_G(C),N_G(C))$ can be done in $O(1)$ rounds.

We verify each of the following properties. First we show that for every vertex $v \in C, \textdeg_{H_x}(v,V_x - (N_G(x) \cap C)) \leq O(2^{\log^{0.7}n}\cdot \ell)$. This follows from \Cref{eq:cn diff prop} and the fact that the common neighbors $Z$ were removed. Next, $\kappa_{H_x}(x,Y_x) +|Z_x| \geq \min_{L' \subseteq C: L' \neq \emptyset}|N_G(L')|$.
the fact that if $x \in L$, then $\kappa_{H_x}(x,Y_x) + |Z_x| = \min_{L' \subseteq C: L' \neq \emptyset}|N_G(L')|$, can be derived from from \Cref{lem:remove z,cor:xY separator congest,lem:cut-equivalent terminal graph H_x}.
\end{proof}

\end{document}